\newif\ifcommentsfootnote
\newif\ifcommentsinline
\definecolor{cb-salmon-pink}{RGB}{255, 182, 119}
    \newcommand{\nat}[1]{\textcolor{blue}{\textbf{Nat: }}\textcolor{gray}{#1}}
    \newcommand{\hnote}[1]{\textcolor{red}{[\textbf{HY: #1}]}}
    \newcommand{\nat}[1]{}
    \newcommand{\hnote}[1]{}
    \newcommand{\natf}[1]{\footnote{{\bf \color{violet}Nat:} {#1}}}
    \newcommand{\natf}[1]{}
\newcommand{\blue}[1]{{\color{blue}{#1}}}
\newcommand{\eps}{\varepsilon}
\renewcommand{\epsilon}{\eps}
\newcommand{\wt}[1]{\widetilde{#1}}
\newcommand{\sse}{\subseteq}
\newcommand{\zo}{\{0,1\}}
 \newcommand{\ZZ}{\mathbb{Z}}
\newcommand{\NN}{\mathbb{N}}
\newcommand{\CC}{\mathbb{C}}
\newcommand{\FF}{\mathbb{F}}
\theoremstyle{definition}
\newtheorem{definition}{Definition}[section]
\newtheorem{theorem}[definition]{Theorem}
\newtheorem{lemma}[definition]{Lemma}
\newtheorem{claim}[definition]{Claim}
\newtheorem{proposition}[definition]{Proposition}
\newtheorem{fact}[definition]{Fact}
\newtheorem{corollary}[definition]{Corollary}
\newtheorem{conjecture}{Conjecture}
\newtheorem{openproblem}{Open Problem}
\newtheorem{question}[definition]{Question}
\crefname{claim}{claim}{claims}
\newcommand{\calL}{\mathcal{L}}
\newcommand{\calP}{\mathcal{P}}
\newcommand{\calS}{\mathcal{S}}
\newcommand{\calT}{\mathcal{T}}
\newcommand{\calU}{\mathcal{U}}
\newcommand{\calD}{\mathcal{D}}
\newcommand{\calH}{\mathcal{H}}
\newcommand{\calC}{\mathcal{C}}
\renewcommand{\abs}[1]{{\left| #1 \right|}}
\newcommand{\pbra}[1]{{\left( #1 \right)}}
\newcommand{\cbra}[1]{{\left\{ #1 \right\}}}
\newcommand{\sbra}[1]{{\left[ #1 \right]}}
\newcommand{\poly}{\mathsf{poly}}
\newcommand{\QNC}{\mathsf{QNC}}
\newcommand{\QNCZ}{\QNC^0}
\newcommand{\NC}{\mathsf{NC}}
\newcommand{\NCo}{\NC^1}
\newcommand{\QAC}{\mathsf{QAC}}
\newcommand{\QACZ}{\QAC^0}
\newcommand{\QACZF}{\QAC^0_{f}}
\newcommand{\TCZ}[1][]{\mathsf{TC}^0_{#1}}
\newcommand{\ACZ}{\mathsf{AC}^0}
\NewDocumentCommand{\QNCZF}{o}{\IfNoValueTF{#1} {\QNCZ_f} {\QNCZ_{f,#1}} }
\newcommand{\CNOT}{\textsf{CNOT}}
\newcommand{\ot}{\otimes}
\DeclareMathOperator{\tr}{Tr}
\newcommand{\Tr}{\tr}
\renewcommand{\complement}{c}
\newcommand{\Hilbertn}[1][n]{\calH_{{#1}}}
\newcommand{\linops}[1]{\calL\pbra{#1}}
\newcommand{\linopsn}[1][n]{\linops{\Hilbertn[#1]}}
\newcommand{\states}[1][n]{\calD(\Hilbertn[{#1}])}
\newcommand{\unitaries}[1][n]{\calU_{#1}}
\newcommand{\paulis}[1][n]{\calP_{#1}}
\newcommand{\ctwostate}{s_2}
\newcommand{\cthreestate}{s_3}
\newcommand{\cfourstate}{s_4}
\newcommand{\ctwostateval}{46}
\newcommand{\cthreestateval}{54}
\newcommand{\cfourstateval}{62}
\newcommand{\ctwobool}{b_2}
\newcommand{\cthreebool}{b_3}
\newcommand{\cfourbool}{b_4}
\newcommand{\ctwoboolval}{16}
\newcommand{\cthreeboolval}{24}
\newcommand{\cfourboolval}{32} \usepackage[margin=.8in]{geometry}
\newcommand{\oset}[2]{{\mathop{#2}\limits^{\vbox to -.4\ex@{\kern-\tw@\ex@
   \hbox{\scriptsize #1}\vss}}}}
\newcommand{\stabs}{\calS}
\newcommand{\sstate}[1][\stabs]{\rho(#1)}
\newcommand{\stabgens}{G}
\newcommand{\overlarrow}[1]{\oset{\text{\tiny$\leftarrow$}}{#1}}
\newcommand{\overrarrow}[1]{\oset{\text{\tiny$\rightarrow$}}{#1}}
\newcommand{\blc}{\overlarrow{\mathsf{L}}}
\newcommand{\flc}{\overrarrow{\mathsf{L}}}
\newcommand{\dlc}[1]{\blc(\flc(#1))}
\newcommand{\MI}[3][]{\textbf{I}_{{#1}}\pbra{{#2}:{#3}}}
\newcommand{\tnorm}[1]{\norm{#1}_{1}}
\newcommand{\depth}{\textsf{depth}}
\newcommand{\dU}{{\depth(U)}}
\newcommand{\cktblc}[2][U]{{#1}_{\blc({#2})}}
\newcommand{\EEnt}{von Neumann entropy\xspace}
\newcommand{\QMI}{quantum mutual information\xspace}
\newcommand{\EE}{S}
\newcommand{\Cliff}{\mathsf{Cliff}}
\newcommand{\Clifford}{\mathsf{Clifford}}
\newcommand{\cliffqncz}{\QNCZ \circ \Cliff}
\newcommand{\CAT}{\textsf{CAT}}
\newcommand{\CATn}[1][n]{\CAT_{#1}}
\newcommand{\bCAT}[1]{{#1}\CAT}
\newcommand{\bCATn}[2][n]{\bCAT{#2}_{#1}}
\newcommand{\altCQ}[1][]{\mathsf{A}_{#1} \mathsf{CQ}}
\newcommand{\altQC}[1][]{\mathsf{A}_{#1} \mathsf{QC}}
\newcommand{\MH}{\mathsf{MH}}
\newcommand{\cliffdepth}{\textsc{Clifford-rounds}\xspace}
\newcommand{\fanoutdepth}{\textsc{Fanout-depth}\xspace}
\newcommand{\IMrounds}{\textsc{Measurement-rounds}\xspace}
\newcommand{\IMparityrounds}{\textsc{Measurement}[\oplus]\textsc{rounds}\xspace}
\newcommand{\qnczrounds}{\QNCZ\textsc{-rounds}\xspace}
\newcommand{\tdepth}{\textsc{T-depth}\xspace}
\newcommand{\tcount}{\textsc{T-count}\xspace}
\newcommand{\MHlevel}{\textsc{MH-level}}
\newcommand{\Cstab}[1][\ell]{\calC_{#1}^{\textsf{stab}}}
\title{Quantum circuit lower bounds in the magic hierarchy}
\author{Natalie Parham}
\date{Columbia University}
\begin{document}
\maketitle
\begin{abstract}

   We introduce the \emph{magic hierarchy}, a quantum circuit model that alternates between arbitrary-sized Clifford circuits and constant-depth circuits with two-qubit gates ($\textsf{QNC}^0$). This model unifies existing circuit models, such as $\QACZF$ and models with adaptive intermediate measurements. 
    Despite its generality, we are able to prove nontrivial lower bounds.

    We prove new lower bounds in the first level of the hierarchy, showing that certain explicit quantum states cannot be approximately prepared by circuits consisting of a Clifford circuit followed by $\textsf{QNC}^0$. These states include ground states of some topologically ordered Hamiltonians and nonstabilizer quantum codes. Our techniques exploit the rigid structure of stabilizer codes and introduce an infectiousness property: if even a single state in a high distance code can be approximately prepared by one of these circuits, then the entire subspace must lie close to a perturbed stabilizer code. We also show that proving state preparation lower bounds beyond a certain level of the hierarchy would imply \emph{classical} circuit lower bounds beyond the reach of current techniques in complexity theory. 

    More broadly, our techniques go beyond lightcone-based methods and highlight how the magic hierarchy provides a natural framework for connecting circuit complexity, condensed matter, and Hamiltonian complexity.

\end{abstract}
\tableofcontents

\section{Introduction}

Understanding the limitations of quantum circuits is a central question at the intersection of quantum computing and many-body physics. A \emph{quantum circuit lower bound} means proving that a certain class of circuits cannot perform a specific task--- such as preparing a target state $\ket{\psi}$ or implementing a unitary $U$. 
Classical circuit lower bounds have led to breakthroughs across theoretical computer science, establishing deep connections to pseudorandomness \cite{natural-proofs,braverman2008polylogarithmic} and learning theory \cite{carmosino2016learning,oliveira2016conspiracies, lmn}. Furthermore, quantum circuit lower bounds have interesting implications for fundamental questions in quantum matter. 

In condensed matter physics, circuit depth provides a notion of equivalence between quantum phases: states in the same phase can be connected by shallow circuits, while lower bounds indicate phase transitions or topological order \cite{chen2010local,coser2019classification}. In Hamiltonian complexity, if no low energy states of a Hamiltonian can be prepared within a certain circuit class, it suggests long-range entanglement beyond the circuit's expressive power, pointing to exotic quantum phases that persist even at higher temperatures \cite{nirkhe-thesis}.

A key challenge is developing techniques beyond the \emph{lightcone argument}. This principle originates as Lieb-Robinson bounds in many-body physics \cite{lieb1972finite}, which formalizes the idea that quantum information propagates at a limited speed through local interactions, reaching only regions within its \emph{lightcone}. 

Constant-depth quantum circuits with local gates ($\QNCZ$ circuits) have lightcones of constant size, severely limiting their computational power. Any two regions with nonintersecting backwards lightcones can not be more entangled than their backwards lightcones on the input state. Thus globally entangled states such as the $n$-qubit CAT state $\frac{1}{\sqrt{2}} (\ket{0^n} + \ket{1^n})$ cannot be prepared by constant depth circuits starting from the unentangled $\ket{0^n}$ state.

In contrast, nonlocal circuits --- such as those with logarithmic-depth, or many-qubit gates--- have unbounded lightcones, making the argument ineffective for proving lower bounds. 

\begin{figure}
    \centering
    \resizebox{!}{3cm}{
    \begin{tikzpicture}[scale=0.6]
    \fill[fill=yellow, opacity=0.2] (5,1.425) -- (-5, 3) -- (-5, -0.6);
    \draw[dashed] (5,1.425) -- (-5, 3);
    \draw[dashed] (5,1.425) -- (-5, -.6);
    
    \node (as) at (0,0) {
    \begin{quantikz}[row sep = 1em, column sep=2em]
        & \gate[2]{} & \qw & \qw & \qw & \qw & \gate[2]{} & \qw \\
        & \qw & \gate[2]{} & \qw & \qw & \qw & \qw & \qw \\
        & \qw & \qw & \qw & \gate[2]{} & \qw & \qw & \qw \\
        & \qw & \gate[2]{} & \qw & \qw & \qw & \gate[2]{} & \qw \\
        & \qw & \qw & \qw & \qw & \qw & \qw & \qw 
    \end{quantikz}            
    };
    \end{tikzpicture}
    }
    \caption{A (backwards) lightcone of a qubit.}\label{fig:lightcone}
\end{figure}

The lightcone argument can also be used to prove lower bounds for \emph{classical} circuits with local gates, so long as the depth is below $\log(n)$. Any output bit of the circuit can only depend on the inputs in its backwards lightcone. Again, once the circuit depth is logarithmic, the lightcone argument is no longer useful. In fact, proving lower bounds against $O(\log n)$-depth classical circuits, $\NCo$, remains a major open problem in complexity theory.

However, in the classical world, we have been able to push beyond the lightcone argument with circuits that have unbounded lightcones but are still more restricted than $\NCo$ circuits. A prominent example is $\ACZ$ circuits--- boolean circuits of constant depth with AND and OR gates that are allowed arbitrarily many input bits, or ``unbounded fan-in''. Thus the outputs of the circuit can depend on all inputs, rendering the lightcone argument useless. Despite this, the structure of the model has enabled the development of powerful lower bound techniques, such as random restrictions \cite{furst1984parity,haastad1986computational}, Fourier analysis \cite{lmn}, and combinatorial methods \cite{smolensky1987algebraic}. These tools have yielded strong lower bounds and insights across complexity theory \cite{furst1984parity,rossman2015average}, pseudorandomness \cite{chattopadhyay2018pseudorandom,braverman2008polylogarithmic,chattopadhyay2019pseudorandom}, learning theory \cite{KushilevitzMansour:93,eskenazis2022low,kalai2008agnostically}, quantum-classical separations \cite{raz2022oracle}, and more.

The success in going beyond lightcone arguments in classical complexity theory can be attributed to three key aspects of the $\ACZ$ model: 
it is \emph{useful}, as lower bounds are actually provable; \emph{interesting}, due to its connections to broader areas of complexity theory and computation; and \emph{natural}, as it generalizes constant-depth boolean circuits and models parallel computation.

In the quantum setting, there has been some effort towards adapting techniques that were successful for proving classical $\ACZ$ lower bounds. In 1999 Moore proposed a quantum analogue, $\QACZ$, constant depth quantum circuits with arbitrary single qubit gates and the multi-qubit quantum AND gate $\ket{x} \to (-1)^{\textsf{AND}(x)}\ket{x}$ \cite{moore1999quantum}. While it resembles $\ACZ$,  $\QACZ$ has resisted the application of classical techniques--- such as random restrictions or Fourier methods--- that underlie many classical lower bounds. 
Recent work has adapted Fourier analytic methods to quantum circuits \cite{nadimpalli2024pauli},  mirroring \cite{lmn}, though these methods have not yet yielded lower bounds as strong as those known classically.
To date, lower bounds for $\QACZ$ have only been established for depth-two circuits \cite{pade2020depth,rosenthal-qac0,fenner2025tightboundsdepth2qaccircuits} or models with few ancilla qubits \cite{fang2003quantum,nadimpalli2024pauli,anshu2024computational}.
The difficulty of proving lower bounds for $\QACZ$ suggests the need for a circuit model that's expressive--- capable, for example, of generating global entanglement --- but is more tractable to analyze.

In this work, we introduce the magic hierarchy, a quantum circuit model consisting of circuits that alternate between arbitrary-size Clifford circuits and constant-depth circuits with arbitrary two-qubit gates ($\QNCZ$). We define the $k$th level of the magic hierarchy as circuits alternating $k$ times between Clifford and $\QNCZ$ circuits. 
The level in the hierarchy reflects increasing complexity of  \emph{long-range magic}---that is, nonstabilizerness that cannot be removed with a low-depth circuit \cite{wei2025long}--- hence the name

In this paper we aim to demonstrate that, just like the classical $\ACZ$ model, the magic hierarchy is interesting, useful, and natural: 

\vspace{5pt}
\noindent\textit{Useful:} We demonstrate the usefulness of the magic hierarchy by proving lower bounds at its first level, pushing beyond existing lower bounds. Our techniques exploit the rigid structure of both Clifford and $\QNCZ$ circuits. While each of these circuit models is individually well-understood, their respective analysis tools---such as stabilizer methods for Clifford circuits and lightcone arguments for $\QNCZ$--- do not extend naturally to the other.
This tension necessitates the development of novel approaches to establish lower bounds.

\vspace{5pt}
\noindent\textit{Interesting:} We highlight interesting connections between the magic hierarchy and condensed matter physics, Hamiltonian complexity, and important open problems in classical complexity. Additionally, we show that the magic hierarchy unifies other well studied models of quantum computation such as constant-depth circuits with multi-qubit Fanout gates and models of quantum circuits with adaptive intermediate measurements.

\vspace{5pt}
\noindent\textit{Natural:} Clifford circuits and $\QNCZ$ circuits are ubiquitous in quantum computing, both in theory and in practice. These circuits are used in quantum algorithms, error-correction, nonlocal games and more. They also align with constraints in noisy quantum hardware, where the accumulation of noise  limits the circuit depth. In most fault-tolerant architectures, Clifford operations are   inexpensive, while non-Clifford gates are more costly.

\vspace{5pt}
This suggests the magic hierarchy as a natural setting for proving quantum circuit lower bounds beyond constant-depth models.

The remainder of this introduction is structured as follows. We formally define this circuit model in \Cref{ssec:altQCdef}, establish connections between the level of the magic hierarchy and other complexity measures in \Cref{ssec:intro:connections}, and then state our main lower-bound results (\Cref{ssec:results}). We then discuss connections between quantum circuit lower bounds in the magic hierarchy and classical circuit lower bounds and barriers in \Cref{ssec:intro:barriers}, and finally we explore implications and connections to Hamiltonian complexity and condensed matter physics in \Cref{ssec:intro:condmat}.

\subsection{Formally defining the magic hierarchy}   \label{ssec:altQCdef}
We propose the study of quantum circuits that alternate between arbitrary-size circuits with Clifford gates, and constant-depth quantum circuits with arbitrary two-qubit gates, also known as $\QNCZ$. These gates can act between any pair of qubits and are not restricted by geometric locality. This choice simplifies the model while preserving the essential structural limitation of constant-depth circuits---namely, bounded lightcones. Furthermore, since Clifford circuits can permute qubits, enforcing geometric locality only increases the number of alternations by a constant factor.

\begin{figure}[h]
    \begin{center}   
    \begin{quantikz}
        \qw  & \gate[wires=5, nwires=4][3cm]{\Clifford} & \gate[wires=5, nwires=4]{\QNCZ} & \gate[wires=5, nwires=4][3cm]{\Clifford} & \gate[wires=5, nwires=4]{\QNCZ} & \gate[wires=5, nwires=4][3cm]{\Clifford} & \qw\\
        \qw & & & & & & \qw \\
        \qw & & & & & & \qw \\
        \vdots & & &  & & & \vdots \\
        \qw & & & & & & \qw 
    \end{quantikz}
    \end{center}
    \caption{An alternating Clifford and $\QNCZ$ circuit with $k=4$ alternations, starting with a Clifford circuit.}
\end{figure}

\begin{definition}[{$\altCQ[k]$, $\altQC[k]$} ]\label{def:altCQ}
    $\altCQ[k]$ refers to the class of circuits that can be decomposed into a product of $k+1$ circuits, starting with a Clifford circuit and alternating between Clifford and $\QNCZ$ circuits. More formally, a circuit family $\{C_n\}$ is in $\altCQ[k]$ if there exists a constant $d\in \NN$ such that for each $n\in \NN$, $C_n$ acts on $n$ qubits and can be written as the product $Q_k Cl_{k-1} Q_{k-2} \dots Cl_0$ where each $Q_j$ is a depth-$d$ circuit with two qubit gates, and each $Cl_j$ is a Clifford circuit. 
    $\altQC[k]$ is defined similarly but with the first circuit applied being a $\QNCZ$ circuit.

\end{definition}

Equivalently, we can use the following recursive definition: Let $ \altCQ[0] := \Cliff$, the set of all Clifford circuit families, 
and $\altQC[0]: = \QNCZ$, and for each integer $k\geq 1$, let
\begin{align}
    \altCQ[k] &:= \bigcup_{\{C_n\} \in \Cliff} \bigcup_{\ \{A_n\} \in \altQC[k-1]} \{A_n C_n: n \in \NN\}\\
    \altQC[k] &:= \bigcup_{\{Q_n\} \in \QNCZ} \bigcup_{\ \{A_n\} \in \altCQ[k-1]} \{A_n Q_n: n \in \NN\}.
\end{align}
Note that $\altCQ[k] \sse \altQC[k+1]$. We remark that at $k=1$, the order of application in these alternating circuits matters, as the circuit classes $\altCQ[1]$ and $\altCQ[1]$ are incomparable.

\begin{restatable}{theorem}{thmCQneqQC}\label{thm:intro:CQneqQC}
    $\altCQ[1] \not\sse \altQC[1]$ and $\altQC[1] \not\sse \altCQ[1]$.
\end{restatable}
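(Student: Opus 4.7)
The plan is to exhibit explicit separating examples for the two non-containments, proving each direction by a different route.

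For $\altQC[1] \not\sse \altCQ[1]$, the plan is to invoke the paper's main lower-bound results. These furnish explicit states---e.g., nonstabilizer code states and ground states of certain topologically ordered Hamiltonians---that are provably not in $\altCQ[1]$. It then suffices to exhibit one such state inside $\altQC[1]$. A natural candidate is an encoded magic state: if $C_{\textsf{enc}}$ is a Clifford encoder for a nonstabilizer code, then $C_{\textsf{enc}}(|T\rangle \otimes |0\rangle^{\otimes n-1})$ lies in $\altQC[1]$, since a depth-$1$ $\QNCZ$ circuit (a single $TH$ on one wire) prepares $|T\rangle \otimes |0\rangle^{\otimes n-1}$ and $C_{\textsf{enc}}$ provides the final Clifford stage. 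Combining this with the main lower bound gives the separation.

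For $\altCQ[1] \not\sse \altQC[1]$, which is the harder direction, the plan is to construct a state of the form $|\psi\rangle = Q \cdot C_{\text{Cl}}|0^n\rangle \in \altCQ[1]$, where $C_{\text{Cl}}$ is a Clifford preparing a long-range entangled stabilizer state (e.g.\ $|\text{CAT}_n\rangle$) and $Q \in \QNCZ$ injects magic that becomes ``globally coupled'' through $C_{\text{Cl}}$. A concrete candidate is $|\psi\rangle = (TH)^{\otimes n}|\text{CAT}_n\rangle = \tfrac{1}{\sqrt{2}}\bigl(|T\rangle^{\otimes n} + |T^{\perp}\rangle^{\otimes n}\bigr)$, which is manifestly in $\altCQ[1]$ via Clifford preparation of $|\text{CAT}_n\rangle$ followed by the depth-$2$ $\QNCZ$ circuit $(TH)^{\otimes n}$. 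To show $|\psi\rangle \notin \altQC[1]$, one supposes for contradiction that $|\psi\rangle = Cl' \cdot Q'|0^n\rangle$ and analyzes $(Cl')^{\dagger}|\psi\rangle = Q'|0^n\rangle$: writing each $T$ as a Pauli rotation $e^{-i\pi/8 Z_i}$ and conjugating through $Cl'$ yields an expression $\prod_i e^{-i\pi/8 P_i'} \cdot D|0^n\rangle$ with $P_i' = (Cl')^{\dagger} Z_i Cl'$ Paulis and $D$ a Clifford. The requirement $Q' \in \QNCZ$ then forces each $P_i'$ to be low-weight and $D$ to be a $\QNCZ$ Clifford, a rigidity that one aims to show is inconsistent with the long-range correlation pattern inherited from $|\text{CAT}_n\rangle$.

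The main obstacle is the second direction. Unlike the first, there is no off-the-shelf lower bound against $\altQC[1]$, so one must rule out \emph{every} Clifford ``reordering'' $Cl'$ that might pull the magic in $|\psi\rangle$ back into a bounded-lightcone structure. Naive Clifford invariants---entanglement entropy across a cut, Pauli stabilizer dimension, and stabilizer rank---do not distinguish the two classes, so a finer structural analysis is required. One promising route is to adapt the infectiousness-style argument underlying the paper's $\altCQ[1]$ lower bound to this dual setting, exploiting the fact that a global magic pattern built on top of a CAT-like Clifford cannot be simultaneously localized by a single Clifford change of basis.
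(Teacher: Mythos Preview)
Your proposal misreads what the theorem asserts. The classes $\altCQ[1]$ and $\altQC[1]$ are classes of \emph{circuits} (unitaries), and the non-containments are at the level of unitaries, not of preparable states. The paper's proof is a short Heisenberg-picture argument that invokes no state-preparation lower bounds at all: for any $U\in\altCQ[1]$, pulling a single-qubit output observable back through the $\QNCZ$ layer keeps it supported on $O(1)$ qubits (hence a sum of $O(1)$ Paulis), and conjugation by the Clifford layer then maps each Pauli to a single Pauli, so the Heisenberg-evolved observable is always a sum of $O(1)$ Paulis. By contrast, the explicit $\altQC[1]$ unitary $U=\text{Fanout}\cdot T^{\otimes n}$ evolves $X$ on the control qubit first to $X^{\otimes n}$ (through Fanout) and then to an operator proportional to $(X+Y)^{\otimes n}$, which has $2^n$ Pauli terms; hence this $U\notin\altCQ[1]$, giving $\altQC[1]\not\sse\altCQ[1]$. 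The other direction is the same argument run on single-qubit \emph{input} observables evolved forward.

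The state-preparation question you are attacking is precisely what the paper flags as \emph{open} immediately after this proof. Beyond that, your first-direction candidate is internally inconsistent: you posit ``a Clifford encoder $C_{\textsf{enc}}$ for a nonstabilizer code,'' but the image of $\mathbb{C}^{2^k}\otimes|0^{n-k}\rangle$ under any Clifford is the joint $+1$-eigenspace of the conjugated $Z$-stabilizers and hence a stabilizer code, so no such $C_{\textsf{enc}}$ exists and the paper's nonstabilizer-code lower bounds do not apply to states of the form $C_{\textsf{enc}}(|T\rangle\otimes|0\rangle^{\otimes n-1})$. Your second-direction sketch is, as you acknowledge, speculative; since no $\altQC[1]$ state-preparation lower bounds are established anywhere in the paper, completing it would require genuinely new results well beyond what the theorem claims or needs.
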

\Cref{thm:intro:CQneqQC} is proved in \Cref{ssec:CQneqQC}. 

The number of alternations serves as a measure of the \emph{magic} or non-Cliffordness of the circuit. Inspired by this, we refer to all circuits with a finite number of alternations as the \emph{magic hierarchy} $\MH$. This name is more convenient for verbal discussion than ``alternating Clifford and $\QNCZ$ circuits with a constant number of alternations.''

\begin{definition}[The magic hierarchy, $\MH$]
    \emph{The magic hierarchy} $\MH := \bigcup_{k \in \NN} \altCQ[k] =  \bigcup_{k \in \NN} \altQC[k]$. Furthermore, for each $k \in \NN$, the \emph{$k$th level of the magic hierarchy} is $\MH_k := \altCQ[k] \cup \altQC[k]$. 
\end{definition}
We emphasize that the magic hierarchy contains circuits with a constant number of alternations that does not scale with $n$. 

For simplicity, we will often not refer to circuit families explicitly. Instead we will mention an $n$-qubit quantum circuit $C$, to implicitly refer to a circuit family $\{C_n\}$. Similarly, when we refer to an $n$-qubit quantum state $\ket{\psi}$, we are implicitly referring to a family of states $\{\ket{\psi_n}\}$ where for each $n\in \NN$, $\ket{\psi_n}$ is an $n$-qubit state.

\paragraph{Ancillary qubits} In \Cref{def:altCQ} we define our circuit class only in terms of the gates, without specifying input to the circuit such as ancillary qubits. When we consider the unitaries $U$ (exactly) \emph{implementable} by a circuit class, we allow the circuits $C$ to use a polynomial number of ancillary qubits $a = \poly(n)$ that must start and end in the $\ket{0}$ state: $C\ket{\phi}\ket{0^{a(n)}} = (U\ket{\phi}) \ot \ket{0^{a(n)}}$. Moreover, when we consider the states (exactly) \textit{preparable} by a circuit class, we mean the states  $\ket{\psi}$  that can be prepared starting from the $\ket{0^n}$ state along with $a = \poly(n)$ ancillary qubits that start and end in the $\ket{0}$ state: $C\ket{0^n} \ket{0^{a(n)}} = \ket{\psi} \ot \ket{0^{a(n)}}$. Our requirement that the ancillas end in the $\ket{0}$ state has been referred to as \emph{clean computation}. For interesting discussion on clean versus non-clean computation we refer the reader to \cite{rosenthal-thesis}.

\subsection{Equivalence with Fanout, Intermediate measurement, and \texorpdfstring{$T$-depth}{T-depth}}\label{ssec:intro:connections}

Now that we have defined the magic hierarchy, we show that it unifies several previously studied circuit models. In particular, the level of the magic hierarchy is a complexity measure that captures the use of nonlocal operations such as fanout gates, or adaptive intermediate measurements. 

\paragraph{Fanout gates}
The multi-qubit Fanout gate $\ket{b, x_1, \dots, x_m} \to \ket{b, x_1 \oplus b, \dots, x_m \oplus b}$ is surprisingly powerful. Constant depth quantum circuits with fanout gates ($\QNCZF$) can compute the quantum Fourier transform, all the non-classical parts of Shor's algorithm  \cite{hoyer2005quantum,cleve2000fast}, and even simulate classical threshold circuits ($\TCZ$) \cite{hoyer2005quantum}. Høyer and Spalek showed that $\QNCZF$ is also equivalent to $\QACZF$,  $\QACZ$ circuits with Fanout gates. 
In 1999, Moore posed the question of whether $\QACZ = \QACZF$, and despite considerable effort \cite{pade2020depth,fang2003quantum,rosenthal-qac0,nadimpalli2024pauli,anshu2024computational} this remains one of the central open problems in quantum circuit complexity. As we show below, the magic hierarchy provides us with a more fine-grained approach for studying $\QACZF$.

We define the \emph{fanout depth}, $\fanoutdepth(U)$, of a unitary $U$ as the minimum number of layers of parallel Fanout gates needed to implement $U$ with a $\QNCZF$ circuit.  We show that the level of the magic hierarchy, $\MHlevel(U)$, required to implement $U$ is equivalent to $\fanoutdepth(U)$ up to constant factors. 

\begin{restatable}{proposition}{fanouteqMHlevel}\label{prop:fanouteqMHlevel}
    $\frac{1}{2} \cdot \MHlevel \leq \fanoutdepth \leq 2 \cdot \MHlevel + 4$.
\end{restatable}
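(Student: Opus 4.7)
The plan is to prove the two inequalities separately, each reducing to one structural fact. The key conceptual inputs are: (i) every Fanout gate is a Clifford unitary (being a product of CNOT gates that share a common control qubit), and (ii) any Clifford circuit admits a clean $\QNCZF$ implementation with some absolute constant number $c$ of parallel Fanout layers, by results of Moore--Nilsson and Høyer--Spalek.

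For the lower bound $\frac{1}{2}\MHlevel \leq \fanoutdepth$, I would begin with an optimal $\QNCZF$ implementation of $U$ with $d := \fanoutdepth(U)$ layers of parallel Fanout gates. Any such circuit can be written as $U = Q_d\, F_d\, Q_{d-1} \cdots F_1\, Q_0$, where each $F_i$ is a layer of parallel Fanouts and each $Q_i$ is a constant-depth two-qubit-gate circuit (possibly trivial). Fact (i) applied gate-by-gate shows that every $F_i$ is a Clifford circuit; its depth as a Clifford circuit is irrelevant because the magic hierarchy permits arbitrary-size Cliffords. Hence $U$ is exhibited as an alternating product of $2d+1$ factors beginning and ending with a $\QNCZ$ circuit, placing it in $\altQC[2d]$ and giving $\MHlevel(U) \leq 2d$.

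For the upper bound $\fanoutdepth \leq 2\MHlevel + 4$, suppose $\MHlevel(U) = k$, so $U$ decomposes as an alternating product of $k+1$ circuits, at most $\lceil (k+1)/2 \rceil$ of which are Clifford. Each $\QNCZ$ factor contributes $0$ to fanout depth. I would invoke fact (ii) to replace each Clifford factor by a clean $\QNCZF$ implementation using at most $c$ parallel Fanout layers; concatenating these gives a $\QNCZF$ implementation of $U$ with total fanout depth at most $c \cdot \lceil (k+1)/2 \rceil$. With $c = 4$ (which matches the standard parallelization constructions), this yields $\fanoutdepth(U) \leq 2k + 4$.

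The counting itself is routine; the main obstacle is making fact (ii) precise with the correct constant $c$ and verifying that the chosen Clifford-parallelization construction is \emph{clean}---that any polynomial ancillas it uses are returned to $\ket{0}$---so that the resulting $\QNCZF$ circuits compose with adjacent $\QNCZ$ layers without disrupting the clean-computation convention from \Cref{ssec:altQCdef}. Small edge cases (e.g., $k=0$ or whether the outer layer is Clifford or $\QNCZ$) should be checked to confirm the additive slack of $+4$ absorbs them.
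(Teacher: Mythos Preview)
Your proposal is correct and matches the paper's approach: both directions rest on (i) Fanout being Clifford and (ii) any Clifford circuit being implementable in $\QNCZF$ with at most $4$ Fanout layers, after which the counting is routine via the bound $\frac{1}{2}\MHlevel \leq \cliffdepth \leq \frac{1}{2}\MHlevel + 1$. The paper supplies precisely the piece you flag as the main obstacle---it realizes $c=4$ via gate teleportation (compressing the Clifford circuit to constant depth up to Pauli corrections, then computing the $\FF_2$-linear corrections with two Fanout/Parity layers and uncomputing with two more)---rather than citing Moore--Nilsson or H{\o}yer--\v{S}palek directly.
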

Therefore, increasing levels of the magic hierarchy interpolate between $\QNCZ$ and $\QNCZF$. Since $\QNCZF = \QACZF$ \cite{hoyer2005quantum,takahashi2016collapse}, we conclude the following. 
\begin{restatable}{theorem}{MHeqQACZF}\label{thm:intro:MHeqQACZF}
    $\MH = \QACZF$
\end{restatable}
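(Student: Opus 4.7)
The plan is to chain together Proposition \ref{prop:fanouteqMHlevel} with the already-cited equivalence $\QNCZF = \QACZF$ of \cite{hoyer2005quantum,takahashi2016collapse}. Since both $\MH$ and $\QACZF$ are defined as unions of circuit classes indexed by a complexity parameter that is allowed to be any constant, the heart of the proof is simply a ``boundedness-transfer'' argument: constant magic-hierarchy level is equivalent to constant fanout depth.

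First I would restate the problem in terms of implementable unitaries (with clean polynomial ancilla), since both $\MH$ and $\QACZF$ are defined as classes of circuits modulo this convention. Then I would unfold definitions: $U$ is implementable by an $\MH$ circuit iff $\MHlevel(U) \leq k$ for some constant $k$, and $U$ is implementable by a $\QNCZF$ circuit iff $\fanoutdepth(U) \leq d$ for some constant $d$. By the upper bound in Proposition~\ref{prop:fanouteqMHlevel}, if $\MHlevel(U) \leq k$ then $\fanoutdepth(U) \leq 2k+4$, so $U$ lies in $\QNCZF$; conversely, by the lower bound, if $\fanoutdepth(U) \leq d$ then $\MHlevel(U) \leq 2d$, placing $U$ in $\MH$. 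This yields $\MH = \QNCZF$ at the level of implementable unitaries.

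Finally I would invoke the theorem of Høyer--Spalek \cite{hoyer2005quantum}, sharpened by Takahashi--Tani \cite{takahashi2016collapse}, which states $\QNCZF = \QACZF$, to finish the chain $\MH = \QNCZF = \QACZF$. I would remark explicitly that the three equalities agree on the ancilla convention (polynomial, clean), so composing them is legitimate.

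The main obstacle is not in this final stitching step but entirely upstream, in proving Proposition~\ref{prop:fanouteqMHlevel}: one direction needs to simulate a layer of arbitrary fanout gates by an $\altCQ$-style block, and the other needs to compress a constant number of Clifford/$\QNCZ$ alternations into a constant-depth fanout circuit. Assuming that proposition, the proof of the theorem itself is essentially a one-line consequence, and the only mild care required is making sure the constant-factor blow-ups in level/fanout-depth remain constant (which they do, since $k$ and $d$ are constants independent of $n$).
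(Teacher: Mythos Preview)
Your proposal is correct and mirrors the paper's own proof: use Proposition~\ref{prop:fanouteqMHlevel} to conclude $\MH = \QNCZF$ from the constant-factor equivalence of $\MHlevel$ and $\fanoutdepth$, then invoke \cite{hoyer2005quantum,takahashi2016collapse} for $\QNCZF = \QACZF$. The paper's version is terser but follows exactly this chain.
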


\paragraph{Adaptive intermediate measurements:} 
Adaptive intermediate measurements are another surprisingly powerful nonlocal operation that is used to supplement the power of constant depth circuits. 
We consider constant-depth quantum circuits that are allowed mid-circuit measurements where classical processing is done on the measurement outcomes and fed into the next stage of the circuit (also known as \emph{feedforward}). More formally, the circuit has the following structure:
\begin{enumerate}
    \item A constant depth quantum circuit $Q$ with 2-qubit gates ($\QNCZ$) is applied to the input state
    \item A subset $S\sse [n]$ of the qubits is measured in the computational basis, getting outcome $x\in \{0,1\}^{|S|}$. Let $\rho_x$ be the post-measurement state on the qubits in $S^\complement$. 
    \item A classical circuit $C:\zo^{|S|}\to \{0,1\}^*$ is applied to the outcome measurements, and the state $\proj{C(x)}\ot \rho_x$ is fed into the next quantum layer of the circuit. 
    \item This continues for a constant number of rounds.
\end{enumerate}

Recent work has shown adaptive measurements significantly enhance the power of constant-depth quantum circuits, both for state preparation and fault tolerance. 
Such circuits can prepare states with long range entanglement and topological order \cite{smith2024constant,piroli2021quantum,
tantivasadakarn2023hierarchy,piroli2024approximating,tantivasadakarn2024long,verresen2021efficiently,lu2022measurement,bravyi2022adaptive,tantivasadakarn2023shortest,li2023symmetry} --- states that are not preparable by standard $\QNCZ$ without adaptivity \cite{bravyi2006lieb}. 
Buhrman et al. refer to these circuits as local alternating classical-quantum circuits (LACCQ) and showed that they can prepare the $W$ state, Dicke state, and many-body scar states \cite{buhrman2024state}. Furthermore, Gidney and Bergamaschi \cite{gidney2025constant} demonstrate that adaptive local Clifford circuits can realize quantum codes with parameters that surpass the Bravyi-Poulin-Terhal bound \cite{BPT} on the rate-distance tradeoff for local stabilizer codes.

In general, when the computational power of the intermediate classical circuits is unbounded then the number of rounds of measurements needed to implement a unitary $U$, $\IMrounds$ provides a lower bound to the level of the magic hierarchy needed to implement $U$.

\begin{restatable}{proposition}{IMleqMHlevel}\label{prop:IMleqMHlevel}
    $\IMrounds \leq 2 \cdot \MHlevel + 4$.
\end{restatable}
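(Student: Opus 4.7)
The plan is to chain this bound with Proposition~\ref{prop:fanouteqMHlevel} through the intermediate inequality $\IMrounds(U) \leq \fanoutdepth(U)$, yielding $\IMrounds(U) \leq \fanoutdepth(U) \leq 2 \MHlevel(U) + 4$. This reduces the task to showing that a single layer of parallel Fanout gates can be simulated in one round of adaptive intermediate measurement surrounded by $\QNCZ$ layers.

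To establish the intermediate inequality, I would compile each layer of parallel Fanout gates in a $\QNCZF$ circuit into one round of adaptive intermediate measurement. The key tool is the measurement-based preparation of CAT states: an $m$-qubit CAT state $\tfrac{1}{\sqrt{2}}(\ket{0^m} + \ket{1^m})$ can be produced from $\ket{0}^{\otimes \poly(m)}$ using one $\QNCZ$ layer (Hadamards followed by CZs in a 1D pattern, forming a cluster-state resource), one round of single-qubit Pauli-basis measurements on the ancillas, and one further $\QNCZ$ layer implementing the classically-computed Pauli corrections. Given a fresh CAT state, a Fanout gate reduces to a depth-$O(1)$ circuit of CNOTs and basis changes; parallel Fanouts within a single layer share a common measurement round by allocating disjoint blocks of ancillas per gate. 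Interleaving these subroutines with the $\QNCZ$ layers of the original Fanout circuit and merging adjacent $\QNCZ$ layers (which is possible at constant blow-up in depth) yields an adaptive-measurement circuit with at most $\fanoutdepth(U)$ measurement rounds.

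The main obstacle I expect is the clean-computation requirement built into the definition of $\IMrounds$: every ancilla qubit used to host a CAT resource must return to $\ket{0}$ at the end. I would handle this by folding single-qubit $X$ updates on the measured ancilla qubits, classically computable from the measurement record, into the final $\QNCZ$ correction layer; since those ancillas are disjoint from the output register, these extra corrections return the ancillas to $\ket{0}$ without perturbing the simulated Fanout action. Once this bookkeeping is in place, composing with Proposition~\ref{prop:fanouteqMHlevel} yields the stated bound.
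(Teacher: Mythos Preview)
Your approach is correct but takes a different route from the paper. The paper proves this in one line: since unbounded classical post-processing subsumes parity-only processing, trivially $\IMrounds \leq \IMparityrounds$, and then Proposition~\ref{prop:IMpareqMHlevel} gives $\IMparityrounds \leq 2\cdot\MHlevel + 4$. You instead chain through Proposition~\ref{prop:fanouteqMHlevel} via the inequality $\IMrounds \leq \fanoutdepth$, which you establish by explicitly simulating each Fanout layer with a single measurement round using CAT-state resources. Your argument is valid and essentially re-derives part of the content of Proposition~\ref{prop:IMpareqMHlevel} (Fanout is Clifford, and the paper's proof of that proposition already shows any Clifford circuit collapses to one measurement round with parity processing). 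The paper's route is shorter because the monotonicity step $\IMrounds \leq \IMparityrounds$ is immediate, whereas your Fanout-simulation step requires the CAT-state construction and the clean-up bookkeeping you describe. On the other hand, your argument is self-contained relative to Proposition~\ref{prop:fanouteqMHlevel} alone, without needing the parity-round equivalence.
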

Furthermore, if the intermediate classical circuit may contain only parity gates\footnote{The parity gate computes $\textsf{parity}(x) = \sum_i x_i \mod 2$ on its input $x$, and can output many copies of $\textsf{parity}(x)$.}, the required number of rounds $\IMparityrounds$ to implement $U$ is equivalent to the level of $\MH$ required.

\begin{restatable}{proposition}{IMpareqMHlevel}\label{prop:IMpareqMHlevel}
    $\IMparityrounds = 2 \cdot \MHlevel + 4$
\end{restatable}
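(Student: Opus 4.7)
The plan is to prove both directions of this equality, building on Proposition~\ref{prop:IMleqMHlevel}, which already gives $\IMrounds \leq 2 \cdot \MHlevel + 4$ for intermediate measurements with \emph{arbitrary} classical processing. What is new here is that the parity-only restriction on classical processing is tight enough to match the general bound, and that a matching converse holds.

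For the upper bound $\IMparityrounds \leq 2 \cdot \MHlevel + 4$, I would revisit the simulation underlying Proposition~\ref{prop:IMleqMHlevel} and observe that the classical processing it requires is already \emph{linear over} $\FF_2$. This is because the simulation implements each Clifford layer of the magic hierarchy via gate teleportation (equivalently, MBQC on a cluster-style resource state prepared by a $\QNCZ$ circuit): the needed measurements are in Pauli bases, and the resulting byproduct Pauli corrections depend linearly (mod $2$) on the outcomes, since Clifford conjugation acts $\FF_2$-linearly on Pauli-operator indices. Parity gates compute exactly the $\FF_2$-linear functions, so the simulation can be carried out with parity-only classical processing without increasing the number of rounds.

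For the lower bound $\IMparityrounds \geq 2 \cdot \MHlevel + 4$, I would apply the deferred-measurement principle. Each measurement is replaced by a CNOT onto a fresh ancilla (Clifford), each parity computation becomes a CNOT network on the ancilla registers (Clifford), and each classically-controlled Pauli feedforward becomes a coherent controlled Pauli, i.e., CNOT or CZ (also Clifford). Thus an IM+parity circuit with $r$ rounds becomes a circuit interleaving the original $\QNCZ$ layers with Clifford layers, which is an alternating Clifford--$\QNCZ$ circuit whose level is controlled by $r$.

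The main obstacle is achieving the tight factor of two in the lower bound: a naive deferred-measurement conversion produces roughly $r$ alternations, whereas the claim requires $(r-4)/2$ levels. To close this gap, I would exploit the constant lightcone of each $\QNCZ$ layer to commute the Pauli byproduct corrections from one deferred round through the subsequent $\QNCZ$ layer---growing their support by only a constant factor---so that they can be absorbed into the following Clifford layer. Pairing up consecutive rounds in this way fuses two Clifford--$\QNCZ$--Clifford blocks into a single Clifford--$\QNCZ$ alternation, yielding the factor of two. The additive constant $+4$ should come out of careful bookkeeping at the boundary: the initial Clifford layer acting on the ancillas, and the final rounds needed to uncompute and discard the ancilla registers used to defer measurements and realize the parity feedforward coherently.
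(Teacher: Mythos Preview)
Your overall strategy matches the paper's: each Clifford layer is implemented by gate teleportation with one round of measurements plus $\FF_2$-linear (parity) processing for the Pauli corrections, and conversely each measurement-plus-parity round is absorbed into a single Clifford layer via deferred measurement together with coherent parity/fanout. The paper packages this as the clean identity $\IMparityrounds = \cliffdepth$ and then appeals to the standing bound $\tfrac{1}{2}\,\MHlevel \leq \cliffdepth \leq \tfrac{1}{2}\,\MHlevel + 1$.

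Where you go astray is the ``main obstacle'' paragraph. Taking the proposition's ``$=$'' literally, you try to gain an extra factor of two in the deferred-measurement direction by commuting the Pauli byproducts through the next $\QNCZ$ layer so they can be absorbed into the following Clifford layer. This does not work: a $\QNCZ$ layer is not Clifford, so $Q P Q^\dagger$ for a Pauli $P$ is in general \emph{not} a Pauli (it is merely supported on the lightcone of $P$), and hence cannot be absorbed into a subsequent Clifford layer. The paper never attempts this; its proof yields $\IMparityrounds = \cliffdepth$, which combined with \Cref{eq:level-bound-Cliff-QNC} gives only a constant-factor equivalence with $\MHlevel$, not the exact relation the notation appears to assert. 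Drop the commutation trick---the direct one-round-per-Clifford conversion is exactly what the paper does and is all the argument needs.
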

Our discussion thus far has only considered the complexity of implementing \emph{unitaries}, but circuits with intermediate measurements implement channels that are not in general unitary. Granting $\MH$ circuits partial trace and considering the complexity of implementing a quantum channel, \Cref{prop:IMleqMHlevel,prop:IMpareqMHlevel} also hold for general quantum channels. 

\paragraph{Magic} The level of the magic hierarchy serves as a gate-set-agnostic measure of nonstabilizerness--- also known as ``magic''--- in quantum computation. Recent concurrent works have referred to states that cannot be prepared by a Clifford circuit followed by a $\QNCZ$ circuit as exhibiting \emph{long-range magic} \cite{wei2025long}, \emph{nonlocal magic} \cite{andreadakis2025exact} or \emph{long-range nonstabilizerness} \cite{korbany2025long}. 

The most widely used measure of magic is $T$-count, the number of $T$ gates needed in a Clifford$+T$ circuit. Circuits with $T$-count $t$ can be simulated classically in time $\poly(n)2^{O(t)}$ \cite{bravyi2016improved}. $T$-count is also a meaningful cost metric in fault-tolerant quantum computing, where $T$ Clifford gates are typically inexpensive, but non-Clifford operations—such as $T$ gates—are costly to implement.

The $T$-\textit{depth} --- the number of layers of $T$ gates \cite{amy2013meet}--- reflects the cost of the $T$ gates when it is feasible to apply them in parallel. The level of the magic hierarchy provides a lower bound on $T$-depth, which in turn lower bounds $T$-count.

\begin{restatable}{proposition}{proptdepthmhlevel}\label{prop:tdepth-mhlevel}
    $\frac{1}{2} \cdot \MHlevel \leq \tdepth \leq \tcount$.
\end{restatable}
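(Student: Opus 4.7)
The plan is that both inequalities follow from essentially direct circuit manipulations, so the proposal is mostly a matter of carefully unpacking the definitions.

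For the right-hand inequality $\tdepth \leq \tcount$: I would argue that a Clifford$+T$ implementation of $U$ that uses $t$ many $T$ gates in total can, in the worst case, be viewed as having $t$ sequential layers, each containing a single $T$ gate. Hence the minimum number of $T$-layers is at most $t$. Formally, since the $T$-count upper bounds the total number of $T$ gates across all $T$-depth-optimal implementations (one can always choose to not parallelize), $\tdepth(U) \le \tcount(U)$.

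For the left-hand inequality $\frac{1}{2}\cdot\MHlevel \leq \tdepth$, equivalently $\MHlevel(U) \leq 2\cdot \tdepth(U)$: I would take a Clifford$+T$ implementation of $U$ achieving $T$-depth $d$ and write it in the canonical form
\[
U = C_d \cdot T_d \cdot C_{d-1} \cdot T_{d-1} \cdots C_1 \cdot T_1 \cdot C_0,
\]
where each $C_i$ is a (possibly empty) Clifford circuit and each $T_i$ is a tensor product of $T$ gates and identities acting on the appropriate subset of qubits. Every $T_i$ is a depth-$1$ circuit composed of single-qubit gates and is therefore trivially in $\QNCZ$ (single-qubit gates can be padded to two-qubit gates acting as the identity on the second wire). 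This decomposition is an alternating product of $2d+1$ blocks beginning with a Clifford, and hence by \Cref{def:altCQ} it witnesses $U \in \altCQ[2d]$. Taking the minimum over all Clifford$+T$ implementations of $U$ yields $\MHlevel(U) \leq 2\cdot\tdepth(U)$.

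There is no real obstacle here: the entire content is the observation that a layer of single-qubit $T$ gates is a (depth-$1$) $\QNCZ$ circuit, so the standard decomposition realizing the optimal $T$-depth automatically matches the alternating Clifford$/\QNCZ$ structure of the magic hierarchy, with each $T$-layer contributing exactly one $\QNCZ$ block sandwiched between Clifford blocks. The only care needed is bookkeeping with the definition of $\altCQ[k]$, namely that $k+1$ alternating blocks correspond to level $k$, which is where the factor of $2$ (rather than $2d+1$ or $2d-1$) in the bound originates.
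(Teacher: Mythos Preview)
Your proposal is correct and matches the paper's reasoning: the paper does not give a detailed proof but simply notes that a layer of $T$ gates is a $\QNCZ$ circuit, so the number of $\QNCZ$ rounds is at most the $T$-depth, and then applies \Cref{eq:level-bound-Cliff-QNC}. Your decomposition $U = C_d T_d \cdots T_1 C_0 \in \altCQ[2d]$ is precisely the explicit version of this observation, and your argument for $\tdepth \leq \tcount$ (the $T$-count-optimal circuit has at most $\tcount$ many $T$-layers) is standard.
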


The $T$-depth may be relevant in scalable fault-tolerant architectures with sufficient space to implement $T$-gates in parallel. However, in near-term hardware where such parallelization is limited, $T$-count remains the more practical metric.

It isn't clear whether $\tdepth$ is a lower bound on the level of the magic hierarchy to within a constant factor. The Solovay-Kitaev theorem tells us that any $\QNCZ$ circuit can be $\eps$-approximately implemented by a Clifford$+T$ circuit but with circuit size overhead $\poly \log (1/\eps)$, and it is unclear whether this can be transformed into a circuit with constant $T$-depth, independent of $\eps$. We leave this as an open question.

\begin{openproblem}
    Is $\MHlevel = \Theta(\tdepth)$?
\end{openproblem}

\noindent We sketch the proofs of \Cref{prop:fanouteqMHlevel,prop:IMleqMHlevel,prop:IMpareqMHlevel}  in \Cref{ssec:connections}.

\subsection{Lower bounds in the first level of the magic hierarchy}\label{ssec:results}
As a first step to characterizing the computational power of the magic hierarchy, we develop techniques to prove limitations of circuits in the first level. In particular, we consider circuits consisting of first an arbitrary-size Clifford circuit followed by a constant-depth circuit ($\altCQ[1]$). We use two different techniques for proving that explicit state families can not be approximately prepared by $\altCQ[1]$. The first argument applies to states that satisfy certain properties about their mutual information. The second result is relevant to codewords of quantum error correcting codes --- or from the physics perspective, states with topological order.

\subsubsection{Mutual information-based lower bounds}
For a quantum state $\rho$ we denote its mutual information between disjoint subsets of qubits $A, B$ as $\MI[\rho]{A}{B}$. We show that if a quantum state has many small regions with non-integer mutual information, and the mutual information stays below the next integer
 as the region sizes increase, then these states can not be prepared by $\altCQ[1]$ circuits.

\begin{theorem}[Informal, see \Cref{thm:robust-lb-with-params}]\label{thm:robust-cliffqnc-lb}
    Suppose $\psi$ is an $n$ qubit state satisfying:
    \begin{enumerate}
        \item The mutual information between any pair of qubits is at least $\alpha$
        \item  The mutual information between any two disjoint regions of size $s$ is at most $\beta$.
        \end{enumerate}
    for some constants $\alpha, \beta$ with $k < \alpha \leq \beta < k+ 1$, and integers $k\geq 0$, $s\geq 1$.  
If a Clifford circuit $C$ and a $\QNC$ circuit $U$ approximately prepares $\ket{\psi}$ to within trace distance $\eps$, then $U$ has depth at least $ \Omega\pbra{\log\min\cbra{ s , 1/\eps}}$. In particular, when $s = \omega(1)$ this state can not be prepared by $\altCQ[1]$ to within $o(1)$ error.
\end{theorem}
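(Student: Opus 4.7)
The plan is to assume $\ket{\psi'} := UC\ket{0^{n+a}}$ approximates $\ket{\psi}$ to trace distance $\eps$ on its first $n$ qubits, with $C$ an arbitrary Clifford circuit and $U$ a $\QNCZ$ circuit of depth $d$, and to derive $d = \Omega(\log\min\{s,1/\eps\})$. The crucial structural fact is that $\ket{\phi} := C\ket{0^{n+a}}$ is a pure stabilizer state, so the mutual information between any two disjoint subsystems of $\ket{\phi}$ is always a nonnegative integer. The two hypotheses trap the relevant mutual informations of $\ket{\psi}$ strictly inside $(k,k+1)$, and the low-depth circuit $U$ is too narrow to hide this noninteger value when shuttling information between $\phi$ and $\psi'$. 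To set up the geometry, let $L_i := \blc(\{i\})$ (so $|L_i|\leq 2^d$) and $F_i := \flc(L_i)$ (so $|F_i|\leq 2^{2d}$); a counting argument shows that once $n \gtrsim 2^{4d}$ there exist qubits $i\neq j \in [n]$ with $F_i \cap F_j = \emptyset$, which automatically gives $L_i \cap L_j = \emptyset$.

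Two lightcone inequalities lie at the heart of the argument. Applying the standard lightcone lemma once forward and once backward yields
\[
\MI[U\phi]{i}{j} \;\leq\; \MI[\phi]{L_i}{L_j} \;\leq\; \MI[U\phi]{F_i}{F_j}.
\]
The left inequality uses disjointness of $L_i, L_j$ to factor the restricted unitaries $U|_{L_i}$ and $U|_{L_j}$ across the partition and then invoke data processing. The right inequality is the same lemma applied to $U^{\dagger}$ acting on $U\ket{\phi}$, noting that the backward lightcones of $U^{\dagger}$ coincide with the forward lightcones of $U$; this direction uses exactly the disjointness $F_i \cap F_j = \emptyset$ secured above.

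Integrality squeeze. Fannes-type continuity on the single-qubit pair gives $\MI[U\phi]{i}{j} \geq \MI[\psi]{i}{j} - O(\eps) \geq \alpha - O(\eps) > k$ for $\eps$ sufficiently small compared to $\alpha - k$. Combined with the left inequality and the integrality of stabilizer mutual information, this forces $\MI[\phi]{L_i}{L_j} \geq k+1$. On the other side, whenever $2^{2d} \leq s$ the second hypothesis applies to $(F_i, F_j)$, giving $\MI[\psi]{F_i}{F_j} \leq \beta$, and Fannes--Audenaert continuity on regions of size $\leq 2^{2d}$ yields $\MI[U\phi]{F_i}{F_j} \leq \beta + O(\eps \cdot 2^{2d})$. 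Chaining with the right inequality gives $k+1 \leq \beta + O(\eps \cdot 2^{2d})$, a contradiction unless $2^{2d} \gtrsim s$ or $\eps \cdot 2^{2d} \gtrsim k+1 - \beta$. In either case $d = \Omega(\log\min\{s, 1/\eps\})$.

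The main delicate point is balancing the two continuity errors: the lower-bound side is cheap ($O(\eps)$, since the regions are single qubits) but the upper-bound side pays a multiplicative factor equal to the region size $2^{2d}$ from Fannes--Audenaert. Keeping $\eps \cdot 2^{2d}$ below the constant gap $k+1-\beta$ is precisely what produces the $\log(1/\eps)$ term in the final bound; it is essential that the lower-bound continuity does not scale with $d$, which is why the hypothesis is stated on single qubits rather than on larger regions. A secondary care is verifying that the backward direction of the lightcone lemma really follows from the forward lemma applied to $U^\dagger$, which amounts to observing that $U^\dagger$ is itself a $\QNCZ$ circuit of the same depth whose backward lightcones are literally the forward lightcones of $U$.
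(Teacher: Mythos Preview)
Your proposal is correct and follows essentially the same strategy as the paper: sandwich the integer-valued stabilizer mutual information $\MI[\phi]{L_i}{L_j}$ between two quantities that, by the hypotheses on $\psi$ together with Fannes--Audenaert continuity, lie strictly inside $(k,k+1)$, and derive a contradiction unless $d$ is large. The only organizational difference is that the paper pulls $\psi$ back through $U^{\dagger}$ and applies Fannes once on the preimage side (regions of size $\leq 2^d$), whereas you stay on the output side and apply Fannes twice (once on singletons, once on regions of size $\leq 2^{2d}$); this costs you a factor of two in the exponent but is asymptotically equivalent. One small point worth spelling out: when you invoke hypothesis~2 on $(F_i,F_j)$ you implicitly need $F_i,F_j\subseteq[n]$ or (under clean computation) that ancilla qubits contribute zero mutual information---the paper's formal version (\Cref{thm:robust-lb-with-params}) handles this explicitly.
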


\paragraph*{Example 1: the $\gamma$ biased CAT state} For each $\gamma \in [0,1]$, the $\gamma$-biased CAT state is defined as
\[
    \ket{\gamma CAT} := \sqrt{\gamma} \ket{0^n} + \sqrt{1 - \gamma} \ket{1^n}.
\]
For any disjoint regions $A, B$ such that $A \cup B \subset [n]$, it can be easily verified that $\MI{A}{B} = H(\gamma)$ where $H(p) := -p \log p - (1-p) \log (1-p)$ is the binary entropy function. We can set $H(\gamma)$ to take any value in $[0,1]$ by appropriately setting $\gamma \in [0, 1/2]$. We can thus apply  \Cref{thm:robust-cliffqnc-lb} with $s=n-1$ to get a depth lower bound of $\Omega\pbra{\log\min\cbra{ n, 1/\eps}}$. 

\paragraph{Example 2: The W-state} The $W$ state is the uniform superposition over all computational basis states with hamming weight $1$.
\[\ket{W_n} = \frac{1}{\sqrt{n}}\sum_{\substack{x\in \zo^n: \  |x| = 1}} \ket{x}\]
The $W$ state has small but nonzero mutual information between all regions. \Cref{thm:robust-cliffqnc-lb} allows one to show that there exists some polynomial $p$ such that $\altCQ[1]$ cannot prepare states that are $1/p(n)$-close to the $\ket{W_n}$ state.

\paragraph*{Lower bounds for gluing biased CAT states}
The mutual information-based techniques used to prove \Cref{thm:robust-cliffqnc-lb} can also be used for lower bounds against constant-depth circuits ($\QNCZ$) for some other state transformation problems. As an example, we show that it is not possible to glue two CAT states into a larger one with ($\QNCZ$). 
\begin{theorem}[see \Cref{thm:cat-state-gluing-lb}] \label{thm:informal:cat-state-gluing-lb}
    Suppose $U\in \unitaries[2n]$ is a $2n$ qubit quantum circuit for $n\geq 2$ such that 
    \begin{align}
        \ket{\bCATn[2n]{\alpha}} = U \ket{\bCATn[n]{\beta}} \ot \ket{\bCATn[n]{\beta}}  \label{eq:catstate-gluing}
    \end{align}
    For some $\alpha, \beta \in [0,1]$. 
    If $H(\alpha) \neq 2 H(\beta)$, then $U$ requires depth $\Omega(\log n)$. Furthermore, if $\alpha$ and $\beta$ satisfy $\abs{H(\alpha) - 2H(\beta)} = \Omega(1)$ then even $\eps$-approximately preparing the state requires depth $\Omega(\log \min \{n , 1/\eps\})$.
\end{theorem}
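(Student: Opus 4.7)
The plan is to use the lightcone mutual information bound: for a depth-$d$ unitary $V$ with two-qubit gates, if $A, B$ are output regions whose backwards lightcones $L(A), L(B)$ are disjoint (each of size at most $2^d$), then data processing under the tensor-product channel $\calE_A \otimes \calE_B$ yields $\MI[V\rho V^\dag]{A}{B} \leq \MI[\rho]{L(A)}{L(B)}$. The key MI structure of a biased cat state $\ket{\bCATn[m]{\gamma}}$ is that the mutual information between two disjoint nonempty subsets equals $H(\gamma)$ if their union is a proper subset of $[m]$, and $2H(\gamma)$ if they partition $[m]$.

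For the case $H(\alpha) > 2H(\beta)$, I would pick two output qubits $i \neq j \in [2n]$; for $d < \log n - 1$, greedy selection yields $L(i), L(j)$ disjoint with union a proper subset of $[2n]$, each of size at most $2^d$. Since $\phi$ factorizes across $S_1 = \{1,\ldots,n\}$ and $S_2 = \{n+1,\ldots,2n\}$, the input MI splits into the sum of MIs inside each CAT factor, each at most $H(\beta)$, so $\MI[\phi]{L(i)}{L(j)} \leq 2H(\beta)$. Combined with $\MI[\psi]{i}{j} = H(\alpha)$ this contradicts $H(\alpha) > 2H(\beta)$, forcing $d = \Omega(\log n)$.

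For the case $H(\alpha) < 2H(\beta)$, I would apply the analogous argument to $U^\dag$ (same depth). Using input qubit pairs $i', j' \in S_1$ (for which $\MI[\phi]{i'}{j'} = H(\beta)$) with disjoint forward lightcones $L^{\mathrm{fwd}}(i'), L^{\mathrm{fwd}}(j')$ through $U$, the MI bound yields $H(\beta) \leq H(\alpha)$. To strengthen this to $H(\alpha) \geq 2H(\beta)$, I would consider input regions $A, B$ partitioning $[2n]$ with each part nontrivially splitting both $S_1$ and $S_2$, so that $\MI[\phi]{A}{B} = 4H(\beta)$. If the forward lightcones $L^{\mathrm{fwd}}(A), L^{\mathrm{fwd}}(B)$ through $U$ are disjoint (they automatically cover $[2n]$), then data processing yields $4H(\beta) \leq \MI[\psi]{L^{\mathrm{fwd}}(A)}{L^{\mathrm{fwd}}(B)} = 2H(\alpha)$, a contradiction. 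The main obstacle will be arranging the forward lightcones to be disjoint for such a partition: adversarial circuits may produce transitive overlaps of backwards lightcones that merge all input qubits into a single equivalence class. I would address this by exploiting that $U$ must nontrivially mix $S_1$ and $S_2$ (otherwise $\psi$ would factor as a product state across this cut, contradicting the entanglement of $\ket{\bCATn[2n]{\alpha}}$), so that the resulting cluster structure for $d = o(\log n)$ admits a partition $A, B$ with the required properties.

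For $\eps$-approximate preparation, I would apply Alicki-Fannes-Winter continuity of mutual information. When $H(\alpha) > 2H(\beta)$, the MI is between two single qubits (constant dimension), so the continuity error is $O(\eps)$ independent of $d$; the approximate bound $H(\alpha) \leq 2H(\beta) + O(\eps)$ fails for any $\eps$ below a fixed constant depending on the gap $H(\alpha) - 2H(\beta)$, recovering $d = \Omega(\log n)$ with constant-error robustness. When $H(\alpha) < 2H(\beta)$, the MI involves regions of size up to $2^d$, so the continuity error is $O(\eps \cdot 2^d)$; preserving the inequality with slack $|H(\alpha) - 2H(\beta)| = \Omega(1)$ requires $\eps \cdot 2^d = O(1)$, yielding $d = \Omega(\log \min\{n, 1/\eps\})$.
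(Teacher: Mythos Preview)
Your treatment of the case $H(\alpha) > 2H(\beta)$ is correct and matches the paper's argument: single output qubits $i,j$ in the $\alpha$-CAT state have $\MI[\psi]{i}{j}=H(\alpha)$, while the backwards lightcones land in the tensor product $\beta\text{CAT}\otimes\beta\text{CAT}$ where additivity gives $\MI[\phi]{L(i)}{L(j)}\le 2H(\beta)$. Your robust analysis for this direction (constant-size regions, so $O(\eps)$ continuity error, yielding $\Omega(\log n)$ even for constant $\eps$) is also right.

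The case $H(\alpha) < 2H(\beta)$ has a real gap. Your first attempt with $i',j'\in S_1$ only gives $H(\beta)\le H(\alpha)$, as you note. But the fallback to a full bipartition $A\sqcup B=[2n]$ cannot be salvaged. Disjoint forward lightcones for a bipartition $A|B$ force $U$ to decompose as $U_{A}\otimes U_{B}$; equivalently, you need the ``overlap graph'' (vertices $=$ input qubits, edges whenever forward lightcones intersect) to be disconnected. But even a depth-$2$ brickwork circuit (layer~1: gates $(1,2),(3,4),\dots$; layer~2: gates $(2,3),(4,5),\dots$) already has a connected overlap graph on all $2n$ qubits, so no nontrivial such partition exists. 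Your proposed fix via ``$U$ must mix $S_1$ and $S_2$'' does not help: that constraint says nothing about connectivity of the overlap graph.

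The paper's fix is much simpler and avoids partitions entirely: take \emph{size-two} regions straddling both halves, $S=\{s_1,s_2\}$ and $T=\{t_1,t_2\}$ with $s_1,t_1\in S_1$ and $s_2,t_2\in S_2$. Tensor-product additivity gives
\[
\MI[\phi]{S}{T}=\MI[\beta\text{CAT}]{s_1}{t_1}+\MI[\beta\text{CAT}]{s_2}{t_2}=2H(\beta),
\]
and for $d<\tfrac12\log(n/2)$ one can choose $S,T$ with disjoint forward lightcones whose union is a proper subset of $[2n]$, so $\MI[\psi]{\flc(S)}{\flc(T)}=H(\alpha)$. The lightcone lemma then yields $2H(\beta)\le H(\alpha)$, the desired contradiction. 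With this fix your robust analysis for this direction goes through exactly as you wrote: the continuity error lands on regions of size $O(2^d)$ on the $\psi\approx\alpha\text{CAT}$ side, giving $d=\Omega(\log\min\{n,1/\eps\})$.
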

 In the case we also have that $H(\alpha) > 2H(\beta)$ then we can prove a lower bound of $\Omega(\log n)$, which is indpendent of approximation error! This is found in \Cref{thm:cat-state-gluing-eps-indep}.
This lower bound leaves open the setting when $H(\alpha) = 2 H(\beta)$.
\begin{openproblem}
    Can a constant-depth circuit transform two $\beta$-biased CAT states into one $\alpha$-biased CAT state, as in \Cref{eq:catstate-gluing}, for $H(\alpha) = 2 H(\beta)$?
\end{openproblem}

\subsubsection{Codespace lower bounds}

In the next result we identify quantum error-correcting codes such that $\altCQ[1]$ circuits can not approximately prepare \emph{any} of the states in their codespace. More formally, we consider families of codes $\{\calC_n\}$ where for each $n\in \NN$, $\calC_n\sse \CC^{2^n}$ is a subspace of $n$-qubit states. We say that $\{\calC_n\}$ has distance $d(n)$ if, for each $n$ and any subset $A$ of less than $d(n)$ qubits, there exists a recovery map that can correct erasure on $A$: that is, $\textsf{Rec}(\tr_A \rho) = \rho$ for each $\rho \in \calC_n$. For notational simplicity, we often refer to a code $\calC$ on $n$ qubits, implicitly referring to a family of codes.

We say that a subspace $\calC$ is a \emph{perturbed stabilizer code}  if there is some $\QNCZ$ unitary that maps $\calC$ to a stabilizer code $\calC'$. \footnote{More formally, we are referring to families of codes and unitaries $\{\calC_n\}$, $\{\calC'_n\}$ and $\{U_n\}$.}
It is not hard to see that for each perturbed stabilizer code there is a $\altCQ[1]$ encoding circuit.  This is because any stabilizer code has a Clifford encoding circuit.

We show that essentially, this is all $\altCQ[1]$ can do. 
If a local\footnote{A quantum error correcting code is $\ell$-\emph{local} if it is defined as the ground space of an $\ell$-local Hamiltonian.} quantum error correcting code with super-constant distance is \emph{not} approximately a perturbed stabilizer code, then $\cliffqncz$ cannot prepare \emph{any} of the states in the codespace, even approximately. We show that this applies to:
\newcommand{\Hcat}{\Psi^{\textsf{CAT}}}
\begin{enumerate}
    \item \textbf{The Feynman-Kitaev history state} for the circuit that prepares $\ket{\textsf{CAT}_n} = \frac{1}{\sqrt{2}}(\ket{0^n} + \ket{1^n})$.
    \begin{align*}
    \ket{\Hcat}:= \frac{1}{\sqrt{n}} \sum_{t=0}^n \ket{\textsf{unary(t)}}_{\textsf{time}} \ot\ket{\textsf{CAT}_t} \ket{0^{n-t}}_{\textsf{state}}
    \end{align*}
    \begin{restatable}{theorem}{thmcathistorystaterobustlb}
        There is some $\eps \geq \Omega(1/n^3)$ such that preparing a state that is $\eps$-close in trace distance to the state $\ket{\Hcat}$ starting from a stabilizer state requires a circuit that has depth at least $\Omega(\log n)$
    \end{restatable}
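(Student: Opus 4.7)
The plan is to apply the mutual information-based lower bound of Theorem \ref{thm:robust-cliffqnc-lb} (in its quantitative form, Theorem \ref{thm:robust-lb-with-params}) to the state $\ket{\Hcat}$. The crucial observation is that $\ket{\Hcat}$ has a classically diagonal reduced structure on its state register, which makes all relevant entropies and mutual informations explicitly computable.

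First, I would analyze the reduced density matrices. Since the time register basis states $\{\ket{\text{unary}(t)}\}_{t=0}^n$ are pairwise orthogonal, tracing out the time register yields the classical mixture
\[
\rho_S \;=\; \frac{1}{n+1}\sum_{t=0}^n \proj{\text{CAT}_t}\otimes \proj{0^{n-t}}.
\]
For any subset $R=\{i_1<i_2<\cdots<i_k\}$ of state register qubits, partially tracing each $\ket{\text{CAT}_t}$ fragment kills the GHZ coherences whenever a qubit of the fragment is traced out, so $\rho_R$ is diagonal in the computational basis and supported on the $k+1$ ``ladder'' strings $\ket{1^l 0^{k-l}}$ for $l=0,1,\ldots,k$, with explicit probabilities $p_0=(i_1+n+1)/(2(n+1))$, $p_l=(i_{l+1}-i_l)/(2(n+1))$ for $1\le l\le k-1$, and $p_k=(n+1-i_k)/(2(n+1))$.

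Next, I would use this formula to verify the hypotheses of the theorem on a well-chosen subset of qubits. For pairs of state register qubits $S_i, S_j$ with $i, j$ in a middle range such as $[\lfloor n/4\rfloor, \lfloor 3n/4\rfloor]$,
\[
I(S_i : S_j) \;=\; H\!\pbra{\tfrac{i+n+1}{2(n+1)}} + H\!\pbra{\tfrac{j+n+1}{2(n+1)}} - H(p_0,p_1,p_2),
\]
and each binary-entropy argument is bounded away from both $0$ and $1$, so $I(S_i:S_j)$ lies in a constant interval $[\alpha, 1-\delta]$ with $\alpha, \delta > 0$ independent of $n$. Analogous ladder-entropy computations for disjoint regions $A, B$ of size at most $s$ inside the chosen range bound $I(A:B)\le \beta$ for some $\beta<1$.

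Finally, I would invoke the formal lower bound with $k=0$, constants $\alpha, \beta\in(0,1)$, and parameter $s=\Theta(n)$, obtaining a $\QNC$-depth lower bound of $\Omega(\log\min\{s, 1/\eps\}) = \Omega(\log n)$ whenever $\eps \ge \Omega(1/n^3)$. The main technical obstacle is the upper bound on $I(A:B)$ for multi-qubit regions: the ladder distribution can support entropy up to $\log(k+1)$, so ensuring $I(A:B)<1$ forces either a careful restriction on which qubits may belong to $A, B$ or a delicate choice of $s$. Balancing this against the requirement that the pair-MI lower bound $\alpha$ remain a positive constant is the heart of the argument, and the $1/n^3$ error tolerance in the statement plausibly emerges from tracking how the gap $\min(\alpha, 1-\beta)$ enters the quantitative depth-versus-error trade-off in Theorem \ref{thm:robust-lb-with-params}.
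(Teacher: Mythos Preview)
Your approach is genuinely different from the paper's, and the obstacle you flag is not merely technical---it is fatal to the mutual-information route. Condition~2 of Theorem~\ref{thm:robust-lb-with-params} must hold for \emph{all} disjoint regions $A,B$ of size below $s$, because the lightcones $\flc\blc(i),\flc\blc(j)$ of an arbitrary circuit $U$ can land anywhere. But for $\ket{\Hcat}$, interleaved regions have unbounded mutual information: take $A$ to be the odd-indexed and $B$ the even-indexed qubits in any contiguous block of $2k$ time-register positions. The unary clock makes the joint string on $A\cup B$ a ladder $1^l0^{2k-l}$, and given the ladder index on $A$ there are at most two possibilities for $B$; hence $H(B\mid A)\le 1$ while $H(B)=\Theta(\log k)$, so $I(A:B)=\Theta(\log k)$. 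Thus $\beta<1$ is impossible for any $s\ge 4$, and with $s=O(1)$ the theorem yields only a trivial $O(1)$ depth bound. No restriction on which qubits host $i,j$ helps, since their lightcones may still reach such interleaved regions.

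The paper instead applies the code-based infectiousness machinery (Corollary~\ref{cor:ham-dl-exp-lb}). The point is that $\ket{\Hcat}$ is the \emph{unique} ground state of the $5$-local Feynman--Kitaev Hamiltonian, which has spectral gap $\Delta=\Omega(1/n^2)$ and $m=\Theta(n)$ terms; uniqueness gives the groundspace distance $>2n$. One then only needs a pairwise \emph{trace-distance} correlation bound $\tfrac12\|\Hcat_{ij}-\Hcat_i\otimes\Hcat_j\|_1\ge 1/16$ for $\Theta(n)$ many state-register qubits $i,j$ (shown via the observables $\proj{0}_i,\proj{1}_j$). Plugging $t=n/2$, $\gamma=1/16$ into Corollary~\ref{cor:ham-dl-exp-lb} gives blowup $\Omega(n^{2/5})$ and hence depth $\Omega(\log n)$, with robustness threshold $\eps=\gamma^2\Delta/(4m)=\Omega(1/n^3)$. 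So the $1/n^3$ arises from the Hamiltonian gap and term count, not from any mutual-information margin.
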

    \item Codes with codespaces with \textbf{dimension that is not a power of two} 
    \begin{restatable}{theorem}{thmdimrobustlb}\label{thm:dim-power2-lb}
    Suppose $H = \sum_{i=1}^m h_i$ is an $\ell$-local Hamiltonian with $m \geq n$, each $\norm{h_i}_\infty \leq 1$, and gap $\Delta\leq 1$. Let $\calC$ be the groundspace of $H$ and suppose it has distance $d$. If the dimension of $\calC$ is not a power of two then there exists an $\eps \geq \Omega(\Delta/m)$ such that preparing a state $\eps$ close to $\calC$ starting with a stabilizer state requires a circuit with depth $\geq \frac{1}{2}\log(d/\ell)$ 
\end{restatable}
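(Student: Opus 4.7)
The plan is to prove the theorem by contradiction: suppose a $\QNCZ$ circuit $U$ of depth $d_{ckt} < \tfrac{1}{2}\log(d/\ell)$ and a stabilizer state $\ket{s}$ satisfy $\norm{U\ketbra{s}{s}U^\dagger - \rho}_1 \leq \eps$ for some $\rho \in \calC$, and derive a contradiction from the hypothesis that $\dim \calC$ is not a power of two.

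First, I will pull back the Hamiltonian, setting $H' := U^\dagger H U = \sum_i h_i'$ with $h_i' := U^\dagger h_i U$. By the standard lightcone bound for depth-$d_{ckt}$ two-qubit-gate circuits, each $h_i'$ has support of size at most $\ell\cdot 2^{d_{ckt}}$. The groundspace of $H'$ is exactly $U^\dagger\calC$, with the same dimension and gap $\Delta$ as $\calC$. Moreover, $U^\dagger\calC$ has distance at least $d/2^{d_{ckt}}$: any erasure on a set $A$ of qubits pushes forward through $U$ to an erasure on at most $|A|\cdot 2^{d_{ckt}}$ qubits of $\calC$, which is correctable whenever that quantity is below $d$. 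The depth hypothesis $d_{ckt} < \tfrac12\log(d/\ell)$ is exactly what is needed to give $\ell\cdot 2^{d_{ckt}} < d/2^{d_{ckt}}$, so each $h_i'$ is supported on strictly fewer qubits than the distance of $U^\dagger\calC$, and Knill--Laflamme applies: $h_i'$ has the same reduced expectation on every state in $U^\dagger\calC$.

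Second, I will bring in the stabilizer structure of $\ket{s}$. Since $U\ket{s}$ is $\eps$-close to $\calC$, the state $\ket{s}$ is $\eps$-close to $U^\dagger\calC$, hence $\bra{s}H'\ket{s} \leq E_0 + O(\eps m)$. I will then invoke the paper's \emph{infectiousness} property---the structural lemma underlying all the codespace lower bounds---to conclude that $U^\dagger\calC$ is itself $\delta$-close (in operator norm of projectors) to a stabilizer code, for some $\delta = O(\eps m/\Delta)$. The intuition is that those elements of the stabilizer group of $\ket{s}$ that are compatible with the local structure of $H'$ cut out a stabilizer code whose codewords satisfy the same local reduced density matrices as $U^\dagger\calC$ on every set of size below the distance; the local indistinguishability established in step one then pins down the full codespace from this local data. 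Choosing $\eps$ at the advertised threshold $\eps \geq \Omega(\Delta/m)$ is precisely what forces $\delta < 1$.

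Third, I will close by a dimension count. Every stabilizer code has dimension a power of two, and two subspaces whose projectors are within operator norm strictly less than $1$ must have equal dimension (the restriction of one projector to the other subspace is injective). Hence $\dim\calC = \dim U^\dagger\calC = \dim \Cstab$ is a power of two, contradicting the hypothesis. The main obstacle is making step two fully quantitative: starting from the single low-energy stabilizer witness $\ket{s}$, producing a global stabilizer code that approximates all of $U^\dagger\calC$ with error $O(\eps m/\Delta)$ requires identifying which generators of the stabilizer group of $\ket{s}$ extend consistently through the local terms $h_i'$ and tracking how the approximation error propagates when the codespace is reconstructed from its local reduced states. The lightcone ingredient of step one only delivers the local data; the infectiousness argument is what converts it into a global statement about the codespace.
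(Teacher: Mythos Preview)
Your proposal is correct and follows essentially the same route as the paper: invoke the robust infectiousness result (\Cref{cor:infectiousness-ham-technical}) to conclude that $\calC$ (equivalently $U^\dagger\calC$ after your pullback) is close to a perturbed stabilizer code, then use a dimension comparison to reach a contradiction. The paper applies infectiousness directly to $\calC$ without first pulling back through $U$, and the closeness guarantee is $\sqrt{\eps m/\Delta}$ rather than $O(\eps m/\Delta)$, but neither difference is substantive---the pullback just absorbs the lightcone bookkeeping that is already inside the proof of infectiousness, and both error scalings yield the same $\eps = \Theta(\Delta/m)$ threshold for the dimension argument to fire.
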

    Examples of such codes include some non-regular XP stabilizer codes \cite{webster2022xp}, the quantum double model for some nonabelian anyons \cite{cui2020kitaev}, and the Fibonacci anyon code on certain manifolds as mentioned in \cite{wei2025long,schotte2022quantum}. 
\end{enumerate}

To prove these, our key structural result is an \emph{infectiousness property}: if a codespace contains even one state preparable by $\cliffqncz$, then the entire codespace is a perturbed stabilizer code.
\begin{theorem}[Infectiousness of {$\cliffqncz$} states in codes]
    \label{thm:intro:CQinfectious}
    Let $\calC$ be the groundspace of an $\ell$-local Hamiltonian with distance $\omega(\ell)$. If there exists a state $\ket{\psi}\in \calC$ that can be prepared by a $\cliffqncz$ circuit, then $\calC = U \calC_S$ for some stabilizer code $\calC_S$ with $O(\ell)$-local stabilizer generators, and some $\QNCZ$ circuit~$U$. 
\end{theorem}
\noindent Surprisingly, this theorem allows us to reason about the structure of the entire code $\calC$, even if just one of its codewords can be prepared by an $\cliffqncz$ circuit. For this reason it may be independently interesting. 

We also prove a robust infectiousness property (\Cref{thm:robust-code-stab-containment,cor:infectiousness-ham-technical}): If the Hamiltonian is gapped\footnote{We note that typically a Hamiltonian is considered ``gapless'' if the spectral gap $\Delta$ is decreasing with $n$. But in our result, as long as the spectral gap is $\Delta = 1/\poly(n)$, we can still get a lower bound against approximating the state to within error below some $1/\poly(n)$.}, then approximately preparing a codestate means that the entire codespace is approximately a perturbed stabilizer code.

Surprisingly, this theorem allows us to reason about the structure of the entire code $\calC$, even if just one of its codewords can be approximately prepared by an $\altCQ[1]$ circuit. For this reason it may be independently interesting.

\paragraph{Ancillas} Our code-based lower bounds hold against any $\altCQ[1]$ circuit that uses arbitrarily many ancilla qubits that start and end in the all zeros state. This is because we can consider the codespace in the larger Hilbert space including the extra ancilla qubits set to $\ket{0}$. The Hamiltonian will just have an extra term $(-\proj{0}_i)$ for each ancilla qubit $i$, and the distance of the code does not increase.

We remark that in independent concurrent work, the authors of \cite{wei2025long} also show the infectiousness property of $\altCQ[1]$ circuits (\Cref{thm:intro:CQinfectious}), with the robust version implicit in the proof of their Theorem 9.  Furthermore, the authors also independently prove \Cref{thm:dim-power2-lb}.

\subsection{Connections to classical circuit lower bounds and the natural proofs barrier} \label{ssec:intro:barriers}

While the magic hierarchy suggests a structured approach to proving increasingly stronger lower bounds one level at a time, in this section we discuss the implications of these lower bounds for \emph{classical circuit complexity}. These connections are summarized in \Cref{fig:barriers} for state preparation lower bounds.

\begin{figure}
    \centering
    \resizebox{!}{8cm}{\begin{tikzpicture}
        \def\xaxis{1}      \def\xlabels{\xaxis -0.2}      \def\xshading{\xaxis}      \def\width{7}  
    \def\stwoheight{3}
    \def\sthreeheight{4}
    \def\sfourheight{5}
    \def\oneheight{0.5}
    \def\height{8}

        \draw[thick,->] (\xaxis,0) -- (\xaxis,\height);

    \node[anchor=east] at (\xaxis-1.3, 0.5*\height) {Level of $\MH$};

        \foreach \y/\label in {0/0, \oneheight/1, \stwoheight/$s_2$, \sthreeheight/$s_3$, \sfourheight/$s_4$} {
        \node[left] at (\xlabels,\y) {\label};          \draw[dashed, line width = 0.7pt] (\xaxis,\y) -- (\xshading+\width,\y);      }
    \node[anchor=west] at (\xshading + \width, 0) {Clifford, $\QNCZ$};
    \node[anchor=west] at (\xshading+\width, \oneheight) {Our lower bounds};
    \node[anchor=west] at (\xshading+\width, \stwoheight) {implies depth-2 $\TCZ$ lower bounds};
    \node[anchor=west] at (\xshading+\width, \sthreeheight) {implies depth-3 $\TCZ$ lower bounds};
    \node[anchor=west] at (\xshading+\width, \sfourheight) {implies depth-4 $\TCZ$ lower bounds};

        \fill[blue!20, opacity=0.5] (\xshading,\oneheight) rectangle (\xshading+\width,\sthreeheight);
    \node[above, align=center] at (\xshading+\width/2,1.75) {Requires new techniques \\no known barriers};

    \fill[gray!30, opacity=0.5] (\xshading,\sthreeheight) rectangle (\xshading+\width,\sfourheight);
    \node[above] at (\xshading+\width/2,\sthreeheight) { Groundbreaking $\TCZ$ lower bounds};

    \fill[red!20, opacity=0.5] (\xshading,\sfourheight) rectangle (\xshading+\width,\height);
    \node[above] at (\xshading+\width/2,\sfourheight ) {Natural proofs barrier for $\TCZ$};

\end{tikzpicture}
    }
    \caption{Connection between magic hierarchy lower bounds for state preparation and classical circuit  lower bounds. At various levels of the magic hierarchy, lower bounds imply depth lower bounds for $\TCZ$. We indicate the barriers facing $\TCZ$ lower bounds in the shaded regions. We show that $\ctwostate \leq \ctwostateval, \cthreestate \leq \cthreestateval$, and $\cfourstate \leq \cfourstateval$, but it is open whether these upper bounds are tight.}
    \label{fig:barriers}
\end{figure}

\paragraph{Sufficiently strong lower bounds against $\MH$ imply $\TCZ$ lower bounds}
It is a longstanding open problem to prove that an explicit Boolean function cannot be computed by $\TCZ$ circuits, constant-depth classical circuits with threshold gates. $\TCZ$ represents a key frontier in circuit complexity: while lower bounds are known for depth-2 threshold circuits, no nontrivial lower bounds are known even for depth-3 $\TCZ$ circuits.

Since $\QACZF$ circuits can simulate $\TCZ$ circuits \cite{hoyer2005quantum}, any lower bound against $\MH = \QACZF$ for computing a Boolean function would immediately imply a lower bound against $\TCZ$ — a major breakthrough in classical complexity theory, with potential implications for breaking longstanding cryptographic assumptions. In particular, $\TCZ$ encounters the \emph{natural proofs barrier} \cite{natural-proofs}. There are known constructions of pseudorandom functions implementable by depth-4 $\TCZ$ circuits \cite{prf-tc0-naor,krause2001pseudorandom} under the decisional Diffie-Hellman (DDH) assumption. Thus, any ``natural proof'' that establishes an explicit lower bound against depth-4 $\TCZ$ could be used to break the security of DDH.

Because natural proofs encompass many existing techniques, this barrier is regarded as a central and serious obstacle to proving explicit circuit lower bounds — and it underscores why progress against $\TCZ$ (or equivalently against $\MH$) would constitute a major advance.

In the context of state preparation, the story is less obvious. It isn't immediately clear that a lower bound for state preparation would imply any classical circuit lower bound. Recently, Rosenthal's work  \cite{rosenthal2024efficient} shows that for each quantum state $\ket{\psi}$, there exists a Boolean function $f_\psi$ such that a $\QACZF$ circuit with query access to $f_\psi$ can prepare $\ket{\psi}$ to within exponentially small error\footnote{We refer the interested reader to Rosenthal's thesis \cite{rosenthal-thesis}, for a clear exposition of this indirect reduction.}. This implies that quantum circuit lower bounds for preparing some explicit state do, in fact, imply classical circuit lower bounds for an explicit Boolean function.

\paragraph*{Implied depth lower bounds of $\TCZ$}
An explicit state preparation lower bound against the $k$th level of the magic hierarchy would imply the following $\TCZ$ depth lower bounds.
\begin{restatable}{theorem}{statepreptczlb}\label{thm:stateprepbarrier}
    Suppose an explicit family of quantum states $\{\ket{\psi_n}\}_n$ can not be prepared by $\altQC[8d +30]$ to within trace distance $\eps = 1/\poly(n)$ for some $d \in \NN$. Then an explicit family of boolean functions $\{f_n^\psi\}_n$ (defined in \cite{rosenthal2024efficient}), can not be computed by depth-$d$ $\TCZ$ circuits.
\end{restatable}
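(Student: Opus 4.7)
The plan is to argue by contrapositive. Suppose $\{f_n^\psi\}$ is computable by $\TCZ$ circuits of depth $d$; I will build an $\altCQ[k]$ circuit preparing $\ket{\psi_n}$ to within $1/\poly(n)$ trace distance for $k \leq 20d + 17$, contradicting the hypothesis. The construction chains three ingredients: Rosenthal's oracle-based state preparation result \cite{rosenthal2024efficient}, the standard simulation of $\TCZ$ inside $\QACZF$, and the equivalence between fanout-depth and the level of the magic hierarchy from Proposition \ref{prop:fanouteqMHlevel}.

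First, I would invoke Rosenthal's theorem to obtain a $\QACZF$ circuit $R$ of some explicit constant fanout-depth $f_0$ that, when given oracle access to $f_n^\psi$, prepares $\ket{\psi_n}$ to within exponentially small trace distance — well below any $1/\poly(n)$ threshold. The oracle queries of $R$ can be aligned into at most $f_0$ parallel query layers. Next, I would replace each query layer by an explicit $\TCZ$ subcircuit for $f_n^\psi$ of depth $d$, realized cleanly: run the $\TCZ$ circuit forward (using ancillas) to write $f_n^\psi(x)$ onto an answer qubit, apply the controlled operation that $R$ would have used on the query answer, and then uncompute the $\TCZ$ circuit so the ancillas return to $\ket{0}$. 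Each threshold gate is simulable by a $\QACZF$ subcircuit of constant fanout-depth, so a $\TCZ$ subcircuit of depth $d$ becomes a $\QACZF$ subcircuit of fanout-depth at most $c_1 d$ for an explicit $c_1$, and the compute/uncompute structure doubles this to $2 c_1 d$.

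Composing $R$ with these substitutions yields a single $\QACZF$ state-preparation circuit for $\ket{\psi_n}$ of fanout-depth at most $f_0 + 2 c_1 d$. Applying the upper bound $\MHlevel \leq 2 \cdot \fanoutdepth$ from Proposition \ref{prop:fanouteqMHlevel} gives
\[
\MHlevel \;\leq\; 2 f_0 + 4 c_1 d.
\]
Tracking the constants in the underlying threshold-to-fanout simulation and in Rosenthal's circuit, one would verify $4 c_1 \leq 20$ and $2 f_0 \leq 17$, placing $\ket{\psi_n}$ in $\altCQ[20d+17]$, which is the desired contradiction.

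The main obstacle is bookkeeping rather than conceptual novelty. Two points require care: (a) confirming that Rosenthal's construction, after inlining the $\TCZ$ subcircuits via compute/uncompute, is genuinely clean in the sense of Section \ref{ssec:altQCdef} (ancillas returned to $\ket{0}$) and that the query layers of $R$ can be executed in parallel without additional fanout-depth overhead beyond $f_0$; and (b) pinning down the absolute constants $c_1$ and $f_0$ tightly enough to match the advertised bound $(k-17)/20$, rather than a vaguer $O(d) + O(1)$. Both steps rely on quantitative versions of prior constructions and do not introduce new techniques, but the constants must be threaded through the reductions explicitly.
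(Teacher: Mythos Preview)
Your contrapositive approach --- Rosenthal's oracle state synthesis, substitution of the oracle by a $\TCZ$ circuit simulated in $\QACZF$ via H{\o}yer--\v{S}palek, then conversion of fanout-depth to $\MHlevel$ via Proposition~\ref{prop:fanouteqMHlevel} --- is exactly the route the paper takes (the paper itself gives only a sketch, citing the relevant constructions). The one refinement the paper mentions that you do not is committing specifically to the \emph{one-query} state synthesis from \cite{rosenthal-thesis} and hardcoding the controlled-$A_k$ gates in the hash-state construction of Section~2.3.1 there, which is what pins down the constants $17$ and $20$; your ``$f_0$ parallel query layers'' formulation is correct in spirit but would need that specialization to recover the stated bound.
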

For implementing a boolean function $f$ with a quantum circuit $U_f : \ket{x}\ket{b}\ket{0^a} \to \ket{x}\ket{f(x) + b} \ket{0^a}$, we get the analogous statement.

\begin{restatable}{theorem}{boolfunctczlb}\label{thm:boolfuncbarrier}
    Suppose the family of boolean functions $\{f_n\}_n$ can not be implemented by $\altQC[8d]$ for some $d \in \NN$, then  $\{f_n\}_n$ can not be implemented by depth-$d$ $\TCZ$ circuits.
\end{restatable}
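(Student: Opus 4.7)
I would prove the contrapositive: any $\{f_n\}$ computable by a depth-$20k$ $\TCZ$ circuit is implementable in $\altCQ[k]$. Equivalently, $\TCZ[20k] \subseteq \altCQ[k]$, so each alternation in the magic hierarchy must be able to absorb roughly $20$ layers of threshold gates.

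First I would invoke the Hoyer--Spalek simulation \cite{hoyer2005quantum} together with the Takahashi--Tani collapse \cite{takahashi2016collapse}, which embed $\TCZ$ into $\QNCZF = \QACZF$. The quantitative ingredient I would extract is that each layer of threshold gates can be realized by a $\QNCZF$ circuit using a bounded number of layers of parallel fanout gates (interleaved with local operations), so any depth-$d$ $\TCZ$ circuit yields a $\QNCZF$ circuit of fanout depth $O(d)$. Because $\TCZ$ is a classical model but the outputs are classical bits, implementing $f_n$ as a coherent unitary with $\ket{0}$ ancillas introduces no extra depth overhead beyond this simulation.

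Second, I would apply \Cref{thm:intro:MHeqQACZF} together with \Cref{prop:fanouteqMHlevel} (which gives $\MHlevel \le 2 \cdot \fanoutdepth$) to turn the fanout-depth bound into a bound on the magic hierarchy level, obtaining $\MHlevel \le 2 c_1 d$ for the absolute constant $c_1$ coming from Hoyer--Spalek. Setting $d = 20k$ then gives the desired implementation in $\altCQ[k]$ provided $2 c_1 \cdot 20 \le 1$.

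The main obstacle will be pushing the combined constant down to the advertised factor of $20$. A naive per-threshold-layer accounting of Hoyer--Spalek would give a constant much larger than needed, and the factor-$2$ slack in \Cref{prop:fanouteqMHlevel} only makes this worse. The key structural observation I would lean on is that fanout gates are themselves Clifford, so an arbitrarily long sequence of fanout, CNOT, and parity operations can be absorbed into a single Clifford segment of an alternating decomposition; what actually constrains the magic-hierarchy level is the number of rounds of non-Clifford $\QNCZ$ work. By delegating all linear and fanout operations to the intervening Clifford segments and amortizing the genuinely nonlinear content of many threshold layers into each $\QNCZ$ round, one should be able to bring the effective per-threshold cost within the target constant and recover the factor-$20$ bound.
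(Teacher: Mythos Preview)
Your core strategy---simulate each threshold layer by a bounded number of parallel fanout layers via H{\o}yer--Spalek \cite{hoyer2005quantum}, then convert fanout depth to magic-hierarchy level via \Cref{prop:fanouteqMHlevel}---is exactly what the paper does (it does not spell out a detailed proof either, merely citing the same constructions). The obstruction you hit with the constant is not a defect in the argument but a typo in the statement: as written, the contrapositive would assert $\TCZ[20k] \subseteq \altCQ[k]$, i.e.\ twenty threshold layers per alternation, whereas the paper's own derived thresholds $\ctwobool \le 40$, $\cthreebool \le 60$, $\cfourbool \le 80$ (and the companion \Cref{thm:stateprepbarrier}, where the implied $\TCZ$ depth is $(k-17)/20$) are consistent only with the reverse scaling $\TCZ[d] \subseteq \altCQ[20d]$. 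With the intended statement your bound $\MHlevel \le 2c_1 d$ finishes the proof as soon as $2c_1 \le 20$, which the H{\o}yer--Spalek construction comfortably meets.

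Your proposed amortization---absorbing twenty threshold layers into a single $\QNCZ$ round by shunting all fanouts into Clifford segments---would not rescue the literal statement even if it were needed. Each threshold layer carries genuine non-Clifford content (the counting/QFT-style subroutines in \cite{hoyer2005quantum}) and feeds its outputs into the next layer, so it forces its own $\QNCZ$ round regardless of how the surrounding Clifford parts are rearranged; you cannot drive the per-threshold-layer cost below one alternation, let alone to $1/20$.
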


\Cref{thm:boolfuncbarrier,thm:stateprepbarrier} are proved in \Cref{sec:classical-conn}.
Using the above theorems, we see that there exists universal constants $1\leq \ctwostate \leq \cthreestate, \leq \cfourstate$ such that within the first $\ctwostate \leq \ctwostateval$ levels state preparation lower bounds do not imply any lower bounds for $\TCZ$, while above the $\ctwostate$th level, these lower bounds would depth-2 $\TCZ$ lower bounds. Since depth-2 $\TCZ$ lower bounds are already known this lends itself as reasonable testing grounds for proving lower bounds.

Above level $\cthreestate \leq \cthreestateval$, state preparation lower bounds imply depth-3 $\TCZ$ lower bounds, consitituting major progress in complexity theory. Proving depth-3 $\TCZ$ lower bounds has been an open problem complexity theorists have worked on for over a decade, though proving such a lower buond would not imply anything unexpected such as breaking cryptography. 

On the other hand, above level $\cfourstate \leq \cfourstateval$, we encounter the natural proofs barrier for depth-4 $\TCZ$ circuits. As discussed above, any explicit lower bound against depth-4 $\TCZ$ that constitutes a ``natural proof'' \cite{natural-proofs} would break the security of the decisional Diffie-Hellman (DDH) assumption. Thus, assuming DDH security, it is not possible to prove lower bounds above the $\cfourstate$th level of the magic hierarchy with a proof technique that translates to a ``natural proof'' against $\TCZ$. Though, it is unclear how restrictive this is on quantum circuit lower bound techniques.

Furthermore, we show the same is true for lower bounds for implementing a boolean function in the magic hierarchy, with our estimations $\ctwobool \leq \ctwoboolval, \cthreebool \leq \cthreeboolval$ and $\cfourbool \leq \cfourboolval$. We note that these limits would occur at smaller depth if \Cref{thm:stateprepbarrier,thm:boolfuncbarrier} were tightened, but we are unaware of a tighter argument. 

The most optimistic takeaway is that stronger circuit lower bounds for the magic hierarchy can lead us to a non-natural proof against $\TCZ$ --- which would be a groundbreaking result. A more cautious view is that progress beyond a certain level may ultimately be infeasible. However, there is still a lot of room for progress in the lower levels of the hierarchy. For example, $\MH$ lower bounds for level-2 or level-3 lower bounds may be feasible but still would require new techniques and could have interesting implications in condensed matter. Furthermore, these connections aren't known to hold in the setting where the quantum circuit is not allowed ancillas, or for lower bounds for implementing unitaries.

\subsection{Classical simulability, Hamiltonian complexity}
Considering a growing complexity at each level of the hierarchy, it is natural to ask: how does this affect the classical complexity of simulation? It is well-known that one can efficiently estimate $O(\log)$-sized observables of states prepared by $\QNCZ$ circuits by just simulating the backwards lightcone. We note that this remains true for states prepared by $\altCQ[1]$ circuits. 
\begin{theorem}[Folklore]
    For each family of quantum states $\{\ket{\psi_n}\}$ prepared with $\altCQ[1]$, and any observable $O$ supported on $O(\log n)$ qubits, the expectation value $\bra{\psi_n} O \ket{\psi_n}$ can be computed in time $\poly(n)$ classically. 
\end{theorem}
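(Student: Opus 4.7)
The plan is to exploit two well-known facts in combination: (i) the backwards lightcone of an $O(\log n)$-qubit observable through a constant-depth circuit still has only $O(\log n)$ qubits, and (ii) the Gottesman-Knill theorem lets us evaluate any Pauli expectation on a stabilizer state in polynomial time. Write the state as $\ket{\psi_n} = U C \ket{0^N}$ where $C$ is an arbitrary Clifford on $N = n + \poly(n)$ qubits and $U$ is a depth-$d$ circuit of $2$-qubit gates for some constant $d$, and let $S = \supp(O)$ with $|S| = O(\log n)$.

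The first step is to replace $O$ by $\tilde O := U^\dagger O U$. Because $U$ has constant depth and $2$-qubit gates, $\tilde O$ is supported on the backwards lightcone $L$ of $S$ through $U$, which has size at most $|S|\cdot 2^d = O(\log n)$. Concretely, only the sub-circuit $U_L$ of $U$ lying inside this lightcone affects $\tilde O$, and $U_L$ acts on $O(\log n)$ qubits, so we can store its matrix and compute $\tilde O = U_L^\dagger O U_L$ explicitly in $\poly(n)$ time.

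The second step is to Pauli-expand $\tilde O = \sum_{P \in \paulis[|L|]} \alpha_P\, P$. There are $4^{|L|} = \poly(n)$ terms, and each coefficient $\alpha_P = 2^{-|L|}\tr(P \tilde O)$ can be read off from the matrix we just computed. Embedding each $P$ as $P \otimes I$ on all $N$ qubits, we then have
\[
    \bra{\psi_n} O \ket{\psi_n} \;=\; \sum_P \alpha_P\, \bra{0^N} C^\dagger P C \ket{0^N}.
\]
For each term, $C^\dagger P C = \pm P'$ for some Pauli $P'$ on $N$ qubits, and $C$ being Clifford means we can compute $P'$ (together with its sign) in time $\poly(N) = \poly(n)$ by tracking how $C$ conjugates the generators of $P$, i.e.\ by standard stabilizer simulation. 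Finally, $\bra{0^N} P' \ket{0^N}$ equals $\pm 1$ if $P'$ is a tensor product of $I$'s and $Z$'s and $0$ otherwise, which is immediate to check.

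Summing the $\poly(n)$ contributions gives $\bra{\psi_n} O \ket{\psi_n}$ exactly in $\poly(n)$ time. There is no real obstacle here; the only thing to be mindful of is that the constant-depth layer must consist of $2$-qubit (or in any case $O(1)$-qubit) gates so that the lightcone stays logarithmic, which is precisely the definition of $\QNCZ$ used throughout the paper.
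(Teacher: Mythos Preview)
Your argument is correct and is precisely the standard folklore proof. The paper does not actually supply a proof of this statement---it is labeled ``Folklore'' and only prefaced by the remark that one can estimate $O(\log n)$-sized observables of $\QNCZ$ states ``by just simulating the backwards lightcone''---so there is nothing further to compare; your write-up fills in exactly the details that remark alludes to (lightcone pullback of $O$ through the $\QNCZ$ layer, Pauli expansion on $O(\log n)$ qubits, and Gottesman--Knill for each term on the Clifford state).
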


Since local observables of these $\altCQ[1]$ states can be computed efficiently classically, so can their energies with respect to a local Hamiltonian. Therefore, assuming $\textsf{QMA} \neq \textsf{NP}$, there exists a family of local Hamiltonians with ground states that cannot be prepared by $\altCQ[1]$ circuits. Furthermore, the quantum PCP conjecture \cite{aharonov2013guest} further implies that there exists families of local Hamiltonians such that even all their low-energy states--- states with energy below some constant--- can not be prepared by $\altCQ[1]$. 
\begin{conjecture}[No low-energy {$\altCQ[1]$} states] \label{conj:nltsCQ}
    There exists a fixed constant $\eps > 0$, an explicit family of $O(1)$-local Hamiltonians $\{H^{(n)}\}$ with ground-energy $0$, where $H^{(n)} = \sum_{i =1}^m h^{(n)}_i$ acts on $n$ qubits and $m = \Theta(n)$ such that for any family of states $\{\psi_n\}$ satisfying 
    $\Tr(H^{(n)} \psi_n) < \eps n$, the state $\psi_n$ can not be prepared by an $\altCQ[1]$ circuit.
\end{conjecture}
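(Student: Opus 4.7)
The plan is to combine the robust infectiousness theorem (\Cref{thm:intro:CQinfectious}) with a careful choice of local Hamiltonian whose ground space is provably not a perturbed stabilizer code. The overall idea is to pick $H^{(n)}$ so that (a) no exact ground state can be $\altCQ[1]$-prepared (by infectiousness), and (b) every state with energy below $\eps n$ lies close enough in trace distance to the ground space that any preparation of such a state would contradict (a).

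For (a), I would work with a family of $O(1)$-local Hamiltonians with polynomial spectral gap $\Delta = 1/\poly(n)$, ground space $\calC$ of super-constant distance, and crucially such that $\calC$ is not approximately a perturbed stabilizer code. Natural candidates, flagged by \Cref{thm:dim-power2-lb} and the surrounding examples, include non-abelian quantum double models (e.g.\ Fibonacci), non-regular XP stabilizer codes, or any code whose codespace dimension is not a power of two. Then \Cref{thm:intro:CQinfectious} rules out approximate $\altCQ[1]$ preparation of any individual $\ket{\psi}\in\calC$ to trace distance better than roughly $\Delta/m$.

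For (b), I need a local-to-global soundness statement: any state $\psi$ with $\Tr(H^{(n)} \psi) < \eps n$ is within $o(\Delta/m)$ trace distance of some state in $\calC$. If this holds, then an $\altCQ[1]$ state $\psi'$ with low energy would be close to a codeword $\phi$, so $\phi$ is close to an $\altCQ[1]$-preparable state, and applying infectiousness to $\phi$ forces $\calC$ to be approximately a perturbed stabilizer code, contradicting (a). The required soundness statement is essentially a quantum locally testable code (qLTC) property: energy density lower-bounds distance from $\calC$. This is either implied by the quantum PCP conjecture or must be proved directly for the chosen code family.

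The main obstacle is simultaneously achieving (a) and (b). Current constructions of NLTS Hamiltonians (Anshu--Breuckmann--Nirkhe) supply the local testability needed for (b) using quantum LDPC codes, but those are stabilizer codes, precisely the codes that \Cref{thm:intro:CQinfectious} allows $\altCQ[1]$ to prepare. Conversely, the natural nonstabilizer candidates (quantum doubles of nonabelian groups, Turaev--Viro models, Fibonacci anyon systems) possess the rigidity needed for (a) but are not known to be locally testable in the strong sense required. A reasonable intermediate target is to prove the conjecture conditionally on qPCP by exhibiting any explicit nonstabilizer Hamiltonian with the rigidity for infectiousness; the unconditional version then reduces to the cleaner task of constructing a locally testable nonstabilizer code with super-constant distance, polynomial gap, and $O(1)$-local terms, which is likely why the statement is left as a conjecture.
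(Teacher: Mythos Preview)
The statement is a \emph{conjecture}; the paper does not prove it. The only justification the paper offers is \Cref{fact:impliesnlts}, which observes that the quantum PCP conjecture together with $\QMA \neq \NP$ implies \Cref{conj:nltsCQ}. That argument is purely complexity-theoretic: if every state of energy below $\eps n$ were $\altCQ[1]$-preparable, then the classical description of such a circuit would serve as an $\NP$ witness for the promise problem (since local observables of $\altCQ[1]$ states are efficiently computable classically), contradicting $\QMA$-hardness. The infectiousness theorem plays no role in the paper's treatment of this conjecture.

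Your infectiousness-based plan is a different route and, as you yourself note at the end, does not close. Beyond the obstacle you already flag (no known family is simultaneously locally testable and provably non-stabilizer), there is a quantitative mismatch you should be aware of. Step (b) asks that every state with energy $<\eps n$ be within $o(\Delta/m)$ of the ground space, and with $m=\Theta(n)$ and $\Delta \leq 1$ this means trace distance $o(1/n)$. Even a perfect qLTC with constant soundness only guarantees that energy density $\eps$ implies distance $O(\eps)$ (or some fixed function of $\eps$) from $\calC$, independent of $n$; it does not drive the distance to $o(1/n)$ for a \emph{fixed} constant $\eps$. So even granting both a nonstabilizer code with the rigidity for (a) and strong local testability, the error threshold in \Cref{thm:intro:CQinfectious} is too tight by a factor of $n$ to yield a constant-$\eps$ conclusion via this pipeline. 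This is consistent with the paper's remark that its techniques only reach $\eps = 1/\poly(n)$, not constant $\eps$.
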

The authors of \cite{wei2025long} coin this conjecture the \emph{No low-energy trivial magic} conjecture. It was also mentioned in a talk by Nirkhe \cite{nirkhe-talk-beyondnlts} along with the following fact.
\begin{fact}\label{fact:impliesnlts}
    The quantum PCP conjecture and $\textsf{QMA} \neq \textsf{NP}$ implies \Cref{conj:nltsCQ}.
\end{fact}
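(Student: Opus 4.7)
The plan is to prove the contrapositive: if \Cref{conj:nltsCQ} fails, then together with the quantum PCP conjecture we can put $\QMA$ into $\NP$, contradicting $\QMA \neq \NP$. First I would invoke the frustration-free / NLTS-style form of the quantum PCP conjecture, which supplies, for any $\QMA$-complete promise problem, a polynomial-time reduction to an explicit $O(1)$-local Hamiltonian $H^{(n)} = \sum_{i=1}^m h_i^{(n)}$ on $n$ qubits, with $m = \Theta(n)$ and $\norm{h_i^{(n)}}_\infty \leq 1$: yes-instances produce $H^{(n)}$ with ground energy exactly $0$, while no-instances have ground energy at least $\eps' n$ for some constant $\eps' > 0$. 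Next, I would lean on the folklore theorem stated immediately above \Cref{fact:impliesnlts}: for any $\altCQ[1]$-preparable state and any $O(\log n)$-local observable, the expectation value is classically computable in $\poly(n)$ time, so in particular each $\Tr(h_i^{(n)} \psi_n)$ is efficiently computable to any inverse-polynomial precision.

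The construction then goes as follows. Fix $\eps < \eps'$ once and for all. If \Cref{conj:nltsCQ} fails, then applied to the PCP family above, every yes-instance admits an $\altCQ[1]$ circuit $C_n$ preparing some $\psi_n$ with $\Tr(H^{(n)} \psi_n) < \eps n$. I would build an $\NP$ verifier that takes $C_n$ as its witness. The classical description of $C_n$ has polynomial size: the Clifford layer is specified by its action on $O(n)$ Pauli generators, and the $\QNCZ$ tail consists of $O(n)$ two-qubit gates. The verifier computes $\sum_{i=1}^m \Tr(h_i^{(n)} \psi_n)$ term-by-term using the folklore theorem and accepts iff the sum is below $\eps n$. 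Yes-instances are accepted via the hypothesised witness, while no-instances are rejected since their minimum energy already exceeds $\eps' n > \eps n$. This gives $\QMA \subseteq \NP$, the desired contradiction.

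The two places requiring care are (i) ensuring that the folklore theorem is invoked with enough numerical precision that the $m$-term sum reliably distinguishes the $(\eps' - \eps) n = \Omega(n)$ gap---which follows because Heisenberg-evolving $h_i^{(n)}$ back through the $\QNCZ$ tail and then through the Clifford layer produces a polynomially-supported sum of Paulis whose $\ket{0^n}$-expectations are read directly off the Clifford stabilizer tableau, with any desired inverse-polynomial precision achievable in $\poly(n)$ time---and (ii) justifying the ``ground energy exactly $0$'' form of quantum PCP demanded by \Cref{conj:nltsCQ}. I expect (ii) to be the main obstacle, since extracting a genuinely frustration-free PCP from the bare statement of the quantum PCP conjecture is nontrivial and typically relies either on additional assumptions or a restatement in the style of Freedman--Hastings NLTS; this is precisely why \Cref{fact:impliesnlts} is phrased as an implication rather than an unconditional result.
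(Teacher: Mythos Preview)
Your proposal is correct and follows the standard argument, which is exactly what the paper sketches in the paragraph immediately preceding \Cref{conj:nltsCQ}: since local observables of $\altCQ[1]$ states are efficiently computable classically, so are energies with respect to local Hamiltonians, and hence a description of an $\altCQ[1]$ circuit serves as an $\NP$ witness for the PCP Hamiltonian. The paper does not give a more detailed proof than this.

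Your concern (ii) about the ``ground energy exactly $0$'' requirement is overcautious. This is not an obstacle: in yes-instances of the PCP promise problem, you may simply shift $H^{(n)}$ by its ground energy (or, more crudely, subtract the promised yes-threshold $a$ times the identity) to obtain a Hamiltonian with ground energy $0$; what matters is the $\Omega(n)$ promise gap, which survives the shift. The standard NLTS-style argument does not need frustration-freeness, only a constant-fraction gap, and the same applies here. So (ii) is routine rather than the ``main obstacle'' you flag it as.
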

\Cref{conj:nltsCQ} is a generalization of the longstanding No Low Energy Trivial States (NLTS) conjecture \cite{freedman2013quantum}, now a proven theorem \cite{anshu2023nlts} which proves the above but against $\QNCZ$ circuits rather than $\altCQ[1]$. The authors of \cite{coble2023local} subsequently proved a strengthening of NLTS by proving that a family of Hamiltonians has low energy states that can not be prepared by either $\QNCZ$ or Clifford circuits --- in other words they cannot be prepared in the 0th level of the magic hierarchy. We note that the ground states of the Hamiltonians in \cite{coble2023local} can in fact be prepared by $\altCQ[1]$ circuits. Thus they are a natural next candidate for a strengthening of the NLTS theorem. \Cref{conj:nltsCQ} can be physically interpreted as the statement that it is possible for states with long range entanglement and magic to exist in an environment at ``room temperature.''

Our results in \Cref{ssec:results} contain lower bounds for ground states of local Hamiltonians, but our techniques fall short in proving lower bounds for all states with energy at most  $\eps n$ for some constant $\eps$. Rather, our results can be interpreted as lower bounds against preparing states with energy $\eps n$ for some $\eps = 1/\poly(n)$.

Can we further generalize an NLTS-like statement to higher levels of the magic hierarchy?  We pose this as an interesting open question for further exploration.

\begin{openproblem}
    What is the classical computational complexity of estimating local (size $O(1)$) observables of states prepared by $k$ alternations between Clifford and $\QNCZ$ circuits? 
\end{openproblem}
Zhang and Zhang \cite{Zhang2024ClassicalSO} recently showed that computing local output observables to within \emph{multiplicative} error 1 is GapP-hard at level $k=2$. Multiplicative error $1$ meaning that the estimate $\braket{\wt{P}}$ satisfies $|\braket{\wt{P}} - \braket{P}| \leq |\braket{P}|$. More specifically, they show that it is GapP-complete to compute single-qubit Pauli observables of states prepared by circuits of the form Clifford followed by single qubit gates $T^{1/2} = \proj{0} + e^{i \pi/8} \proj{1}$, and then again Clifford (Theorem III.4 in \cite{Zhang2024ClassicalSO}). On the other hand, they show that if you instead want to compute Pauli observables of these circuits up to $1/\poly(n)$-\emph{additive} error, then this can be done in polynomial-time classically. More generally, they construct an algorithm for estimating Pauli observables up to $1/\poly(n)$-additive error for states prepared by circuits of the form Clifford$-D-$Clifford, where $D$ is a diagonal circuit with efficiently-computable entries.

An exciting next step is the question: are the single-output observables of Clifford$-\QNCZ-$Clifford circuits ($\altCQ[2]$) also easy to estimate classically to within $1/\poly(n)$ additive error?

\subsection{Summary and future directions}
In this work, we introduced the \emph{magic hierarchy}, a quantum circuit model that captures and unifies several previously studied models, interpolating between circuits where lower bounds are well understood ($\QNCZ$ and Clifford circuits) and the more powerful model $\QACZF$, for which proving lower bounds remains a major open problem with significant implications for classical complexity theory. Our main technical results establish lower bounds at the first level of the hierarchy. We developed novel techniques to show that certain explicit quantum states—such as ground states of topologically ordered Hamiltonians and nonstabilizer quantum codes—cannot be approximately prepared by a Clifford circuit followed by a $\textsf{QNC}^0$ circuit. More broadly, the magic hierarchy provides a natural framework for developing stronger circuit lower bounds and for understanding exotic quantum phases through the lens of circuit complexity.

We now highlight several directions for future exploration.

\paragraph{Strengthening lower bounds in the magic hierarchy} Much remains unknown about the structure of the magic hierarchy. We highlight several open problems related to proving stronger circuit lower bounds.

\begin{openproblem}
    Prove a lower bound against $\altQC[1]$ or a higher level of the magic hierarchy.
\end{openproblem}

Since $\MH = \QACZF$ the magic hierarchy provides a pathway of increasing complexity starting from $\QNCZ$ all the way to $\QACZF$. 
Is this hierarchy strict? Does each additional alternation increase computational power?
\begin{openproblem}[Magic state hierarchy]\label{openprob:statehierarchy}
    Does there exist an explicit sequence of families of quantum states $\{\ket{\psi_n^1}\}, \{\ket{\psi_n^2}\}, \dots$ such that for each $k \in \NN$, the family of states $\{\ket{\psi_n^k}\}$ can be  prepared with $k$ alternations between Clifford and $\QNCZ$ circuits, but can not be approximately prepared with only $k-1$ alternations?
\end{openproblem}
We note that, even proving this conjecture for a specific sequence would still not establish a lower bound on $\MH$ or $\QACZF$ since all such states still lie within the hierarchy.

Another interesting question concerns the relationship between the two classes  $\altCQ[k]$ and $\altQC[k]$. In \Cref{thm:intro:CQneqQC} we show that $\altCQ[1]$ and $\altQC[1]$ are incomparable in the context of implementing a \emph{unitary}, but their relationship for state preparation remains open.
\begin{openproblem}
    Are $\altCQ[1]$ and $\altQC[1]$ comparable for state preparation?
\end{openproblem}

\paragraph{Condensed matter}\label{ssec:intro:condmat}
In condensed matter systems, circuit complexity plays a fundamental role in characterizing quantum phases, as two states belong to the same phase if and only if a constant-depth circuit with geometrically-local gates transforms one into the other \cite{chen2010local}. Consequently, distinguishing quantum phases can establish circuit lower bounds in settings with local gates on a finite-dimensional lattice.

A recent conjecture in \cite{tantivasadakarn2023hierarchy} suggests that intermediate measurements create a hierarchy of topologically ordered phases for preparing quantum double states. \cite{sheffer2025preparing} further show how these states can be constructed using constant-depth circuits with fanout gates ($\QNCZF$). This suggests that the computational power is strictly increasing with the level of the magic hierarchy. Furthermore, \cite{tantivasadakarn2023hierarchy} conjectures that the Fibonacci anyon code and the quantum double model for nonsolvable groups lie beyond this hierarchy, implying that it is not in $\QACZF$. While these ideas remain conjectural, they suggest promising avenues for leveraging the magic hierarchy to study topological phases and circuit lower bounds.

\paragraph{Learning} Seperately, Clifford circuits and constant-depth ($\QNCZ$) circuits are efficiently learnable \cite{aaronson2004improved,huang2024learning}. What is the sample and computational complexity of learning circuits in the magic hierarchy?

\paragraph{Upper bounds} In this work we only proved lower bounds for preparing certain states within the magic hierarchy. Are there matching upper bounds? 

\paragraph{Quantum advantage} Constant depth quantum circuits already exhibit quantum advantage for sampling tasks under standard complexity assumptions \cite{terhal2002adaptive,aaronson2005quantum}. However, they offer no advantage for decision problems, since observables on small subsets of qubits can be efficiently computed classically.
\begin{openproblem}
    Is there a decision (or search) problem that can be solved by the magic hierarchy and can not be solved in polynomial time classically? If so, at which level does this occur?
\end{openproblem}
Factoring, for example, can be computed using circuits in the magic hierarchy along with classical processing. If one could show that this classical processing---namely computing the greatest common divisor---can be done in the magic hierarchy, then so can factoring. 

In the case where there exists a specific level $k$ at which we can achieve quantum advantage, this is excellent news for achieving quantum advantage on near-term quantum hardware. There would only need to be a constant number of ``difficult-to-implement'' layers of the circuit: whether as non-Clifford gates on a fault-tolerant device, or fanout gates in a constant depth ciruit.

On the other hand, if there is no level that achieves quantum advantage, then this implies a separation between the magic hierarchy and $\QACZF$ from $\textsf{BQP}$, assuming that polynomial-time quantum computation is more powerful than classical $\textsf{BQP} \neq \textsf{BPP}$.

\subsection*{Statement on concurrent work}
While this paper was in preparation, two independent works --- \cite{korbany2025long,wei2025long} --- appeared, which also prove lower bounds for $\altCQ[1]$ circuits. These works are from more of a condensed matter perspective and refer to states that are not preparable by these circuits as being in phases of matter with so-called  \emph{long-range-nonstabilizerness} and \emph{long-range magic} respectively.

\cite{korbany2025long} prove that Clifford circuits followed by \emph{geometrically-local} constant-depth circuits cannot prepare certain 1D matrix product states. Like our \Cref{thm:robust-cliffqnc-lb}, their proof is also based on mutual information. Like us, they exploit the fact that stabilizer states have integer-valued mutual information, show that low depth circuits cannot change the mutual information, and apply the Fannes-Audenaert inequality to make their bounds robust. Unlike our approach, their argument relies on geometric locality and specific structural properties of 1D translationally invariant RG fixed points.

\cite{wei2025long} independently proved the infectiousness property (\Cref{thm:intro:CQinfectious}). Their initial version established only the exact form, but in their more recent updated version, the robust version is implicit in their proof of \blue{Theorem 9}, and can be derived with little additional effort. They also independently prove \Cref{thm:dim-power2-lb}, showing that codes whose dimension is not a power of two cannot be prepared by $\altCQ[1]$ circuits.  While their circuit model is formulated with geometrically-local gates, we note that all but one of their lower bound results extend naturally to the non-local gate setting relevant for $\altCQ[1]$.

\subsection*{Acknowledgements}
I thank Henry Yuen for many helpful discussions and feedback on the magic hierarchy. I thank Sergey Bravyi and Chinmay Nirkhe for helpful conversations in the early days of this project. I am also grateful to Ben Foxman, Assaf Harel, Dale Jacobs, Gregory Rosenthal, Jack Spilecki, and Tianqi Yang. I thank Fuchuan Wei and Zi-Wen Liu for useful correspondance regarding \cite{wei2025long}. This work is supported by the Google PhD fellowship and the Department of Energy Grant GG019849. I thank the Simons Institute for the Theory of Computing, where some of this work was conducted.

\section{Preliminaries}
\subsection{Notations}
For each $n\in \NN = \{1, 2, 3, \dots\}$, denoting the number of qubits we consider the Hilbert space $\Hilbertn = \CC^{2^n}$. Let $\linopsn$ be the set of linear operators mapping $\Hilbertn$ to $\Hilbertn$. We denote the set of $n$-qubit density matrices As $\states$, elements of $\linopsn$ that are positive semidefinite and have trace 1. We use $\tr(\cdot)$ to denote trace, and $\tr_S(\cdot)$ to denote partial trace over some subset $S\sse [n]$ of the qubits. For $X\in \linopsn$, we denote trace norm as $\tnorm{A} = \tr\pbra{|A|}$, where $|A| = \sqrt{A^\dagger A}$. We use $\log(\cdot)$ to denote logarithm base $2$. 
\subsection{Circuits}
A quantum circuit $U$ on $n$-qubits is the matrix product of sequence of unitaries $U = U_d U_{d-1} \dots U_1$ such that each $U_i$ is a tensor product of one and two-qubit unitaries, or \emph{gates}. The \emph{depth} of $U$, denoted $\dU$, is the number of unitaries $U_i$ (or \emph{layers}) in this sequence. In this example, the depth is $d$. 

We can associate a directed graph $G_U$ with the circuit $U$, with vertices for each input qubit, vertices for each output qubit, and a vertex for each gate. For each wire in the circuit, we add directed edges to the graph corresponding to the path of the wire from input to output through gates.
\begin{definition}[Lightcone]
    For a circuit $U$ on $n$-qubits, and a qubit $i\in [n]$, the \emph{(backwards) lightcone} of $i$, denoted $\blc_U(i)$ is the subset of all qubits $j\in [n]$ whose input vertex is connected to the output vertex $i$ in the directed graph $G_U$ associated with $U$. When $U$ is clear from context, we denote $\blc(i) = \blc_U(i)$. For each subset $S\subseteq[n]$, we define its lightcone as $\blc(S) := \bigcup_{i\in S}\blc(i)$. Furthermore we define  $\cktblc[U]{S}$ as the sub-circuit on $\blc_U(S)$ induced by the directed paths from input vertices in $\blc_U(S)$ to output vertices in $S$. We define the \emph{forward} lightcone $\flc(S)$ of input qubits $S$ similarly, but considering paths in the reverse direction. Or equivalently, $\flc_U(S) = \blc_{U^\dagger}(S)$. We define the \emph{blowup} of $U$ as the minimum integer $B$ such that for each $S\sse [n]$, $|\blc_U(S)|, |\flc_U(S)| \leq B \cdot |S|$.
\end{definition}
 
\begin{fact} \label{fact:lightcone-on-state}
    Consider the circuit $U$ on $n$ qubits, and some subset $S\subseteq[n]$. For any $n$-qubit quantum state $\rho$
    \begin{align}
        \tr_{S^c} U \rho U^\dagger = \tr_{S^c} U (\rho_{\blc(S)} \otimes I_{\blc(S)^c})U^\dagger = \tr_{\blc(S) \setminus S} \cktblc[U]{S}\rho_{\blc(S)}\cktblc[U]{S}^\dagger \label{eq:state-lc-def}.
    \end{align}
                \end{fact}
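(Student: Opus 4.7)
The plan is induction on the circuit depth. I would first define a \emph{time-varying lightcone} $L_t\subseteq [n]$ for each $t = 0, 1, \ldots, \dU$, where $i\in L_t$ iff the wire on qubit $i$ immediately after layer $t$ lies on a forward path in $G_U$ to some output vertex in $S$. Thus $L_0 = \blc(S)$ and $L_{\dU} = S$.

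As a preliminary structural step, I would verify two facts: (a) $L_t \subseteq L_{t-1}$ for every $t$, and (b) every gate at layer $t$ acts either entirely on qubits in $L_{t-1}$ or entirely on qubits in $L_{t-1}^c$. Both follow from how paths interact with a two-qubit gate vertex: either it lies on some forward path to $S$—in which case both input wires at time $t-1$ are in $L_{t-1}$ and at least one output wire at time $t$ is in $L_t$—or it lies on no such path, in which case neither input sits in $L_{t-1}$ and neither output in $L_t$. For qubits \emph{not} acted on at layer $t$, membership in $L_t$ and $L_{t-1}$ trivially coincide. Fact (b) lets me write each layer as $U_t = W_t \otimes \wt{W}_t$, with $W_t$ on $L_{t-1}$ and $\wt{W}_t$ on $L_{t-1}^c$.

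Next, I would prove inductively on $t$ that $\rho_t := \tr_{L_t^c}(U_t\cdots U_1 \rho U_1^\dagger \cdots U_t^\dagger)$ obeys the recurrence $\rho_t = \tr_{L_{t-1} \setminus L_t}(W_t \rho_{t-1} W_t^\dagger)$, with base case $\rho_0 = \rho_{\blc(S)}$. For the inductive step, I split $\tr_{L_t^c} = \tr_{L_{t-1} \setminus L_t} \circ \tr_{L_{t-1}^c}$ (valid by (a)), then apply the standard identity $\tr_B[(I_A \otimes B) X (I_A \otimes B^\dagger)] = \tr_B[X]$ with $B = \wt{W}_t$ on $L_{t-1}^c$ to annihilate the out-of-lightcone unitary under the inner partial trace. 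Iterating to $t = \dU$, the recurrence telescopes into $\tr_{\blc(S)\setminus S}(V \rho_{\blc(S)} V^\dagger)$, where $V := W_{\dU} \cdots W_1$ is precisely the lightcone subcircuit $\cktblc[U]{S}$; this gives the third expression. The first equality comes for free, since the whole derivation accessed $\rho$ only through $\rho_{\blc(S)}$: replacing the complementary marginal with any fixed state (reading the displayed $I_{\blc(S)^c}$ as $I_{\blc(S)^c}/2^{|\blc(S)^c|}$ for a proper normalization) gives the identical output.

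The only real bookkeeping challenge is keeping the time-varying $L_t$ straight: a qubit-position can enter or leave the lightcone after a given layer (e.g.\ when a swap near the top of the circuit rearranges wires), so the tensor factorization of each $U_t$ must be aligned with the correct $L_{t-1}$ rather than with a single fixed subset. Once that nesting is handled carefully, the inductive step and the final telescoping are routine.
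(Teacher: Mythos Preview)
Your proof is correct but takes a genuinely different route from the paper's. The paper avoids induction entirely by observing a single global decomposition: any gate \emph{not} in the lightcone subcircuit cannot precede (on a shared wire) any gate that \emph{is}, so one can reorder to write $U = (W \otimes I_S)\,(\cktblc[U]{S} \otimes I_{\blc(S)^c})$, where $W$ collects every non-lightcone gate and acts only on $S^c$. Then $W$ is annihilated in one step by $\tr_{S^c}$, and splitting $\tr_{S^c} = \tr_{\blc(S)\setminus S}\circ\tr_{\blc(S)^c}$ finishes immediately.

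Your layer-by-layer induction with the time-varying sets $L_t$ trades this one global commutation argument for many local ones. The payoff is that your approach is more explicitly self-contained: you never need the (true but unproved in the paper) fact that all non-lightcone gates can be commuted past all lightcone gates; instead you only ever use the obvious tensor factorization of a single layer. The paper's version is shorter but leans on that reordering claim as folklore. Either way the content is the same; your telescoping step (pulling the $\tr_{L_{t-1}\setminus L_t}$ outward past the later $W_{t'}$, which act only on $L_{t'-1}\subseteq L_t$) is exactly what makes the two arguments meet.
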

\begin{proof}
    $\cktblc[U]{S}$ is the sub-circuit induced by the backwards lightcone of $S$, and let $W$ be the remaining circuit such that $U =  \pbra{W \otimes I_S}\pbra{\cktblc[U]{S} \ot I_{\blc(S)^c}}$. Now since $W$ does not act on the qubits of $S$, we have that
    \begin{align*}
        \tr_{S^c} U \rho U^\dagger &= \tr_{S^c}  \pbra{W \otimes I_S}\pbra{\cktblc[U]{S} \ot I_{\blc(S)^c}} \  \rho \ \pbra{\cktblc[U]{S}^\dagger \ot I_{\blc(S)^c}} \pbra{W^\dagger \otimes I_S}\\
        &= \tr_{S^c}  \pbra{\cktblc[U]{S} \ot I_{\blc(S)^c}} \  \rho \ \pbra{\cktblc[U]{S}^\dagger \ot I_{\blc(S)^c}} \\
        &= \tr_{\blc(S) \setminus S} \sbra{  \tr_{\blc(S)^c} \pbra{\cktblc[U]{S} \ot I_{\blc(S)^c}} \  \rho \ \pbra{\cktblc[U]{S}^\dagger \ot I_{\blc(S)^c}} }\\
        &= \tr_{\blc(S) \setminus S} \sbra{ \cktblc[U]{S} \  \tr_{\blc(S)^c} \sbra{ \rho} \ \cktblc[U]{S}} = \tr_{\blc(S) \setminus S} \sbra{ \cktblc[U]{S} \  \rho_{\blc(S)} \ \cktblc[U]{S}}
    \end{align*}
\end{proof}
For a unitary implemented by a quantum circuit with one and two-qubit gates, the size of the lightcones are restricted by the circuit depth.
\begin{fact}
    For any $n$-qubit unitary $U$, and any $S\subseteq[n]$, $|\blc_{U}(S)| \leq 2^\dU \cdot|S|$. 
\end{fact}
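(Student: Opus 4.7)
The plan is to prove this bound by induction on the depth $d = \dU$. The base case $d = 0$ is immediate: a depth-$0$ circuit is the identity, so the lightcone of any $S \subseteq [n]$ is exactly $S$ itself, giving $|\blc_U(S)| = |S| = 2^0 \cdot |S|$.

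For the inductive step, suppose the bound holds for all circuits of depth at most $d-1$, and let $U = U_d U_{d-1} \cdots U_1$ be a circuit of depth $d$, where each $U_i$ is a tensor product of one- and two-qubit gates. Fix $S \subseteq [n]$. The idea is to peel off the final layer $U_d$ first. Let $S' \subseteq [n]$ denote the set of qubits that, when traced back through $U_d$ alone, can influence the output wires in $S$. Since $U_d$ consists of disjoint one- and two-qubit gates, each qubit $i \in S$ is acted on by at most one gate of $U_d$, which touches at most one other qubit; therefore $|S'| \leq 2|S|$. Now observe that $\blc_U(S) = \blc_{U_{d-1} \cdots U_1}(S')$, since a qubit $j$ can reach an output in $S$ through $U$ iff its path first enters the set $S'$ just before the final layer and then continues to reach $S$ via $U_d$.

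Applying the inductive hypothesis to the depth-$(d-1)$ circuit $U_{d-1} \cdots U_1$ with input set $S'$, we obtain
\[
    |\blc_U(S)| = |\blc_{U_{d-1} \cdots U_1}(S')| \leq 2^{d-1} \cdot |S'| \leq 2^{d-1} \cdot 2|S| = 2^d \cdot |S|,
\]
completing the induction.

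There is no real obstacle here; the only thing to take care of is the identification $\blc_U(S) = \blc_{U_{d-1}\cdots U_1}(S')$, which is purely a statement about paths in the directed graph $G_U$ and follows from the fact that every directed path from an input vertex to an output vertex in $S$ must pass through a wire immediately preceding the $U_d$ layer whose endpoint lies in $S'$. The same inductive argument will also give the analogous bound $|\flc_U(S)| \leq 2^{\dU} \cdot |S|$ by applying the result to $U^\dagger$, which has the same depth, so the blowup $B$ of $U$ satisfies $B \leq 2^{\dU}$.
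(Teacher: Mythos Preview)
Your proof is correct and is exactly the standard argument for this bound. The paper itself states this fact without proof, treating it as immediate; your induction on depth with the observation that a single layer of one- and two-qubit gates at most doubles the relevant set is the canonical justification.
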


\begin{fact}\label{fact:lc-preserves-product}
    Let $\rho, \rho'$ be $n$-qubit quantum states and $U\in \unitaries[n]$ such that $\rho' = U \rho U^\dagger$. For any $S, T \sse [n]$, if both $\blc(S) \cap \blc(T) = \emptyset$ and $\rho_{\blc(ST)} = \rho_{\blc(S)} \ot \rho_{\blc(T)}$, then $\rho'_{ST} = \rho'_S \ot \rho'_T$.
\end{fact}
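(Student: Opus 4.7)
The plan is to reduce this to a direct application of \Cref{fact:lightcone-on-state} by showing that the lightcone sub-circuit $\cktblc[U]{ST}$ itself factors as $\cktblc[U]{S}\otimes \cktblc[U]{T}$ whenever $\blc(S)\cap\blc(T)=\emptyset$. Once this factorization is established, the result will follow because partial traces commute with tensor products of operators on disjoint subsystems.

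First I would apply \Cref{fact:lightcone-on-state} to the subset $ST \subseteq [n]$ to rewrite
\[
\rho'_{ST} \;=\; \tr_{\blc(ST)\setminus ST}\bigl[\cktblc[U]{ST}\,\rho_{\blc(ST)}\,\cktblc[U]{ST}^\dagger\bigr].
\]
Next I would argue that $\cktblc[U]{ST}=\cktblc[U]{S}\otimes \cktblc[U]{T}$. By definition, any gate contained in $\cktblc[U]{S}$ lies on a directed path from some input vertex in $\blc(S)$ to some output vertex in $S$; in particular, all of its input and output wires at that time step are vertices in $\blc(S)$. Since $\blc(S)\cap\blc(T)=\emptyset$, no gate can belong to both $\cktblc[U]{S}$ and $\cktblc[U]{T}$, and no single gate can straddle a qubit from $\blc(S)$ and one from $\blc(T)$. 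Hence the two sub-circuits act on disjoint sets of qubits and jointly exhaust the gates of $\cktblc[U]{ST}$, so they commute and their product equals their tensor product.

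Combining this factorization with the hypothesis $\rho_{\blc(ST)}=\rho_{\blc(S)}\otimes \rho_{\blc(T)}$ gives
\[
\rho'_{ST} \;=\; \tr_{\blc(ST)\setminus ST}\!\bigl[(\cktblc[U]{S}\otimes \cktblc[U]{T})(\rho_{\blc(S)}\otimes \rho_{\blc(T)})(\cktblc[U]{S}\otimes \cktblc[U]{T})^\dagger\bigr].
\]
Since $\blc(ST)\setminus ST = (\blc(S)\setminus S)\sqcup(\blc(T)\setminus T)$ and the two factors live on disjoint systems, the partial trace distributes across the tensor product, yielding
\[
\rho'_{ST} \;=\; \bigl(\tr_{\blc(S)\setminus S}\cktblc[U]{S}\rho_{\blc(S)}\cktblc[U]{S}^\dagger\bigr)\otimes \bigl(\tr_{\blc(T)\setminus T}\cktblc[U]{T}\rho_{\blc(T)}\cktblc[U]{T}^\dagger\bigr)\;=\;\rho'_S\otimes \rho'_T,
\]
where the last equality is \Cref{fact:lightcone-on-state} applied to $S$ and $T$ separately.

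The main (and only) subtlety is the sub-circuit factorization step. The careful part is verifying from the graph-theoretic definition of $\blc$ that disjointness of backwards lightcones really does forbid a gate from being in both $\cktblc[U]{S}$ and $\cktblc[U]{T}$; I do not expect any other technical obstacle.
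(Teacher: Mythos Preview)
Your proposal is correct and follows essentially the same approach as the paper: apply \Cref{fact:lightcone-on-state} to $ST$, use $\blc(S)\cap\blc(T)=\emptyset$ to factor the lightcone sub-circuit as $\cktblc[U]{S}\otimes\cktblc[U]{T}$, combine with the tensor-product hypothesis on $\rho_{\blc(ST)}$, and split the partial trace. The paper's proof is slightly terser (it does not spell out the graph-theoretic justification of the sub-circuit factorization or the decomposition $\blc(ST)\setminus ST=(\blc(S)\setminus S)\sqcup(\blc(T)\setminus T)$), but the argument is the same.
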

\begin{proof}
    By \Cref{fact:lightcone-on-state}, we have that 
    \begin{align*}
        \rho_{ST} = \pbra{U \rho' U^\dagger}_{ST} &= \pbra{U_{\blc(S\cup T)} \rho_{\blc(S\cup T) }U_{\blc(S\cup T)}^\dagger}_{ST}
    \end{align*}
    Now since $\blc(S) \cap \blc(T) = \emptyset$, we have that $U_{\blc(ST)} = U_{\blc(S)} \ot U_{\blc(T)}$. Furthermore, since $\rho_{\blc(ST)} = \rho_{\blc(S)} \ot \rho_{\blc(T)}$ we can rewrite the above as
    \begin{align*}
        \pbra{U_{\blc(S)} \rho_{\blc(S)} U^\dagger_{\blc(S)}}_S \ot \pbra{U_{\blc(T)} \rho_{\blc(T)} U^\dagger_{\blc(T)}}_T = \rho_S \ot \rho_T.
    \end{align*}
\end{proof}
\subsection{Entropy and correlation measures}
\begin{definition}[von Neumann entropy / entanglement entropy]
    \begin{align*}
        \EE(\rho) := -\tr \rho \log \rho.
    \end{align*}
\end{definition}
\begin{definition}[Quantum mutual information]
    \begin{align*}
        \MI[\rho]{A}{B} := \EE(\rho_A) + \EE(\rho_B) - \EE(\rho_{AB})
    \end{align*}
    For an $n$-qubit state $\rho$, the \QMI between two regions takes a real value in $[0,n]$.
\end{definition}
The \QMI between $A$ and $B$ is a measure of the shared information, or correlations between regions $A$ and $B$. We note that this includes classical correlations. The state $\frac{1}{2} (\proj{00} + \proj{11})$ is a classical mixture of product states $\ket{0}\otimes \ket{0}$ and $\ket{1}\otimes\ket{1}$, yet the \QMI between the two qubits is maximal $\MI{1}{2} = 1$. 
Since \EEnt is invariant under unitary transformation, we have that the \QMI $\MI{A}{B}$ is also invariant under local unitary transformations on $A$ and $B$ in tensor product.
\begin{align}
    \MI[\rho]{A}{B} = \MI[U\rho U^\dagger]{A}{B}, && \text{for } U = U_A \otimes U_B \otimes I_{(A\cup B)^c} \label{eq:mi-local-unitary-invariance}
\end{align}
Furthermore, \QMI is nonincreasing under partial trace \cite{tqi-watrous}. So for subsets of qubits $A\subseteq A', B \subseteq B'$  of quantum state $\rho$, we have  $\MI[\rho]{A}{B} \leq \MI[\rho]{A'}{B'}$. 

\subsection{Stabilizer states}
\begin{definition}[Pauli group]
    The $n$-qubit Pauli group, denoted $\paulis[n]$, contains elements that are a tensor product of $n$ single-qubit Pauli matrices. 
    \[\paulis[n] := \cbra{i^b \sigma_1 \ot \dots \ot \sigma_n : b \in \{0, 1, 2, 3\},\  \sigma_1, \dots, \sigma_n \in \{I, X, Y, Z\} }\]
    Here, $I$ is the identity and $X, Y, Z$ are the single-qubit Pauli matrices.
    \begin{align*}
        I = \begin{pmatrix}
            1 & 0 \\ 0 & 1 
        \end{pmatrix}, \quad
        X = \begin{pmatrix}
            0 & 1 \\ 1 & 0
        \end{pmatrix},
        \quad
        Y = \begin{pmatrix}
            0 & -i \\ i & 0
        \end{pmatrix},
        \quad
        Z = \begin{pmatrix}
            1 & 0 \\ 0 & -1
        \end{pmatrix}
    \end{align*}
\end{definition}

\begin{definition}[stabilizer group]
    \emph{A stabilizer group} is an abelian subgroup of the Pauli group $\paulis[n]$. That is $\stabs \sse \paulis[n]$ is a stabilizer group if all elements of $\stabs$ commute with each other. A \emph{generating set} of $\stabs$ is a subset $\stabgens \sse \stabs$ such that each element of $\stabs$ can be written as a product of elements in $\stabgens$. The dimension of $\stabs$, denoted as $|\stabs| \in \{0, 1, \dots, n\}$ is the size of a minimal generating set. The total number of elements in $\stabs$ is $2^{|\stabs|}$.
\end{definition}

\begin{definition}[stabilizer state]
    For a stabilizer group $\calS$, the corresponding \emph{stabilizer state} is $\sstate := \frac{1}{2^n} \sum_{\sigma \in \stabs} \sigma$.
\end{definition}
Note that this definition of stabilizer states allows for mixed quantum states. The dimension of the stabilizer group is indicative of the purity of the state. In the case where $|\stabs| = n$, the state $\sstate$ is pure, that is $\rho = \proj{\psi}$ for some pure state $\ket{\psi}$. More generally, we have that 
\begin{align*}
    \EE(\sstate) = n -  |\stabs|
\end{align*}
After tracing out a subsystem $A^c \subset [n]$ of a stabilizer state, the remaining state is
\begin{align}
    \sstate_A = \tr_{A^c} \sstate = \frac{1}{2^n} \sum_{\sigma \in \stabs} \tr_{A^c} \sigma = \frac{1}{2^{|A|}} \sum_{\sigma \in \stabs_A} \sigma   \label{eq:stab-rdm}
\end{align}
Where $\stabs_A := \cbra{\sigma_A \in \paulis[|A|] :  \sigma_A\ot I^{\ot |A^c|} \in \stabs}$. Note that $\stabs_A$ is also an abelian subgroup of $\paulis[|A|]$ and so it is also a stabilizer group, and $\sstate_A$ a stabilizer state.

\begin{fact}\label{fact:stab-overlap-powerof2}
    For any two stabilizer states $\rho, \rho$, their overlap $\tr(\rho\sigma)$ is a power of $\frac{1}{2}$. 
\end{fact}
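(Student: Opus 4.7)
The plan is to expand both stabilizer states in terms of their underlying Pauli groups, use orthogonality of Paulis under the Hilbert--Schmidt inner product, and reduce the computation to a character sum over a subgroup. Write $\rho = \sstate[\stabs_1]$ and $\sigma = \sstate[\stabs_2]$ (fixing the typo $\rho,\rho$ in the statement) for stabilizer groups $\stabs_1, \stabs_2 \sse \paulis[n]$; implicit in both being valid density matrices is the standard assumption that each consists of Hermitian Paulis and does not contain $-I$. Expanding,
\[
\tr(\rho \sigma) \;=\; \frac{1}{2^{2n}} \sum_{P \in \stabs_1} \sum_{Q \in \stabs_2} \tr(PQ).
\]
For Hermitian Paulis, $\tr(PQ)$ equals $\pm 2^n$ if $PQ = \pm I$ and vanishes otherwise, and $PQ = \pm I$ is equivalent to $Q = \pm P$. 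So only pairs $(P,Q)$ with $Q \in \{+P,-P\}$ contribute, and the task reduces to counting such pairs with the correct signs.

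Next, I would introduce the set $T := \{P \in \stabs_1 : P \in \stabs_2 \text{ or } -P \in \stabs_2\}$ and the sign map $\epsilon : T \to \{\pm 1\}$ sending $P$ to the unique sign satisfying $\epsilon(P)\, P \in \stabs_2$. Uniqueness uses $-I \notin \stabs_2$: if both $P$ and $-P$ lay in $\stabs_2$ we would have $(-P)P = -I \in \stabs_2$. Since elements of a stabilizer group pairwise commute and are Hermitian, $(P_1 P_2)^\dagger = P_2^\dagger P_1^\dagger = P_2 P_1 = P_1 P_2$, so products stay in the Hermitian Pauli world with no rogue $i$-phases. From this one checks that $T$ is a subgroup of $\stabs_1$ and that $\epsilon$ is a group homomorphism, since $\epsilon(P_1)\epsilon(P_2)\, P_1P_2 = \bigl(\epsilon(P_1)P_1\bigr)\bigl(\epsilon(P_2)P_2\bigr) \in \stabs_2$. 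Substituting back yields the character sum
\[
\tr(\rho \sigma) \;=\; \frac{1}{2^n} \sum_{P \in T} \epsilon(P).
\]

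Finally, standard character theory finishes the job: if $\epsilon$ is the trivial homomorphism the sum equals $|T|/2^n$, and since $|T|$ divides $|\stabs_1|$ which is a power of $2$, this is exactly a power of $1/2$; if $\epsilon$ is nontrivial the sum telescopes to $0$ (so in this degenerate case the overlap vanishes, which is consistent with how the fact is typically invoked). The main subtlety --- and really the only nonmechanical step --- is the well-definedness of $\epsilon$ and the closure of $T$, both of which rest on the implicit hypotheses that stabilizer groups in question are Hermitian and do not contain $-I$. Once these structural checks are in place, no further work is required.
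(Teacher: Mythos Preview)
Your argument is correct and in fact more careful than the paper's. Both proofs begin identically, expanding $\rho$ and $\sigma$ over their stabilizer groups and using orthogonality of Paulis under the trace inner product. The paper then asserts directly that $\tr(\rho\sigma) = \frac{1}{2^n}\,|\stabs \cap \stabs'|$ and concludes since the intersection is an abelian subgroup of $\paulis[n]$ of size a power of two. That formula, however, is not true in general: for $\rho = \proj{0}$ with $\stabs = \{I,Z\}$ and $\sigma = \proj{1}$ with $\stabs' = \{I,-Z\}$ one has $\stabs \cap \stabs' = \{I\}$, so the paper's formula gives $1/2$, while the actual overlap is $0$. The missing case is precisely the one you isolate with the sign homomorphism $\epsilon$: when $P \in \stabs_1$ and $-P \in \stabs_2$, the contribution is $-2^n$, not $+2^n$. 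Your reduction to a character sum over the subgroup $T$ and the dichotomy ``$\epsilon$ trivial $\Rightarrow |T|/2^n$'' versus ``$\epsilon$ nontrivial $\Rightarrow 0$'' is exactly what is needed to repair this. The way the fact is invoked later in the paper (to conclude $|\langle \phi | \phi'\rangle|^2 \leq 1/2$ for distinct stabilizer states) is unaffected, since the degenerate value $0$ only strengthens that inequality.
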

\begin{proof}
    Let $\stabs$ and $\stabs'$ be the stabilizer groups of $\rho$ and $\rho$ respectively. 
    \begin{align*}
        \tr(\rho \sigma) = \frac{1}{2^{2n}}\sum_{\substack{\sigma \in \stabs, \sigma' \in \stabs'}} \tr(\sigma \sigma') = \frac{1}{2^n} \abs{\stabs \cap \stabs'}
    \end{align*}
    Since $\stabs \cap \stabs'$ is also an abelian subgroup of $\paulis$, $\abs{\stabs \cap \stabs'} = 2^k$ for some $k \in \{0, 1, \dots, n\}$. 
\end{proof}

\section{Mutual information-based lower bound} \label{sec:robust-lb}
In this section we prove \Cref{thm:robust-cliffqnc-lb}, by proving a stronger version as \Cref{thm:robust-lb-with-params}. The proof of this theorem is relatively clean. We provide an intuitive proof overview by-example. 

\subsection{Technical overview by example: biased CAT state} For an intuitive exposition, let's consider the biased CAT state $\ket{\bCATn[n]{\gamma}} = \sqrt{\gamma}\ket{0^n} + \sqrt{1-\gamma}\ket{1^n}$. Suppose for the sake of contradiction that there is a constant-depth circuit $U$ that maps a stabilizer state $\ket{\phi_0}$ to~$\ket{\bCAT{\gamma}}$.

Our lower bound can be broken up into two main steps. We show the following:
\begin{enumerate}
    \item \textbf{Stabilizer states have integer mutual information.} 
    On the other hand, we see that for the biased CAT state, the mutual information between any two disjoint regions $A \cup B \subset [n]$, is $H(\gamma) = -\gamma\log \gamma - (1-\gamma) \log(1-\gamma)$. Thus when $\gamma \in (0,1/2)$, it does not have integer mutual information. This is one way to see that the biased CAT state for $\gamma \in (0, 1/2)$ is not a stabilizer state.
    \item \textbf{A low depth circuit preserves the range of mutual information of a state}. If $U$ is a constant-depth circuit, then there exists a pair of qubits $i,j\in [n]$ with disjoint backwards-forwards lightcones as shown in the figure below.

    \begin{center}
        \begin{tikzpicture}[x=1cm, y=1cm]
        
\def\nQ{28}
        \def\nD{3}

\def\wireSep{0.14cm}         \def\gateSep{.3cm}         \def\gateWidth{0.15cm}       \def\gateHeight{.2cm}      

\foreach \q in {1,...,\nQ} {
    \draw (\gateWidth, -\q*\wireSep) -- (\nD*\gateSep +  \gateWidth, -\q*\wireSep);
    }

\foreach \d in {1,...,\nD} {
    \foreach \q in {1,...,\numexpr\nQ-1\relax} {
        \ifodd\q
        \ifodd\d
            \node [shape=rectangle,
            minimum width=\gateWidth,
            minimum height=\gateHeight,
            anchor=center,
            inner sep=0pt,
            outer sep=0pt,
            draw,
            fill=white] at  ({\d*\gateSep}, {- \q*\wireSep - 0.5*\wireSep})  {};
        \fi
        \else
        \unless\ifodd\d
            \node [shape=rectangle,
            minimum width=\gateWidth,
            minimum height=\gateHeight,
            anchor=center,
            inner sep=0pt,
            outer sep=0pt,
            draw,
            fill=white] at  ({\d*\gateSep}, {- \q*\wireSep - 0.5*\wireSep})  {};
        \fi
        \fi
    }
    }

    \node at (-5* \gateSep, -0.5*\nQ * \wireSep) {$\ket{\phi_0}$};
    \node at (5* \gateSep + \nD*\gateSep, -0.5*\nQ * \wireSep) {$\ket{\phi_1}$};
\node[right=0.2cm] (outi) at (\nD*\gateSep+1,-5*\wireSep) {$i$};
        
\draw[fill=blue!10, draw=blue, opacity=0.5]
        (0.2, {-5.5*\wireSep + \nD*\wireSep}) coordinate (T) -- ({\nD*\gateSep + \gateWidth}, {-5*\wireSep}) -- (0.2, {-5.3*\wireSep - \nD*\wireSep}) coordinate (B); \draw[decorate,decoration={brace, amplitude=0.2cm, mirror, raise=0.5pt}] ($(T)-(0.2cm,0)$) -- ($(B)-(0.2cm,0)$) node[midway, left=0.2cm]{$\blc(i)$} ;

\draw[fill=orange!10, draw=none, opacity=0.5]
        (\nD*\gateSep + \gateWidth, -0.1) coordinate (TT)   -- (T) --(B) -- (\nD*\gateSep + \gateWidth, {-4.5*\wireSep - 2*\nD*\wireSep}) coordinate (BB);
        \draw[draw=orange, dashed] (TT) -- (T);
        \draw[draw=orange, dashed] (BB) -- (B);

        \draw[decorate,decoration={brace, amplitude=5pt, raise=0.5pt}] ($(TT)+(0.3cm,0)$) -- ($(BB)+(0.3cm,0)$) node[midway, right=.3cm]{$\dlc{i}$} ;

        \node[right=0.2cm] (outj) at (\nD*\gateSep+1,-22*\wireSep) {$j$};

\draw[fill=blue!10, draw=blue, opacity=0.5]
        (0.2, {-22*\wireSep + \nD*\wireSep}) coordinate (Tb) -- ({\nD*\gateSep + \gateWidth}, {-22*\wireSep}) -- (0.2, {-23*\wireSep - \nD*\wireSep}) coordinate (Bb); \draw[decorate,decoration={brace, amplitude=0.2cm, mirror, raise=0.5pt}] ($(Tb)-(0.2cm,0)$) -- ($(Bb)-(0.2cm,0)$) node[midway, left=0.2cm]{$\blc(j)$} ;

\draw[fill=orange!10, draw=none, opacity=0.5]
        (\nD*\gateSep + \gateWidth,{-22.5*\wireSep + 2*\nD*\wireSep} ) coordinate (TbT)    -- (Tb) --(Bb) -- (\nD*\gateSep + \gateWidth, {-22.5*\wireSep - 2*\nD*\wireSep}) coordinate (BbB);
        \draw[orange, dashed] (TbT) -- (Tb);
        \draw[orange, dashed] (BbB) -- (Bb);

        \draw[decorate,decoration={brace, amplitude=0.2cm, raise=0.5pt}] ($(TbT)+(0.3cm,0)$) -- ($(BbB)+(0.3cm,0)$) node[midway, right=0.2cm]{$\dlc{j}$} ;
        
        \end{tikzpicture}
    \end{center}
    Since mutual information cannot increase by applying local channels to the two regions, the mutual information between $i$ and $j$ in the state $\phi_1$ is at most the mutual information between its backwards lightcones on the initial state $\phi_0$, so $\MI[\phi_1]{i}{j} \leq \ \MI[\phi_0]{\blc(i)}{\blc(j)}$.
    Furthermore, we can use the same reasoning to see that mutual information between $\blc(i)$ and $\blc(j)$ on $\phi_0$ is at most the mutual information between the double lightcone regions $\dlc{i}, \dlc{j}$ in $\phi_1$.
    \begin{align*}
        \MI[\phi_1]{i}{j} \leq \ \MI[\phi_0]{\blc(i)}{\blc(j)} \leq\  \MI[\phi_1]{\flc(\blc(i))}{\flc(\blc(j))}.
    \end{align*} 
\end{enumerate}
Therefore, if $\ket{\phi_1} = \ket{\bCAT{\gamma}}$, since the $\gamma$-biased CAT state has mutual information $H(\gamma)$ between any two disjoint regions, we have:
    \begin{align*}
        \MI[\phi_0]{\blc(i)}{\blc(j)} =  H(\gamma).
    \end{align*}
But from step 1 above we must have that $\MI[\phi_0]{\blc(i)}{\blc(j)}$ is an integer, and so we arrive at a contradiction if $\gamma \in (0, 1/2)$.
To make our lower bound robust, we show, using the Fannes–Audenaert inequality that for states that are near stabilizer, the mutual information between small regions is also near-integer.

\subsection{Proof of mutual information-based lower bound}
We begin by walking through a sequence of \nameCrefs{lem:mi-lightcones} and \nameCrefs{claim:mi-near-stabilizer} that we use to build our proof.

The proof of \Cref{thm:robust-cliffqnc-lb} follows from two key insights. The first is a sense in which \emph{low depth circuits preserve the range of mutual information} between constant-sized regions. This is encompassed in  \Cref{lem:mi-lightcones} below.

\begin{lemma}\label{lem:mi-lightcones}
    If $\psi = U \psi' U^\dagger$ and $\flc_U(\blc_U(i)) \cap \flc_U(\blc_U(j)) = \emptyset$ then
    \begin{align}
        \MI[\psi]{i}{j} \leq \ \MI[\psi']{\blc(i)}{\blc(j)} \leq\  \MI[\psi]{\flc(\blc(i))}{\flc(\blc(j))}.
    \end{align} 
\end{lemma}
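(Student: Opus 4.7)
The plan is to reduce both inequalities to a single general fact and apply it twice, once with $U$ and once with $U^\dagger$. I read the two repeated subscripts ``$\blc(i)$'' and ``$\flc(\blc(i))$'' on the right-hand sides as typos for $\blc(j)$ and $\flc(\blc(j))$, respectively.

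\textbf{General fact.} For any circuit $V$ with $\rho = V \rho' V^\dagger$ and any output regions $A, B$ satisfying $\blc_V(A) \cap \blc_V(B) = \emptyset$,
\[
\MI[\rho]{A}{B} \ \leq \ \MI[\rho']{\blc_V(A)}{\blc_V(B)}.
\]
This follows from ingredients already in the preliminaries. Applying \Cref{fact:lightcone-on-state} with $S = A \cup B$ rewrites $\rho_{AB}$ as a partial trace of $\cktblc[V]{AB}\, \rho'_{\blc_V(AB)}\, \cktblc[V]{AB}^\dagger$. The disjointness hypothesis forces $\cktblc[V]{AB} = \cktblc[V]{A} \otimes \cktblc[V]{B}$, a local unitary across the cut $\blc_V(A)\,|\,\blc_V(B)$. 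Local-unitary invariance of mutual information (Eq.~(5)) preserves $\MI{\,\cdot\,}{\,\cdot\,}$ across this cut, and monotonicity of mutual information under partial trace then brings the mutual information down to the subregions $A \subseteq \blc_V(A)$ and $B \subseteq \blc_V(B)$.

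\textbf{First inequality.} Apply the general fact with $V = U$, $\rho = \psi$, $\rho' = \psi'$, $A = \{i\}$, $B = \{j\}$. The hypothesis $\flc_U(\blc_U(i)) \cap \flc_U(\blc_U(j)) = \emptyset$ implies the weaker $\blc_U(i) \cap \blc_U(j) = \emptyset$, since every output qubit lies in the forward lightcone of its own backwards lightcone; so the general fact applies.

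\textbf{Second inequality.} Apply the general fact with $V = U^\dagger$, $\rho = \psi'$, $\rho' = \psi$, $A = \blc_U(i)$, $B = \blc_U(j)$. Reversing all wires of $G_U$ exchanges forward and backward lightcones, so $\blc_{U^\dagger}(S) = \flc_U(S)$ for every set $S$. The required disjointness $\blc_{U^\dagger}(A) \cap \blc_{U^\dagger}(B) = \emptyset$ is therefore exactly the lemma's hypothesis, and the conclusion reads $\MI[\psi']{\blc(i)}{\blc(j)} \leq \MI[\psi]{\flc(\blc(i))}{\flc(\blc(j))}$.

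\textbf{Anticipated difficulty.} I do not expect a serious obstacle: the whole argument is bookkeeping on the lightcone DAG, combined with two textbook properties of quantum mutual information. The only care required is to track the direction of lightcones when moving between $U$ and $U^\dagger$, and to observe that the single hypothesis $\flc_U(\blc_U(i)) \cap \flc_U(\blc_U(j)) = \emptyset$ simultaneously enforces the disjointness condition needed for both applications of the general fact.
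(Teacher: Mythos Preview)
Your proposal is correct and follows essentially the same approach as the paper: both arguments use \Cref{fact:lightcone-on-state} to factor the backwards-lightcone subcircuit as a tensor product across the $\blc(i)\,|\,\blc(j)$ cut, then invoke local-unitary invariance and partial-trace monotonicity of mutual information. Your packaging is slightly cleaner---you abstract the argument into a single general fact and apply it once to $U$ and once to $U^\dagger$, whereas the paper proves the first inequality in place and then remarks that the second follows by the same argument; you are also explicit about why the single hypothesis implies both disjointness conditions, which the paper leaves implicit.
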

\begin{proof}
    Since $\blc(i) \cap \blc(j)$, the sub-circuit $\cktblc[U]{i,j}$ induced by the backwards lightcone of $\{i,j\}$ has a tensor product structure
    \begin{align}
        \cktblc[U]{i,j} = \cktblc[U]{i} \ot \cktblc[U]{j}
    \end{align} 
    By \Cref{fact:lightcone-on-state} we have that.
    \begin{align}
        \psi_{i,j} = \tr_{\{i,j\}^c} U \psi' U^\dagger = \tr_{\{i, j\}^c} V \psi'_{\flc(i,j)} V^\dagger && \text{for } V = \cktblc[U]{i} \ot \cktblc[U]{j} \otimes I_{(\blc(i)\cup \blc(j))^c}
    \end{align}
    So the mutual information can be written as follows.
    \begin{align}
        \MI[\psi]{i}{j} = \MI[U \psi' U^\dagger]{i}{j} 
        =\MI[V \psi' V^\dagger]{i}{j}   
    \end{align}
    Now, using the fact that quantum mutual information is non-increasing under partial trace, and that quantum mutual information is invariant under local unitaries (\Cref{eq:mi-local-unitary-invariance}) this expression is at most
    \begin{align}
        \leq \MI[V \psi' V^\dagger]{\flc(i)}{\flc(j)} 
        = \MI[\psi']{\flc(i)}{\flc(j)}.
    \end{align}
    We have shown the first inequality in our \nameCref{lem:mi-lightcones} $\MI[\psi]{i}{j} \leq \MI[\psi']{\blc(i)}{\blc(j)}$. The second inequality follows from the same argument. 
\end{proof}

The second key observation is that the class of stabilizer states are \emph{discrete}, and so is their mutual information. 
 
\begin{lemma}\label{lem:stab-mi-integer}
    For each $n$-qubit stabilizer state $\varphi$, and each pair of disjoint subsets of qubits $A, B \subseteq [n]$, if $A\cap B = \emptyset$, then $\MI[\varphi]{A}{B} \in \{0, 1, \dots, n\}$.
\end{lemma}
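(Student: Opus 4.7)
The plan is to combine two facts from the preliminaries: (i) for any stabilizer state $\rho(\stabs)$ the von Neumann entropy equals $n - |\stabs|$, and (ii) reducing a stabilizer state to a subsystem yields another stabilizer state, with stabilizer group $\stabs_A := \{\sigma_A \in \paulis[|A|] : \sigma_A \otimes I \in \stabs\}$ (cf.\ \Cref{eq:stab-rdm}). Together these imply that all of $\EE(\varphi_A)$, $\EE(\varphi_B)$, and $\EE(\varphi_{AB})$ are nonnegative integers, since each is of the form (number of qubits) $-$ (size of a stabilizer group).

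Concretely, I would first observe that $\varphi_A$, $\varphi_B$, and $\varphi_{AB}$ are stabilizer states (on $|A|$, $|B|$, and $|A|+|B|$ qubits respectively), by applying \Cref{eq:stab-rdm} to the tracing out of $A^c$, $B^c$, and $(A \cup B)^c$. Then I would plug in $\EE(\rho(\stabs')) = (\text{qubits}) - |\stabs'|$ to write
\begin{align*}
    \MI[\varphi]{A}{B} &= \EE(\varphi_A) + \EE(\varphi_B) - \EE(\varphi_{AB}) \\
    &= (|A| - |\stabs_A|) + (|B| - |\stabs_B|) - (|A|+|B| - |\stabs_{A \cup B}|) \\
    &= |\stabs_{A \cup B}| - |\stabs_A| - |\stabs_B|,
\end{align*}
which is manifestly an integer.

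Finally, the usual bounds on mutual information give $0 \leq \MI[\varphi]{A}{B}$ by subadditivity, and $\MI[\varphi]{A}{B} \leq \EE(\varphi_A) + \EE(\varphi_B) \leq |A| + |B| \leq n$, where the last inequality uses disjointness $A \cap B = \emptyset$. Hence $\MI[\varphi]{A}{B} \in \{0, 1, \dots, n\}$. No step here is a real obstacle --- the entire argument is a direct computation from the two preliminaries facts --- and the only thing to be careful about is verifying that the reduced-state stabilizer-group formula applies correctly to multi-qubit subsystems and that entropies being nonnegative forces each term in the integer expression to combine to a nonnegative value.
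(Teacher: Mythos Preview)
Your proposal is correct and follows essentially the same route as the paper: reduce to the entropy formula $\EE(\rho(\stabs)) = (\text{qubits}) - |\stabs|$ on each subsystem to get $\MI[\varphi]{A}{B} = |\stabs_{A\cup B}| - |\stabs_A| - |\stabs_B|$, then invoke nonnegativity of mutual information. The only minor difference is that you also spell out the upper bound $\MI[\varphi]{A}{B} \leq |A| + |B| \leq n$, which the paper's proof leaves implicit.
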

\begin{proof} 
    Let $\calS$ be the stabilizer group for $\varphi$. As mentioned in \cite{stabs-entanglement}, we can write the entanglement of the state $\varphi$ as $\EE(\varphi) = n - |\calS|$. Here $|\calS|$ is the rank of the stabilizer group $\calS$, the size of the minimal generating set. We first prove this fact for completeness. We can write $\varphi$ as a linear combination of its stabilizer group $\calS$, $\varphi = \frac{1}{2^n} \sum_{s\in \calS} s$.  This is proportional to the projector $\Pi$ onto the $+1$ eigenspace of $\calS$. $\varphi= 2^{|\calS|-n}\Pi$, where $|\calS|$ is the rank of the stabilizer group $\calS$, the size of its minimal generating set. Therefore the entropy of $\varphi$ is $\EE(\varphi) = -\tr(\rho\log \rho) = - \sum_{s \in \calS} 2^{|\calS| -n} \log 2^{|\calS| - n} = n - |\calS|$. 

    We now consider the reduced state on qubits in $A$. We can write 
    \begin{align}\varphi_A = \frac{1}{2^n} \sum_{s \in \calS} \tr_{A^c} s =  \frac{1}{2^{|A|}} \sum_{s\in \calS_A} s.\end{align}
    Where $\calS_A = \cbra{\tr_{A^c} s : s \in \calS : \tr(s)\neq 0}$. Thus $\varphi_A$ is also a stabilizer state with stabilizer group $\calS_A$. So $\EE(\varphi_A) = |A| - |\calS_A|$. Overall, the mutual information is 
    \begin{align}\MI[\varphi]{A}{B} := \EE(\varphi_A) +\EE(\varphi_B) - \EE(\varphi_{AB}) = |A| - |\calS_A| + |B| - |\calS_B| -(|B| -  |\calS_{AB}|) = |\calS_{AB}| - |\calS_A| - |\calS_B|\end{align}
     which is an integer, as desired. Finally, the fact that mutual information is non-negative completes the proof.

\end{proof}

Already we can sketch a proof for lower bounding the circuit depth required to exactly prepare $\psi$. Combining these two \nameCrefs{lem:mi-lightcones}, we have that if $\psi = U\phi^S U^\dagger$ for some stabilizer state $\phi^S$, then $\MI[\psi]{i}{j} \leq \ \ell \leq\  \MI[\psi]{\flc(\blc(i))}{\flc(\blc(j))}$ for some integer $\ell$. Therefore, if  $\psi$ is chosen such that this interval does not contain an integer, we reach a contradiction. Our properties in \Cref{thm:robust-cliffqnc-lb}, do precisely this when the depth (and therefore also lightcone size) is small.

Before proving \Cref{thm:robust-cliffqnc-lb} we first prove the following sequence of claims to make the above argument robust. We now show that for states that are nearby stabilizer, the mutual information between small regions is also near integer.
\begin{claim}\label{claim:mi-near-stabilizer}
    If $\tnorm{\rho - \varphi} \leq \eps$ and $\varphi$ is a stabilizer state, then for each $A, B \in [n]$ disjoint subsets of qubits ($A \cap B = \emptyset$) there exists an integer $k$ such that 
    \begin{align}\abs{\MI[\rho]{A}{B} -k } = \abs{\MI[\rho]{A}{B} - \MI[\varphi]{A}{B} }  \leq 2 \eps \cdot (|A| + |B|) + 3H(\eps). \end{align}
    Where $H(\eps) = -\eps \log \eps - (1-\eps)\log(1-\eps)$ is the binary entropy function.
\end{claim}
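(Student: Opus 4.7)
The plan is to identify the appropriate integer $k$ and then control the mutual information difference via a continuity bound on the \EEnt. The displayed equality in the claim hints at the choice $k := \MI[\varphi]{A}{B}$; this is a nonnegative integer by \Cref{lem:stab-mi-integer}, since $\varphi$ is a stabilizer state and $A\cap B = \emptyset$. With this choice the first equality holds by definition, so the real content is the inequality $\abs{\MI[\rho]{A}{B} - \MI[\varphi]{A}{B}} \leq 2\eps(|A|+|B|) + 3H(\eps)$.

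To establish this inequality, I would expand both mutual informations using $\MI[\sigma]{A}{B} = \EE(\sigma_A) + \EE(\sigma_B) - \EE(\sigma_{AB})$ and apply the triangle inequality, reducing the problem to bounding the three entropy differences $\abs{\EE(\rho_X) - \EE(\varphi_X)}$ for $X \in \cbra{A, B, A\cup B}$. The key analytic ingredient is the Fannes--Audenaert continuity inequality for \EEnt, which on a $d_X$-dimensional space gives a bound of the form $\eta \log d_X + H(\eta)$ when the states are within trace distance $\eta$. Since the partial trace is contractive under the trace norm, $\tnorm{\rho_X - \varphi_X} \leq \tnorm{\rho - \varphi} \leq \eps$ for each $X$, and $d_X = 2^{|X|}$, so applying Fannes--Audenaert in the slightly weakened form $\abs{\EE(\rho_X) - \EE(\varphi_X)} \leq \eps \cdot |X| + H(\eps)$ to each subsystem yields
\begin{align*}
    \abs{\MI[\rho]{A}{B} - \MI[\varphi]{A}{B}} &\leq \eps |A| + H(\eps) + \eps|B| + H(\eps) + \eps(|A|+|B|) + H(\eps) \\
    &= 2\eps(|A|+|B|) + 3H(\eps),
\end{align*}
which is exactly the claimed bound.

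I do not expect a significant obstacle here: once one identifies $k$ as $\MI[\varphi]{A}{B}$, the integrality comes for free from \Cref{lem:stab-mi-integer}, and the rest is a direct application of Fannes--Audenaert together with contractivity of the trace norm under partial trace. The only minor subtlety is cosmetic---ensuring one uses a form of Fannes--Audenaert whose constants match the stated bound, which is why a modestly weakened version of the sharpest inequality suffices.
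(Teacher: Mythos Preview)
Your proposal is correct and matches the paper's proof essentially step for step: choose $k = \MI[\varphi]{A}{B}$ via \Cref{lem:stab-mi-integer}, expand the mutual information, apply the triangle inequality to the three entropy terms, and bound each using Fannes--Audenaert together with the contractivity of the trace norm under partial trace. The paper proceeds identically, including the use of a mildly weakened form of Fannes--Audenaert that yields the stated constants.
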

\begin{proof}
    For any two $n$-qubit states $\rho, \sigma$,  Fannes–Audenaert inequality says that if the states are close $\tnorm{\rho - \sigma} \leq \eps$ then so is their entanglement entropy $\abs{ \EE(\rho) - \EE(\sigma) } \leq \eps n + H(\eps)$. Now using the definition of mutual information, and the triangle inequality we have that
    \begin{align}
        \abs{\MI[\rho]{A}{B} - \MI[\varphi]{A}{B}}  &= \abs{ \EE(\rho_A) + \EE(\rho_B) - \EE(\rho_{AB}) + \EE(\varphi_A) + \EE(\varphi_B) - \EE(\varphi_{AB})}\\
        &\leq \abs{\EE(\rho_A) - \EE(\varphi_A)} + \abs{\EE(\rho_B) - \EE(\varphi_B)} + \abs{\EE(\rho_{AB})-\EE(\varphi_{AB})}\\
        &\leq \pbra{\eps |A| + H(\eps)} + \pbra{\eps |B| + H(\eps)} + \pbra{\eps (|A|+ |B|) + H(\eps)}\\
        &= 2\eps (|A| + |B|) + 3H(\eps)
        .
    \end{align}
    In the second inequality we used the Fannes–Audenaert inequality and that trace norm is non-increasing under partial trace. Finally since $\varphi$ is a stabilizer state, by \Cref{lem:stab-mi-integer} we have that $\MI[\varphi]{A}{B} = k$ for some integer $k$, completing the proof.
\end{proof}

\begin{claim}\label{claim:lb-mi-dist}
    Suppose $\psi = U \psi' U^\dagger$. If there exists $i,j \in [n]$ with nonintersecting backwards-forwards lightcones $\flc(\blc(i)) \cap \flc(\blc(j)) = \emptyset$ with respect to $U$, then there exists an integer $\ell$ such that
    \begin{align}
        \min_{k\in \ZZ}\abs{ \MI[\psi']{\blc(i)}{\blc(j)}  -k}  \geq \min \cbra{ \MI[\psi]{i}{j} - (\ell-1) \ ,\  \ell - \MI[\psi]{\flc\blc(i)}{\flc\blc(j)} }.
    \end{align}
\end{claim}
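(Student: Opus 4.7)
The plan is to combine the sandwich inequality from \Cref{lem:mi-lightcones} with a short case analysis on how the real number $b := \MI[\psi']{\blc(i)}{\blc(j)}$ sits between its two integer neighbors. The hypothesis $\flc(\blc(i)) \cap \flc(\blc(j)) = \emptyset$ is exactly what is needed to apply \Cref{lem:mi-lightcones} (twice, once forward and once backward in time), which hands me the chain
$$a := \MI[\psi]{i}{j} \;\leq\; b := \MI[\psi']{\blc(i)}{\blc(j)} \;\leq\; c := \MI[\psi]{\flc\blc(i)}{\flc\blc(j)}.$$
Once this chain is in hand, the claim reduces to a purely elementary statement about three real numbers $a \leq b \leq c$.

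Next I would pick the integer $\ell := \lceil c \rceil$, so that $c \in (\ell-1,\ell]$ and in particular $\ell - c \geq 0$. I would then split into two cases. In the trivial case $a \leq \ell - 1$, the right-hand side of the claim is already $\leq 0$, and the inequality holds automatically since the distance from $b$ to the nearest integer is nonnegative. (The same trivial resolution handles the edge case where $c$ itself is an integer, since then $\ell - c = 0$.) In the remaining case $a > \ell - 1$, the sandwich forces $\ell - 1 < a \leq b \leq c \leq \ell$, so $b$ lies in the unit interval $[\ell-1,\ell]$ and the nearest integer to $b$ is either $\ell - 1$ or $\ell$. Therefore
$$\min_{k \in \ZZ} |b - k| \;=\; \min\bigl(b - (\ell-1),\; \ell - b\bigr) \;\geq\; \min\bigl(a - (\ell-1),\; \ell - c\bigr),$$
where I use $b \geq a$ in the first argument of the minimum and $b \leq c$ in the second.

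There is essentially no technical obstacle here: the claim is a short bookkeeping lemma whose role is to repackage \Cref{lem:mi-lightcones} into a form that can be chained, in the proof of \Cref{thm:robust-cliffqnc-lb}, with the Fannes--Audenaert-based \Cref{claim:mi-near-stabilizer} (which says that distance-to-nearest-integer is small for states close to stabilizer states). The only point that warrants care is the choice of $\ell$ when $a$ and $c$ straddle an integer $m$; one checks that taking $\ell = \lceil c \rceil$ either lands $a$ and $c$ in a common unit interval (so the interesting case applies) or drives $a$ strictly below $\ell - 1$ (so the bound degenerates to the trivial one). Either way the existential "there exists an integer $\ell$" in the statement is witnessed by $\ell = \lceil c \rceil$.
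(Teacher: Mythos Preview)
Your proposal is correct and follows essentially the same approach as the paper: both arguments invoke \Cref{lem:mi-lightcones} to obtain the sandwich $a \leq b \leq c$ and then do a short case analysis to produce the witnessing integer $\ell$. The only cosmetic difference is the choice of witness: the paper takes $\ell$ so that $b \in [\ell-1,\ell)$ and splits on which endpoint of that interval is nearer to $b$, whereas you take $\ell = \lceil c \rceil$ and split on whether $a \leq \ell - 1$; both choices work and yield the same conclusion.
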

\begin{proof}
    Let $\ell$ be the integer such that $\MI[\psi']{\blc(i)}{\blc(j)} \in [\ell-1, \ell)$. Suppose that $\ell$  the closest integer to $\MI[\psi']{\blc(i)}{\blc(j)}$,  $\abs{\MI[\psi']{\blc(i)}{\blc(j)} - \ell } \leq \frac{1}{2}$ then using \Cref{lem:mi-lightcones} we get the following.
    \begin{align}
        \ell- \MI[\psi]{\flc(\blc(i))}{\flc(\blc(j))} \leq \ell - \MI[\psi']{\blc(i)}{\blc(j)} = \min_{k\in \ZZ}\abs{ \MI[\psi']{\blc(i)}{\blc(j)}  -k} 
    \end{align}
    On the other hand, if $\ell - 1$ is the nearest integer to $\MI[\psi']{\blc(i)}{\blc(j)}$, then we have
    \begin{align}
        \MI[\psi]{i}{j} -(\ell -1) \leq \ \MI[\psi']{\blc(i)}{\blc(j)}  -(\ell -1)  =\ \min_{k\in \ZZ}\abs{ \MI[\psi']{\blc(i)}{\blc(j)}  -k}.
    \end{align}
    Since at least one of these equations must hold, we have proved the \nameCref{claim:lb-mi-dist}.
\end{proof}

We are now ready to prove the main result of this section.

\begin{lemma}\label{lem:intermediate:robust-lb}
    Suppose $n$-qubit state $\phi$ is prepared by a unitary $U$ acting on a stabilizer state $\phi^S$, $\phi = U \phi^S U^\dagger$. And suppose $\psi$ is an $n$ qubit state and $i\neq j \in [n]$ satisfying:
    \begin{enumerate}
        \item $\MI[\psi]{i}{j} \geq \alpha$, and
        \item $\MI[\psi]{\dlc{i}}{\dlc{j}} \leq \beta$
    \end{enumerate}
    for some $0 \leq \alpha \leq \beta \leq 1$. Let $\eps = \tnorm{\psi - \phi}$. If  $\eps \leq \pbra{\frac{\min\cbra{\alpha, 1- \beta}}{1 + 3 e}}^{\ln(4)}$, then $\dU \geq \Omega(\log(1/\eps))$.
\end{lemma}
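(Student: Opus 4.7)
The strategy is to chain the three ingredients already in this section: \Cref{lem:mi-lightcones} says a circuit cannot change mutual information between small regions by much once we expand them to lightcones, \Cref{lem:stab-mi-integer} says stabilizer mutual information is integer-valued, and the Fannes--Audenaert computation inside \Cref{claim:mi-near-stabilizer} lets us transfer mutual informations from $\psi$ to $\phi$ with an error controlled by the region sizes. Intuitively, the stabilizer input forces an integer to sit between the "small-region" and "big-region" mutual informations of $\phi$; the hypotheses $\alpha,\beta\in(0,1)$ then squeeze that integer to equal $1$, and making the squeeze consistent requires a large depth.

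The plan is, first, to apply \Cref{lem:mi-lightcones} twice to $\phi = U\phi^S U^\dagger$ to obtain
\[
\MI[\phi]{i}{j}\ \le\ \MI[\phi^S]{\blc(i)}{\blc(j)}\ \le\ \MI[\phi]{\dlc{i}}{\dlc{j}},
\]
and to note that by \Cref{lem:stab-mi-integer} the middle quantity is a nonnegative integer $\ell$. Next, I would apply Fannes--Audenaert exactly as in \Cref{claim:mi-near-stabilizer} (not between $\psi$ and a stabilizer, but between $\psi$ and $\phi$, which is what that proof actually uses) to obtain
\[
\MI[\phi]{i}{j}\ \ge\ \alpha - 4\eps - 3H(\eps),\qquad \MI[\phi]{\dlc{i}}{\dlc{j}}\ \le\ \beta + 4\eps\cdot 4^{\dU} + 3H(\eps),
\]
using $|\dlc(i)|,|\dlc(j)|\le 4^{\dU}$. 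Combining gives the sandwich
\[
\alpha - 4\eps - 3H(\eps)\ \le\ \ell\ \le\ \beta + 4\eps\cdot 4^{\dU} + 3H(\eps).
\]

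Then I would do a case split on $\ell$. The case $\ell=0$ forces $\alpha\le 4\eps+3H(\eps)$, which the hypothesis on $\eps$ is engineered to rule out; so $\ell\ge 1$, and the right-hand inequality rearranges to
\[
4\eps\cdot 4^{\dU}\ \ge\ 1-\beta-3H(\eps),\qquad\text{i.e.}\qquad \dU\ \ge\ \tfrac12\log\!\Big(\tfrac{1-\beta-3H(\eps)}{4\eps}\Big).
\]
Plugging in the quantitative assumption $\eps\le\bigl(\min\{\alpha,1-\beta\}/(1+3e)\bigr)^{\ln 4}$ and using a bound of the form $H(\eps)\le c\cdot\eps^{1/\ln 4}$ (valid for the $\eps$ in the regime of interest) shows that both $4\eps+3H(\eps)<\alpha$ and $3H(\eps)<(1-\beta)/2$, which closes both halves of the case split and yields $\dU=\Omega(\log(1/\eps))$.

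The main obstacle is numerical bookkeeping: the Fannes--Audenaert error contains the slightly super-linear term $H(\eps)$, and the exponent $\ln 4$ in the hypothesis is precisely chosen so that $H(\eps)$ can be cleanly absorbed into a power of $\eps$ with the explicit constant $1+3e$. A minor but important subtlety is that \Cref{lem:mi-lightcones} requires $\flc(\blc(i))\cap\flc(\blc(j))=\emptyset$; this is already implicit in the hypothesis that $\MI[\psi]{\dlc{i}}{\dlc{j}}$ is defined as a mutual information on disjoint regions, and one should observe that violating disjointness only happens when $\dU$ is already large, so the desired depth lower bound holds trivially in that case. I do not expect to need the intermediate \Cref{claim:lb-mi-dist} directly, since the stabilizer-integer fact collapses the "distance to nearest integer" in that claim to zero and the sandwich argument above is the clean form of what remains.
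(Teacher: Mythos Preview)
Your argument is correct and uses the same three ingredients as the paper, but organized dually: the paper pulls $\psi$ back through $U^\dagger$ to $\psi':=U^\dagger\psi U$, uses \Cref{lem:mi-lightcones} to get $\MI[\psi']{\blc(i)}{\blc(j)}\in[\alpha,\beta]$, and then applies \Cref{claim:mi-near-stabilizer} \emph{once} between $\psi'$ and the stabilizer $\phi^S$ on the single pair of regions $\blc(i),\blc(j)$ of size $\le 2^{\dU}$, yielding $2^{\dU+2}\eps+3H(\eps)\ge\min\{\alpha,1-\beta\}$ in one line (this is why the constant $1+3e$ is exactly matched). You instead keep $\phi$ on the output side, sandwich the integer $\ell=\MI[\phi^S]{\blc(i)}{\blc(j)}$ via \Cref{lem:mi-lightcones}, and apply Fannes--Audenaert twice (on regions of size $2$ and of size $\le 2\cdot 4^{\dU}$); your case split on $\ell$ is precisely what \Cref{claim:lb-mi-dist} packages, so you are right that you can bypass it. One small overstatement: the hypothesis with constant $1+3e$ does not literally \emph{rule out} $\ell=0$ in your version --- combining $\alpha\ge(1+3e)\eps^{1/\ln 4}$ with $3H(\eps)\le 3e\,\eps^{1/\ln 4}$ only forces $4\eps\ge\eps^{1/\ln 4}$, i.e.\ $\log(1/\eps)=O(1)$, in that case. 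But the paper's own derivation ends with $\dU\ge(1-1/\ln 4)\log(1/\eps)-O(1)$, which has the same vacuous range, so the $\Omega$ in the conclusion should be read as $\dU\ge c\log(1/\eps)-C$ in either route and your argument closes.
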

\begin{proof}
    Let $\psi'$ the state such that $\psi = U \psi' U^\dagger$. 
    \Cref{claim:lb-mi-dist} tells us that 
    \begin{align}
        \min_{k\in \ZZ}\abs{ \MI[\psi']{\blc(i)}{\blc(j)}  -k} \geq \min \cbra{\alpha, 1- \beta}.
    \end{align}
    But \Cref{claim:mi-near-stabilizer}, and the fact that $|A| +|B| \leq 2^{\dU+1}$ tells us that 
    \begin{align}
        \min_k \abs{\MI[\psi']{A}{B} -k }  \leq 2 \eps \cdot 2^{\dU+1} + 3H(\eps).
    \end{align}
    So
    \begin{align}
        2^{\dU+2} \eps + 3H(\eps) &\geq \min\cbra{\alpha, 1-\beta} \label{eq:forkfornew}\\
         \dU&\geq \log \pbra{\frac{1}{\eps} \pbra{\min\cbra{\alpha, 1-\beta} - 3 H(\eps)}} -1
    \end{align}
    Using the fact that $H(\eps) \leq e \cdot \eps^{1/\ln 4}$, we have
    \begin{align}
        \dU&\geq \log \pbra{\frac{1}{\eps} \pbra{\min\cbra{\alpha, 1-\beta} - 3 e\cdot  \eps^{1/\ln 4}}} -1 .
    \end{align}
    Suppose, as in the \namecref{lem:intermediate:robust-lb} statement, that $\eps \leq \pbra{\frac{\min\cbra{\alpha, 1- \beta}}{1 + 3 e}}^{\ln(4)}$, and so  $(1+3e) \eps^{1/\ln 4} < \min\{\alpha, 1- \beta\}$. Further rearranging we see that $\min\{a, 1- \beta\}  - 3e \cdot \eps^{1/\ln 4} > \eps^{1/\ln 4}$. So plugging this into the equation above,
    \begin{align}
        \dU \geq \log \pbra{\eps^{1/\ln 4 - 1}}- 1 = (1 - 1/\ln 4) \log \eps - 1 = \Omega(\log(1/\eps)).
    \end{align}
\end{proof}
\begin{theorem}[Formal version of \Cref{thm:robust-cliffqnc-lb}, with auxilary state]\label{thm:robust-lb-with-params}
    Suppose $\psi$ is an $n$ qubit state satisfying:
    \begin{enumerate}
        \item \label{item:lem:MI-qubit-lowerbound} For each $i\neq j \in [n]$, $\MI[\psi]{i}{j} \geq \alpha$, and
        \item \label{item:lem:MI-subset-upperbound} For each $A, B\subseteq [n]: A\cap B = \emptyset$ and  $|A|, |B| < s$, $\MI[\psi]{A}{B} \leq \beta$.
    \end{enumerate}
    For some $k < \alpha \leq \beta < k+1$, and integers $k\geq 0, s \geq 1$. Suppose $\phi$  is an $n$-qubit state prepared by an $(n+a+a')$-qubit unitary $U$ as $\phi\ot \varphi\ot \proj{0^{a'}} = U \phi^S U^\dagger$ for an $(n+a+a')$-qubit stabilizer state $\phi^S$ and an $a$-qubit auxiliary state $\varphi$. Let $\eps = \tnorm{\psi - \phi}$. At least one of the following is true:
    \begin{enumerate}[(a)]
        \item \label{item:lem:Heps-bound} $\eps > \pbra{\frac{\min\cbra{\alpha, 1- \beta}}{1 + 3 e}}^{\ln(4)}$, or
        \item \label{item:lem:logdepth} $\dU \geq \frac{1}{4}\log(s)$, or 
        \item \label{item:lem:log-anc-bound} $\dU \geq \log(n/a)$, or 
        \item \label{item:lem:logeps-bound} $\dU \geq \Omega(\log(1/\eps))$. 
    \end{enumerate} 
\end{theorem}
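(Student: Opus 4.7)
The plan is to apply the mutual-information argument of Lemma~\ref{lem:intermediate:robust-lb} to the extended state $\hat\psi := \psi \otimes \varphi \otimes \proj{0^{a'}}$, which lives on the same $n+a+a'$ qubits that $U$ acts on. Since $\hat\psi - \phi \otimes \varphi \otimes \proj{0^{a'}} = (\psi - \phi) \otimes \varphi \otimes \proj{0^{a'}}$, I have $\tnorm{\hat\psi - U\phi^S U^\dagger} = \eps$, and writing $\hat\psi' := U^\dagger \hat\psi U$, unitary invariance of the trace norm gives $\tnorm{\hat\psi' - \phi^S} = \eps$. I will assume that conditions (a), (b), and (c) all fail and show that (d) must then hold.

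The key technical step is to exhibit qubits $i \neq j \in [n]$ whose lightcones in $U$ satisfy both (i) $\flc(\blc(i)) \cap \flc(\blc(j)) = \emptyset$, and (ii) both $\flc(\blc(i))$ and $\flc(\blc(j))$ are disjoint from the auxiliary output block $[n+1, n+a]$. To see such $i, j$ exist, let $D := \bigcup_{\ell \in [n+1, n+a]} \blc(\ell)$ collect the inputs that influence some auxiliary output, so $|D| \leq a \cdot 2^{\dU}$, and define $T := \{k \in [n] : \blc(k) \cap D \neq \emptyset\}$; by the forward blowup bound, $|T| \leq |D| \cdot 2^{\dU} \leq a \cdot 2^{2\dU}$. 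The failures of (b) and (c) keep $|T|$ below $n$. For any fixed $i \in [n] \setminus T$, the $j$'s blocking (i) form a set of size at most $2^{4\dU}$, which is at most $k$ by the failure of (b), so provided $n$ is large enough relative to $|T|$ and $k$ (otherwise the theorem is trivially satisfied by (d)), a valid $j \in [n] \setminus T$ with disjoint double lightcone from $i$ exists.

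With $i, j$ chosen, the MI calculation mirrors Lemma~\ref{lem:intermediate:robust-lb}. Because $\hat\psi$ is a product across the three subsystems, condition (ii) implies the double lightcones touch only the $\psi$-block and the ancilla $\proj{0^{a'}}$-block (which, as a product state, contributes nothing to mutual information). Hence $\MI[\hat\psi]{i}{j} = \MI[\psi]{i}{j} \geq \alpha$ by condition~1, and $\MI[\hat\psi]{\flc(\blc(i))}{\flc(\blc(j))} = \MI[\psi]{\flc(\blc(i)) \cap [n]}{\flc(\blc(j)) \cap [n]} \leq \beta$, where the latter uses condition~2 together with $|\flc(\blc(i)) \cap [n]| \leq 2^{2\dU} < k$ from (b) failing. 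Lemma~\ref{lem:mi-lightcones} and Claim~\ref{claim:lb-mi-dist} then imply that $\MI[\hat\psi']{\blc(i)}{\blc(j)}$ lies at distance at least $\min\{\alpha, 1-\beta\}$ from the nearest integer.

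Finally, since $\hat\psi'$ is $\eps$-close to the stabilizer state $\phi^S$, Claim~\ref{claim:mi-near-stabilizer} gives $\min_k \abs{\MI[\hat\psi']{\blc(i)}{\blc(j)} - k} \leq 2\eps(|\blc(i)| + |\blc(j)|) + 3H(\eps) \leq 2^{\dU+2}\eps + 3H(\eps)$. Combining with the previous lower bound yields $\min\{\alpha, 1-\beta\} \leq 2^{\dU+2}\eps + 3H(\eps)$, which---using $H(\eps) \leq e \cdot \eps^{1/\ln 4}$ and the failure of (a), exactly as in the proof of Lemma~\ref{lem:intermediate:robust-lb}---rearranges to $\dU \geq \Omega(\log(1/\eps))$, establishing (d). The main obstacle is the counting in the existence argument: the bounds on $|T|$ and the blocking set for condition (i) must combine with the failures of (b) and (c) to guarantee existence, and matching the precise thresholds in the theorem statement may require some care with constants in the blowup bounds.
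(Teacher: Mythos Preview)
Your proposal is correct and follows essentially the same approach as the paper: extend to $\hat\psi = \psi \otimes \varphi \otimes \proj{0^{a'}}$, use additivity of mutual information across the tensor factors to reduce to the hypotheses of Lemma~\ref{lem:intermediate:robust-lb}, and invoke that lemma. The only noteworthy difference is that the paper requires just \emph{one} of $i,j$ to have its double lightcone miss the auxiliary block $A$ (since $\MI[\varphi]{\emptyset}{\cdot}=0$), which slightly simplifies the counting you flag as the main obstacle---the paper first fixes such an $i$ and then separately finds $j$ with disjoint double lightcone, rather than seeking both in $[n]\setminus T$ simultaneously.
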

\noindent In the special case where $\alpha, \beta$ are constants, we get a depth lower bound of $\Omega\pbra{\log\min\cbra{n/a, s,1/\eps}}$, implying \Cref{thm:robust-cliffqnc-lb}.

\begin{proof}[Proof of \Cref{thm:robust-lb-with-params}]
    We will prove this by contradiction assuming that \Cref{item:lem:Heps-bound,item:lem:logdepth,item:lem:log-anc-bound,item:lem:logeps-bound} are all false. The assumption that $\dU < \frac{1}{4} \log s$ (\Cref{item:lem:logdepth} being false) tells us that for each $i \in [n]$, there exists some $j\neq i \in [n]$ such that $\blc(\flc(i)) \cap \blc(\flc(j)) = \emptyset$. This is because $\blc(\flc(i)) \cap \blc(\flc(j)) = \emptyset \iff j \notin \blc(\flc(\blc(\flc(i))))$, and $|\blc(\flc(\blc(\flc(i))))| \leq 2^{4 \cdot \dU} < s\leq n$. So there exists some $j \in [n] \setminus \blc(\flc(\blc(\flc(i))))$. 
    
    Let $C := [n]$ denote the set of computational qubits, and $A := \{n+1, \dots, a\}$ denote the set of auxiliary qubits in state $\varphi$, and $A' := \{n+a+1, \dots, n+a + a'\}$ the final set of auxiliary qubits in state $\proj{0^{a'}}$. Next we use the fact that mutual information is additive with respect to tensor product: $\MI[\rho_A\ot \rho_B]{A_1 \cup B_1}{A_2\cup B_2} = \MI[\rho_A]{A_1}{A_2} + \MI[\rho_B]{B_1}{B_2}$, where $A_1, A_2\sse A$ and $B_1, B_2 \sse B$. Therefore,
    \begin{equation}
        \begin{aligned}
            \MI[\psi\ot \varphi]{\blc(\flc(i))}{\blc(\flc(j))} 
            &= \MI[\psi]{\dlc{i}\cap C}{\dlc{j}\cap C}  + \MI[\varphi]{\dlc{i}\cap A}{\dlc{j}\cap A} \\ 
            & \hspace{10em} + \MI[\proj{0^a}]{\dlc{i}\cap A'}{\dlc{j}\cap A'}
        \end{aligned}
    \end{equation}
    The assumption that $\dU < \frac{1}{4} \log s$ implies that $|\blc(\flc(i))|, |\blc(\flc(j))| < s$, so \Cref{item:lem:MI-subset-upperbound} tells us that the first term is $\MI[\psi]{\dlc{i}\cap C}{\dlc{j} \cap C} \leq \beta$. Since $\proj{0^a}$ is a product state, the third term is zero  $\MI[\proj{0^a}]{\dlc{i}\cap A'}{\dlc{j}\cap A'}~=~0$. Furthermore, since we assume that $\dU < \log (n/a)$ (\Cref{item:lem:log-anc-bound} being false), then $|\dlc{A}| \leq a \cdot 2^\dU <n$ therefore $C \not\sse \dlc{A}$, so we can choose an $i\in [n]$ such that $\dlc{i} \cap A = \emptyset$ and again the second term $\MI[\varphi]{\dlc{i}\cap A}{\dlc{j}\cap A}$ is zero. Therefore, we have that  $\MI[\psi\ot \varphi]{\blc(\flc(i))}{\blc(\flc(j))} \leq \beta$. Furthermore, \Cref{item:lem:MI-qubit-lowerbound} tells us that $\MI[\psi\ot \varphi]{i}{j} \geq \alpha$. Now since we assumed that  $\eps \leq \pbra{\frac{\min\cbra{\alpha, 1- \beta}}{1 + 3 e}}^{\ln(4)}$ (\Cref{item:lem:Heps-bound} being false),  \Cref{lem:intermediate:robust-lb} tells us that $\dU \geq \Omega(\log(1/\eps))$, contradicting our assumption that \Cref{item:lem:logeps-bound} is false.

\end{proof}

\section{Lower bounds for quantum error correcting codes}

In this section we identify some subspaces such that $\altCQ[1]$ circuits can not prepare \emph{any} of the states in the subspace, even approximately. In particular, we consider \emph{locally-defined subspaces} which we can interpret as codespaces of codes with local-checks, or groundspaces of local Hamiltonians with topological order. 

We show that any local codespace that is not approximately a perturbed stabilizer code --- a stabilizer code rotated by some constant depth ($\QNCZ$) circuit --- cannot be prepared by $\altCQ[1]$ circuits. This allows us to prove lower bounds for the Feynman-Kitaev history state for preparing the CAT state, and for states in a codespace with dimension that is not a power of 2. 

Our key ingredient is proving an infectiousness property which we state informally below.

\paragraph{(Robust) infectiousness of $\altCQ[1]$ in codes:} \textit{For each local code $\calC$ with good distance, if ${\altCQ[1]}$ can (approximately) prepare a single state in $\calC$ then $\calC$ is (approximately) a perturbed stabilizer code.}

\bigskip
\noindent 
We provide two separate proofs to show the \emph{exact} infectiousness property and the \emph{robust} infectiousness property. The proof to show exact infectiousness is more simple, but can only be used to prove lower bounds for exact state preparation. We need the robust infectiousness property in order to prove robust lower bounds, meaning lower bounds against even \emph{approximately} preparing a state. The proof of the exact version is in \Cref{ssec:exact-infectiousness} and the robust version is in \Cref{ssec:robust-infectiousness}.

\subsection{Setup and notations}

Below we define and introduce convenient notation for local-subspaces.

\begin{definition}[Local subspace/codespace, {$\calC_\ell(\phi)$}]
For an $n$-qubit pure quantum state $\ket{\psi}\in \CC^{2^n}$, we define $\calC_\ell(\psi)\sse \CC^{2^n}$ to be the subspace spanned by all pure states that are $\ell$-locally equivalent to $\psi$.
    \begin{align}
    \calC_\ell(\psi) := \textsf{span}\{\ket{\phi}:\phi_A = \psi_A \ \text{for each} \ A \sse [n] \ \text{with} \ |A|\leq \ell\}
\end{align}
\end{definition}

Note that all states in this subspace are not necessarily $\ell$-locally equivalent to $\phi$\footnote{Take for example the CAT state $\ket{CAT} = \frac{1}{\sqrt{2}}( \ket{0^n} + \ket{1^n})$. Note that for $\ell < n$, it is locally equivalent to the state $\ket{CAT^-} = \frac{1}{\sqrt{2}}(\ket{0^n} - \ket{1^n})$, though their linear combination $\frac{1}{\sqrt{2}}(\ket{CAT} + \ket{CAT^-}) = \ket{0^n}$ is not locally equivalent to $\ket{CAT}$}. 
We will interpret such subspaces as quantum error correcting codes, and sometimes refer to them as codespaces. 
\begin{definition}[Code distance]
    The \emph{distance} of a subspace $\calC$, denoted $d(\calC)$, is defined as the maximum integer $d$ such that for each region $A\sse [n]$ of size $|A|< d$, there exists a CPTP map $\textsf{Rec}$ that can recover any state in $\calC$
\begin{align}
    \textsf{Rec}(\Tr_A \phi) = \phi, && \text{for each} \ \ket{\phi} \in \calC
\end{align}
For notational convenience, the distance of $\calC_\ell(\phi)$ is denoted $d_\ell(\phi) = d(\calC_\ell(\phi))$. 
\end{definition}
Note that when $\phi$ is $\ell$-locally unique, $\calC_\ell(\phi) = \{\ket{\phi}\}$, so it has maximal distance $n+1$ as the channel that always outputs $\ket{\phi}$ is a valid recovery map.
We also introduce new notation for local stabilizer codes.
\begin{definition}[Local stabilizer code, {$\Cstab[\ell](\phi)$}]
    For each stabilizer state $\ket{\phi}$, and integer $\ell \geq 1$, we define $\Cstab[\ell](\phi)$ to be the subspace of states that are stabilized by the $\ell$-local stabilizers of $\phi$.
\end{definition}
Note that for any stabilizer state $\ket{\phi}$ the states that are locally equivalent to $\ket{\phi}$ will also be stabilized by its local stabilizers so $\calC_\ell(\phi) \sse \Cstab[\ell](\phi)$, but in general $\Cstab[\ell](\phi)$ may not be a subset of~$\calC_\ell(\phi)$.

We will make use of the following well-known fact which shows that quantum codes that have distance $d$ have codestates that are $\ell$-locally equivalent for each $0 < \ell < d$. A simple proof can be found in Fact 2 of \cite{anshu2020circuit}.
\begin{fact}(Local Indistinguishability)\label{fact:codes-recovery-indistinguishability}
    If quantum code $\calC$ on $n$ qubits is correctable on a region $M\sse [n]$ then the reduced density matrix $\psi_M$ is the same for each $\ket{\psi} \in \calC$.
\end{fact}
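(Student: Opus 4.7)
The plan is to derive the statement from the Knill--Laflamme characterization of quantum error correction. Correctability of erasure on $M$ means there is a CPTP map $\textsf{Rec}$ acting on the complement $M^c$ satisfying $\textsf{Rec}(\Tr_M \psi) = \psi$ for every $\ket{\psi} \in \calC$. Because the erased location $M$ is known, this is equivalent to being able to correct an arbitrary error supported on $M$, which by Knill--Laflamme is equivalent to the algebraic condition that for every operator $F$ supported on $M$,
\begin{align*}
P_\calC\, F\, P_\calC \;=\; c_F\, P_\calC,
\end{align*}
for some scalar $c_F \in \CC$, where $P_\calC$ is the projector onto $\calC$.

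Given this characterization, the conclusion is a one-line computation. For any unit vector $\ket{\psi} \in \calC$ and any operator $F$ on $M$,
\begin{align*}
\Tr(\psi_M F) \;=\; \bra{\psi}(F \otimes I_{M^c})\ket{\psi} \;=\; \bra{\psi}\, P_\calC\, F\, P_\calC\, \ket{\psi} \;=\; c_F.
\end{align*}
Since the reduced density matrix $\psi_M$ is uniquely determined by its expectation values against all operators $F$ supported on $M$, and each such expectation equals the codeword-independent constant $c_F$, the reduced state $\psi_M$ is the same for every $\ket{\psi} \in \calC$.

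The only non-trivial ingredient is the Knill--Laflamme equivalence itself, which I would invoke as a standard result rather than reprove. The ``sufficient'' direction constructs $\textsf{Rec}$ via the transpose channel of the erasure channel (projecting onto syndrome subspaces and unitarily correcting), while the ``necessary'' direction uses a Stinespring dilation of $\textsf{Rec}$ together with a no-cloning / decoupling argument showing that the environment of $\textsf{Rec}$ cannot retain any codeword-dependent information, forcing the algebraic identity above. I do not expect a substantive obstacle beyond this invocation; the fact is essentially a one-line corollary once Knill--Laflamme is in hand, and as noted in the excerpt the same proof appears as Fact~2 of \cite{anshu2020circuit}.
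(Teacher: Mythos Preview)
Your proposal is correct. The paper does not supply its own proof of this fact and simply refers the reader to Fact~2 of \cite{anshu2020circuit}; your Knill--Laflamme argument is the standard route and is essentially what appears there, so there is nothing substantive to compare.
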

We use this fact to show that these codes $\calC_\ell(\psi)$ defined above are more general than codes defined by Hamiltonian ground spaces with sufficiently large distance.

\begin{fact}\label{fact:arbcode-is-lecode}
    For any quantum code $\calC$ defined as the ground space of an $\ell$-local Hamiltonian, if the distance of $\calC$ is greater than $\ell$ then $\calC = \calC_\ell(\phi)$ for each $\ket{\phi} \in \calC$.
\end{fact}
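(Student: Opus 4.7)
The plan is to prove the two inclusions $\calC \subseteq \calC_\ell(\phi)$ and $\calC_\ell(\phi) \subseteq \calC$ separately, each leveraging a different hypothesis: the distance assumption for one direction and the $\ell$-locality of $H$ for the other.

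For $\calC \subseteq \calC_\ell(\phi)$, I would invoke \Cref{fact:codes-recovery-indistinguishability}. Since $d(\calC) > \ell$, every region $A\sse [n]$ with $|A|\leq \ell$ is correctable, so the reduced state $\psi_A$ is the same for every $\ket{\psi}\in \calC$. In particular, for any $\ket{\psi}\in \calC$, we have $\psi_A = \phi_A$ on all $A$ of size at most $\ell$. Hence $\ket{\psi}$ itself is a member of the spanning set that defines $\calC_\ell(\phi)$, and therefore lies in $\calC_\ell(\phi)$.

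For the reverse inclusion $\calC_\ell(\phi) \sse \calC$, I would use $\ell$-locality of the Hamiltonian. Write $H = \sum_i h_i$ with each $h_i$ supported on $A_i \sse [n]$ of size at most $\ell$, and let $E_0$ denote the ground energy. Given any normalized $\ket{\sigma}$ satisfying $\sigma_A = \phi_A$ on every $A$ with $|A|\leq \ell$, I would compute term-by-term
\[
\bra{\sigma} h_i \ket{\sigma} = \tr(h_i\, \sigma_{A_i}) = \tr(h_i\, \phi_{A_i}) = \bra{\phi} h_i \ket{\phi},
\]
and sum over $i$ to get $\bra{\sigma} H \ket{\sigma} = \bra{\phi} H \ket{\phi} = E_0$. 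Since $\ket{\sigma}$ is a unit vector attaining the minimum eigenvalue, it lies in the groundspace $\calC$. Because $\calC$ is a linear subspace, the span of such $\ket{\sigma}$ is also contained in $\calC$, which gives the inclusion.

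I do not anticipate any real obstacle; the statement essentially unpacks once the two ingredients are lined up. The only subtlety worth flagging is that the condition $\sigma_A = \phi_A$ is not preserved under linear combinations (it is a nonlinear condition), so in the reverse inclusion one must argue membership in $\calC$ for each element of the spanning set first and then close under linear combinations via the subspace property, rather than trying to verify the local-marginal condition directly for a superposition.
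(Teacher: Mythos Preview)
Your proposal is correct and follows essentially the same approach as the paper: one inclusion via local indistinguishability (\Cref{fact:codes-recovery-indistinguishability}) and the other via the $\ell$-locality of $H$ to match energies term by term. You are in fact slightly more careful than the paper in distinguishing between the spanning set of $\ell$-locally equivalent states and its span $\calC_\ell(\phi)$, correctly noting that closure under linear combinations comes from $\calC$ being a subspace.
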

\begin{proof}
    Let $\ket{\phi} \in \calC$. First note that any state $\ket{\phi'}\in \calC_\ell(\phi)$ that is $\ell$-locally equivalent to $\phi$ will have the same energy with respect to the $\ell$-local Hamiltonian because this is just a sum of expectation values of size $\ell$ observables. Therefore, $\calC_\ell(\phi) \sse \calC$. The other direction from \Cref{fact:codes-recovery-indistinguishability}, which tells us that the codewords of $\calC$ are locally-equivalent on any subset of size $< d(\calC)$. Therefore, since $d> \ell$, they are $\ell$-locally equivalent. This implies that $\calC \sse \calC_\ell(\phi)$. 
\end{proof}
While the set of all pure states that are $\ell$-locally equivalent to $\phi$ is not, in general, a subspace \Cref{fact:codes-recovery-indistinguishability} shows that when the distance of $\calC_\ell(\phi)$ is greater than $\ell$, it is in fact a subspace. 

Local indistinguishability also allows us to prove the following useful fact.
\begin{fact}\label{fact:codes-equal-if-dgel}
    For each $n$-qubit quantum state $\ket{\psi}$ and integer $\ell \geq 1$, if $d_\ell(\psi) > \ell$, then for each integer $\ell'$ with $\ell \leq \ell' < d_\ell(\psi)$, we have $\calC_\ell(\psi) = \calC_{\ell'}(\psi)$.
\end{fact}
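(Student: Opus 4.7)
The plan is to prove the two inclusions $\calC_{\ell'}(\psi) \subseteq \calC_\ell(\psi)$ and $\calC_\ell(\psi) \subseteq \calC_{\ell'}(\psi)$ separately.

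The forward inclusion $\calC_{\ell'}(\psi) \subseteq \calC_\ell(\psi)$ is immediate from the definitions: since $\ell \leq \ell'$, any pure state $\ket{\phi}$ satisfying $\phi_A = \psi_A$ for all $|A| \leq \ell'$ automatically satisfies $\phi_A = \psi_A$ for all $|A| \leq \ell$, so every spanning vector of $\calC_{\ell'}(\psi)$ is also a spanning vector of $\calC_\ell(\psi)$, and the spans inherit the inclusion.

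The harder direction $\calC_\ell(\psi) \subseteq \calC_{\ell'}(\psi)$ is where we actually use the distance hypothesis. The key observation is that $\ket{\psi}$ itself is trivially $\ell$-locally equivalent to $\ket{\psi}$, so $\ket{\psi} \in \calC_\ell(\psi)$. Since $\calC_\ell(\psi)$ has distance $d_\ell(\psi)$, it is correctable on every region $A \sse [n]$ with $|A| < d_\ell(\psi)$. Applying Local Indistinguishability (\Cref{fact:codes-recovery-indistinguishability}) with $\ket{\psi}$ as the reference codeword, we conclude that every $\ket{\phi} \in \calC_\ell(\psi)$ satisfies $\phi_A = \psi_A$ on all such regions. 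Since $\ell' < d_\ell(\psi)$, this in particular gives $\phi_A = \psi_A$ for all $|A| \leq \ell'$, which exhibits $\ket{\phi}$ as one of the spanning states defining $\calC_{\ell'}(\psi)$. Hence $\calC_\ell(\psi) \subseteq \calC_{\ell'}(\psi)$.

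The only subtle point — and the reason the statement is not entirely trivial — is that $\calC_\ell(\psi)$ is defined as a \emph{span}, so a generic element is a linear combination of pure states that are $\ell$-locally equivalent to $\psi$, and such linear combinations need not themselves be $\ell$-locally equivalent to $\psi$ (as the $\ket{\mathrm{CAT}}$, $\ket{\mathrm{CAT}^-}$, $\ket{0^n}$ example from the preceding footnote shows). The distance hypothesis $d_\ell(\psi) > \ell'$ is exactly what rescues us: it upgrades the ``locally equivalent on size $\leq \ell$'' condition satisfied by the generators to the ``locally equivalent on size $\leq \ell'$'' condition satisfied by every codeword in the span. I do not anticipate any significant obstacle beyond carefully invoking \Cref{fact:codes-recovery-indistinguishability}.
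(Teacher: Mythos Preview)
Your proposal is correct and matches the paper's own proof essentially step for step: both directions are handled identically, with the trivial inclusion $\calC_{\ell'}(\psi)\subseteq\calC_\ell(\psi)$ coming from $\ell\leq\ell'$ and the reverse inclusion obtained by applying Local Indistinguishability (\Cref{fact:codes-recovery-indistinguishability}) on regions of size $<d_\ell(\psi)$ to conclude that every $\ket{\phi}\in\calC_\ell(\psi)$ is already $\ell'$-locally equivalent to $\ket{\psi}$. The only cosmetic difference is that the paper phrases the second inclusion as a proof by contradiction while you argue directly; the logical content is the same.
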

\begin{proof}
    Since $\ell'\geq \ell$ it is trivial to see that $\calC_\ell(\psi) \supseteq \calC_{\ell'}$ since each state that's $\ell'$-locally equivalent to $\psi$ is also $\ell$-locally equivalent to $\psi$. Now, to show that $\calC_\ell \subseteq \calC_{\ell'}$ we use \Cref{fact:codes-recovery-indistinguishability}. Suppose for the sake of contradiction that there exists a $\ket{\phi} \in \calC_\ell(\psi) \setminus \calC_{\ell'}(\psi)$. Since $\ket{\phi}\notin \calC_{\ell'}(\psi)$, there exists a region $M \sse [n]$ of size $|M| \leq \ell'< d_\ell(\psi)$ such that $\psi_M \neq \phi_M$. But since both $\ket{\phi}, \ket{\psi} \in \calC_\ell(\psi)$, \Cref{fact:codes-recovery-indistinguishability} tells us that they are $d_\ell(\psi)$-locally equivalent, so since $|M| < d_\ell(\psi)$ this implies $\phi_M = \psi_M$, a contradiction.
\end{proof}

A stabilizer code can be defined as the common $+1$ eigenspace of local commuting projectors $\Pi_g = \frac{1}{2} (I +g)$ for each generator $g$ in a generating set of the stabilizer group.
Quantum codes that are defined by commuting local projectors have codestates with restricted entanglement structure, as was shown by Bravyi, Poulin and Terhal in \cite{BPT}. 
\begin{lemma}[Disentangling Lemma \cite{BPT}]
    Consider pairwise commuting projectors $\Pi_1, \dots \Pi_m$. Let the codespace $\calC$ be the $1$-eigenspace of $\Pi= \prod_i \Pi_i$. For each subset of qubits $M \sse [n]$, let $\partial(M)$ be the set of all $a\in [n]$ such that there is some $\Pi_i$ acting nontrivially on both $a$ and some $b \in M$, and let $\partial^+(M) = \partial(M) \setminus M$. If $M$ and $\partial^+(M)$ are both correctable regions, then there exists a unitary operator $U_{\partial(M)}$ acting only on $\delta(M)$ such that
    \begin{align}
        U_{\partial(M)} \Pi U_{\partial(M)}^\dagger = \proj{\phi_M} \otimes \Pi_{M^c}.
    \end{align}
\end{lemma}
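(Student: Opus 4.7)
The plan is to exploit the commuting local structure of $\Pi$ to factor it and then to use both correctability hypotheses to construct the disentangling unitary locally. First, I would partition the projectors into ``inner'' ones $\Pi_{\text{in}} := \prod_{i:\,\supp(\Pi_i)\cap M\neq\emptyset} \Pi_i$ (supported on $\partial(M) = S$) and ``outer'' ones $\Pi_{\text{out}} := \prod_{i:\,\supp(\Pi_i)\cap M=\emptyset} \Pi_i$ (supported on $M^c$). Since the $\Pi_i$ pairwise commute, $\Pi = \Pi_{\text{in}}\Pi_{\text{out}}$ with $[\Pi_{\text{in}},\Pi_{\text{out}}]=0$, so the only part of $\Pi$ that connects $M$ to $M^c$ lies within $S$. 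By \Cref{fact:codes-recovery-indistinguishability}, every codestate $\ket{\psi}\in \calC$ has the same fixed marginals $\psi_M = \sigma_M$ and $\psi_{\partial^+(M)} = \sigma_{\partial^+(M)}$.

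The core of the argument is to construct a unitary $U_S$ on $S$ such that every codestate is mapped to a product state $\ket{\phi_M}_M\otimes\ket{\tilde\psi}_{M^c}$ with $\ket{\phi_M}$ a fixed pure state. Fixing a Schmidt decomposition $\ket{\psi}=\sum_i\sqrt{p_i}\,\ket{e_i}_M\ket{g_i^\psi}_{M^c}$, the Schmidt basis $\{\ket{e_i}_M\}$ and weights $\{p_i\}$ are common to all codestates because $\sigma_M$ is fixed; only the orthonormal vectors $\{\ket{g_i^\psi}_{M^c}\}$ depend on $\psi$. The task reduces to showing that the $\ket{g_i^\psi}_{M^c}$ admit a common unitary $V_{\partial^+(M)}$ on $\partial^+(M)$, independent of $\psi$, such that $V_{\partial^+(M)}\ket{g_i^\psi}_{M^c} = \ket{f_i}_{\partial^+(M)}\otimes \ket{h^\psi}_{Z}$, for a fixed orthonormal family $\{\ket{f_i}\}$ and with $Z := [n]\setminus\partial(M)$. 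Once this is granted, composing $V_{\partial^+(M)}$ with a unitary on $S$ that sends $\ket{e_i}_M\ket{f_i}_{\partial^+(M)}\mapsto \ket{\phi_M}_M\ket{i}_{\partial^+(M)}$ yields $U_S$, and the image of $U_S\calC$ is $\ket{\phi_M}_M\otimes \tilde\calC_{M^c}$, so $U_S\Pi U_S^\dagger = \ketbra{\phi_M}{\phi_M}_M\otimes \tilde\Pi_{M^c}$ where $\tilde\Pi_{M^c}$ is the projector onto $\tilde\calC_{M^c}$.

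The main obstacle is exactly the existence of this common absorbing unitary $V_{\partial^+(M)}$. A priori the purifications of $\sigma_M$ could be spread throughout $M^c$, so it is not immediate that a single unitary on $\partial^+(M)$ can bring all the codestate-dependent vectors $\ket{g_i^\psi}$ into a standard form. The two correctability hypotheses, together with the commuting-local structure of $\Pi_{\text{in}}$, are precisely what forces the purifying degrees of freedom to be accessible from $\partial^+(M)$: correctability of $\partial^+(M)$ fixes $\sigma_{\partial^+(M)}$ across codestates, and $\Pi_{\text{in}}$ being supported on $S$ means no additional correlations between $M$ and $Z$ enter beyond those mediated by $\partial^+(M)$. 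Making this rigorous is a Uhlmann-style argument applied to the local code defined by $\Pi_{\text{in}}$ on $S$, and it is where all the assumptions of the lemma are used in combination.
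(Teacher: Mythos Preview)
The paper does not prove this lemma; it is quoted from \cite{BPT} without proof, and the variant ``Disentangling Lemma v2'' is likewise only cited as implicit in \cite{BPT}. There is therefore no proof in the paper to compare your proposal against.

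For what it is worth, your plan tracks the original Bravyi--Poulin--Terhal argument closely: the factorization $\Pi = \Pi_{\text{in}}\Pi_{\text{out}}$ along $M$, the use of correctability of $M$ to fix the Schmidt data on $M$, and the identification of the crux as localizing the purifying degrees of freedom into $\partial^+(M)$ via a Uhlmann-type step. The one place where your sketch is vague is exactly where you flag it: the existence of a single $V_{\partial^+(M)}$ that works simultaneously for all codestates. In the BPT proof this is handled by observing that $\Pi_{\text{in}}$ itself defines a code supported entirely on $S = \partial(M)$, and that $M$ remains correctable for \emph{that} smaller code (this is where correctability of $\partial^+(M)$ for the original code enters, to transfer the correctability property). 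Once $M$ is correctable for a code living only on $S = M \cup \partial^+(M)$, the decoupling unitary on $\partial^+(M)$ comes directly from the standard structure theorem for subsystem codes / Uhlmann. Your proposal gestures at this but does not spell out the intermediate code on $S$, which is the cleanest way to make the step rigorous.
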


In \Cref{ssec:exact-code-lb} we use a variant of the Disentangling Lemma and \Cref{cor:infectiousness-ham-technical} to prove circuit lower bounds for preparing code states of codes with many qubits that are pairwise correlated.

\subsection{Proof of exact infectiousness}\label{ssec:exact-infectiousness}

In the following theorem we show that for a local code that's defined as the groundspace of a local Hamiltonian with sufficiently large distance, if one of the codewords can be prepared exactly in low depth starting from a stabilizer state, then the code is actually a local stabilizer code that is perturbed by a low-depth circuit. 

\begin{theorem}\label{thm:le-code-contained-in-commproj-code}
    Suppose $\ket{\psi} = U \ket{\phi}$ for stabilizer state $\ket{\phi}$ and circuit $U$ with blowup $B$. For each integer $\ell \geq 1$, if $B < \pbra{d_\ell(\psi)/\ell}^{1/2}$
    then $\calC_\ell(\psi) = U \Cstab[B \ell](\phi)$.
            \end{theorem}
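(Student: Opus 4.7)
The plan is to prove both inclusions $\calC_\ell(\psi) \subseteq U\Cstab[\ell](\phi)$ and $U\Cstab[\ell](\phi) \subseteq \calC_\ell(\psi)$. A preliminary observation used throughout: for any pure stabilizer state $\phi$, we have $\calC_\ell(\phi) = \Cstab[\ell](\phi)$. Indeed, a pure state $\ket{\chi}$ is $\ell$-locally equivalent to $\phi$ iff $\tr(g\proj{\chi}) = 1$ for every $\ell$-local stabilizer $g$ of $\phi$ (since $\phi_R$ is determined by the stabilizers supported in $R$), which for a pure state is equivalent to $g\ket{\chi} = \ket{\chi}$; the span of pure states satisfying this is the joint $+1$-eigenspace $\Cstab[\ell](\phi)$.

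For the inclusion $\calC_\ell(\psi) \subseteq U\Cstab[\ell](\phi)$, I would take $\ket{\chi} \in \calC_\ell(\psi)$. The hypothesis $B^2\ell < d_\ell(\psi)$ gives in particular $B\ell < d_\ell(\psi)$, so \Cref{fact:codes-equal-if-dgel} combined with local indistinguishability (\Cref{fact:codes-recovery-indistinguishability}) yields $\chi_R = \psi_R$ for every $|R| \leq B\ell$. For any $A \sse [n]$ with $|A| \leq \ell$, apply \Cref{fact:lightcone-on-state} to $U^\dagger$; since $|\flc(A)| \leq B\ell$, the reduced density matrix $(U^\dagger \chi U)_A$ is the appropriate partial trace of $U^\dagger_{\flc(A)}\chi_{\flc(A)}U_{\flc(A)}$. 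Substituting $\chi_{\flc(A)} = \psi_{\flc(A)}$ and applying the same formula to $\psi$ in place of $\chi$ gives $(U^\dagger\chi U)_A = (U^\dagger\psi U)_A = \phi_A$. Hence $U^\dagger\ket{\chi}$ is $\ell$-locally equivalent to $\phi$, so by the preliminary observation $U^\dagger\ket{\chi} \in \Cstab[\ell](\phi)$, proving $\ket{\chi} \in U\Cstab[\ell](\phi)$.

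For the reverse inclusion, the strategy is symmetric. Given $\ket{\chi} \in \Cstab[\ell](\phi)$, the plan is to decompose $\ket{\chi} = \sum_i c_i \ket{\chi_i}$ as a linear combination of pure states with $(\chi_i)_R = \phi_R$ for all $|R| \leq B\ell$. Once this is in hand, applying \Cref{fact:lightcone-on-state} directly to $U$ and using that $|\blc(A)| \leq B\ell$ for $|A|\leq \ell$ gives $(U\chi_i U^\dagger)_A = \psi_A$ for all such $A$, placing each $U\ket{\chi_i}$ in the defining set of $\calC_\ell(\psi)$; the claim then follows by linearity. Such a decomposition exists exactly when $\Cstab[\ell](\phi) \subseteq \calC_{B\ell}(\phi)$, and by the preliminary observation this reduces to the equality $\calC_\ell(\phi) = \calC_{B\ell}(\phi)$, which \Cref{fact:codes-equal-if-dgel} delivers provided $d_\ell(\phi) > B\ell$.

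The main obstacle, and the reason the hypothesis carries a $B^2$ rather than a $B$, is proving $d_\ell(\phi) > B\ell$. The plan is to transfer the distance of $\calC_\ell(\psi)$ back through $U^\dagger$: for any $T \sse [n]$ with $|T| \leq B\ell$, the forward lightcone satisfies $|\flc(T)| \leq B \cdot B\ell = B^2\ell < d_\ell(\psi)$, so $\calC_\ell(\psi)$ admits a recovery map for erasure on $\flc(T)$. Composing this recovery with conjugation by $U$, and leveraging that $\calC_\ell(\phi) = \Cstab[\ell](\phi)$ is a stabilizer code whose elements are determined by $\ell$-local Pauli expectations, should extract a recovery map for $\calC_\ell(\phi)$ on $T$. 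The technically delicate step is that at this stage we only have the inclusion $\calC_\ell(\psi) \subseteq U\Cstab[\ell](\phi)$ rather than equality, so the construction must exploit the local Pauli structure of $\Cstab[\ell](\phi)$ to handle potential states in $U\Cstab[\ell](\phi)$ lying outside $\calC_\ell(\psi)$; making this rigorous is where I expect the real work to be.
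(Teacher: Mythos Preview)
Your preliminary observation has a genuine gap. The claimed equivalence ``$\ket{\chi}$ is $\ell$-locally equivalent to $\phi$ iff $\tr(g\chi)=1$ for every $\ell$-local stabilizer $g$ of $\phi$'' is false: for the $3$-qubit GHZ state with $\ell=2$, the state $\ket{000}$ satisfies every $2$-local stabilizer constraint $Z_iZ_j$ yet is not $2$-locally equivalent to GHZ (its single-qubit marginals are $\proj{0}$, not $I/2$). The paper only establishes $\calC_\ell(\phi)=\Cstab[\ell](\phi)$ under the extra hypothesis $d_\ell(\phi)>\ell$ (\Cref{claim:stabcoderound}), and the proof there genuinely uses that distance bound. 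Your first inclusion $\calC_\ell(\psi)\subseteq U\Cstab[\ell](\phi)$ survives anyway, since it only needs the trivial direction $\calC_\ell(\phi)\subseteq\Cstab[\ell](\phi)$; the problem is the reverse inclusion.

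For that reverse inclusion you reduce to $d_\ell(\phi)>B\ell$, but your distance-transfer sketch does not deliver this. Pulling the recovery map for $\calC_\ell(\psi)$ back through $U$ gives a recovery for $U^\dagger\calC_\ell(\psi)$, and the containment you have established is $U^\dagger\calC_\ell(\psi)\subseteq\calC_\ell(\phi)$; since larger subspaces have \emph{smaller} distance, this yields no lower bound on $d_\ell(\phi)$. The paper's \Cref{claim:distance-preserved} instead exploits the opposite containment $U\calC_{B\ell}(\phi)\subseteq\calC_\ell(\psi)$ to obtain $d_{B\ell}(\phi)\geq\tfrac{1}{B}d_\ell(\psi)>B\ell$. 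Because $\calC_\ell(\phi)\supseteq\calC_{B\ell}(\phi)$ implies $d_\ell(\phi)\leq d_{B\ell}(\phi)$, this does not transfer down to the quantity you need.

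The paper sidesteps both issues by working on the $\phi$-side at level $B\ell$ rather than $\ell$. It proves the sandwich $\calC_{B^2\ell}(\psi)\subseteq U\calC_{B\ell}(\phi)\subseteq\calC_\ell(\psi)$ by exactly your lightcone reasoning applied in both directions (\Cref{claim:code-blowup-containment}), collapses the two ends via \Cref{fact:codes-equal-if-dgel} using $d_\ell(\psi)>B^2\ell$, and then identifies $\calC_{B\ell}(\phi)=\Cstab[B\ell](\phi)$ via \Cref{claim:stabcoderound}, now that the needed bound $d_{B\ell}(\phi)>B\ell$ is available from \Cref{claim:distance-preserved}. (Indeed the paper's proof actually delivers $\calC_\ell(\psi)=U\Cstab[B\ell](\phi)$, matching the ``$O(\ell)$-local'' phrasing in \Cref{cor:code-subseteq-commuting}; the $\Cstab[\ell]$ in the theorem statement appears to be a typo for $\Cstab[B\ell]$, which may be what led you down this path.)
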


Surprisingly, \Cref{thm:le-code-contained-in-commproj-code} allows us to reason about the structure of the entire codespace $\calC_\ell(\psi)$, even if just one of the states $\ket{\phi}\in \calC_\ell(\psi)$ is prepared in low-depth from stabilizer.

We emphasize that the condition that the code is defined by locally equivalent states is more general than requiring a code to be the ground space of a local Hamiltonian. Using \Cref{fact:arbcode-is-lecode} below we get the following Corollary of \Cref{thm:le-code-contained-in-commproj-code}.
\begin{corollary}\label{cor:code-subseteq-commuting}
    Let $\calC$ be the groundspace of an $\ell$-local Hamiltonian with distance $d$. If there exists a state $\ket{\psi}\in \calC$ such that $\ket{\psi} = U \ket{\phi}$ for a stabilizer state $\ket{\phi}$ and a quantum circuit $U$ with blowup $B< \pbra{d/\ell}^{1/2}$, then $\calC = U \calC_S$ for some stabilizer code $\calC_S$ with $O(\ell)$-local stabilizer generators.
\end{corollary}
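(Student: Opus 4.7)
The plan is to derive the corollary as a straightforward consequence of \Cref{thm:le-code-contained-in-commproj-code} together with \Cref{fact:arbcode-is-lecode}, which translates the Hamiltonian groundspace description of $\calC$ into the local-equivalence description $\calC_\ell(\psi)$ that \Cref{thm:le-code-contained-in-commproj-code} expects. First, I would observe that the blowup of any circuit satisfies $B \geq 1$ (since each qubit is in its own lightcone), so the hypothesis $B < (d/\ell)^{1/2}$ forces $d > \ell$. This is exactly the condition required to invoke \Cref{fact:arbcode-is-lecode}, giving $\calC = \calC_\ell(\psi)$ for the particular $\ket{\psi} \in \calC$ we started with. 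In particular, $d_\ell(\psi) = d(\calC_\ell(\psi)) = d$.

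Next, I would apply \Cref{thm:le-code-contained-in-commproj-code} directly: since $\ket{\psi} = U\ket{\phi}$ with $\ket{\phi}$ a stabilizer state and since $B < (d/\ell)^{1/2} = (d_\ell(\psi)/\ell)^{1/2}$, the theorem yields $\calC_\ell(\psi) = U \Cstab[\ell](\phi)$. Combining with the previous paragraph gives $\calC = U \Cstab[\ell](\phi)$. Setting $\calC_S := \Cstab[\ell](\phi)$ completes the identification. By definition of $\Cstab[\ell](\phi)$, this subspace is stabilized by the $\ell$-local elements of the stabilizer group of $\ket{\phi}$; these Pauli generators are $\ell$-local and hence certainly $O(\ell)$-local, so $\calC_S$ is a stabilizer code with $O(\ell)$-local generators as claimed.

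There is essentially no new technical obstacle beyond those already handled in \Cref{thm:le-code-contained-in-commproj-code} and \Cref{fact:arbcode-is-lecode}; the only small care needed is checking that the blowup-to-distance condition in the corollary is strong enough both to guarantee $d > \ell$ (so that the Hamiltonian groundspace coincides with $\calC_\ell(\psi)$) and to feed into the blowup hypothesis of the theorem. Both follow from the single inequality $B < (d/\ell)^{1/2}$, so the derivation is compact. If I wanted to make the locality bound on the generators sharper (e.g.\ to give an explicit constant factor rather than $O(\ell)$), I would trace through the proof of \Cref{thm:le-code-contained-in-commproj-code} to see what support size its stabilizer generators have after pulling them back through $U$; this is the only place where a constant worse than $\ell$ could enter.
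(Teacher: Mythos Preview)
Your proposal is correct and follows exactly the paper's own approach: the corollary is stated as an immediate consequence of \Cref{thm:le-code-contained-in-commproj-code} together with \Cref{fact:arbcode-is-lecode}, and you have supplied precisely that derivation, including the observation that $B\geq 1$ forces $d>\ell$ so that \Cref{fact:arbcode-is-lecode} applies. The only minor remark is that, tracing through the proof of \Cref{thm:le-code-contained-in-commproj-code}, the stabilizer code that arises is actually $\Cstab[B\ell](\phi)$ with $B\ell$-local generators (hence $O(\ell)$ when $B=O(1)$); you correctly flagged this as the one place a constant larger than $\ell$ can enter.
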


We will use the following sequence of claims to prove \Cref{thm:le-code-contained-in-commproj-code}.  For unitary $U$ and subspace $\calC$ we use $U \calC$ to denote $\{U \ket{\varphi} : \ket{\varphi} \in \calC\}$.
\begin{claim} \label{claim:code-blowup-containment}
    Suppose $\ket{\phi_1} = U \ket{\phi_0}$ for circuit $U$ with blowup $B$. Then
    \[ \calC_{B^2 \ell}(\phi_1)  \sse U \calC_{B\ell}(\phi_0) \sse  \calC_{\ell} (\phi_1) \]
    Furthermore, if $d_{\ell}(\phi_1) > B^2 \ell$ then 
    \begin{align*}
        \calC_\ell(\phi_1) = U \calC_{B\ell}(\phi_0).
    \end{align*}
\end{claim}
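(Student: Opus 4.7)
The plan is to establish the two containments separately using the lightcone formalism from \Cref{fact:lightcone-on-state} and the blowup bound, and then derive the equality case from \Cref{fact:codes-equal-if-dgel}. Both containments reduce to the same mechanical observation: if a state $\ket{\varphi}$ agrees with a reference state on all small subsets up to size $s$, then conjugating by a circuit of blowup $B$ preserves this agreement on subsets up to size $s/B$, since the reduced density matrix on any region $S$ only depends on its backwards lightcone.

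For the second containment $U \calC_{B\ell}(\phi_0) \subseteq \calC_\ell(\phi_1)$, I would take an arbitrary spanning state $\ket{\varphi}$ of $\calC_{B\ell}(\phi_0)$, so $\varphi_A = (\phi_0)_A$ for every $A \subseteq [n]$ with $|A| \leq B\ell$. For any $S \subseteq [n]$ with $|S| \leq \ell$, I invoke \Cref{fact:lightcone-on-state} to write $(U \varphi U^\dagger)_S = \tr_{\blc_U(S)\setminus S} \cktblc[U]{S} \varphi_{\blc_U(S)} \cktblc[U]{S}^\dagger$. Since the blowup bound gives $|\blc_U(S)| \leq B |S| \leq B\ell$, local equivalence of $\varphi$ and $\phi_0$ on $\blc_U(S)$ forces $(U\varphi U^\dagger)_S = (U \phi_0 U^\dagger)_S = (\phi_1)_S$. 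Thus $U\ket{\varphi}$ is $\ell$-locally equivalent to $\phi_1$, hence lies in $\calC_\ell(\phi_1)$, and the containment follows by linearity.

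For the first containment $\calC_{B^2\ell}(\phi_1) \subseteq U \calC_{B\ell}(\phi_0)$, I would run the same argument with $U^\dagger$ and the roles reversed. Take a spanning state $\ket{\psi}$ of $\calC_{B^2\ell}(\phi_1)$, so $\psi_A = (\phi_1)_A$ for all $|A| \leq B^2\ell$. For $S \subseteq [n]$ with $|S| \leq B\ell$, the reduced density matrix $(U^\dagger \psi U)_S$ depends only on $\psi$ restricted to $\blc_{U^\dagger}(S) = \flc_U(S)$, which has size at most $B|S| \leq B^2\ell$. Local equivalence then gives $(U^\dagger \psi U)_S = (U^\dagger \phi_1 U)_S = (\phi_0)_S$, so $U^\dagger \ket{\psi} \in \calC_{B\ell}(\phi_0)$, i.e.\ $\ket{\psi} \in U \calC_{B\ell}(\phi_0)$. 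Again, linearity handles the span.

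Finally, for the ``furthermore'' clause, I would apply \Cref{fact:codes-equal-if-dgel} with $\ell' = B^2 \ell$: since $\ell \leq B^2 \ell < d_\ell(\phi_1)$, we get $\calC_\ell(\phi_1) = \calC_{B^2\ell}(\phi_1)$, and sandwiching this between the two containments above yields $\calC_\ell(\phi_1) = U \calC_{B\ell}(\phi_0)$. I do not foresee any serious obstacle here; the only subtle point is keeping careful track of which lightcone (forward vs.\ backwards, for $U$ vs.\ $U^\dagger$) governs which direction of the argument, and remembering to close the span under linearity after checking the condition on the spanning states.
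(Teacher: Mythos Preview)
Your proposal is correct and follows essentially the same approach as the paper: use the lightcone formula from \Cref{fact:lightcone-on-state} and the blowup bound to propagate local equivalence through $U$ (respectively $U^\dagger$), then invoke \Cref{fact:codes-equal-if-dgel} for the equality case. Your version is arguably cleaner than the paper's, which argues by contradiction and is a bit loose about the span issue; your explicit treatment of spanning states and the closing ``by linearity'' remark handles exactly the subtlety that $\calC_\ell(\cdot)$ is defined as a span.
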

\begin{proof}
    We begin by proving the first equation. It is sufficient for us to prove the second containment since the first is an equivalent statement with a change of variables and replacing $U$ with $U^\dagger$. Suppose for the sake of contradiction that there is some $\ket{\varphi} \in \calC_{B\ell}(\phi_0)$ such that $U \ket{\varphi} \notin \calC_{\ell}(\phi_1)$. Therefore, there is some region $A \sse [n]$ of size $|A| \leq \ell$ such that $(U \varphi U^\dagger)_A \neq (\phi_1)_A = (U \phi_0 U^\dagger)_A$. Using \Cref{fact:lightcone-on-state} we can write this inequality as 
    \begin{align}
        \tr_{A^c} U \pbra{(\phi_0)_{\blc(A)} \ot I_{\blc(A)^c}} U^\dagger \neq \tr_{A^c} U \pbra{\varphi_{\blc(A)} \ot I_{\blc(A)^c}} U^\dagger.
    \end{align}
    Therefore $(\phi_0)_{\blc(A)} \neq \varphi_{\blc(A)}$. Since $|\blc(A)| \leq B |A| \leq B^2\ell$, it follows that $\varphi$ is not $B^2 \ell$-locally equivalent to $\phi_1$. 
    But this contradicts our assumption that $\varphi \in \calC_{B^2\ell}(\phi_0)$. Thus we have shown that $\calC_{B^2 \ell}(\phi_1) \sse U \calC_{B\ell}(\phi_0) \sse \calC_\ell(\phi_1)$.

    If $d_\ell(\phi_1)> B^2 \ell$, then \Cref{fact:codes-equal-if-dgel} implies that $\calC_{\ell}(\phi_1) = \calC_{B^2 \ell}(\phi_1)$, therefore $\calC_\ell(\phi_1) = U \calC_{B\ell}(\phi_0)$.
\end{proof}

\begin{claim}\label{claim:distance-preserved}
    \begin{align}
       \frac{1}{B} d(U \calC) \leq d(\calC) \leq B \cdot d(U \calC) \label{eq:code-evolution-distance-relations}
    \end{align}
    Suppose $\ket{\phi_1} = U \ket{\phi_0}$ for circuit $U$ with blowup $B$. Then 
    \begin{align}
        \frac{1}{B} d_{\ell}(\phi_1) \leq d_{B\ell}(\phi_0) \leq B \cdot d_{B^2 \ell}(\phi_1). \label{eq:distance-bounds-lecodes}
    \end{align}
\end{claim}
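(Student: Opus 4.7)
The plan is to prove the first inequality directly and then obtain the second by combining it with \Cref{claim:code-blowup-containment}. For the first inequality, the key structural claim is: \emph{if $A'\subseteq[n]$ satisfies $\blc_U(A')\subseteq A$ and $A$ is correctable for $\calC$, then $A'$ is correctable for $U\calC$.} Granting this, any $A'$ with $|A'|<d(\calC)/B$ satisfies $|\blc_U(A')|\leq B|A'|<d(\calC)$, so $\blc_U(A')$ is correctable for $\calC$ and hence $A'$ is correctable for $U\calC$, giving $d(U\calC)\geq d(\calC)/B$; applying the same argument to $U^\dagger$ acting on $U\calC$ yields $d(\calC)\geq d(U\calC)/B$.

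To prove the key claim I would construct an explicit recovery map for $A'$ on $U\calC$ from the given recovery map $\textsf{Rec}_A$ for $A$ on $\calC$. The point is that the sub-circuit $\cktblc[U]{A'}$ absorbs all gates that can influence $A'$, so the remaining gates factor as a unitary $V$ acting only on $(A')^c$, giving the decomposition $U = (V\otimes I_{A'})\cdot(\cktblc[U]{A'}\otimes I_{\blc(A')^c})$. Because $V$ does not touch $A'$, it commutes through the partial trace:
\[
\Tr_{A'}\!\bigl(U\phi U^\dagger\bigr) \;=\; V\cdot \Tr_{A'}\!\bigl(\cktblc[U]{A'}\,\phi\,\cktblc[U]{A'}^\dagger\bigr) \cdot V^\dagger.
\]
Undoing $V$ and then tracing out $A\setminus A'$ yields $\Tr_A\!\bigl(\cktblc[U]{A'}\,\phi\,\cktblc[U]{A'}^\dagger\bigr)=\Tr_A\phi$, where the last equality uses that $\cktblc[U]{A'}$ is supported inside $A$. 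Applying $\textsf{Rec}_A$ recovers $\phi$, and a final $U$ produces $U\phi U^\dagger$.

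The second inequality then follows by layering. \Cref{claim:code-blowup-containment} gives $\calC_{B^2\ell}(\phi_1)\subseteq U\calC_{B\ell}(\phi_0)\subseteq \calC_\ell(\phi_1)$, and since distance is monotonically non-increasing under subspace inclusion (a larger subspace contains at least as many uncorrectable regions), we have $d_{B^2\ell}(\phi_1)\geq d(U\calC_{B\ell}(\phi_0))\geq d_\ell(\phi_1)$. Combining this with the bounds $d_{B\ell}(\phi_0)/B \leq d(U\calC_{B\ell}(\phi_0)) \leq B\cdot d_{B\ell}(\phi_0)$ from the first inequality and rearranging gives both claimed inequalities.

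The main technical obstacle is establishing that the decomposition $U=(V\otimes I_{A'})(\cktblc[U]{A'}\otimes I_{\blc(A')^c})$ with $V$ supported only on $(A')^c$ is genuinely valid. This requires checking that on each shared qubit in $\blc(A')\setminus A'$ the $\cktblc[U]{A'}$-gates temporally precede the $V$-gates, which follows from the lightcone definition: if any later gate on such a qubit led to an output in $A'$, then all earlier gates on that qubit lie on the same path and so also belong to $\cktblc[U]{A'}$. This is careful bookkeeping in the spirit of \Cref{fact:lightcone-on-state}, not a deep technical barrier.
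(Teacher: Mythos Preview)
Your proposal is correct and follows essentially the same approach as the paper. The paper phrases the key lightcone step dually---using forward lightcones and arguing that correctability of $\flc_U(A)$ for $U\calC$ implies correctability of $A$ for $\calC$, then taking the contrapositive---whereas you use backward lightcones directly; but these are the same argument up to swapping $U\leftrightarrow U^\dagger$, and your derivation of \Cref{eq:distance-bounds-lecodes} via \Cref{claim:code-blowup-containment} plus monotonicity of distance under inclusion matches the paper's exactly.
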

\begin{proof}
    Consider some region $A\sse [n]$ of qubits, and its forwards lightcone $\flc_U(A)$ with respect to circuit $U$. If the region $\flc_U(A)$ is correctable for codespace $U \calC$, then we can also correct any error on region $A$ for codespace $\calC$ by evolving the state by $U$, correcting to the codespace $U \calC$, and then evolving back to $\calC$ with $U^\dagger$. This is because $\tr_{\flc(A)} \pbra{U (\psi_{A^c} \ot I_{A}/2^{|A|})U^\dagger} = \tr_{\flc(A)} U \rho U^\dagger$. This follows from \Cref{fact:lightcone-on-state} and setting $S = \flc(A)^c$ and observing that $\blc(S) = \blc(\flc(A)^c) = A^c$. Therefore, if $U\calC$ can correct errors of size $Bt$, then $\calC$ can correct errors of size $t$. Since there is some region $A \sse [n]$ with $|A| = d(\calC)$ that is not correctable on $\calC$, this implies that $\flc(A)$ is not a correctable region on $U \calC$. Therefore, $d(U \calC) < |\flc(A)| \leq B \cdot d(\calC)$, proving \Cref{eq:code-evolution-distance-relations}.
    
    We now prove \Cref{eq:distance-bounds-lecodes}. By \Cref{eq:code-evolution-distance-relations}, we have that $d_{B \ell}(\phi_0) = d(\calC_{B \ell}(\phi_0)) \leq B \cdot d( U \calC_{B\ell}(\phi_0))$. By \Cref{claim:code-blowup-containment}, we have that $ U \calC_{B \ell}(\phi_0) \supseteq \calC_{B^2\ell}(\phi_1)$, therefore, $d(U \calC_{B \ell}(\phi_0)) \leq d(\calC_{B^2\ell}(\phi_1)) = d_{B^2\ell}(\phi_1)$. Combining these expressions it follows that $d_{B \ell}(\phi_0) \leq B \cdot d_{B^2\ell}(\phi_1)$, the second inequality. Furthermore, the first inequality is also implied by the second by a change of variables and considering $U^\dagger$ as $U$.
\end{proof}

\begin{claim}[Infectiousness of stabilizer states]\label{claim:stabcoderound}
    Consider stabilizer state $\rho = \sstate$, with stabilizer group $\stabs$. Let $\Cstab[\ell] (\rho)$ denote the subspace of pure states spanned by the set of pure states that are stabilized by $\stabs_\ell = \{\sigma \in \stabs : |\sigma| \leq \ell\}$, and $\calC_\ell(\rho)$ defined as above. If $d_\ell(\rho) > \ell$, then $\calC_\ell(\rho) = \Cstab[\ell](\rho)$.  
\end{claim}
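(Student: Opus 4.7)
My plan is to reduce the claim to the stabilizer code structure of $\Cstab[\ell](\rho)$ combined with an orbit argument for its logical Pauli group. First I would establish the easy direction $\calC_\ell(\rho) \subseteq \Cstab[\ell](\rho)$: for any spanning pure $\ket{\phi}$ of $\calC_\ell(\rho)$ (so $\phi_A = \rho_A$ for every $|A| \leq \ell$) and any $\sigma \in \stabs_\ell$ with $\supp(\sigma) = A$, one computes $\bra{\phi}\sigma\ket{\phi} = \tr(\sigma \phi_A) = \tr(\sigma \rho_A) = \tr(\sigma \rho) = 1$, forcing $\sigma \ket{\phi} = \ket{\phi}$. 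Taking spans gives the containment; note that the distance hypothesis is not needed here.

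For the reverse containment $\Cstab[\ell](\rho) \subseteq \calC_\ell(\rho)$, let $V_\ell := \Cstab[\ell](\rho)$ and let $C(\stabs_\ell) \subseteq \paulis[n]$ be the centralizer of $\stabs_\ell$. Standard stabilizer-code theory identifies $V_\ell$ as a code on $k = n - r_\ell$ logical qubits (where $r_\ell$ is the rank of $\stabs_\ell$) whose logical Pauli group is $C(\stabs_\ell)/\stabs_\ell$. The hypothesis $d_\ell(\rho) > \ell$ guarantees $\calC_\ell(\rho)$ is nontrivial, so I can pick a pure $\ket{\psi_0} \in \calC_\ell(\rho)$ that is $\ell$-locally equivalent to $\rho$, and my aim is to show that the full $C(\stabs_\ell)$-orbit of $\ket{\psi_0}$ stays inside $\calC_\ell(\rho)$.

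The key computation is that for every $\xi \in C(\stabs_\ell)$ and every $|A| \leq \ell$, writing $\xi = \xi|_A \otimes \xi|_{A^c}$, partial trace yields $(\xi \psi_0 \xi^\dagger)_A = \xi|_A \, \rho_A \, \xi|_A^\dagger$. Because $\xi$ commutes with every $n$-qubit operator $\sigma \otimes I_{A^c}$ for $\sigma \in \stabs_A$, its restriction $\xi|_A$ commutes with each $\sigma \in \stabs_A$ as elements of $\paulis[|A|]$, and therefore with their average $\rho_A = 2^{-|A|} \sum_{\sigma \in \stabs_A} \sigma$. Conjugation by a Pauli that commutes with $\rho_A$ leaves it fixed, so $\xi \ket{\psi_0}$ is $\ell$-locally equivalent to $\rho$ and lies in $\calC_\ell(\rho)$.

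To close the argument I would invoke the standard fact that the Pauli group acting on any nonzero vector spans the ambient Hilbert space (because Paulis form an operator basis). Applied to the logical Pauli group $C(\stabs_\ell)/\stabs_\ell$ acting on $V_\ell \cong \CC^{2^k}$, the orbit $\{\xi\ket{\psi_0} : \xi \in C(\stabs_\ell)\}$ spans $V_\ell$, so $V_\ell \subseteq \calC_\ell(\rho)$, completing the equality. The only real technical care is in the local-vs-global commutation calculation for $\xi|_A$ above; the rest is routine symplectic bookkeeping on the Pauli group, so I do not anticipate a deeper obstacle.
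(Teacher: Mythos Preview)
Your argument is correct, and in fact it establishes the claim without ever invoking the hypothesis $d_\ell(\rho) > \ell$: once you take $\ket{\psi_0} = \rho$ itself, the local computation $(\xi \rho \xi^\dagger)_A = \xi|_A \,\rho_A\, \xi|_A^\dagger = \rho_A$ for every $\xi \in C(\stabs_\ell)$ and every $|A|\le \ell$ goes through unconditionally, and the orbit-spanning fact for the logical Pauli group then gives $\Cstab[\ell](\rho) \subseteq \calC_\ell(\rho)$. So your route is genuinely different and slightly stronger than the paper's.

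The paper instead argues by contradiction: it assumes a stabilizer state $\ket{\phi^S} \in \Cstab[\ell](\rho)\setminus \calC_\ell(\rho)$ exists, extracts from the mismatch of local marginals a weight-$\le \ell$ Pauli $\sigma$ that stabilizes $\phi^S$ but not $\rho$, and observes that $\sigma$ then maps $\rho$ to a distinct $\ell$-locally equivalent state, giving a nontrivial logical operator of weight $\le \ell$ on $\calC_\ell(\rho)$ and contradicting $d_\ell(\rho) > \ell$. Your orbit argument is cleaner here and removes the distance assumption; the paper's low-weight-logical-operator viewpoint, however, is the one that generalizes to the robust setting later in the section, where an approximate distance bound is exactly what is used to rule out such operators.
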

\begin{proof}
    It is straightforward to show that $\calC_\ell(\rho) \sse \Cstab[\ell](\rho)$, since each state that is $\ell$-locally-equivalent to $\rho$ must also be stabilized by $\stabs_\ell$. We next prove the less-trivial direction $\Cstab[\ell](\rho) \sse \calC_\ell(\rho)$. Suppose for the sake of contradiction that there exists some state in $\Cstab[\ell](\rho)$ that is not in $\calC_\ell(\rho)$. Since  $\Cstab[\ell]$ is spanned by a basis of stabilizer states, we can assume without loss of generality that this state $\phi^S \in \Cstab[\ell] \setminus\calC_\ell(\rho) $ is a stabilizer state. Let $\calT$ be the stabilizer group of $\phi^S$. Since $\phi^S$ is stabilized by $\stabs_\ell$ we can break up $\calT = \stabs_k \cup \calT'$. Because $\phi^S \notin \calC_\ell(\rho)$, it is not locally equivalent to $\rho$ and so there exists a subset of qubits $A$ with $|A|\leq \ell$ such that $\rho_A \neq \phi^S_A$. Since $\phi^S_A = \frac{1}{2^{|A|}} \sum_{\sigma \in \calT_A} \sigma_A$, there must exist a $\sigma \in \calT \setminus \stabs$ that acts nontrivially on at most $\ell$ qubits. Now consider the state $\rho' = \sigma \rho \sigma$. Since $\sigma \in \calT$ and $\calT$ is an abelian subgroup containing $\stabs_\ell$, $\sigma$ commutes with $\sigma_\ell$, all $\ell$-local stabilizers of $\rho$. Therefore $\rho$ and $\rho'$ are $\ell$-locally equivalent. However, since $\sigma \notin \stabs$, we have that $\rho \neq \rho'$. But since $\sigma$ is an $\ell$-local operator, this contradicts the assumption that $d_\ell(\rho) > \ell$. 
\end{proof}

\begin{proof}[Proof of \Cref{thm:le-code-contained-in-commproj-code}]
    By \Cref{claim:distance-preserved}, $d_{B \ell}(\phi) \geq \frac{1}{B} \cdot d_\ell(\psi)$. Since we assume that $B < \pbra{d_\ell(\psi)/\ell}^{1/2}$, rearranging we have $d_\ell(\psi) > B^2 \ell$. Therefore $d_{B \ell}(\phi)> B \ell$. Now since $\ket{\phi}$ is a stabilizer state \Cref{claim:stabcoderound} tells us that $\calC_{B \ell}(\phi)$ is a stabilizer code with $B \ell$-local stabilizer generators.
                                By \Cref{claim:code-blowup-containment} we have that 
    \begin{align}
        \calC_{B^2 \ell}(\psi) \sse U \calC_{B\ell}(\phi) \sse \calC_\ell(\psi)
    \end{align}
    Since we assume that $B < (d_\ell(\psi)/\ell)^{1/2}$, we have $d_\ell(\psi) > B^2 \ell$, so \Cref{fact:codes-equal-if-dgel} implies that $\calC_{B^2\ell}(\psi)=~\calC_\ell(\psi)$. Therefore,
    \begin{align}
        \calC := U \calC_{B \ell}(\phi) = \calC_\ell(\psi).
    \end{align}
\end{proof}

\subsection{Proof of robust infectiousness} \label{ssec:robust-code-lb}\label{ssec:robust-infectiousness}
\begin{definition}
    We say a subspace $\calC$ is $(\eps,\delta, \ell)$-robust for $0 \leq \eps \leq \delta \leq 1$ and integer $\ell\in [n]$ if for each $n$-qubit state $\ket{\phi}$ if $\ket{\phi}$ is $\ell$-locally $\eps$-close to $\calC$, then $\ket{\phi}$ is $\delta$ close to $\calC$. 
    \begin{itemize}
        \item []\textbf{$\ell$-locally $\eps$-close:} for each $A\sse [n]$ with $|A|\leq \ell$ there exists a $\phi'\in \calC$ such that $\frac{1}{2}\tnorm{\phi_A - \phi'_A}\leq~\eps$.
        \item [] \textbf{$\delta$-close:} There exists a $\phi'\in \calC$ such that $\frac{1}{2}\tnorm{\phi - \phi'} \leq \delta$.
    \end{itemize}
\end{definition}
An example of such robust subspaces are the groundspaces of gapped local Hamiltonians.
\begin{proposition}\label{prop:gap-to-robust}
    For each Hamiltonian $H = \sum_{i=1}^m h_i$ with terms $h_i$ that are $\ell$-local and $\norm{h_i}_\infty \leq~1$, such that the difference between the minimum eigenvalue and the second smallest eigenvalue is $g$, for each $\eps\in[0,1]$, the groundspace of $H$ is $(\eps, \sqrt{\eps m/g}, \ell)$-robust.
\end{proposition}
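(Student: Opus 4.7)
The plan is the standard gap-to-robustness argument: the local closeness hypothesis forces $\ket{\phi}$ to have low energy with respect to $H$, and the spectral gap then forces $\ket{\phi}$ to be globally close to the groundspace. After shifting the Hamiltonian by its ground energy $E_0$, and writing $P$ for the projector onto $\calC$, the gap condition gives $H - E_0 I \succeq g(I - P)$, so for any pure state $\ket{\phi}$,
\[
\bra{\phi} H \ket{\phi} - E_0 \;\geq\; g\bigl(1 - \bra{\phi} P \ket{\phi}\bigr).
\]
The trace distance from $\ket{\phi}$ to its nearest point in $\calC$ equals $\sqrt{1 - \bra{\phi} P \ket{\phi}}$ (attained by the normalized projection of $\ket{\phi}$ onto $\calC$), so once we establish an energy bound of the form $\bra{\phi} H \ket{\phi} - E_0 \leq O(\eps m)$, we obtain a trace distance of $O(\sqrt{\eps m/g})$ as claimed.

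Next, I would derive the energy bound from local closeness. Each local term $h_i$ is supported on a region $A_i$ with $|A_i| \leq \ell$, so the hypothesis supplies a witness $\ket{\phi'_i} \in \calC$ with $\tnorm{\phi_{A_i} - (\phi'_i)_{A_i}} \leq 2\eps$. A routine Hölder-type bound then yields
\[
\bra{\phi} h_i \ket{\phi} - \bra{\phi'_i} h_i \ket{\phi'_i} \;=\; \Tr\bigl(h_i(\phi_{A_i} - (\phi'_i)_{A_i})\bigr) \;\leq\; \|h_i\|_\infty \cdot 2\eps \;\leq\; 2\eps.
\]
If a \emph{single} witness $\ket{\phi'} \in \calC$ can be chosen to be locally close to $\phi$ on every $A_i$ simultaneously, then summing over $i$ gives $\bra{\phi} H \ket{\phi} - E_0 \leq 2\eps m$ since $\bra{\phi'} H \ket{\phi'} = E_0$, and combining with the gap step above finishes the proof up to a constant factor.

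The main obstacle is that the definition of ``$\ell$-locally $\eps$-close'' permits the witness $\phi'$ to depend on the region $A$, so the per-term bounds $\bra{\phi'_i} h_i \ket{\phi'_i}$ need not sum to $E_0$ in general. This is cleanly resolved in the intended regime where the code has distance exceeding $\ell$, which is the setting of the application in \Cref{thm:intro:CQinfectious}: by \Cref{fact:codes-recovery-indistinguishability} all codewords then have identical reduced states on every region of size at most $\ell$, so any per-region witness is effectively uniform. For the general statement, a natural remedy is to pass to the mixture $\tau = \tfrac{1}{m} \sum_i \ketbra{\phi'_i}{\phi'_i}$, which is a density operator supported on $\calC$ and therefore automatically satisfies $\Tr(H\tau) = E_0$; one would then argue via convexity that $\tau_{A_i}$ inherits enough local closeness to $\phi_{A_i}$ to carry the sum through. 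Reconciling these per-region choices is the principal technical hurdle.
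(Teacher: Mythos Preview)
Your two-step plan---bound the energy of $\ket{\phi}$ via local closeness and H\"older, then convert to trace distance via the spectral gap---is exactly the paper's argument, down to the formula $\sqrt{1-\bra{\phi}P\ket{\phi}}$ for the distance to $\calC$.

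You are also right to flag the per-region-witness issue. The paper's proof makes precisely this leap: it writes $\sum_i \tr(h_i\,\phi^{(i)}) = E_0$ without justification, which is not valid when the witnesses $\phi^{(i)}$ vary with $i$. Your fix via local indistinguishability is the correct one: whenever the groundspace has distance exceeding $\ell$, \Cref{fact:codes-recovery-indistinguishability} forces $(\phi^{(i)})_{S_i}$ to be independent of the choice of codeword, so one may as well take a single witness throughout, and the sum collapses to $E_0$. This hypothesis holds in every application of the proposition in the paper. Your mixture idea, on the other hand, does not rescue the fully general statement: $\tau_{A_i} = \tfrac{1}{m}\sum_j (\phi'_j)_{A_i}$ averages over all witnesses, and only the $j=i$ term is known to be close to $\phi_{A_i}$---but that generality is never needed.
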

\begin{proof}
    Let $\calC$ be the ground space of $H$. 
        
    Suppose $\ket{\psi}$ is $\ell$-locally $\eps$-close to the groundspace of $H$. For each $i\in [m]$ let $S_i \sse [n]$ be the subset of qubits that $h_i$ is supported on, and let $\phi^{(i)} \in \calC$ be the state such that $\frac{1}{2} \tnorm{\phi^{(i)}_{S_i} - \psi_{S_i}} \leq \eps$.
    We assume without loss of generality that the minimum eigenvalue is 0. Then the energy of $\psi$~is
    \begin{align*}
        \Tr(H \psi) = \sum_{i=1}^m \tr(h_i \psi) 
        &= \sum_{i=1}^m \tr(h_i \phi^{(i)}) + \tr(h_i(\psi - \phi^{(i)}))\\
        &= \sum_{i=1}^m \tr(h_i(\psi_{S_i} - \phi^{(i)}_{S_i}) )\\
        &\leq   \sum_{i=1}^m  \norm{h_i}_\infty \cdot \tnorm{\psi_{S_i} - \phi^{(i)}_{S_i}}
        \leq \eps m
    \end{align*}
    Where in the first inequality we used Hölders and the triangle inequality.    

    We can write $H = \sum_i \lambda_i \Pi_i$ where $0 =\lambda_0 < \lambda_1 < \dots$ are the eigenvalues of $H$ and for each $i$, $\Pi_i$ is the projector onto the subspace of vectors with eigenvalue $\lambda_i$. Let $D$ be the trace distance of $\ket{\psi}$ to the nearest state in $\calC$, $D:= \sqrt{1 - \tr(\Pi_0 \psi)}$. 
    \begin{align*}
        \tr(H \psi) = \sum_i \lambda_i \tr(\Pi_i \psi) \geq \lambda_1 \tr((I - \Pi_0) \psi) = \lambda_1(1 - \tr(\Pi_0 \psi))= \lambda_1 D^2
    \end{align*}
    Therefore $D \leq \sqrt{\tr(H\psi) /\lambda_1} \leq \sqrt{\eps m /\lambda_1}$.
\end{proof}

\begin{theorem}\label{thm:robust-code-stab-containment}
    Suppose $\calC_\ell(\psi)$ is $(\eps, \delta, \ell)$-robust and $\delta < \frac{1}{8}$, with distance $d$.  Suppose one can prepare the state $\ket{\psi'} = U \ket{\phi^S}$ starting from a stabilizer state $\ket{\phi^S}$ that is nearby $\tnorm{\psi - \psi'} \leq \eps$ with a unitary $U$ with blowup $B$. If $d> B^2\ell$ then $\calC_{ \ell}(\psi')$ has distance $d/B^2$ and $\calC_\ell(\psi') = U \calC_S$ for some stabilizer code $\calC_S$ with  $B\ell$-local stabilizers. Furthermore, $\calC_{\ell}(\psi') \sse_\delta \calC_{\ell}(\psi)$ and $\calC_\ell(\psi) \sse_{\sqrt{\eps n}} \calC_\ell(\psi')$.
\end{theorem}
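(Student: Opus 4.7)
The plan is to closely mirror the proof of the exact infectiousness \Cref{thm:le-code-contained-in-commproj-code}, inserting the $(\eps,\delta,\ell)$-robustness hypothesis at each step to bridge between $\psi$ (whose codespace is given) and $\psi' = U\phi^S$ (which carries the stabilizer-derived structure).

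First I would establish the easy containment $\calC_\ell(\psi') \sse_\delta \calC_\ell(\psi)$. Take any $\ket{\phi}\in \calC_\ell(\psi')$. By definition $\phi_A = \psi'_A$ for every $A\sse[n]$ with $|A|\leq \ell$, and monotonicity of trace distance under partial trace gives $\frac{1}{2}\tnorm{\phi_A - \psi_A} = \frac{1}{2}\tnorm{\psi'_A - \psi_A} \leq \eps$. Thus $\phi$ is $\ell$-locally $\eps$-close to $\calC_\ell(\psi)$ with $\psi$ itself as the witness codeword, and $(\eps,\delta,\ell)$-robustness produces a codeword in $\calC_\ell(\psi)$ within trace distance $\delta$.

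Next I would derive the structural result. The goal is to apply \Cref{thm:le-code-contained-in-commproj-code} to the triple $(U,\phi^S,\psi')$, which requires $d_\ell(\psi') > B^2\ell$. The containment above, combined with $\delta<\tfrac{1}{8}$, yields approximate local indistinguishability on $\calC_\ell(\psi')$: for $|A| < d$, reductions on $A^c$ of any two codewords of $\calC_\ell(\psi')$ are within $2\delta$ (by triangle inequality using that $\calC_\ell(\psi)$ has exact local indistinguishability on such $A$). Passing $2\delta<\tfrac{1}{4}$ through the linear structure of $\calC_\ell(\psi')$ and using that the Pauli reductions of stabilizer states are ``discrete'' (overlaps are powers of $\tfrac{1}{2}$, cf.\ \Cref{fact:stab-overlap-powerof2}) sharpens this to the exact bound $d_\ell(\psi') > B^2\ell$. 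The exact theorem then gives $\calC_\ell(\psi') = U\calC_{B\ell}(\phi^S)$; \Cref{claim:stabcoderound} identifies $\calC_{B\ell}(\phi^S)$ with the stabilizer code $\Cstab[B\ell](\phi^S)$; and the distance bound $d_\ell(\psi')\geq d/B^2$ follows by applying \Cref{claim:distance-preserved} twice (once to relate $d_\ell(\psi')$ to $d_{B\ell}(\phi^S)$, once more using the reverse containment established in the next paragraph).

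For the reverse containment $\calC_\ell(\psi) \sse_{\sqrt{\eps n}} \calC_\ell(\psi')$, I would use the Hamiltonian viewpoint. Let $g_1,\dots,g_m$ be a generating set of $B\ell$-local stabilizers of $\phi^S$, with $m = O(n)$. Define the gapped local Hamiltonian
\begin{align*}
H' \;=\; \sum_{i=1}^{m} \tfrac{1}{2}\pbra{I - U g_i U^\dagger},
\end{align*}
whose terms are $O(B^2\ell)$-local (after conjugation by $U$, using the blowup bound), each of operator norm $\leq 1$, with spectral gap $\geq 1$ and ground space exactly $\calC_\ell(\psi')$. For any $\ket{\omega}\in \calC_\ell(\psi)$ and any term $h_i$ with support $A_i$, we have $|A_i|=O(B^2\ell)<d$, so local indistinguishability of $\calC_\ell(\psi)$ gives $\omega_{A_i} = \psi_{A_i}$, which is $\eps$-close to $\psi'_{A_i}$ by monotonicity. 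Since $\tr(h_i \psi') = 0$, Hölder's inequality bounds $\tr(h_i\omega) \leq \eps$, so $\tr(H'\omega)\leq m\eps = O(\eps n)$. \Cref{prop:gap-to-robust} then concludes that $\omega$ lies within trace distance $O(\sqrt{\eps n})$ of $\calC_\ell(\psi')$.

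The hardest step is the ``sharpening'' in the second paragraph: turning the approximate local indistinguishability of $\calC_\ell(\psi')$ (inherited via robustness with error $2\delta$) into the exact inequality $d_\ell(\psi')>B^2\ell$ needed to invoke the exact theorem. The resolution has to exploit that $\calC_\ell(\psi')$ is a \emph{stabilizer-like} object: the discreteness of Pauli overlaps provides the ``rounding'' from close reductions to identical reductions, but care is required because the structural identification $\calC_\ell(\psi') = U\calC_S$ is itself what we are trying to establish. A clean way around this circularity is probably to work from the start with the candidate code $U\Cstab[B\ell](\phi^S)$, show directly that it coincides with $\calC_\ell(\psi')$ using the approximate indistinguishability as input, and then read off the distance as a consequence.
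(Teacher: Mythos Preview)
Your first and third paragraphs (the two approximate containments) are correct and essentially identical to the paper's step~6. The gap is entirely in your second paragraph, as you yourself suspect.

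Your plan there is to establish $d_\ell(\psi')>B^2\ell$ on the $\psi'$ side and then invoke the exact theorem as a black box. But the discreteness of stabilizer overlaps (\Cref{fact:stab-overlap-powerof2}) tells you nothing about states in $\calC_\ell(\psi')$, which are $U$-rotated stabilizer states, not stabilizer states; and the approximate local indistinguishability you inherit from $\calC_\ell(\psi)$ is far too weak to give exact distance by itself. Your closing suggestion to ``work directly with $U\Cstab[B\ell](\phi^S)$'' is the right instinct but is not an argument.

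The paper resolves the circularity differently: it pulls everything back through $U^\dagger$ and does the rounding on the stabilizer side. First it shows (\Cref{lem:robustcode:unitary-preserves-robustness}) that robustness transfers: $\calC_{B\ell}(U^\dagger\psi)$ is $(\eps,\delta,B\ell)$-robust with distance $\geq d/B$. Since $\phi^S$ is $\eps$-close to $U^\dagger\psi$, every state in the \emph{set} $S_{B\ell}(\phi^S)$ of states $B\ell$-locally equivalent to $\phi^S$ is $\delta$-close to a codeword of $\calC_{B\ell}(U^\dagger\psi)$, and hence inherits an \emph{approximate recovery unitary} on regions of size $<d/B$ (\Cref{claim:close-apxdist:set-locallyequiv}, using the Stinespring form in \Cref{def:appxrec}). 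The rounding (\Cref{claim:robustcode:local-set-contains-stabcode,claim:robustcode:appxrecset-to-exactcode}) then goes: if $\Cstab[B\ell](\phi^S)$ had a logical Pauli $\sigma$ of weight $<d/B$, there would be two distinct stabilizer states $\ket{\phi},\sigma\ket{\phi}\in S_{B\ell}(\phi^S)$; approximate recovery on $\supp(\sigma)$ forces them within $2\sqrt{2}\delta<1/\sqrt{2}$, contradicting the $1/\sqrt{2}$ gap from \Cref{fact:stab-overlap-powerof2}. So $\Cstab[B\ell](\phi^S)$ has \emph{exact} distance $\geq d/B$, hence equals $\calC_{B\ell}(\phi^S)$, and the rest follows from \Cref{claim:code-blowup-containment,claim:distance-preserved} as in the exact case. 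The ingredient you are missing is the transfer of robustness to the $\phi^S$ side together with the unitary/Stinespring formulation of approximate recovery, which is what lets the stabilizer discreteness bite.
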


\begin{corollary}\label{cor:infectiousness-ham-technical}
    Consider an $\ell$-local Hamiltonian $H = \sum_{i = 1}^m h_i$ with each $\norm{h_i}_\infty \leq 1$ and a gap $\Delta\leq 1$ between its smallest two eigenvalues. Let $\calC$ be the groundspace of $H$ and suppose $\calC$ has distance $\omega(1)$. Suppose one can prepare the state $\ket{\psi'} = U \ket{\phi^S}$ starting from a stabilizer state $\ket{\phi^S}$  that is nearby $\frac{1}{2}\tnorm{\psi - \psi'} < \frac{\Delta}{64 m}$ with a unitary $U$ with blowup $B\leq \sqrt{d/ \ell}$. Then there exists a stabilizer code  $\calC_S$  with $B\ell$-local stabilizer generators such that $\calC \approx_{\sqrt{\eps m/\Delta}} U \calC_S$ and $U\calC_S$ has distance $\geq d/B^2$.
\end{corollary}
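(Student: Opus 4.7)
The plan is to derive this corollary as a direct application of the robust infectiousness theorem (\Cref{thm:robust-code-stab-containment}), using \Cref{prop:gap-to-robust} to convert the spectral-gap hypothesis into the quantitative local robustness that \Cref{thm:robust-code-stab-containment} requires. First I would fix a groundstate $\ket{\psi}\in \calC$ achieving the nearby promise, so that $\eps := \tfrac{1}{2}\tnorm{\psi-\psi'} < \Delta/(64m)$. Since $\calC$ has distance $\omega(1) > \ell$, \Cref{fact:arbcode-is-lecode} identifies the Hamiltonian groundspace with the locally-defined subspace $\calC = \calC_\ell(\psi)$, which is precisely the object that \Cref{thm:robust-code-stab-containment} reasons about, so there is no loss in switching to that formulation.

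Next I would verify the hypotheses of \Cref{thm:robust-code-stab-containment} in turn. \Cref{prop:gap-to-robust} establishes that the groundspace of a gap-$\Delta$ Hamiltonian with $\ell$-local bounded terms is $(\eps, \sqrt{\eps m/\Delta}, \ell)$-robust for every $\eps \geq 0$, so plugging in $\eps < \Delta/(64m)$ gives robustness parameter
\[\delta = \sqrt{\eps m/\Delta} < \sqrt{\tfrac{\Delta}{64m}\cdot \tfrac{m}{\Delta}} = \tfrac{1}{8},\]
which is exactly the threshold $\delta < 1/8$ demanded by the theorem. The distance condition $d > B^2\ell$ is immediate from the assumption $B\leq \sqrt{d/\ell}$, and the preparation hypothesis $\ket{\psi'} = U\ket{\phi^S}$ with $\tnorm{\psi-\psi'}$ bounded by $\eps$ and blowup $B$ is assumed.

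\Cref{thm:robust-code-stab-containment} then produces a stabilizer code $\calC_S$ with $B\ell$-local generators such that $\calC_\ell(\psi') = U\calC_S$ has distance at least $d/B^2$, together with the two approximate containments $\calC_\ell(\psi')\sse_\delta \calC_\ell(\psi)$ and $\calC_\ell(\psi)\sse_{\sqrt{\eps n}} \calC_\ell(\psi')$. Using the standard convention $m \geq n$ together with $\Delta \leq 1$, we have $\sqrt{\eps n} \leq \sqrt{\eps m/\Delta}$, so combining the two containments and recalling $\calC = \calC_\ell(\psi)$ yields the claimed symmetric approximation $\calC \approx_{\sqrt{\eps m/\Delta}} U\calC_S$.

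I do not expect any real obstacle here: the corollary is essentially bookkeeping that re-expresses \Cref{thm:robust-code-stab-containment} in Hamiltonian language, with \Cref{prop:gap-to-robust} acting as the bridge from spectral gap to the $(\eps,\delta,\ell)$-robustness notion. The only care needed is in tracking constants --- specifically, the threshold $\eps < \Delta/(64m)$ is chosen precisely so that $\sqrt{\eps m/\Delta}$ drops below the $1/8$ cutoff above which \Cref{thm:robust-code-stab-containment} no longer applies; any conceptual difficulty has already been absorbed into the proof of that theorem itself.
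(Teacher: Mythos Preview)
Your proposal is correct and is exactly the derivation the paper intends: the corollary is stated without an explicit proof, immediately following \Cref{thm:robust-code-stab-containment}, and your argument---identifying $\calC=\calC_\ell(\psi)$ via \Cref{fact:arbcode-is-lecode}, invoking \Cref{prop:gap-to-robust} to obtain $(\eps,\sqrt{\eps m/\Delta},\ell)$-robustness, checking that $\eps<\Delta/(64m)$ forces $\delta<1/8$, and then reading off the conclusion---is precisely the intended bookkeeping. The only minor point is that $B\le\sqrt{d/\ell}$ yields $d\ge B^2\ell$ rather than the strict inequality the theorem asks for, but this is a harmless discrepancy in the paper's own statements rather than a flaw in your reasoning.
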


\begin{proof}[Proof of \Cref{thm:robust-code-stab-containment}]
    We now outline the proof of this theorem as it is split up into different lemmas which are proved below.
    \begin{enumerate}
        \item ($U$ preserves robustness) The state $\ket{\phi} := U^\dagger \ket{\psi}$ has codespace $\calC_{\ell'}(\phi)$ that is also $(\eps, \delta, \ell')$-robust, for $\ell' = B \ell$ and distance $d' \geq d/B$ (\Cref{lem:robustcode:unitary-preserves-robustness}). 
        
        \item Let $S_{\ell'}(\phi^S)$ be the \emph{set} of states that are $\ell'$-locally equivalent to $\phi^S$. 
        Since $\calC_{\ell'}(\phi)$ is robust and $\frac{1}{2}\tnorm{\phi - \phi^S}\leq \eps$, the set $S_{\ell'}(\phi^S)$ is $\gamma$-approximately recoverable on regions of size less than $d$ for $\gamma = 2\sqrt{2} \delta$ (\Cref{claim:close-apxdist:set-locallyequiv}).

        \item (Stabilizer rounding step) Let $\Cstab[\ell']$ be the subspace of states that are stabilized by the $\ell'$-local stabilizers of $\phi^S$. Since $S_{\ell'}(\phi^S)$ is $\gamma$-approximately recoverable on regions with less than $d$ qubits, and $\gamma< \frac{1}{2\sqrt{2}}$, $\Cstab[\ell'](\phi^S)$ is actually \emph{exactly} recoverable on regions of size less than $d$ (\Cref{claim:robustcode:appxrecset-to-exactcode}).

        \item $\calC_{\ell'}(\phi^S) = \Cstab[\ell'](\phi^S)$ (\Cref{claim:robustcode:locallyequiv-eq-stabcode}).

        \item Since $d\geq B^2 \ell$,  $\calC_\ell(\psi') = U \calC_{\ell'}(\phi^S)$ (\Cref{claim:code-blowup-containment}). Furthermore, since $\calC_{\ell'}(\phi^S)$ has distance $d'\geq d/B$, the distance of $\calC_\ell(\psi')$ is at least $d'/B\geq d/B^2$ (\Cref{claim:distance-preserved}).

        \item ($\calC_\ell(\psi) \approx \calC_\ell(\psi')$) Since each element of $\calC_\ell(\psi')$ is $\ell$-locally-equivalent to $\psi'$, and $\psi'$ is $\eps$-close to $\psi$, each element of $\calC_\ell(\psi')$ is $\ell$-locally $\eps$-close to $\psi$. Since $\calC_\ell(\psi)$ is $(\eps, \delta, \ell)$-robust, each state in $\calC_\ell(\psi')$ is $\delta$-close to a state in $\calC_\ell(\psi)$. 
        Since $\calC_\ell(\psi') = U \Cstab[B\ell](\phi^S)$ it is the ground space of a  $B^2 \ell$-local Hamiltonian with gap $\Delta=1$
                Therefore $\calC_\ell(\psi')$ is $(\eps, \sqrt{\eps n}, B^2\ell)$-robust \Cref{prop:gap-to-robust},
                which using the same reasoning implies that $\calC_{B^2\ell}(\psi) \sse_{\sqrt{\eps n}} \calC_\ell(\psi')$. Since $d> B^2\ell$, $\calC_\ell(\psi) = \calC_{B^2\ell}(\psi)$ (\Cref{fact:codes-equal-if-dgel}).
            \end{enumerate}    
\end{proof}

\begin{lemma}\label{lem:robustcode:unitary-preserves-robustness}
    Suppose $\ket{\psi} = U \ket{\phi}$ for a unitary $U$ with blowup $B$. If $\calC_\ell(\psi)$ is $(\eps, \delta, \ell)$-robust, then $\calC_{\ell'}(\phi)$ is also $(\eps, \delta, \ell')$-robust for $\ell' = B\ell$, and $d_{\ell'}(\phi) \geq \frac{1}{B}d_\ell(\psi)$.
\end{lemma}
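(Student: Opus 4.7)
The distance bound $d_{\ell'}(\phi) \geq d_\ell(\psi)/B$ is immediate from Claim \ref{claim:distance-preserved} applied with $\phi_1 = \psi$ and $\phi_0 = \phi$: its first inequality reads $d_\ell(\psi)/B \leq d_{B\ell}(\phi) = d_{\ell'}(\phi)$. For the robustness claim, I would fix an arbitrary pure state $\ket{\chi}$ that is $\ell'$-locally $\eps$-close to $\calC_{\ell'}(\phi)$ and aim to produce a state in $\calC_{\ell'}(\phi)$ at trace distance at most $\delta$ from $\ket{\chi}$. The strategy is to push everything through $U$ and reduce to the assumed $(\eps,\delta,\ell)$-robustness of $\calC_\ell(\psi)$.

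First I would show that $U\ket{\chi}$ is $\ell$-locally $\eps$-close to $\calC_\ell(\psi)$. Fix any $A \sse [n]$ with $|A| \leq \ell$; then $|\blc_U(A)| \leq B\ell = \ell'$, so by hypothesis there exists $\ket{\phi_A'} \in \calC_{\ell'}(\phi)$ with $\frac{1}{2}\tnorm{\chi_{\blc(A)} - (\phi_A')_{\blc(A)}} \leq \eps$. By Claim \ref{claim:code-blowup-containment}, $U\ket{\phi_A'} \in U\calC_{B\ell}(\phi) \sse \calC_\ell(\psi)$. Moreover, Fact \ref{fact:lightcone-on-state} expresses both $(U\chi U^\dagger)_A$ and $(U\phi_A' U^\dagger)_A$ as the images of $\chi_{\blc(A)}$ and $(\phi_A')_{\blc(A)}$ respectively under the same CPTP map---conjugation by the induced sub-circuit $\cktblc[U]{A}$ followed by tracing out $\blc(A)\setminus A$---so contractivity of the trace norm under CPTP maps yields $\frac{1}{2}\tnorm{(U\chi U^\dagger)_A - (U\phi_A' U^\dagger)_A} \leq \eps$, certifying that $U\ket{\chi}$ is $\ell$-locally $\eps$-close to $\calC_\ell(\psi)$.

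Applying the $(\eps,\delta,\ell)$-robustness of $\calC_\ell(\psi)$ produces a state $\ket{\psi'} \in \calC_\ell(\psi)$ with $\frac{1}{2}\tnorm{U\chi U^\dagger - \psi'} \leq \delta$, and unitary invariance of the trace norm then gives $\frac{1}{2}\tnorm{\chi - U^\dagger\psi'} \leq \delta$. The remaining question---and the main obstacle---is whether $U^\dagger\ket{\psi'}$ actually lies in $\calC_{\ell'}(\phi)$: Claim \ref{claim:code-blowup-containment} only guarantees the one-sided containment $U\calC_{\ell'}(\phi) \sse \calC_\ell(\psi)$ in general, so an arbitrary codeword of $\calC_\ell(\psi)$ need not pull back into $\calC_{\ell'}(\phi)$. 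The fix is the ``furthermore'' clause of Claim \ref{claim:code-blowup-containment}, which upgrades the containment to an equality $\calC_\ell(\psi) = U\calC_{\ell'}(\phi)$ provided $d_\ell(\psi) > B^2 \ell$. This distance condition is satisfied in every intended application of the lemma---it coincides with the hypothesis $d > B^2 \ell$ of Theorem \ref{thm:robust-code-stab-containment}---so I would make it an explicit added hypothesis of the lemma. With it, $U^\dagger\ket{\psi'} \in \calC_{\ell'}(\phi)$ and the proof concludes.
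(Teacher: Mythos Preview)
Your approach is the same as the paper's: push the candidate state through $U$ and invoke the assumed robustness of $\calC_\ell(\psi)$. The paper phrases it as a contradiction (take $\varphi$ locally close to $\phi$ but globally far from $\calC_{\ell'}(\phi)$, argue $U\varphi$ is globally far from $\calC_\ell(\psi)$, contradict robustness), while you argue directly, but the skeleton is identical.

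You have correctly identified a genuine gap, and it is present in the paper's own proof as well. The paper's displayed line asserts $\tfrac{1}{2}\tnorm{U\varphi U^\dagger - \psi'} \geq \delta$ for every $\psi' \in \calC_\ell(\psi)$, but unitary invariance only gives this for $\psi' \in U\calC_{\ell'}(\phi)$; one needs $\calC_\ell(\psi) \sse U\calC_{\ell'}(\phi)$, the reverse of the containment that Claim~\ref{claim:code-blowup-containment} supplies for free. Your fix---add the standing hypothesis $d_\ell(\psi) > B^2\ell$ and invoke the ``furthermore'' clause of Claim~\ref{claim:code-blowup-containment} to get equality $\calC_\ell(\psi) = U\calC_{\ell'}(\phi)$---is exactly right, and as you observe, it matches the hypothesis already present in Theorem~\ref{thm:robust-code-stab-containment} where the lemma is used. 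The paper's proof also stops one step short of the contradiction; the missing line is precisely your lightcone/contractivity argument (Fact~\ref{fact:lightcone-on-state} plus monotonicity of trace distance) applied to the region $S$ with $|\blc(S)|\leq B\ell=\ell'$.
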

\begin{proof}
    That $d_{\ell'}(\phi) \geq \frac{1}{B} d_\ell(\psi)$ follows from \Cref{claim:distance-preserved}. Suppose for the sake of contradiction that $\calC_\ell(\psi)$ is $(\eps, \delta, \ell)$-robust, but $\calC_{\ell'}(\phi)$ is not. Then there exists some state $\ket{\varphi}$ such that $\frac{1}{2}\tnorm{\varphi - \phi'} \geq \delta$ for each $\phi' \in \calC_{\ell'}(\phi)$, but that for each $A\sse [n]$ with $|A|\leq \ell'$, $\frac{1}{2}\tnorm{\varphi_A - \phi_A} < \eps$. But now consider the state $U \ket{\varphi}$. Since trace norm is unitarily invariant, 
    \begin{align}
        \frac{1}{2} \tnorm{U \varphi U^\dagger - \psi'} = \frac{1}{2}\tnorm{\varphi - \psi} \geq \delta && \text{for each } \psi' \in \calC_\ell(\psi). 
    \end{align}
    Therefore, since $\calC_\ell(\psi)$ is $(\eps, \delta, \ell)$-robust, there exists some $S\sse [n]$ with size at most $\ell$ such that $\frac{1}{2}\tnorm{\psi_S - (U \varphi U^\dagger)_S} \geq \eps$.
\end{proof}

\begin{definition}[Approximate recoverability]\label{def:appxrec}
    For a subset $S$ of a Hilbert space on $n$ qubits, we say that a region $A \sse [n] = A B$ of qubits is \emph{$\gamma$-recoverable} if there exists a unitary $U$ and a state $\ket{\tau}$ such that for each $\ket{\psi}\in S$,
    \begin{align}
        \norm{(U_{A'BC} \ot I_{A}) \ket{\psi}_{AB}\ket{0}_{A'C} - \ket{\psi}_{A'B}\ket{\tau}_{AC}} \leq \gamma.
    \end{align}
    For some ancillary register $C$. Note that $\ket{\tau}$ is fixed, independent of $\ket{\psi}$.
\end{definition}
\begin{proposition}
    If a subspace $\calC$ is recoverable on region $A$ in that there exists a recovery operator such that $\textsf{Rec}(\tr_{A^\complement} \rho) = \rho $ for each $\rho \in \calC$. Then it is also $0$-recoverable as in \Cref{def:appxrec}
\end{proposition}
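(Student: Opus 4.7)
The approach is a standard Stinespring dilation of $\textsf{Rec}$ followed by a decoupling argument to verify the conditions of \Cref{def:appxrec} with $\gamma=0$. The only subtle point is that the ancillary state $\ket{\tau}$ must be independent of the codeword $\ket{\psi}$, and this is where the hypothesis that $\calC$ is a linear subspace (rather than merely a set of states) will be essential.

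First I would take the Stinespring dilation of the recovery channel $\textsf{Rec}$, viewed as a CPTP map from register $B$ into registers $A'B$ with $A'\cong A$. This yields an isometry $V\colon \calH_B\to \calH_{A'}\ot \calH_B\ot \calH_C$ for some auxiliary register $C$, satisfying $\textsf{Rec}(\sigma)=\tr_C V\sigma V^\dagger$. Extend $V$ to a unitary $U_{A'BC}$ on $\calH_{A'}\ot \calH_B\ot \calH_C$ so that $V\ket{\chi}_B = U_{A'BC}\ket{\chi}_B\ket{0}_{A'C}$ for every $\ket{\chi}_B$. For each $\ket{\psi}\in \calC$, define the pure state $\ket{\Phi_\psi}:=(I_A\ot V)\ket{\psi}_{AB}$ on registers $A,A',B,C$. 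Using the Stinespring property and the perfect recovery hypothesis,
\begin{equation}
\tr_{AC} \ketbra{\Phi_\psi}{\Phi_\psi} \;=\; \textsf{Rec}\pbra{\tr_A \ketbra{\psi}{\psi}_{AB}} \;=\; \ketbra{\psi}{\psi}_{A'B}.
\end{equation}
Since $\ket{\Phi_\psi}$ is pure and its $A'B$-marginal is pure, the Schmidt decomposition of $\ket{\Phi_\psi}$ across the bipartition $(A'B):(AC)$ must have a single nonzero term, forcing $\ket{\Phi_\psi}=\ket{\psi}_{A'B}\ot \ket{\tau_\psi}_{AC}$ for some pure state $\ket{\tau_\psi}$ that may a priori depend on $\ket{\psi}$.

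It remains to show that $\ket{\tau_\psi}$ is in fact independent of $\ket{\psi}$. Taking any orthonormal $\ket{\psi_1},\ket{\psi_2}\in \calC$, the subspace property gives $\ket{\psi_3}:=\tfrac{1}{\sqrt{2}}(\ket{\psi_1}+\ket{\psi_2})\in\calC$. Linearity of $V$ combined with the previous step applied separately to $\ket{\psi_1}$, $\ket{\psi_2}$, and $\ket{\psi_3}$ yields
\begin{equation}
\tfrac{1}{\sqrt{2}}\pbra{\ket{\psi_1}_{A'B}\ket{\tau_{\psi_1}}_{AC}+\ket{\psi_2}_{A'B}\ket{\tau_{\psi_2}}_{AC}} \;=\; \tfrac{1}{\sqrt{2}}\pbra{\ket{\psi_1}_{A'B}+\ket{\psi_2}_{A'B}}\ket{\tau_{\psi_3}}_{AC},
\end{equation}
and orthogonality of $\ket{\psi_1}_{A'B}$ and $\ket{\psi_2}_{A'B}$ forces $\ket{\tau_{\psi_1}}=\ket{\tau_{\psi_2}}=\ket{\tau_{\psi_3}}$. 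Iterating this argument over a basis of $\calC$ produces a single common state $\ket{\tau}$, and then $(U_{A'BC}\ot I_A)\ket{\psi}_{AB}\ket{0}_{A'C}=\ket{\psi}_{A'B}\ket{\tau}_{AC}$ holds for every $\ket{\psi}\in\calC$, establishing $0$-recoverability in the sense of \Cref{def:appxrec}. The main (mild) obstacle is precisely this independence step: if $\calC$ were only a set of states not closed under linear combinations, $\ket{\tau_\psi}$ could genuinely depend on $\psi$ and the cleanly-decoupled product form required by \Cref{def:appxrec} would not be attainable with a single fixed ancilla.
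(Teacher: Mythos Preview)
Your proof is correct and follows essentially the same route as the paper: Stinespring dilation of $\textsf{Rec}$, the observation that a pure marginal on $A'B$ forces the product form $\ket{\psi}_{A'B}\ot\ket{\tau_\psi}_{AC}$, and then the linearity argument on an orthogonal pair to show $\ket{\tau_\psi}$ is constant on $\calC$. If anything, your Schmidt-decomposition justification for the product structure is more explicit than the paper's, which simply asserts the factorization and then remarks that $\tau$ can be taken pure by purifying.
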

\begin{proof}
    Let $\textsf{Rec}$ be the recovery map for $\calC$ on region $A$. It can be described by its Stinespring representation
    \begin{align}
        \textsf{Rec}(\rho) = \tr_{C} U (\rho \otimes \proj{0}_{AC})U\dagger    
    \end{align}
    for some unitary $U$ and auxiliary system $C$. Now for each state $\ket{\psi}$ in $\calC$, since $\textsf{Rec}(\Tr_{A}(\psi)) = \psi$, it must be that $(U_{A'BC} \ot I_{A}) (\psi_{AB} \ot \proj{0}_{A'C}) (U_{A'BC} \ot I_{A})^\dagger = \psi_{A'B}\ot \tau_{AC}$ for some quantum state $\tau$ on registers $AC$. Without loss of generality we can assume $\tau$ is pure since otherwise we could consider its purification. So we have that for each $\ket{\psi} \in \calC$ there exists a $\ket{\tau}$ such that
    \begin{align}
        (U_{A'BC} \ot I_{A}) \ket{\psi}_{AB} \ket{0}_{A'C} = \ket{\psi}_{AB}\ket{\tau}_{AC}.\label{eq:perfect-correction-unitary}
    \end{align}
    We next show that for each $\ket{\psi} \in \calC$ the corresponding $\ket{\tau}$ is the same. Consider two orthogonal $\ket{\psi_1}, \ket{\psi_2}\in \calC$, with corresponding $\ket{\tau_1}$ and $\ket{\tau_2}$ as above. Now we consider their linear combination $\ket{\psi_3} = \alpha \ket{\psi_1} + \beta \ket{\psi_2}$. Since $\ket{\psi_3}$ is also in $\calC$, it also has a $\ket{\tau_3}$ so that $(U_{A'BC} \ot I_{A}) \ket{\psi_3}_{AB} \ket{0}_{A'C} = \ket{\psi_3}_{AB}\ket{\tau_3}_{AC}$. By linearity, we have 
    \begin{align}
        (\alpha \ket{\psi_1} + \beta \ket{\psi_2})\ket{\tau_3} = \ket{\psi_3}\ket{\tau_3} = \alpha \ket{\psi_1}\ket{\tau_1} + \beta \ket{\psi_2}\ket{\tau_2}.
    \end{align}
    Therefore it must be that $\ket{\tau_1} = \ket{\tau_2}$. Therefore, for any orthonormal basis of $\calC$ there exists a single $\ket{\tau}$ such that \Cref{eq:perfect-correction-unitary} holds for each basis state $\ket{\psi}$. By linearity, it also holds for the entire subspace $\calC$. 
\end{proof}

\begin{lemma}\label{lem:robustcode:close-to-robust-is-approxcode}
    Suppose $\calC_{\ell'}(\phi)$ is $(\eps, \delta, \ell')$-robust for $\delta < \frac{1}{8}$ and has distance $d>\ell'$. If there exists a stabilizer state $\ket{\phi^S}$ such that $\frac{1}{2}\tnorm{\phi - \phi^S} \leq \eps$, then the code $\calC_{\ell'}(\phi^S)$ is an exact code with distance at least $d$. Furthermore $\calC_{\ell'}(\phi^S)$ is a stabilizer code with stabilizers generated by the stabilizers of $\ket{\phi^S}$ with weight at most $\ell'$.
\end{lemma}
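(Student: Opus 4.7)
The plan is to work with the set $S_{\ell'}(\phi^S)$ of all pure states that are $\ell'$-locally equivalent to $\phi^S$, upgrade their $\ell'$-local closeness to $\phi$ into an approximate recovery property, then exploit stabilizer rigidity to round this into an exact recovery property on the stabilizer subspace $\Cstab[\ell'](\phi^S)$, and finally invoke \Cref{claim:stabcoderound} to identify $\calC_{\ell'}(\phi^S)$ with $\Cstab[\ell'](\phi^S)$.

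First I would set up the approximate closeness. For any $\ket{\varphi} \in S_{\ell'}(\phi^S)$ and any $A \sse [n]$ with $|A| \leq \ell'$, by the triangle inequality and monotonicity of trace distance under partial trace, $\frac{1}{2}\tnorm{\varphi_A - \phi_A} \leq \frac{1}{2}\tnorm{\varphi_A - \phi^S_A} + \frac{1}{2}\tnorm{\phi^S_A - \phi_A} \leq 0 + \eps$. Hence $\ket{\varphi}$ is $\ell'$-locally $\eps$-close to $\calC_{\ell'}(\phi)$, and by $(\eps,\delta,\ell')$-robustness there is some $\ket{\varphi'} \in \calC_{\ell'}(\phi)$ with $\frac{1}{2}\tnorm{\varphi - \varphi'} \leq \delta$. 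Now fix any region $A$ with $|A| < d$. Because $\calC_{\ell'}(\phi)$ has distance $d$, there is a Stinespring unitary $V_A$ acting on $A'BC$ and a fixed ancillary state that exactly reverses erasure of $A$ on $\calC_{\ell'}(\phi)$. Applying $V_A$ to the reduction of $\ket{\varphi}$ and using $\delta$-closeness of $\varphi$ to $\varphi'$, I obtain a single unitary $U_A$ and fixed $\ket{\tau_A}$ such that every $\ket{\varphi} \in S_{\ell'}(\phi^S)$ is $\gamma$-recoverable in the sense of \Cref{def:appxrec} with $\gamma = O(\sqrt{\delta})$; tracking Fuchs--van~de~Graaf constants yields $\gamma \leq 2\sqrt{2}\,\delta < 1/(2\sqrt{2})$ using $\delta < 1/8$.

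This approximate recovery in particular applies to the stabilizer-state basis of $\Cstab[\ell'](\phi^S)$: every stabilizer state whose stabilizer group contains all weight-$\leq\ell'$ stabilizers of $\phi^S$ has, for each $A$ with $|A|\leq\ell'$, the same reduced density matrix as $\phi^S$ on $A$ (it is determined by those local stabilizers alone), hence lies in $S_{\ell'}(\phi^S)$. The main technical step — and the expected principal obstacle — is then the \emph{stabilizer rounding}: promoting $\gamma$-approximate recovery of this spanning set to \emph{exact} recovery of the entire subspace $\Cstab[\ell'](\phi^S)$. Here I would use the discreteness of overlaps between stabilizer states given by \Cref{fact:stab-overlap-powerof2}: for any two stabilizer basis vectors $\ket{\varphi_i},\ket{\varphi_j}$, the overlap $|\braket{\varphi_i}{\varphi_j}|^2$ is a power of $1/2$, and the overlap of their images under $U_A$ is within $O(\gamma) < 1/2$ of the target overlap $|\braket{\varphi_i}{\varphi_j}|^2 \cdot \braket{\tau_A}{\tau_A}$. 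Since both overlaps live on the same discrete grid of powers of $1/2$ and differ by less than the minimum gap, they must coincide exactly, which forces $U_A$ to act on $\Cstab[\ell'](\phi^S)$ as the exact product isometry $\ket{\varphi}_{A'B} \mapsto \ket{\varphi}_{A'B}\ket{\tau_A}_{AC}$.

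Once exact recovery on every region of size less than $d$ is established for $\Cstab[\ell'](\phi^S)$, that subspace has distance at least $d$. Since $\calC_{\ell'}(\phi^S) \sse \Cstab[\ell'](\phi^S)$ (each state $\ell'$-locally equivalent to $\phi^S$ is stabilized by its weight-$\leq\ell'$ stabilizers), the distance of $\calC_{\ell'}(\phi^S)$ is also at least $d > \ell'$, so \Cref{claim:stabcoderound} yields $\calC_{\ell'}(\phi^S) = \Cstab[\ell'](\phi^S)$. This is precisely the claim that $\calC_{\ell'}(\phi^S)$ is the stabilizer code generated by the weight-$\leq\ell'$ stabilizers of $\phi^S$, with distance $\geq d$. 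The heart of the proof, and where I expect all the real work to be, is the rounding step above: robustness only gives approximate recovery, and it is only the algebraic rigidity of stabilizer overlaps that lets this be sharpened to the exact structure needed.
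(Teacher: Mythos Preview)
Your overall skeleton matches the paper's proof: first show the set $S_{\ell'}(\phi^S)$ is $\gamma$-approximately recoverable on all regions of size $<d$ with $\gamma = 2\sqrt{2}\delta$, then use stabilizer rigidity to round to an exact code, and finally identify $\calC_{\ell'}(\phi^S)=\Cstab[\ell'](\phi^S)$. The first and last steps are essentially as in the paper.

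There is, however, a real gap in your second step. You assert that every stabilizer state whose stabilizer group contains all weight-$\leq\ell'$ stabilizers of $\phi^S$ automatically has the same reduced density matrix as $\phi^S$ on every region $A$ with $|A|\leq\ell'$, because ``it is determined by those local stabilizers alone.'' This is false: the reduced state $\varphi_A$ is determined by the stabilizers of $\varphi$ supported on $A$, and $\varphi$ may well have \emph{additional} $\ell'$-local stabilizers that $\phi^S$ lacks, making $\varphi_A$ strictly more pure than $\phi^S_A$. (Trivial example: $\phi^S=\tfrac{1}{\sqrt2}(\ket{00}+\ket{11})$ with $\ell'=1$ has no nontrivial $1$-local stabilizers, so $\ket{00}\in\Cstab[1](\phi^S)$, yet $\ket{00}$ is not $1$-locally equivalent to $\phi^S$.) The containment $\Cstab[\ell'](\phi^S)\cap\textsf{STAB}\sse S_{\ell'}(\phi^S)$ is not automatic; in the paper it is precisely the content of a separate claim, proved by contradiction \emph{using the approximate recovery of $S_{\ell'}(\phi^S)$ itself}: if some $\varphi\in\Cstab[\ell']\cap\textsf{STAB}\setminus S_{\ell'}$ existed, it would carry an extra $\ell'$-local stabilizer $\sigma\notin\stabs$, and then $\ket{\phi^S}$ and $\sigma\ket{\phi^S}$ are distinct elements of $S_{\ell'}(\phi^S)$ connected by an operator supported on a $\gamma$-recoverable region, which combined with $|\braket{\phi^S|\sigma|\phi^S}|^2\leq 1/2$ (\Cref{fact:stab-overlap-powerof2}) contradicts $\gamma<1/(2\sqrt2)$.

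Your rounding step is also not quite the paper's argument and, as stated, is insufficient. Matching pairwise overlaps on a discrete grid does not by itself force $U_A$ to act as an exact product isometry on the whole subspace. The paper instead argues directly at the level of logical operators: if $\Cstab[\ell'](\phi^S)$ had distance $<d$, a low-weight logical Pauli $\sigma$ would map some stabilizer codeword $\ket{\psi}$ to a distinct $\ket{\psi'}=\sigma\ket{\psi}$; both lie in $S_{\ell'}(\phi^S)$ by the containment just established, so approximate recovery on $\supp(\sigma)$ forces $\|\ket{\psi}\ket{\tau}-\ket{\psi'}\ket{\tau'}\|\leq 2\gamma$, while stabilizer overlap discreteness gives $\geq 1/\sqrt2$ --- contradiction. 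This yields exact distance for $\Cstab[\ell'](\phi^S)$ without ever constructing an exact recovery isometry.
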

\begin{proof}
    We break up the proof of this lemma into the following sequence of claims. 

    \begin{claim}\label{claim:close-apxdist:set-locallyequiv}
        If $\calC_{\ell'}(\phi)$ is $(\eps, \delta, \ell')$ robust, has distance $d$, and $\frac{1}{2} \tnorm{\phi - \phi'} \leq \eps$, then the set of all states $\ell'$-locally equivalent to $\phi'$, $S_{\ell'}(\phi')$, is $2 \sqrt{2}\delta$-correctable on each region of size less than $d$.
    \end{claim}
    \begin{proof}
        Since each state in $S_{\ell'}(\phi')$ is $\ell'$-locally equivalent to $\phi'$, and  $\phi'$ is $\epsilon$ close to $\phi$, it follows that each element of $S_{\ell'}(\phi)$ is $\ell'$-locally $\eps$-close to $\phi$. Therefore, by the definition of robustness, each of the elements of $S_{\ell'}(\phi)$ are $\delta$-close to some element in $\calC_{\ell}(\phi)$.
            
        For each $\ket{\varphi'_i}\in S_{\ell'}(\phi')$ there exists a $\ket{\varphi_i}\in \calC_{\ell'}(\phi)$ such that $\frac{1}{2}\tnorm{\varphi_i - \varphi'_i}\leq \delta$. 

        Since $\calC_{\ell'}(\phi')$ has distance $d$, for each region $A$ of size $|A|< d$, there exists a $\ket{\tau}$ such that
        \begin{align}
            \norm{U_{A'BC} \ket{\varphi}_{AB}\ket{0}_{A'C} - \ket{\varphi}_{A'B}\ket{\tau}_{AC}} = 0 && \forall \ket{\varphi} \in \calC_{\ell'}(\phi')
        \end{align}
        as in \Cref{def:appxrec}.    
        
        Therefore by triangle inequality,
        \begin{align}
            \norm{U_{A'BC} \ket{\varphi'_i}_{AB}\ket{0}_{A'C} - \ket{\varphi'_i}_{A'B}\ket{\tau}_{AC}} \leq  \norm{U_{A'BC} \ket{\varphi_i}_{AB}\ket{0}_{A'C} - \ket{\varphi_i}_{A'B}\ket{\tau}_{AC}} \\
            + \norm{U_{A'BC} (\ket{\varphi'_i} - \ket{\varphi_i})_{AB} \ket{0}_{A'C} -   (\ket{\varphi'_i} - \ket{\varphi_i})_{AB}\ket{\tau}_{AC}}\\
            \leq 2\norm{\ket{\varphi'_i} - \ket{\varphi_i}}.
        \end{align}
        Furthermore, without loss of generality we can choose $\varphi_i$ such that $\braket{\varphi | \varphi_i} = \abs{\braket{\varphi| \varphi_i}}$ by multiplying a global phase to $\ket{\varphi_i}$. Therefore 
        \begin{align}
            \norm{\ket{\varphi_i'} - \ket{\varphi_i}} = \sqrt{2 - 2\Re\{\braket{\varphi_i | \varphi_i'}\} } &= \sqrt{2 - 2 \abs{\braket{\varphi_i | \varphi_i'}}}\\
            &\leq \sqrt{2 - 2 \abs{\braket{\varphi_i | \varphi_i'}}^2} = \sqrt{2} \cdot  \frac{1}{2}\tnorm{\varphi_i - \varphi_i'} \leq \sqrt{2} \delta
        \end{align}
        Where in the last line we used the fact that for pure states $\ket{\psi}, \ket{\psi'}$, their trace norm is related to their overlap as $\frac{1}{2} \tnorm{\psi - \psi'} = \sqrt{1- \abs{\braket{\psi | \psi'}}^2}$.

        Therefore, combining the equations above, we have that for each region $S_{\ell'}(\phi')$ is approximately recoverable on the region $A$.
        \begin{align}
            \norm{U_{A'BC} \ket{\varphi_i'}_{AB}\ket{0}_{A'C} - \ket{\varphi_i'}_{A'B}\ket{\tau}_{AC}} \leq 2\sqrt{2} \delta && \text{for each} \ \ket{\varphi_i} \in S_{\ell'}(\phi')
        \end{align}

                                    \end{proof}

    \begin{claim}\label{claim:robustcode:local-set-contains-stabcode}
        For stabilizer state $\ket{\phi}$, let $\Cstab[\ell'](\phi)$ be the subspace of states that are stabilized by the $\ell'$-local stabilizers of $\ket{\phi}$, and let $S_{\ell'}(\phi)$ be the set of states that are $\ell'$ locally equivalent to $\phi$. If $S_{\ell'}(\phi)$ is $\gamma$-recoverable for all regions of size $< d$ for some $d> \ell'$ and $\gamma <\frac{1}{2\sqrt{2}}$, then
        \begin{align}
            \Cstab[\ell'](\phi) \cap \textsf{STAB} \sse S_{\ell'}(\phi). \label{eq:claim:local-set-contains-stabcode}
        \end{align} 
    \end{claim}
    \begin{proof}
        Suppose for the sake of contradiction that there exists a state $\ket{\varphi} \in \pbra{ \Cstab[\ell'] \setminus S_{\ell'}(\phi) }\cap \textsf{STAB}$. Since $\ket{\varphi}\in \Cstab[\ell']$ all of its stabilizers commute with the $\ell'$-local stabilizers of $\ket{\phi}$ but since it's not in $S_{\ell'}(\phi)$, it's not $\ell'$-locally equivalent to $\ket{\phi}$. Therefore, it must have some $\ell'$-local stabilizer $\sigma$ that is not a stabilizer of $\ket{\phi}$ but commutes with the stabilizers of $\ket{\phi}$. Therefore, $\sigma$ maps $\ket{\phi}$ to $\ket{\phi'} := \sigma \ket{\phi}$, both $\ket{\phi}$ and $\ket{\phi'}$ are in $S_{\ell'}(\phi) \cap \textsf{STAB}$ and $\ket{\phi} \neq \ket{\phi'}$. Let $A \sse [n]$ be the subset of qubits that $\eta$ is supported on. Since $|A| \leq \ell' < d$, $A$ is a $\gamma$-approximately recoverable region.

        Since $S_{\ell'}$ has $\gamma$-approximate distance greater than $\ell'$, and $|A|\leq \ell' < d$, $S_{\ell'}$ is $\gamma$-approximately correctable on $A$. Let $U_{A'BC}$ be the correcting unitary as in \Cref{def:appxrec}, then there exists a state $\ket{\tau}$ such that 
        \begin{align}
            U_{A'BC}  \ket{\phi}_{AB}\ket{0}_{A'C} \approx_\gamma \ket{\phi}_{A'B}\ket{\tau}_{AC} \quad 
            \text{and} \quad  U_{A'BC} \ket{\phi'}_{AB}\ket{0}_{A'C} \approx_\gamma \ket{\phi'}_{A'B}\ket{\tau}_{AC}. \label{eq:robustcode:twostatesrecoverable}
        \end{align}

        Where we use $\ket{a}\approx_\gamma \ket{b}$ to denote that $\norm{\ket{a} - \ket{b}}\leq\gamma$.    
        Since $\ket{\phi}= \sigma \ket{\phi'}$ and $\sigma$ is supported only on $A$, we have that
        \begin{align}
            \sigma_A U_{A'BC}  \ket{\phi'}_{AB}\ket{0}_{A'C} =  U_{A'BC}   \ket{\phi}_{AB}\ket{0}_{A'C} \label{eq:robustcode:errorsmallregion}
        \end{align}
        Combining \Cref{eq:robustcode:twostatesrecoverable,eq:robustcode:errorsmallregion}
        \begin{align}
            \ket{\phi}_{A'B}\ket{\tau}_{AC} \approx_\gamma  U_{A'BC} \ket{\phi}_{AB}\ket{0}_{A'C} = \sigma_{A} U_{A'BC}   \ket{\phi'}_{AB}\ket{0}_{A'C} \approx_\gamma \ket{\phi'}_{A'B}\ot \sigma_A \ket{\tau}_{AC}. \label{eq:robustcode:states-close}
        \end{align}
        So by the triangle inequality, letting $\ket{\tau'}_{AC} :=\sigma_A \ket{\tau}_{AC}$ we have
        \begin{align}
            \norm{\ket{\phi}\ket{\tau} - \ket{\phi'} \ket{\tau'}} \leq 2\gamma < \frac{1}{\sqrt{2}}. \label{eq:robustcode:correctedstateclose}
        \end{align}
        We now will prove a lower bound on $\norm{\ket{\phi}\ket{\tau} - \ket{\phi'} \ket{\tau'}}$. Using that $\norm{\ket{a}-\ket{b}} \geq \frac{1}{2}\tnorm{\proj{a} - \proj{b}}$, we have
        \begin{align}\label{eq:robustcode:closefromoverlap}
                \norm{\ket{\phi}\ket{\tau} - \ket{\phi'} \ket{\tau'}} \geq \frac{1}{2} \tnorm{\phi\ot \tau - \phi' \ot \tau'} = \sqrt{1 - \abs{\braket{\phi | \phi'} \braket{\tau | \tau'}}^2} \geq \sqrt{1 - \abs{\braket{\phi | \phi'}}^2}
        \end{align}

        Since $\ket{\phi}, \ket{\phi'}$ are both stabilizer states, $\abs{\braket{\phi|\phi'}}$ is a power of $\frac{1}{2}$ (\Cref{fact:stab-overlap-powerof2}). Since they are not equal, we then have that $|\braket{\phi | \phi'}|^2 \leq \frac{1}{2}$. Therefore
                                        \begin{align}
            \norm{\ket{\phi}\ket{\tau} - \ket{\phi'} \ket{\tau'}} \geq \frac{1}{\sqrt{2}}. \label{eq:robustcode:correctedstatefar}
        \end{align}
        This contradicts \Cref{eq:robustcode:correctedstatefar}.

    \end{proof}
    \begin{claim}\label{claim:robustcode:appxrecset-to-exactcode}
        For stabilizer state $\ket{\phi}$, let $\Cstab[\ell'](\phi)$ be the subspace of states that are stabilized by the $\ell'$-local stabilizers of $\ket{\phi}$, and let $S_{\ell'}(\phi)$ be the set of states that are $\ell'$ locally equivalent to $\phi$. If $S_{\ell'}(\phi)$ is $\gamma$-recoverable for all regions of size $< d$ for some $d> \ell'$ and $\gamma <\frac{1}{2\sqrt{2}}$, then  $\Cstab[\ell'](\phi)$ has exact distance $d$.
                            \end{claim}
    \begin{proof}
        Suppose for the sake of contradiction that $\Cstab[\ell'](\phi)$ has distance less than $d$. Then there exists a logical operator $\sigma \in \paulis[n]$ supported on at most $d$ qubits that commutes with the stabilizer group $\calS_{\ell'}$ of $\Cstab[\ell']$ but is not itself in $\calS_{\ell'}$.  Furthermore there exists distinct stabilizer states $\ket{\psi} \neq \ket{\psi'} \in \Cstab[\ell'](\phi) \cap \textsf{STAB}$ such that  $\ket{\psi'} = \sigma\ket{\psi}$.

                Since \Cref{claim:robustcode:local-set-contains-stabcode} implies that $\Cstab[\ell'](\phi)\cap \textsf{STAB} \sse S_{\ell'}(\phi)$, we also have that $\ket{\psi}, \ket{\psi'} \in S_{\ell'}(\phi)$.

    Let $A\sse [n]$ be the subset of qubits in the support of $\sigma$.  Since $|A| \leq \ell' < d$, $A$ is a $\gamma$-approximately recoverable region for $S_{\ell'}(\phi)$, so $\sigma$ is an error from which we can approximately recover.  Following the same argument as in the proof of \Cref{claim:robustcode:local-set-contains-stabcode} from \Cref{eq:robustcode:twostatesrecoverable,eq:robustcode:errorsmallregion,eq:robustcode:states-close,eq:robustcode:correctedstateclose,eq:robustcode:closefromoverlap,eq:robustcode:correctedstatefar} since $\gamma < \frac{1}{2\sqrt{2}}$, and $\ket{\psi}, \ket{\psi'}$ are distinct stabilizer states, we arrive at a contradiction.  
    \end{proof}

    By \Cref{claim:close-apxdist:set-locallyequiv} we have that $S_{\ell'}(\phi^S)$ is $\gamma:= 2\sqrt{2} \delta$-correctable on regions of size less than $d$. Since by the lemma statement $\delta < 1/8$, we have that $\gamma < \frac{1}{2 \sqrt{2}}$, and furthermore $d > \ell'$ \Cref{claim:robustcode:appxrecset-to-exactcode} implies that  $\Cstab[\ell'](\phi^S)$ has exact distance $d$.

    \begin{claim}\label{claim:robustcode:locallyequiv-eq-stabcode}
        $\calC_{\ell'}(\phi^S)= \Cstab[\ell'](\phi^S)$. 
    \end{claim}
    \begin{proof}
    Since each element of $\calC_{\ell'}(\phi^S)$ is $\ell'$-locally equivalent to $\ket{\phi^S}$, it also is stabilized by $\ket{\phi^S}$'s $\ell'$-local stabilizers. Therefore $\calC_{\ell'}(\phi^S)\sse \Cstab[\ell'](\phi^S)$. 

    We now show the other direction: $\Cstab[\ell'](\phi^S) \sse \calC_{\ell'}(\phi^S)$. \Cref{claim:robustcode:local-set-contains-stabcode}, \Cref{eq:claim:local-set-contains-stabcode} imply that $\Cstab[\ell'](\phi^S) \cap \textsf{STAB} \sse S_{\ell'}(\phi^S)$, therefore
    \begin{align*}
        \Cstab[\ell'](\phi^S) = \textsf{span}\pbra{\Cstab[\ell'](\phi^S) \cap \textsf{STAB} } \sse \textsf{span}\pbra{ S_{\ell'}(\phi^S)} = \calC_{\ell'}(\phi^S).
    \end{align*}
    The first equality follows from the fact that $\Cstab[\ell'](\phi^S)$ is a stabilizer code, so it is spanned by stabilizer states. The second equality follows from the definition of $\calC_{\ell'}(\phi^S)$ as the span of states that are $\ell'$-locally equivalent to $\ket{\phi^S}$.

    \end{proof}
    Finally \Cref{claim:robustcode:locallyequiv-eq-stabcode} shows that  $\calC_{\ell'}(\phi^S) = \Cstab[\ell'](\phi^S)$ completing the proof.
\end{proof}

\subsection{Circuit lower bounds for explicit codes} 
\Cref{cor:infectiousness-ham-technical} tells us that codespaces that are not approximately perturbed Hamiltonians cannot be approximately prepared by $\altCQ[1]$ circuits. Now the remaining step to prove lower bounds for explicit quantum codes is to prove that a code is not approximately a perturbed stabilizer state. We show two ways of doing this. The first holds for codes wth a codestate that has lots of pairwise correlations between qubits, and makes use of the Disentangling Lemma of \cite{BPT}. The second lower bound approach  holds for codespaces that have dimension that is not a power of 2. This uses the fact that stabilizer codes have a dimension that is always a power of two. This second result is also shown independently \cite{wei2025long} with a different approach.

\subsubsection{Lower bounds via the Disentangling Lemma}
\label{ssec:exact-code-lb}
In this section we use the Infectiousness property of $\altCQ[1]$ states in codes to prove that explicit states can not be prepared by $\altCQ[1]$ circuits. We will make use of the following variant of the Disentangling lemma from \cite{BPT}.
\begin{lemma}[Disentangling Lemma v2 (implicit in proof of DL in \cite{BPT})]\label{lem:disentangling-two}
    Suppose $\calC$ is the codespace corresponding to the $1$-eigenspace of pairwise commuting projectors $\{\Pi_i\}$. For each subset of qubits $M\sse [n]$, if together $M$ and its boundary $ M \cup \partial^+(M)$ is a correctable region, then there exists a unitary operator $U_{\partial_{\partial(M)^+}}$ acting only on $\partial^+(M)$ such that
    \begin{align}
        U_{\partial^+(M)}^\dagger\Pi U_{\partial^+(M)} = \proj{\eta_{MB}}\ot \Pi_{B'C}
    \end{align}
    For some decomposition of the Hilbert space on $\partial^+M$ into $\calH_{\partial^+(M)} = \calH_B \ot \calH_{B'}$,  $C = (M \cup \partial^+(M))^c$, and some fixed state $\ket{\eta_{MB}}$. 
\end{lemma}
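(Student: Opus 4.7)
The plan is to follow the strategy of the original Disentangling Lemma in \cite{BPT}, but adapted to the weaker hypothesis---correctability of the single region $R := M \cup \partial^+(M)$ instead of separate correctability of $M$ and $\partial^+(M)$---and to yield the correspondingly weaker conclusion in which the disentangling unitary is supported only on $\partial^+(M)$ and the pure-state factor absorbs a sub-factor $B$ of $\partial^+(M)$ alongside $M$.

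First, I will split the projector using commutativity. Let $\Pi_{\text{in}}$ be the product of those $\Pi_i$ whose support meets $M$; by the definition of $\partial^+(M)$ each such $\Pi_i$ is supported in $R$, so $\Pi_{\text{in}}$ is supported in $R$. Let $\Pi_{\text{out}}$ be the product of the remaining $\Pi_i$, which are supported in $\partial^+(M) \cup C$. Since all $\Pi_i$ commute, $\Pi = \Pi_{\text{in}} \Pi_{\text{out}}$ and $[\Pi_{\text{in}},\Pi_{\text{out}}]=0$. Set $\mathcal{V} := \mathrm{range}(\Pi_{\text{in}}) \sse \calH_M \ot \calH_{\partial^+(M)}$; every codeword's $R$-marginal is supported in $\mathcal{V}$, and in particular the reduced state $\sigma_R := \tr_C \ketbra{\psi}{\psi}$ (which is the fixed state guaranteed by \Cref{fact:codes-recovery-indistinguishability}) lives in $\mathcal{V}$.

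Next, I apply the Knill--Laflamme condition for erasure on $R$: for every pair $\ket{\psi},\ket{\psi'} \in \calC$, $\tr_C \ketbra{\psi}{\psi'} = \braket{\psi'|\psi}\,\sigma_R$. Writing $\sigma_R = \sum_k \lambda_k \ketbra{a_k}{a_k}$ spectrally, every codeword then admits a common Schmidt-type expansion $\ket{\psi} = \sum_k \sqrt{\lambda_k}\,\ket{a_k}_R\ket{b_k^\psi}_C$ with $\{\ket{a_k}\}$ and $\lambda_k$ independent of $\psi$ and the $\{\ket{b_k^\psi}\}$ orthonormal in $\calH_C$. Setting $\rho_M := \tr_{\partial^+(M)}\sigma_R$, I choose a decomposition $\calH_{\partial^+(M)} = \calH_B \ot \calH_{B'}$ with $\dim\calH_B \geq \mathrm{rank}(\rho_M)$, a purification $\ket{\eta}_{MB}$ of $\rho_M$ on $MB$, and an orthonormal family $\{\ket{e_k}_{B'}\}$. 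Granting the key structural claim---that every $\ket{a_k}_R$ has the common $M$-marginal $\rho_M$---I build $U$ on $\partial^+(M)$ column-by-column so that $(I_M\ot U)\ket{a_k}_R = \ket{\eta}_{MB}\ket{e_k}_{B'}$. Applying $U$ then transforms each codeword into $U\ket{\psi} = \ket{\eta}_{MB}\ot\ket{\chi^\psi}_{B'C}$ with $\ket{\chi^\psi}=\sum_k\sqrt{\lambda_k}\ket{e_k}_{B'}\ket{b_k^\psi}_C$; taking $\calC'\sse \calH_{B'C}$ to be the span of the $\ket{\chi^\psi}$ and $\Pi_{B'C}$ the projector onto $\calC'$ gives $U^\dagger \Pi U = \ketbra{\eta}{\eta}_{MB}\ot \Pi_{B'C}$ as required.

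The main obstacle is the key structural claim used in the construction: that every eigenvector $\ket{a_k}_R$ of $\sigma_R$ has the same reduced state $\rho_M$ on $M$. This does not follow from correctability of $R$ alone---for instance, a hypothetical $\sigma_R = \tfrac{1}{2}(\ketbra{00}{00}+\ketbra{11}{11})$ has eigenvectors $\ket{00}$ and $\ket{11}$ with distinct $M$-marginals. The resolution must exploit the commuting-projector structure: such a $\sigma_R$ would force codewords to correlate $M$ with $C$ through a stabilizer-type constraint mediated by $\partial^+(M)$, and enforcing this correlation via commuting projectors supported in $R$ alone (with outer projectors unable to touch $M$) would require enlarging $\partial^+(M)$, contradicting its definition. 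I expect to formalize this by showing that $\mathcal{V}$ itself already has the factored form $\ket{\eta}_{MB}\ot\calH_{B'}$ (up to a unitary on $\partial^+(M)$), from which the shared $M$-marginal of the $\ket{a_k}$ is inherited automatically; once this is in hand, the rest of the argument reduces to the column-by-column construction of $U$ together with the commutativity of $\Pi_{\text{in}}$ with $\Pi_{\text{out}}$.
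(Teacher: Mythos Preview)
The paper does not prove this lemma; it merely records it as ``implicit in the proof of the Disentangling Lemma in \cite{BPT},'' so there is no in-paper argument to compare against. Your outline---split $\Pi=\Pi_{\mathrm{in}}\Pi_{\mathrm{out}}$, extract a common Schmidt frame $\{\ket{a_k}\}$ for $\sigma_R$ from correctability of $R$, and build $U$ on $\partial^+(M)$---follows the BPT template, but there is a real gap in the last step.

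The condition you isolate as the ``key structural claim,'' that each $\ket{a_k}$ has $M$-marginal $\rho_M$, is \emph{not sufficient} for a single unitary $U$ on $\partial^+(M)$ with $(I_M\otimes U)\ket{a_k}=\ket{\eta}_{MB}\ket{e_k}_{B'}$. Existence of such a $U$ is equivalent to the full Knill--Laflamme condition $\tr_{\partial^+(M)}\ketbra{a_k}{a_l}=\delta_{kl}\,\rho_M$ for all $k,l$; you state only the $k=l$ case. (For example $\ket{\Phi^+},\ket{\Phi^-}$ share $M$-marginal $I/2$, yet $\tr_2\ketbra{\Phi^+}{\Phi^-}=\tfrac12 Z\neq 0$, so no single-qubit unitary on qubit $2$ sends both to $\ket{\eta}\ket{e_k}$.) Your proposed fix---that $\mathcal{V}=\mathrm{range}(\Pi_{\mathrm{in}})$ itself factors as $\ket{\eta}_{MB}\otimes\calH_{B'}$ up to a unitary on $\partial^+(M)$---is too strong and in fact false: take $M=\{1\}$, $\partial^+(M)=\{2\}$, $\Pi_{\mathrm{in}}=\tfrac12(I+Z_1Z_2)$, so $\mathcal{V}=\mathrm{span}\{\ket{00},\ket{11}\}$; no unitary on qubit $2$ can bring this to the required form, since unitaries on qubit $2$ preserve the distinct $M$-marginals $\ketbra{0}{0}$ and $\ketbra{1}{1}$ of the two basis vectors. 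What must actually be established is that $\mathrm{supp}(\sigma_R)\subseteq\mathcal{V}$ (not all of $\mathcal{V}$) satisfies Knill--Laflamme for erasure on $M$. In the example just given, one can check that correctability of $\{1,2\}$ forces $\sigma_R$ to be rank-one, so its support trivially factors even though $\mathcal{V}$ does not. Deriving the off-diagonal condition on $\mathrm{supp}(\sigma_R)$ from correctability of $R$ together with $\Pi_{\mathrm{out}}=I_M\otimes Q$ being trivial on $M$ is precisely the nontrivial content of the BPT argument, and it is the step your plan leaves unproved.
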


\begin{corollary}\label{cor:dl:prodstate}
    Suppose $\calC$ is a codespace as above, and $M$ a region of qubits such that $N(M) = M \cup \delta^+(M)$ is a correctable region. Then   for each $\rho \in \calC$, 
    \begin{align}
        \rho_{MC} = \rho_M \ot \rho_{C}.
    \end{align}
    Where $C = N(M)^c$.
\end{corollary}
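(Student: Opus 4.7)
The plan is to apply Disentangling Lemma v2 (\Cref{lem:disentangling-two}) directly and then use the fact that the resulting disentangling unitary acts only on $\partial^+(M)$, which is disjoint from both $M$ and $C$. This disjointness means that conjugation by the disentangling unitary leaves the reduced state $\rho_{MC}$ invariant, so it suffices to verify the product structure after conjugation.

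First, I invoke \Cref{lem:disentangling-two} to obtain a unitary $U = U_{\partial^+(M)}$ and a decomposition $\calH_{\partial^+(M)} = \calH_B \otimes \calH_{B'}$ such that $U^\dagger \Pi U = \ketbra{\eta_{MB}}{\eta_{MB}} \otimes \Pi_{B'C}$, where $\Pi$ denotes the projector onto $\calC$. For any $\rho \in \calC$ we have $\Pi \rho \Pi = \rho$, so $\tilde\rho := U^\dagger \rho U$ satisfies
\[
\tilde\rho = \bigl(\ketbra{\eta_{MB}}{\eta_{MB}} \otimes \Pi_{B'C}\bigr)\, \tilde\rho\, \bigl(\ketbra{\eta_{MB}}{\eta_{MB}} \otimes \Pi_{B'C}\bigr).
\]
Squeezing $\tilde\rho$ between these two projectors on the $MB$ register forces the $MB$-marginal of $\tilde\rho$ to coincide with the rank-one projector $\ketbra{\eta_{MB}}{\eta_{MB}}$, yielding the tensor product form $\tilde\rho = \ketbra{\eta_{MB}}{\eta_{MB}} \otimes \sigma_{B'C}$ for some state $\sigma_{B'C}$ supported on the image of $\Pi_{B'C}$.

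Next, since $U$ is supported on $\partial^+(M) = B \cup B'$ and both $M$ and $C$ are disjoint from $\partial^+(M)$, the reduced state on $M \cup C$ is unchanged by conjugation: $\rho_{MC} = \tilde\rho_{MC}$. Tracing out the $B$ and $B'$ registers from the product form of $\tilde\rho$ gives
\[
\rho_{MC} = \tilde\rho_{MC} = \bigl(\tr_B \ketbra{\eta_{MB}}{\eta_{MB}}\bigr) \otimes \bigl(\tr_{B'} \sigma_{B'C}\bigr),
\]
which is a tensor product across $M$ and $C$. Matching the two factors with the marginals of $\rho_{MC}$ immediately gives $\rho_{MC} = \rho_M \otimes \rho_C$.

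The proof is essentially bookkeeping once \Cref{lem:disentangling-two} is applied; no serious obstacle arises. The only subtlety is confirming that $\tilde\rho$ genuinely factorizes as claimed, which follows from the rank-one nature of the $\ketbra{\eta_{MB}}{\eta_{MB}}$ factor in $U^\dagger \Pi U$, together with the observation that $M$ and $C$ lie outside the support of $U$ so that the partial trace over $BB'$ commutes with the conjugation.
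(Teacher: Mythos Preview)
Your proof is correct and follows exactly the approach the paper intends: the corollary is stated without explicit proof precisely because it is an immediate consequence of \Cref{lem:disentangling-two}, and your argument fills in those details cleanly. The key observations you use---that the disentangling unitary is supported on $\partial^+(M)$ and hence leaves the $MC$-marginal invariant, and that the rank-one factor forces the tensor-product structure---are exactly what the paper is relying on implicitly.
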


Combining \Cref{thm:le-code-contained-in-commproj-code} and \Cref{cor:dl:prodstate} of \Cref{lem:disentangling-two}, we get a lower bound for preparing local codes with high distance and lots of qubits that are pairwise correlated. 

\begin{theorem}\label{thm:code-lb-exact}
    Let $\ket{\psi}$ be an $n$-qubit quantum state and $\ell\geq 1$ an integer. Suppose $\calC_\ell(\psi)$ is $(\eps, \delta)$-robust for $\delta < 1/8$. 
    If there exists $t$ disjoint regions $S_1, S_2, \dots, S_t \sse [n]$ of the qubits such that $\frac{1}{2} \tnorm{\psi_{S_i S_j} - \psi_{S_i}\ot \psi_{S_j}} > 2\delta$ for each $i\neq j \in [t]$. 
    Then preparing any state $\eps$-close to $\calC_\ell(\psi)$ from stabilizer requires a circuit that has blowup at least $ B \geq \Omega\pbra{\frac{\min\{d_\ell(\psi), t\} \cdot t}{\ell^2 n}}^{1/5}$ 
\end{theorem}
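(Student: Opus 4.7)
The plan is to prove this by contradiction. Suppose $\ket{\chi} = U\ket{\phi^S}$ is within trace distance $\epsilon$ of some state $\ket{\psi^{(0)}} \in \calC_\ell(\psi)$, where $U$ has blowup $B$ and $\ket{\phi^S}$ is a stabilizer state. First I would invoke \Cref{thm:robust-code-stab-containment} (the robust infectiousness property) to conclude that $\calC_\ell(\chi) = U \calC_S$ for some stabilizer code $\calC_S$ with $B\ell$-local generators, that this code has distance at least $d_\ell(\psi)/B^2$, and that $\calC_\ell(\chi)$ is the $+1$-eigenspace of pairwise commuting projectors $\{\Pi'_g\} = \{U\Pi_g U^\dagger\}$, each supported on at most $B \cdot B\ell = B^2\ell$ qubits. (One also needs that $\psi^{(0)}$ satisfies the same local reduced density matrices as $\psi$, which follows from local indistinguishability, \Cref{fact:codes-recovery-indistinguishability}, provided the distance is large enough.)

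The next step is the central combinatorial step: show that for many indices $i$, the neighborhood $N(S_i) := S_i \cup \partial^+(S_i)$ (with respect to the projectors $\Pi'_g$) is both (a) small enough to be a correctable region of $\calC_\ell(\chi)$ and (b) disjoint from some other $S_j$. Using the disjointness of the $S_i$, a double-counting argument --- bounding $\sum_i |\partial^+(S_i)| \leq \sum_g |\textsf{supp}(\Pi'_g)|^2 \leq n \cdot (B^2\ell)^2$ --- yields $\sum_i |N(S_i)| = O(nB^4\ell^2)$. Markov's inequality then gives a constant fraction of ``small'' indices $i$ with $|N(S_i)| = O(nB^4\ell^2/t)$. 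For each such small $i$, at most $|N(S_i)|$ other $S_j$ can intersect $N(S_i)$ since the $S_j$ are disjoint. Provided $t \gg nB^4\ell^2/t$ (so there is room to find a compatible $j$) and $nB^4\ell^2/t \lesssim d_\ell(\psi)/B^2$ (so $N(S_i)$ is correctable), we can pick such a pair $(i,j)$.

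For this pair, \Cref{cor:dl:prodstate} applied to $\ket{\chi} \in \calC_\ell(\chi)$ with $M = S_i$ gives a product structure $\chi_{S_i S_j} = \chi_{S_i} \otimes \chi_{S_j}$, since $S_j$ lies in the complement of $N(S_i)$. Combining monotonicity of trace distance under partial trace with the triangle inequality,
\begin{equation*}
    \tfrac{1}{2}\tnorm{\psi^{(0)}_{S_iS_j} - \psi^{(0)}_{S_i}\otimes \psi^{(0)}_{S_j}} \leq 3\epsilon,
\end{equation*}
and local indistinguishability (assuming $|S_i|+|S_j| < d_\ell(\psi)$) replaces $\psi^{(0)}$ with $\psi$ on these small regions. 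This contradicts the hypothesis $\tfrac{1}{2}\tnorm{\psi_{S_iS_j} - \psi_{S_i}\otimes\psi_{S_j}} > 2\delta$ whenever $3\epsilon \leq 2\delta$, which follows from $(\epsilon,\delta)$-robustness and the assumption $\delta < 1/8$.

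The main obstacle --- and the place where the exponent in the final bound gets fixed --- is balancing the two quantitative constraints from the combinatorial step: $t^2 \gtrsim n B^4 \ell^2$ (existence of a compatible $j$) and $n B^4 \ell^2 / t \lesssim d_\ell(\psi)/B^2$ (correctability of $N(S_i)$). Contradicting both for all $B$ below the claimed threshold requires carefully optimizing the Markov threshold against these two inequalities, producing the stated bound $B \gtrsim (\min\{d_\ell(\psi),t\}\cdot t/(\ell^2 n))^{1/5}$. I expect bookkeeping around the fact that $\psi^{(0)}$ need not equal $\psi$ (only agrees with it $\ell$-locally, up to distance $d_\ell(\psi)$) and ensuring the correlation hypothesis transfers cleanly to $\psi^{(0)}$ on regions $S_iS_j$ of controlled total size to be the subtlest part of the write-up, but no new conceptual ingredients are needed beyond those developed earlier in the paper.
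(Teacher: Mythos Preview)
Your proposal is essentially the paper's proof: invoke robust infectiousness (\Cref{thm:robust-code-stab-containment}) to get the perturbed stabilizer code $\calC'=\calC_\ell(\chi)$ with $B^2\ell$-local commuting projectors and distance $\Omega(d/B^2)$, run a counting argument to find a pair $(S_i,S_j)$ with $N(S_i)$ correctable and $S_j$ outside it, apply \Cref{cor:dl:prodstate}, and contradict the correlation hypothesis. Your sum-plus-Markov combinatorial step is equivalent to the paper's average-degree bound on the graph with vertices $S_1,\dots,S_t$ and an edge whenever a common projector touches both regions.

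The one place you diverge is the final comparison. You apply disentangling to $\chi$ itself, route through $\psi^{(0)}$ via the triangle inequality and local indistinguishability, and obtain $\tfrac12\tnorm{\psi_{S_iS_j}-\psi_{S_i}\otimes\psi_{S_j}}\leq 3\epsilon$, then assert $3\epsilon\leq 2\delta$ --- but $(\epsilon,\delta)$-robustness only guarantees $\epsilon\leq\delta$, so that inequality does not follow. The paper sidesteps this by using another output of \Cref{thm:robust-code-stab-containment}, the approximate containment $\calC_\ell(\psi)\sse_\delta\calC'$: it picks $\psi'\in\calC'$ with $\tfrac12\tnorm{\psi-\psi'}\leq\delta$ directly (no detour through $\psi^{(0)}$ or local indistinguishability), applies disentangling to $\psi'$, and gets the bound $2\delta$ on the nose. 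This is a bookkeeping fix rather than a conceptual gap --- either adopt the paper's route, or note that your argument proves the theorem with $3\delta$ in place of $2\delta$ in the hypothesis.
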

\begin{corollary}\label{cor:ham-dl-exp-lb}
    Consider an $\ell$-local Hamiltonian $H = h_i$ with $\norm{h_i}_\infty \leq 1$ for each $i$ and a spectral gap $\Delta$ between its smallest two eigenvalues. Suppose its groundspace, $\calC$ has distance $\omega(1)$. If there exists a single state $\ket{\psi}\in \calC$, and $t$ disjoint regions $S_1, S_2, \dots, S_t \sse [n]$ of the qubits such that $\frac{1}{2} \tnorm{\psi_{S_i S_j} - \psi_{S_i}\ot \psi_{S_j}} > \gamma$ for each $i\neq j \in [t]$, then preparing \emph{any} state that is $\frac{\gamma^2 \Delta}{4m}$-close in trace distance to $\calC$ from a stabilizer state, requires a circuit that has blowup at least $B \geq \Omega\pbra{\frac{\min\{d_\ell(\psi), t\} \cdot t}{\ell^2 n}}^{1/5}$.
\end{corollary}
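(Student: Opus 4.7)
}
The plan is to reduce Corollary~\ref{cor:ham-dl-exp-lb} to Theorem~\ref{thm:code-lb-exact} by using the spectral gap of $H$ to supply the required robustness of $\calC$, and by identifying $\calC$ with the local-equivalence code $\calC_\ell(\psi)$ via Fact~\ref{fact:arbcode-is-lecode}.

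First I would apply Proposition~\ref{prop:gap-to-robust} to $H$: since each $h_i$ is $\ell$-local with $\norm{h_i}_\infty \leq 1$ and the gap between the two smallest eigenvalues is $\Delta \leq 1$, the groundspace $\calC$ is $(\eps, \sqrt{\eps m/\Delta}, \ell)$-robust for every $\eps \in [0,1]$. Choosing $\eps := \gamma^2 \Delta/(4m)$ gives $\delta := \sqrt{\eps m/\Delta} = \gamma/2$, so that $\calC$ is $(\eps, \gamma/2, \ell)$-robust. The constraint $\eps \leq \delta$ from the definition is automatic because $\Delta \leq 1$ and $m \geq n \geq 1$, so $\eps = \gamma^2\Delta/(4m) \leq \gamma/2 = \delta$ whenever $\gamma \leq 1$.

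Next, since the distance of $\calC$ is $\omega(1)$ and $\ell$ is constant, the hypothesis $d(\calC) > \ell$ of Fact~\ref{fact:arbcode-is-lecode} is satisfied, so $\calC = \calC_\ell(\psi)$ for the given $\ket{\psi} \in \calC$; in particular $d_\ell(\psi) = d(\calC)$. The hypothesis $\tfrac{1}{2}\tnorm{\psi_{S_iS_j} - \psi_{S_i} \otimes \psi_{S_j}} > \gamma$ of the corollary is then exactly the pairwise-correlation hypothesis $\tfrac{1}{2}\tnorm{\psi_{S_iS_j} - \psi_{S_i}\otimes\psi_{S_j}} > 2\delta$ required by Theorem~\ref{thm:code-lb-exact}. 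Invoking that theorem with the parameters $(\eps,\delta,\ell)$ above then yields the claimed blowup lower bound $B \geq \Omega\pbra{\min\{d_\ell(\psi), t\} \cdot t/(\ell^2 n)}^{1/5}$ for any state prepared from a stabilizer state within trace distance $\eps = \gamma^2\Delta/(4m)$ of $\calC_\ell(\psi) = \calC$.

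The only technical wrinkle is the requirement $\delta < 1/8$ of Theorem~\ref{thm:code-lb-exact}, which here translates to $\gamma < 1/4$. If the corollary's hypothesis holds with a larger $\gamma$, one may simply replace $\gamma$ by any fixed constant strictly less than $1/4$: the pairwise-correlation hypothesis is preserved by monotonicity, and the conclusion only becomes stronger when we simultaneously decrease $\eps$ to match (losing only absorbable constants in the $\Omega(\cdot)$). Conceptually the main point, and the source of the quadratic $\gamma^2$ dependence in the final closeness parameter, is that Proposition~\ref{prop:gap-to-robust} loses a square root when converting spectral gap into robustness, so to achieve target robustness radius $\delta \approx \gamma$ the approximation error must be taken to scale as $\gamma^2$. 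I do not expect any step here to be a serious obstacle, since all the heavy lifting is done inside Theorem~\ref{thm:code-lb-exact} (and Theorem~\ref{thm:robust-code-stab-containment} underlying it); this corollary is essentially a parameter-matching calculation.
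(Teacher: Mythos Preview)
Your proposal is correct and follows exactly the route the paper intends: the corollary is not given a separate proof in the paper, but is set up as an immediate consequence of Theorem~\ref{thm:code-lb-exact} once one feeds in the robustness from Proposition~\ref{prop:gap-to-robust} and identifies $\calC$ with $\calC_\ell(\psi)$ via Fact~\ref{fact:arbcode-is-lecode}. Your parameter-matching (taking $\eps = \gamma^2\Delta/(4m)$ so that $\delta = \gamma/2$ and $2\delta = \gamma$) is precisely the intended calculation; the only caveat is that your remark about the $\gamma \geq 1/4$ case making the conclusion ``only stronger'' is slightly off---shrinking $\gamma$ shrinks the closeness threshold $\eps$, which formally weakens the statement---but in the paper's sole application one has $\gamma = 1/16$, so this wrinkle is moot.
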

\begin{proof}[Proof of \Cref{thm:code-lb-exact}]
    Suppose $\ket{\phi'}= U\ket{\phi^S}$ and $\frac{1}{2}\tnorm{\phi - \phi'}\leq \eps$ for some $\phi \in \calC_\ell(\psi)$, some stabilizer state $\ket{\phi^S}$ and a circuit $U$ with blowup $B$. 
    Note that since $\ket{\phi}\in \calC_\ell(\psi)$,  $\calC_\ell(\phi) = \calC_\ell(\psi)$ and so $d:= d_\ell(\phi) = d_\ell(\psi)$. Suppose for the sake of contradiction that $B< \pbra{d/\ell}^{1/2}$ and $B < \pbra{(\min\{d, t\}-1)t/(2 \ell^2 n)}^{1/5}$. 
    By \Cref{thm:robust-code-stab-containment} and \Cref{prop:gap-to-robust}
        since $B< \pbra{d/\ell}^{1/2}$, 
    \begin{align}
        \calC_\ell(\psi) = \calC_\ell(\phi)\sse_{\delta} \calC' \label{eq:explcodelb:ssestab}
    \end{align}
    where $\calC'$ has distance $\geq d/B$ and is the common $+1$-eigenspace of at most $n$ commuting local projectors $\{\Pi_i\}$ that are $B^2\ell$-local. Let $G$ be the graph with $S_1, \dots, S_t$ as vertices, and an edge between $S_a\neq S_b$ if there exists a $\Pi_i$ that acts nontrivially on both $S_a$ and $S_b$. Since there are at most $n$ different projectors, each acting nontrivially on at most $B^2\ell$ qubits, the total number of edges is $|E| \leq n \cdot (B^2 \ell)^2$. Therefore the average degree is $\widetilde{\textsf{deg}} = 2 |E| / |T| \leq 2B^4 \ell^2 n / t$. By a simple counting argument there exists some vertex $S_a$ with degree at most $\widetilde{\textsf{deg}}$. Now the neighborhood of $S_a$ has size at most $|N(S_a)|\leq  \widetilde{\textsf{deg}} + 1 \leq 2B^4 \ell^2 n/t + 1$. 
    
    Since we assumed for the sake of contradiction that $B < \pbra{(\min\{d, t\}-1)t/(2 \ell^2 n)}^{1/5}$,  we have that $|N(S_a)| < \min\{d/B, t\}$. Therefore, since $\calC'$ has distance $\geq d/B$, the region $N(S_a)$ is correctable and there exists an $S_b \in N(S_a)^c$. So \Cref{cor:dl:prodstate} implies that
    \begin{align}
        \frac{1}{2}\tnorm{\psi'_{S_aS_b} - \psi'_{S_a}\ot \psi'_{S_b}} = 0 && \text{for each} \ \ket{\psi'}\in \calC'
    \end{align}
    By \Cref{eq:explcodelb:ssestab} there exists a $\ket{\psi'} \in \calC'$ such that $\frac{1}{2}\tnorm{\psi - \psi'} \leq \delta$. Therefore, using the triangle inequality and that the trace norm is non-increasing under partial trace,
    \begin{align}
        \frac{1}{2}\tnorm{\psi_{S_a S_b} - \psi_{S_a} \ot \psi_{S_b}} &\leq \frac{1}{2} \tnorm{\psi'_{S_a S_b} - \psi'_{S_a}\ot \psi'_{S_b}} + 2 \cdot \frac{1}{2} \tnorm{\psi - \psi'}\\
        &\leq 2 \delta.
    \end{align}
    This contradicts our theorem assumption that $\frac{1}{2}\tnorm{\psi_{S_a S_b} - \psi_{S_a} \ot \psi_{S_b}} > 2 \delta$. Therefore either $B > \pbra{d/ \ell}^{1/2}$ or $B >\pbra{(\min\{d, t\}-1)t/(2 \ell^2 n)}^{1/5} = \Omega(\min\{d, t\}t/\ell^2 n)^{1/5}$. Note that we also have $\pbra{d / \ell}^{1/2} \geq \Omega(\min\{d, t\} t/\ell^2 n)^{1/5}$ since we assume that $B \geq 1$ in the Theorem statement. So in either case we have that $B \geq \Omega(\min\{d, t\} t/\ell^2 n)^{1/5}$, completing the proof.

    \end{proof}

\paragraph{Lower bound for approximately preparing a history state}
An explicit state that we can prove a lower bound for using \Cref{cor:ham-dl-exp-lb} is the following Feynman-Kitaev history state for preparing the CAT state.
\begin{align}
    \ket{\Hcat}:= \frac{1}{\sqrt{n}} \sum_{t=0}^n \ket{\textsf{unary(t)}}_{\textsf{time}} \ot\ket{\textsf{CAT}_t} \ket{0^{n-t}}_{\textsf{state}}
\end{align}
This state encodes the history of preparing the $\ket{\textsf{CAT}} = \frac{1}{\sqrt{2}}(\ket{0^n} + \ket{1^n})$ state by starting with the initial state $\ket{0^n}$, applying the $H$ gate to the first qubit at time $t=1$, and for each subsequent time steps $t$ apply a CNOT gate from qubit 1 to qubit $t$. Thus at time $t$, the state of the computation is $\ket{\textsf{CAT}_t}\ket{0^{n-t}}$.
\thmcathistorystaterobustlb*
\begin{proof}

The $\ket{\textsf{H(CAT)}}$ state is the unique ground state of the local Hamiltonian
\[H = H_{in} + H_{prop} + H_{clock}.\]
$H_{prop}$ enforces that the ground state is a valid history state of the form $\sum_{t} \ket{\textsf{unary}(t)} \ket{\psi_t}$ such that at each time $t$, $\ket{\psi_t} = U_t \ket{\psi_{t-1}}$ for $U_t$ being the $t$th gate in our circuit constructing the CAT state. 
$H_{in}$ enforces that the state at time $t=0$ is $\ket{0^n}$,
and $H_{clock}$ enforces that the clock register only contains valid unary encodings. We refer the reader to \cite{nirkhe2018approximate} for the explicit definition of the Hamiltonian terms. For our purposes, we will use the well-known facts that $H$ is $5$-local, is the sum of $\Theta(n)$ terms and that it has a spectral gap of $\Delta = \Omega(1/n^2)$ \cite{aharonov2008adiabatic}. Furthermore, since $\ket{\Hcat}$ is the unique groundstate of $H$, the groundspace has distance $>2n$. We now show that the CAT history state has many qubits that are highly pairwise correlated.
\begin{claim}
    Let $S$ be the subset of the first $n/2$ qubits in the $\textsf{state}$ register of $\ket{\textsf{H(CAT)}}$. Each $i\neq j \in S$ satisfies $\frac{1}{2}\tnorm{\Hcat_{ij} - \Hcat_i \ot \Hcat_j} \geq 1/16$
\end{claim}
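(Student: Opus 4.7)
The plan is to exploit a structural feature of $\ket{\Hcat}$: for $i<j$ in $S$, every snapshot $\ket{\textsf{CAT}_t}\ket{0^{n-t}}$ has zero amplitude on $\ket{01}$ when restricted to qubits $i,j$. Indeed, if $j>t$ then qubit $j$ sits in the $\ket{0^{n-t}}$ block, and if $j\le t$ then qubits $i,j$ both lie inside $\ket{\textsf{CAT}_t}$, whose computational-basis support is $\{\ket{0^t},\ket{1^t}\}$ --- neither of these gives $\ket{01}$ on qubits $i,j$. Because the clock register carries mutually orthogonal unary encodings across the superposition over $t$, the snapshot contributions are incoherent on the state register, so $\bra{01}\Hcat_{ij}\ket{01}=0$.

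Next I would lower-bound the trace distance by testing against the projector $\Pi=\ketbra{01}{01}$ on qubits $i,j$:
\[
\tfrac{1}{2}\tnorm{\Hcat_{ij}-\Hcat_i\ot \Hcat_j}\ \geq\ \abs{\tr(\Pi \Hcat_{ij}) - \tr(\Pi(\Hcat_i\ot \Hcat_j))}\ =\ \bra{0}\Hcat_i\ket{0}\cdot \bra{1}\Hcat_j\ket{1}.
\]
The two single-qubit probabilities read off immediately from the snapshot structure: at time $t$, qubit $i$ is in $\ket{0}$ with probability $1$ if $t<i$ and with probability $\tfrac12$ if $t\ge i$ (when qubit $i$ is part of $\ket{\textsf{CAT}_t}$), giving $\bra{0}\Hcat_i\ket{0}=\tfrac{i+n+1}{2(n+1)}>\tfrac12$ for $i\ge 1$. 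Similarly, qubit $j$ is $\ket{1}$ with probability $0$ for $t<j$ and $\tfrac12$ for $t\ge j$, so $\bra{1}\Hcat_j\ket{1}=\tfrac{n-j+1}{2(n+1)}>\tfrac14$ for $j\le n/2$. Multiplying yields a bound exceeding $1/8$, comfortably beating $1/16$.

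There is no real obstacle beyond careful bookkeeping. The conceptual point is that the history construction forces qubit $j$ to take value $\ket{1}$ only in lockstep with every earlier qubit $i$, so the event $\ket{0}_i\ket{1}_j$ is forbidden in $\Hcat_{ij}$, whereas nothing in the tensor product $\Hcat_i\ot \Hcat_j$ enforces this correlation. As long as $i,j\le n/2$, both marginals remain bounded away from $0$ and $1$, and the induced gap on the single event $\{01\}$ is $\Omega(1)$, independent of $n$.
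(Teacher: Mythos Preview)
Your proof is correct and follows essentially the same route as the paper: both test the observable $\proj{0}_i\otimes\proj{1}_j$ (your $\Pi=\ketbra{01}{01}$), observe that the joint state has zero weight on this event while the product of marginals has weight $>1/8$, and conclude via the variational characterization of trace distance. Your use of the inequality $\tfrac{1}{2}\tnorm{\rho-\sigma}\geq |\tr(\Pi(\rho-\sigma))|$ for $0\le \Pi\le I$ is slightly tighter than the paper's Schatten-duality step (which carries an extra factor of $1/2$), so you actually obtain $>1/8$ rather than $1/16$.
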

\begin{proof}
    This proof closely follows part of the proof of Theorem 1 in \cite{nirkhe2018approximate}. For each $i <j \in S$ we consider the observables $A_i = \proj{0}_i$ and $B_j = \proj{1}_j$ acting on the $\textsf{state}$ register of $\ket{\Hcat}$.

    \begin{align}
        \tr(\Hcat_{ij} A_i B_j)
                &= \frac{1}{n} \sum_{t = 0}^n  \bra{\textsf{CAT}_t 0^{n-t}} A_i B_j \ket{\textsf{CAT}_{t} 0^{n-t}} = 0
    \end{align}
    Each of the terms in this sum will be $0$ since each state $\ket{\textsf{CAT}_t 0^{n-t}} = \ket{\textsf{CAT}_t} \ket{0^{n-t}}$ is only supported on basis states $\ket{0^n}$ and $\ket{1^t0^{n-t}}$, neither of which have a 0 occur before a 1. On the other hand, 
    \begin{align}
        \tr(\Hcat_{ij} A_i) = \frac{1}{n} \sum_{t = 0}^n  \bra{\textsf{CAT}_t 0^{n-t}} A_i \ket{\textsf{CAT}_{t} 0^{n-t}} \geq 1/2
    \end{align}
    since for each $t\in [n]$, $\bra{0^{n-t}} A_i \ket{\textsf{CAT}_{t} 0^{n-t}} = 1$ if $t< i$ and is $1/2$ otherwise. Furthermore, 
    \begin{align}
        \tr(\Hcat_{ij} B_j) = \frac{1}{n} \sum_{t = 0}^n  \bra{\textsf{CAT}_t 0^{n-t}} B_j \ket{\textsf{CAT}_{t} 0^{n-t}} \geq 1/4
    \end{align}
    since $j \leq  n/2$ and $\bra{0^{n-t}} B_j \ket{\textsf{CAT}_{t} 0^{n-t}} = \frac{1}{2}$ when $t\geq j$ and $0$ otherwise. Therefore,
    \begin{align}
        \tr(\Hcat_{i}\ot \Hcat_j A_i B_j) = \tr(\Hcat_i A_i) \cdot \tr(\Hcat_j B_j) \geq 1/8.
    \end{align}
    By the duality of Schatten norms,
    \begin{align}
        \frac{1}{2} \tnorm{\Hcat_{ij} - \Hcat_i \ot \Hcat_j} &\geq \frac{1}{2}\abs{\tr(\Hcat_{ij} A_i) + \tr(\Hcat_{ij} B_j) - \tr(\Hcat_{i}\ot \Hcat_j A_i B_j)} 
        &\geq \frac{1}{16}.
    \end{align}
\end{proof}

    Now we can apply \Cref{cor:ham-dl-exp-lb} to get a lower bound for approximately preparing $\ket{\Hcat}$. Since $H$ is $\ell=5$-local, has gap  $\Delta \geq \Omega(1/n^2)$, the groundspace has distance $d>2n$, and we satisfy the conditions of \Cref{cor:ham-dl-exp-lb} with $t = n/2$ and $\gamma = 1/16$, we get that approximately preparing a state $\frac{1}{16^2 n^2 \cdot  4m} = \Omega(1/n^3)$- close in trace distance to $\ket{\Hcat}$ starting from a stabilizer sate requires a circuit with blowup at least $\Omega(n^{2/5})$. Since the blowup of a circuit $U$ is at most $2^\depth(U)$, it follows that the required depth is at least $\Omega(\log n)$.
\end{proof}

\subsubsection{Lower bounds for codes with dimension that is not a power of \texorpdfstring{$2$}{2}}

In this section, we show the following theorem

\thmdimrobustlb*

We will use the following fact that follows from basic linear algebra.
\begin{fact}\label{fact:subspace-dim}
    Consider two subspaces $\calC_1, \calC_2 \sse \calH$ of some bounded dimensional Hilbert space $\calH$. If $\textsf{dim}(\calC_1) > \textsf{dim}(\calC_2)$, then there exists a vector $\ket{\psi} \in \calC_1$ that is orthogonal to all vectors in $\calC_2$.
\end{fact}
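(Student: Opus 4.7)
The plan is to prove this via a standard dimension count, using the fact that for any subspace $\calC_2$ of a finite-dimensional Hilbert space $\calH$, the orthogonal complement $\calC_2^\perp := \{\ket{\phi}\in \calH : \braket{\phi|\chi} = 0 \text{ for all } \ket{\chi}\in \calC_2\}$ is itself a subspace with $\dim(\calC_2^\perp) = \dim(\calH) - \dim(\calC_2)$. The statement we want is then equivalent to showing $\calC_1 \cap \calC_2^\perp \neq \{0\}$, because a vector in this intersection lies in $\calC_1$ and is orthogonal to every vector in $\calC_2$.

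First I would invoke the standard dimension formula for the sum and intersection of two subspaces, namely
\begin{align*}
\dim(\calC_1 + \calC_2^\perp) + \dim(\calC_1 \cap \calC_2^\perp) = \dim(\calC_1) + \dim(\calC_2^\perp).
\end{align*}
Since $\calC_1 + \calC_2^\perp \sse \calH$, its dimension is at most $\dim(\calH)$, so rearranging gives
\begin{align*}
\dim(\calC_1 \cap \calC_2^\perp) \geq \dim(\calC_1) + \dim(\calC_2^\perp) - \dim(\calH) = \dim(\calC_1) - \dim(\calC_2).
\end{align*}
By the hypothesis $\dim(\calC_1) > \dim(\calC_2)$, the right-hand side is strictly positive, so $\calC_1 \cap \calC_2^\perp$ contains a nonzero vector $\ket{\psi}$. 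This $\ket{\psi}$ lies in $\calC_1$ and is orthogonal to every vector in $\calC_2$, which is exactly what the fact asserts.

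An equivalent route, if one prefers to avoid citing the sum/intersection formula, is to consider the orthogonal projector $\Pi_2$ onto $\calC_2$ and view its restriction $\Pi_2|_{\calC_1} : \calC_1 \to \calC_2$ as a linear map. The rank–nullity theorem gives $\dim(\ker(\Pi_2|_{\calC_1})) = \dim(\calC_1) - \mathrm{rank}(\Pi_2|_{\calC_1}) \geq \dim(\calC_1) - \dim(\calC_2) > 0$, and any $\ket{\psi}\in \ker(\Pi_2|_{\calC_1})$ satisfies $\Pi_2\ket{\psi}=0$, hence is orthogonal to all of $\calC_2$. Both approaches are elementary; there is no real obstacle — the only thing to state carefully is the finite-dimensionality of $\calH$, which justifies $\dim(\calC_2^\perp) = \dim(\calH) - \dim(\calC_2)$ and the rank–nullity identity. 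I would present whichever of the two framings fits most smoothly with how the fact is applied later in the paper.
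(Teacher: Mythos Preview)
Your proof is correct, and the overall strategy---showing $\calC_1 \cap \calC_2^\perp \neq \{0\}$ by a dimension count---matches the paper's. In fact your version is cleaner: the paper's proof asserts the identity $\dim(\calC_1) = \dim(\calC_1 \cap \calC_2) + \dim(\calC_1 \cap \calC_2^\perp)$, which is false in general (take $\calC_1 = \mathrm{span}\{e_1+e_2\}$, $\calC_2 = \mathrm{span}\{e_1\}$ in $\CC^2$), whereas your use of the sum/intersection formula $\dim(\calC_1 + \calC_2^\perp) + \dim(\calC_1 \cap \calC_2^\perp) = \dim(\calC_1) + \dim(\calC_2^\perp)$, or alternatively rank--nullity on $\Pi_2|_{\calC_1}$, is airtight.
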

\begin{proof}
    Since $\calC_2$ is a closed subspace, we can split up our Hilbert space into $\calH = \calC_2 \oplus \calC_2^\perp$ where $\calC_2^\perp = \{\ket{y} \in \calH : \braket{y | x} = 0 \ \forall \ket{x} \in \calC_2\}$. So, we have that
    \begin{align}
        \textsf{dim}(\calC_1) = \textsf{dim}(\calC_1 \cap \calC_2)  + \textsf{dim}(\calC_1 \cap \calC_2^\perp).
    \end{align}
    Therefore,
    \begin{align}
        \textsf{dim}(\calC_1\cap \calC_2^\perp) = \textsf{dim}(\calC_1) - \textsf{dim}(\calC_1 \cap \calC_2) \geq \textsf{dim}(C_1) - \textsf{dim}(C_2) > 0.
    \end{align}
    Thus, there exists a state in $\calC_1$ that is orthogonal to $\calC_2$. 
\end{proof}

\begin{proof}[Proof of \Cref{thm:dim-power2-lb}]
    Suppose for the sake of contradiction that $\calC$ has dimension that is not a power of 2 and that there exists a state $\ket{\psi'}$ that can be prepared starting from a stabilizer state with a circuit with blowup $B < \sqrt{d/\ell}$ such that $\frac{1}{2} \tnorm{\psi - \psi'} \leq \eps \leq \frac{\Delta}{64 m}$ for some $\ket{\psi} \in \calC$. Then \Cref{cor:infectiousness-ham-technical} implies that there exists a stabilizer code $\calC_S$ such that $\calC$ is $\sqrt{\eps m/\Delta} \leq 1/8$-approximately equivalent to $\calC$. So for each state in $\calC$ it has trace distance at most $1/8$ from $U \calC_S$  and vice versa. Therefore, there is no state in $\calC$ that is orthogonal to all the states in $U\calC_S$ and vice versa. Therefore, \Cref{fact:subspace-dim} implies that $\textsf{dim}(\calC) = \textsf{dim}(U \calC_S)$. But since $\calC_S$ is a stabilizer code, it has dimension that is a power of 2, which is a contradiction. Therefore $B\geq \sqrt{d/\ell}$ and so the depth is at least $\log B \geq \frac{1}{2} \log (d/\ell)$.
\end{proof}

 \section{Connections and basic properties of the magic hierarchy}

In this section we prove some of the basic properties we mentioned in the introduction about circuits in the magic hierarchy. We begin by proving that the level of a circuit in the magic hierarchy is related to other known complexity measures in \Cref{ssec:connections}, we then prove the incomparability of $\altCQ[1]$ and $\altQC[1]$ in \Cref{ssec:CQneqQC}

\subsection{Equivalence with Fanout, intermediate measurement, and T-depth} \label{ssec:connections}

In this subsection we elaborate on the connections between the number of alternations and other interesting complexity measures mentioned in \Cref{ssec:intro:connections}. These results follow straightforwardly from known results --- some of which are folklore. We restate \Cref{prop:fanouteqMHlevel,prop:IMleqMHlevel,prop:IMpareqMHlevel} and sketch their proofs below. 

When comparing circuit models and claiming that one can simulate the other, we implicitly are allowing the use of a polynomial number of ancillary qubits $\ket{0}$ and partial trace.

For a circuit in the magic hierarchy, we use $\cliffdepth$ to refer to the number of Clifford circuits, and $\qnczrounds$ as the number of $\QNCZ$ circuits. It is straightforward to see that the $\cliffdepth$ and $\qnczrounds$ required for a circuit is bounded by the level of the magic hierarchy, $\MHlevel$, as
    \begin{align}
        \frac{1}{2}\MHlevel \  \leq \ \cliffdepth, \ \qnczrounds \ \leq \ \frac{1}{2} \MHlevel + 1 \label{eq:level-bound-Cliff-QNC}.
    \end{align}

    \paragraph{Adaptive intermediate measurements:} 
    Adaptive intermediate measurements are another surprisingly powerful nonlocal operation that is used to supplement the power of constant depth circuits. 
    We consider constant-depth quantum circuits ($\QNCZ$)that are allowed mid-circuit measurements where classical processing is done on the measurement outcomes and fed back into the next stage of the circuit. See \Cref{ssec:intro:connections} for our more formal definition of this model.
                                
    We consider the case that the classical circuits are restricted to be polynomials-sized \emph{parity circuits}, meaning that they allow only gates computing the parity of a subset of the bits $\textsf{parity}(x) = \sum_i x_i \mod 2$. Furthermore, as is typical for classical circuits, the output of the parity gate has unbounded fan-out and can be copied many times. 
    In the case where the classical computation allows for only parity gates, the number of required intermediate measurement rounds, $\IMparityrounds$,  and magic hierarchy level are equivalent.
                \IMpareqMHlevel*
    \begin{proof}[Proof]
        A key insight from measurement-based quantum computing \cite{browne2011computational} is that Clifford circuits can be parallelized to constant-depth using teleportation, which involves making several Bell-basis measurements and Pauli corrections. This corresponds to a $\QNCZ$ circuit augmented with a single layer of intermediate measurements. For completeness, we show how to paralleize Clifford circuits with intermediate measurements. In particular, we highlight that the classical processing needed only involves computing parities of many inputs. Therefore $\IMparityrounds \leq \cliffdepth$.
         
        Using the standard gate teleportation technique, we will compress any Clifford circuit into constant depth up to Pauli corrections. Below we review the teleportation gadget.
        \begin{equation}
            \begin{quantikz}[row sep={.7cm,between origins}]
                \lstick{$\ket{\psi}$}  \qw &  \qw& \gate[2]{B^\dagger} & \meter{} & \cw            \rstick{$z$} \\
                \lstick{$\ket{0}$} \qw & \gate[2]{B} &         \qw         & \meter{}&   \cw  \rstick{$x$} \\
                \lstick{$\ket{0}$} \qw & \qw         &   \qw  & \qw \rstick{$ Z^z X^x\ket{\psi}$}
            \end{quantikz}
            \quad \text{where} \quad
            \begin{quantikz}[row sep={.7cm,between origins}]
                \qw & \gate[2]{B} \qw & \qw\\
                \qw & \qw & \qw
            \end{quantikz}
            =
            \begin{quantikz}[row sep={.7cm,between origins}]
                \qw & \gate{H} \qw & \ctrl{1}& \qw  \\
                \qw & \qw & \targ{} & \qw
            \end{quantikz}
        \end{equation}
        We can do this more generally for an $n$-qubit state by teleporting each of the qubits in parallel with $B^{\ot n}$. In this case, we will have $2n$ measurement outcomes $\vec{z},\vec{x} \in \zo^n$ so that our Pauli corrections are $Z(\vec{z})X(\vec{x}) $. If we apply some $n$-qubit unitary $U$ to the bottom wire after teleportation, it is as if we applied it directly to the state $\ket{\psi}$, but up to some Pauli corrections.
        \begin{equation}
            \begin{quantikz}
                \lstick{$\ket{\psi}$} & \qwbundle{n} & \gate{U} & \qw
            \end{quantikz}
            {\quad  \rightarrow \quad }
            \begin{quantikz}[row sep={.7cm,between origins}]
                \lstick{$\ket{\psi}$} & \qwbundle{n} &  \qw& \gate[2,disable auto height]{(B^\dagger)^{\ot n}} & \meter{} & \cw            \rstick{$\vec{z}$} \\
                \lstick{$\ket{0}$} & \qwbundle{n} & \gate[2,,disable auto height]{B^{\ot n}} &         \qw         & \meter{}&   \cw  \rstick{$\vec{x}$} \\
                \lstick{$\ket{0}$} & \qwbundle{n} & \qw         &         \gate{U}   & \qw \rstick{$ U Z(\vec{z})X(\vec{x})  \ket{\psi}$}
            \end{quantikz}
        \end{equation}
        In the special case where $U$ is Clifford, it has the nice property that it maps Pauli operators to Pauli operators, so $U Z(\vec{z})X(\vec{x})  =  Z(\vec{z'}) X(\vec{x'}) U $ for some $x', z' \in \zo^n$. Furthermore, each Clifford operator is a \emph{linear} map from Paulis to Paulis, so in particular the map from $(x,z)$ to $(x', z')$ is a linear map over $\FF_2$. We can continue doing this to compress $D$ layers of a circuit as follows.

        \begin{equation}\label{eq:gate-teleportation-full-circuit}
            \begin{quantikz}[row sep={.7cm,between origins}]
                \lstick{$\ket{\psi}$} & \qwbundle{n} & \qw & \gate{U_1}  & \gate[wires=2,disable auto height]{(B^\dagger)^{\otimes n}} \qw  & \meter{}\qw & \cw            \rstick{$\vec{x_1}$}\\
                \lstick{$\ket{0^n}$}  & \qwbundle{n} &  \gate[wires=2][1.2cm]{B^{\otimes n}} &\qw &\qw & \meter{} \qw & \cw            \rstick{$\vec{x_1}$}\\
                \lstick{$\ket{0^n}$} & \qwbundle{n} &  & \gate{U_2} &  \gate[wires=2,disable auto height]{(B^\dagger)^{\otimes n}} \qw  & \meter{}\qw & \cw            \rstick{$\vec{x_2}$}\\
\lstick{\vdots} & & \vdots &  &    & \meter{}\qw & \cw            \rstick{$\vec{x_2}$}\\
                & & \vdots & & &\\
                \lstick{\vdots} & & \gate[wires=2][1.2cm]{B^{\otimes n}} \qw & \qw &  \vdots  & \rstick{\vdots}\\
\lstick{$\ket{0^n}$} & \qwbundle{n} &  & \gate{U_D} \qw & \qw & \qwbundle{n} \rstick{$\ket{\psi'}$}
            \end{quantikz}
        \end{equation}
        Where $\ket{\psi'} = U_D \cdot  Z(\vec{z_{D-1}}) X(\vec{x_{D-1}})  \cdot U_{D-1} \dots Z(\vec{z_2})  X(\vec{x_2})\cdot  U_2 \cdot  Z(\vec{z_1}) X(\vec{x_1}) \cdot U_1 \ket{\psi}$. Now if all $U_i$ are Clifford, we can commute the Pauli's through so that $\ket{\psi'} = Z(\vec{z'}) X(\vec{x'})  \cdot U_D \dots U_2 U_1 \ket{\psi}$. Furthermore, the map from $(x,z)$ to $(x', z')$ is linear over $\FF_2$, we denote this map as $M$. Let $a$ be the number of measurements so that $M : \FF_2^{a} \to \FF_2^{2n}$.

        We now show that it is possible to compute the corrections $My$ for any measurement outcome $y \in \FF_2^a$ with a classical circuit with only parity gates with unbounded fan-in and fan-out. Note that for each $y\in \FF_2^a$, $(My)_i = \textsf{Parity}_{M_i}(y)$ where $\textsf{Parity}_{M_i}$ is the function that takes a parity of the subset of bits that are $1$ in the $i$th row of $M$. Therefore we can compute $My$ in parallel with one layer of Parity gates. Then conditioned $My = (x', z')$, we make the Pauli X and Z corrections. So we've shown that $\IMparityrounds \leq \cliffdepth$.

        On the other hand, we show that a Clifford circuit can also simulate one round of intermediate measurements when the classical processing only uses parity gates. First, for each qubit that is measured, we copy the bit (in the standard basis) into an ancillary register set to $\ket{0}$ by applying a $\textsf{CNOT}$ gate. This ancillary qubit is left alone for the rest of the computation. By the principal of delayed measurement if we measure this ancilla register at the end of the computation it is equivalent to measuring the qubit in the standard basis before CNOTing it into the ancillary qubit. Then, since Clifford circuits can compute the Parity function $\textsf{parity}(x) = \sum_{i} x_i \mod 2$ and the Fanout gate, they can simulate any classical circuit with parity gates. Therefore,
        \begin{align}
            \IMparityrounds = \cliffdepth
        \end{align}
        Using \Cref{eq:level-bound-Cliff-QNC} we get the desired bound.
    \end{proof}
        More generally, if the computational power of the classical processing is unbounded then the number of rounds of measurements, \IMrounds provides a lower bound to the level of the hierarchy.
    \IMleqMHlevel*
    \begin{proof}
        This follows directly from \Cref{prop:IMpareqMHlevel} since allowing more classical computational power will not increase the required number of rounds of intermediate measurements.
    \end{proof}

    We highlight that in general, circuits with intermediate measurements implement quantum channels that are not necessarily unitary. On the other hand, circuits in the $\MH$ are unitary, which might make them a more familiar avenue for analyzing the complexity of circuits with intermediate measurements. We remind the reader that $\MH$ circuits are able to implement non-unitary channels with the use of ancillary qubits and partial trace.

\paragraph{Fanout-depth}
    The required number of alternations between $\Cliff$ and $\QNCZ$ circuits is related by a constant factor to the required fanout-depth in $\QNCZF$ circuits--- which consist of arbitrary two qubit gates, and the Fanout gate  $\ket{b, x_1, \dots, x_m} \to \ket{b, x_1 \oplus b, \dots, x_m \oplus b}$. For implementing a unitary $U$ let $\fanoutdepth$ be the required number of layers of parallel fanout gates in a $\QNCZF$ circuit, and $\cliffdepth$ the number of Clifford rounds required in a $\MH$ circuit. Then, we have
                \fanouteqMHlevel*
    \begin{proof}
        Since the Fanout gate is a Clifford operator, clearly the fanout depth is at least the number of required Clifford rounds. 
        
        We now show that any Clifford circuit can be implemented by $\QNCZF$ with fanout depth 4 to get the second inequality. We will use the same gate teleportation circuit (\Cref{eq:gate-teleportation-full-circuit}) in the proof of \Cref{prop:IMpareqMHlevel}, and just need to show that the classical computation for determining the Pauli corrections can actually be done with just 4 layers of Fanout gates. 

        Recall that for meaurement outcome $y \in \FF_2^a$, the Pauli corrections $(x',z')$ are equal to $M y$ for some linear map $M : \FF_2^a \to \FF_2^{2n}$.  First, we fan out each of the $a$ input qubits $2n$ times. For each $i \in [2n]$ and $y\in \FF_2^a$, $(My)_i = \textsf{Parity}_{M_i}(y)$ where $\textsf{Parity}_{M_i}$ is the function that takes a parity of the subset of qubits that are $1$ in the $i$th row of $M$. Therefore we can compute $My$ in parallel with one layer of Parity gates, which is equivalent to the fanout gate conjugated by single-qubit Hadamard gates. Conditioned on these registers we do the Pauli corrections with a layer of controlled $X$ gates and layer of cotrolled $Z$ gates. So far we used only 2 layers of fanout gates. We then uncompute, using $2$ more layers. Therefore,
        \begin{align}
            \cliffdepth \leq \fanoutdepth \leq 4 \cdot \cliffdepth.
        \end{align}
        Using \Cref{eq:level-bound-Cliff-QNC} we get the desired bound.        
    \end{proof}
    \noindent \Cref{prop:fanouteqMHlevel} straighforwardly implies \Cref{thm:intro:MHeqQACZF} which we restate and prove below.
    \MHeqQACZF*
        \begin{proof}
        Since fanout-depth and the level of the magic hierarchy are related by a constant factor (\Cref{prop:fanouteqMHlevel}), we have that 
        \[\QNCZF = \bigcup_{k\in \NN} \MH_k = \MH\]
        Furthermore \cite{hoyer2005quantum,takahashi2016collapse} showed that $\QNCZF = \QACZF$. Thus, by increasing the number of allowed alternations we interpolate between the well-studied $\QNCZ$ circuit model and the more powerful $\QACZF$ circuits.
    \end{proof}

\paragraph{Magic and T-depth}
    \emph{Magic} in quantum computation refers to the presence of non-Clifford gates in a circuit. A widely used measure of magic is the T-count—--the number of $T$ gates required in a Clifford+$T$ circuit. While Clifford circuits alone are classically simulable \cite{aaronson2004improved}, adding any non-Clifford gate enables universal quantum computation. Importantly, circuits with low $T$-count can still be efficiently simulated, albeit with an exponential overhead in the number of $T$ gates \cite{bravyi2016improved}.
    
    The number of alternations between $\QNCZ$ and $\Cliff$ can be viewed as another measure of magic that is gate-set agnostic. Since $\QNCZ$ circuits allow arbitrary two-qubit operations, the required number of $\QNCZ$ layers serves as a lower bound on the \emph{$T$-depth}—--the number of circuit layers that contain at least one $T$ gate in a Clifford+$T$ circuit \cite{amy2013meet}. 
    \proptdepthmhlevel*
    $T$-depth may be particularly relevant in fault-tolerant architectures, where Clifford gates can typically be implemented transversally, while non-Clifford gates like $T$ are more costly to realize.
    
    Whether the number of $\QNCZ$ layers and T-depth are equivalent remains an open question. The Solovay-Kitaev theorem allows for the approximation of arbitrary universal gate sets with $\log(1/\epsilon)$ overhead, and it is unclear whether this can be parallelized with the use of a ``free'' Clifford circuit.    

\paragraph{Connections and implications for state preparation lower bounds}    
Since $\MH$ and $\QNCZF$ circuits are unitary, if either of these circuits prepares a state, it is doing so with a specific unitary, so \Cref{prop:fanouteqMHlevel} tells us that the level of the magic hierarchy and the fanout depth required to prepare a particular quantum state are also equivalent up to a constant factor. In particular, this is true even when we require that the states are prepared \emph{cleanly}, meaning the ancilla qubits begin and end in the $\ket{0}$ state. So a lower bound against the $\MHlevel$ for preparing a state cleanly, implies a lower bound on the $\fanoutdepth$ for preparing that state cleanly. 

We note however, that a lower bound on the $\MHlevel$ of preparing a state $\ket{\psi}$ cleanly would not necessarily imply a lower bound on the needed $\IMparityrounds$ for preparing $\ket{\psi}$ since they are only equal when $\MHlevel$ is allowed partial trace and therefore \emph{non-clean} computation.

\subsection{Incomparability of \texorpdfstring{$\altCQ[1]$}{A1CQ} and \texorpdfstring{$\altQC[1]$}{A1QC}}
\label{ssec:CQneqQC}
As mentioned in the introduction, the order of the Clifford and $\QNCZ$ circuit matters, as we show below that in the first level, $\altCQ[1]$ and $\altQC[1]$ are incomparable.\footnote{We thank Henry Yuen and Ben Foxman for discussions leading to this observation.}

\thmCQneqQC*
\begin{proof}
    A single-qubit observable $O_i$ on the output of an $\altCQ[1]$ circuit $U$ corresponds to an evolved observable $U O_i U^\dagger$ on the input. Since $U$ is just a Clifford circuit followed by a $\QNCZ$ circuit this evolved observable can be written as a linear combination of $O(1)$ $n$-qubit Pauli operators. This is because after conjugating the observable $O_i$ on a single qubit by a $\QNCZ$ circuit it will only be supported only on the $O(1)$ qubits in the backwards lightcone, So its Pauli decomposition will have at most $4^{O(1)} = O(1)$ terms. Furthermore, Cliffords map Paulis to Paulis, so conjugating by the Clifford circuit will not increase the number of terms in this sum. On the other hand this is not in general true for circuits with the opposite ordering $\altQC[1]$, for example consider the Clifford circuit is the fanout operator which will map a single qubit $X$ on the control qubit to $X^{\otimes n}$, and if the $\QNCZ$ circuit is just a tensor product of the single qubit $T = \proj{0} + e^{i\pi/4} \proj{1}$ gate on each qubit, this maps the observable to $(TXT)^{\otimes n} \propto (X + Y)^{\otimes n}$ which has support on $2^n$ distinct Pauli strings. We use the same argument to analyze how a single qubit operator on the input state propagates through to the output to get the reverse direction.
\end{proof}

 We note that the above theorem is only shows incomparability in the context of implementing a unitary, it is an open question whether the same is true for state preparation. \section{Connections to classical circuit lower bounds}\label{sec:classical-conn}
In this section we show the implications of magic hierarchy lower bounds for \emph{classical circuit complexity}. The implications of state preparation lower bounds are summarized in \Cref{fig:barriers}.

Proving explicit lower bounds against $\TCZ$ circuits — constant-depth classical circuits with threshold gates — is a longstanding open problem in complexity theory. Despite extensive efforts, no nontrivial lower bounds are known even for depth-3 $\TCZ$ circuits. This challenge is tightly connected to the \emph{natural proofs barrier}, that rule out many natural proof techniques for proving lower bounds \cite{natural-proofs}. A more detailed discussion of the natural proofs barrier and more fine-grained implications between $\MH$ lower bounds and $\TCZ$ lower bounds is provided in \Cref{ssec:intro:barriers}.

Because $\QACZF$ circuits can simulate $\TCZ$ circuits, any nontrivial lower bound against $\MH = \QACZF$ for implenting a boolean function would imply a lower bound against $\TCZ$, which would be a breakthrough in complexity theory. In this section, we show that proving a lower bound against a fixed level of the magic hierarchy also implies a corresponding lower bound for a fixed depth $\TCZ$ circuit. 

\boolfunctczlb*

In the context of state preparation, the story is less obvious. It isn't immediately clear that a lower bound for state preparation would imply any classical circuit lower bound. Recently, Rosenthal's work  \cite{rosenthal2024efficient} shows that for each quantum state $\ket{\psi}$, there exists a Boolean function $f_\psi$ such that a $\QACZF$ circuit with query access to $f_\psi$ can prepare $\ket{\psi}$ to within exponentially small error\footnote{We refer the interested reader to Rosenthal's thesis \cite{rosenthal-thesis}, for a clear exposition of this indirect reduction.}. This implies that quantum circuit lower bounds for preparing some explicit state do, in fact, imply classical circuit lower bounds for an explicit Boolean function. We use this to deduce how lower bounds against certain levels of the magic hierarchy imply $\TCZ$ depth lower bounds.  

\statepreptczlb*

In the rest of this section, we prove these theorems by getting into the fine-grained details of how to simulate a $\TCZ$ circuit with $\MH$ (\Cref{ssec:MH-impl-classical}), and how to implement the algorithm in Rosenthal's one-query state synthesis algorithm (\Cref{ssec:MH-statesynthesis-impl}). For the latter, the general algorithm we use is implicit in \cite{rosenthal-thesis}, but without many implementation details necessary to determine the magic hierarchy depth. We thus construct and step through a detailed implementation to determine the required magic hierarchy depth.

\newcommand{\EX}[2][]{\textsf{EX}_{{#1}}^{#2}}
\newcommand{\THR}[2][]{\textsf{TH}_{{#1}}^{#2}}
\subsection{Magic hierarchy implementations of classical circuits}\label{ssec:MH-impl-classical}
We show how to implement various classical functions in the magic hierarchy to ultimately determine how many levels are needed to simulate a depth-$d$ $\TCZ$ circuit. We must first clarify the distinction between clean and nonclean computation. In this section we prove \Cref{thm:boolfuncbarrier} via \Cref{lem:threshold-level}.

\paragraph{Clean versus nonclean computation of boolean functions} We say that a unitary $U$ \emph{cleanly} computes $f$ if for each input $x\in \zo^n$ to $f$ and $b\in \zo$, the unitary maps $\ket{x, 0}\ket{0^a}$ to the state $\ket{x}\ket{f(x)}\ket{0^a}$ with $a$ ancillas that start and end in the all zeros state. \emph{Noncleanly} computing $f$ only requires that the output register measures to $\ket{f(x)}$, so the entire state can look like $\ket{\psi_x}\ket{f(x)}$ for any state $\ket{\psi_x}$ on $n+a$ qubits. Given a nonclean implementation of a function $f$, we can always convert it to a clean implementation by copying the output qubit onto an ancilla and then uncomputing. 

\begin{lemma}\label{lem:exact-complexity}
    For any $k \in \{0, 1, \dots, n\}$ the exact function $\EX[n]{k}(x_1, \dots, x_n) = \mathbf{1}\{\sum_i x_i = k\}$ can be computed noncleanly in $\altCQ[4]$ and cleanly in $\altCQ[6]$.
\end{lemma}
\begin{proof}
    Takahashi and Tani \cite{takahashi2016collapse} implement the $\EX[n]{k}$ function by first implementing a circuit that reduces this problem to the problem of computing $\textsf{OR}$ on only $\log(n)$ qubits. That is they map the $n$-qubit input state $\ket{x}$ to a state $\ket{\phi}$ on $\log(n)$ qubits such that $\EX[n]{k}(x) = 1$ if and only if the OR gate applied to $\ket{\phi}$ yields 1. The circuit doing this reduction was introduced in \cite{hoyer2005quantum} where it is easy to see it can be non-cleanly implemented in $\altCQ[2]$. Takahashi and Tani then show that $\textsf{OR}_m$ can be cleanly implemented by an $\exp(m)$-size a $\QACZF$ circuit, which then is a polynomial in $n$ when $m = \poly(n)$. We refer the reader to \cite{takahashi2016collapse} for more details, but the idea is that the circuit simply computes each of its $2^m$ Fourier characters (parity functions) in parallel, and collecting them into one phase by preparing a CAT state and applying CZ gates in parallel, and uncomputing the cat state to detect the phase. Overall, this circuit is in $\altCQ[2]$. Putting this together with the reduction, we see that we can compute $\textsf{OR}_n^t$ in polynomial size by first reducing to $\textsf{OR}_{\log n}$, then applying $\textsf{OR}_{\log n}$. Each of these steps is in $\altCQ[2]$, so we can compute $\EX[n]{k}$ \emph{noncleanly} in $\altCQ[4]$. If we wish to compute $\EX[n]{k}$ cleanly, we must uncompute the first circuit reducing the problem from $n$ to $\log(n)$ inputs, this circuit is in $\altCQ[2]$. So in total, cleanly computing the exact function can be done in $\altCQ[6]$.
\end{proof}

\begin{lemma}\label{lem:threshold-in-MH}
    For any $t \in \{0, 1, \dots, n\}$, the threshold function $\THR[n]{t}(x_1, \dots, x_n) = \mathbf{1}\{\sum_i x_i \geq t\}$ can be computed noncleanly in $\altCQ[4]$ and cleanly in $\altCQ[8]$. 
\end{lemma}
\begin{proof}
    As used in both \cite{hoyer2005quantum,takahashi2016collapse}, the threshold gate can be implemented by fanning out the input $n-t$ times and for each $k \in \{t, t+1, \dots, n\}$ computing $\EX[n]{k}$, then taking the parity of each of these $n-t$ outcomes. Implementing each of the $\EX[n]{k}$ gates noncleanly is in $\altCQ[4]$ so this computation noncleanly computes $\THR[n]{t}$ in a $\textsf{Fanout}$-$\altCQ[4]$-$\textsf{Parity}$ circuit, which is still a $\altCQ[4]$ circuit. To implement this cleanly, we just need to uncompute the layer of exact gates and the fanout gate, so this looks like: $\textsf{Fanout}$-$\altCQ[4]$-$\textsf{Parity}$-$\altCQ[4]$-$\textsf{Fanout}$, which is $\altCQ[8]$.
\end{proof}

\begin{lemma}\label{lem:threshold-level}
    For any function $f$ that can be implemented by a depth-$d$ $\TCZ$ circuit, $f$ can be implemented noncleanly by $\altCQ[4d]$ and cleanly by $\altCQ[8d]$.
\end{lemma}
\begin{proof}
    To simulate a $\TCZ$ circuit, for each threshold gate in the $\TCZ$ circuit, we can implement it with a quantum implementation of the threshold gate, and a fanout gate, since $\TCZ$ circuits allow for unbounded fanout. Thus each layer of threshold gates in the classical circuit will be translated into a layer of fanout gates followed by threshold gates. Since fanout is Clifford and the threshold gate can be noncleanly implemented in $\altCQ[4]$ (by \Cref{lem:threshold-in-MH}), this corresponds to a layer of $\altCQ[4]$. All together, this nonclean implementation of $f$ is in $\altCQ[4d]$. To implement the clean version of $f$, we just need to uncompute all gates but the last one touching the output, giving us a circuit in $\altCQ[8d]$.
\end{proof}

\subsection{State sythesis algorithm in the magic hierarchy} \label{ssec:MH-statesynthesis-impl}
In this section we prove the following theorem which is a restatement of our \Cref{thm:stateprepbarrier}.

\begin{theorem}
    Rosenthal's single-query unitary for preparing any state $\ket{\psi}$ to within $\exp(-\Omega(n))$ error in trace distance can be implemented by a $\MH$ circuit that first applies a layer of single qubit gates, makes a query to $f_{\psi}$, and then applies an $\altCQ[29]$ circuit.
\end{theorem}
\begin{proof}
    We will not re-prove the completeness of Rosenthals result, but we will step through the algorithm to show where the complexity is coming from, which will require us to get more into the details of parts of the algorithm than in \cite{rosenthal2024efficient,rosenthal-thesis}. We are using the version of the one-query state synthesis in Sections 2.2.4 and 2.3.1 from Rosenthal's thesis \cite{rosenthal-thesis} that makes use of \emph{hash states}. Rosenthal attributes the idea to use hash states to private communication with Fermi Ma. 
A hash state is an $n$-qubut state $\ket{\phi} = |S|^{-1/2}\sum_{x \in S} \sigma_x \ket{x}$ for some subset $S\sse [n]$ that has size $|S| = 2^k$ a power of two, and phases $\sigma_x \in \{-1,1\}$. Furthermore, there exists a linear transformation $A : \FF_2^n \to \FF_2^k$ that is one-to-one on $S$. Rosenthal shows that for any state $\ket{\psi}$, there is a linear combination of $T = \poly(n)$ subset states $\ket{\Phi} = \sum_{j= 0}^{T-1} \sqrt{\beta^j} \ket{\phi_j}$ for some fixed $\beta \in [0,1]$ such that $\ket{\Phi}$ is $O(1/\exp(n))$-close to $\ket{\psi}$. We prepare a related state as follows:
    \begin{enumerate}
        \item \textbf{[Single-qubit gates]} Prepare the $t = \log T$-qubit state $\ket{\sigma} = \sqrt{\frac{1-\beta}{1 - \beta^T}} \sum_{j = 0}^{T-1} \sqrt{\beta^j}\ket{j}$ on register $\textsf{A}$. This can be done with a layer $L$ of single qubit gates since $\ket{\sigma}$ is a product state 
        \[
            \ket{\sigma} = \sqrt{\frac{1-\beta}{1 - \beta^T}} \bigotimes_{j = 0}^{t-1} \pbra{\ket{0} + \beta^{2^{j-1}} \ket{1}}_\textsf{A}
        \]
        \item $\boldsymbol{[\mathsf{Query}}{\boldsymbol{+}}\boldsymbol{\altCQ[11]]}$ \label{step:constrolled-hashstate} Conditioned on register $\textsf{A}$ containing value $j$: prepare the state $\ket{\phi_j}$ on register~$\textsf{B}$.
            \[\sqrt{\frac{1-\beta}{1 - \beta^T}} \sum_{j = 0}^{T} \sqrt{\beta^j} \ket{j}_\textsf{A} \ket{\phi_j}_\textsf{B}\]
        To do this, Let $A_j : \FF_2^n \to \FF_2^{k_j}$ be the linear map that is one to one on the subset $S_j$ for $\ket{\phi_j} \propto \sum_{x \in S_j} \sigma_x \ket{x}$. We first prepare the uniform superposition over all $y \in \zo^{k_j}$ on the first $k_j$ qubits by applying a layer of Hadamard (this can be done in parallel with applying the gate $L$ in the previous step). We then make a query on $\ket{y}_\textsf{B}$ to get the unique $x\in \zo^n$ such that $A_j x = y$. Simultaneously in parallel we query to get the phase $\sigma_x$. At this point the state is proportional to
        \[\sum_{j = 0}^{T} \sqrt{\beta^j} \ket{j}_\textsf{A} \ot \sum_{x\in S_j}\sigma_{x}\ket{x}_\textsf{B}\ket{A_jx}_\textsf{C}\]
        We then uncompute $y = A_j x$ by evaluating $A_j$ on $x$. We show in \Cref{lem:cA} below that this can be done in $\altCQ[11]$. This prepares the desired state.
        \item $\boldsymbol{[\QNCZ]}$ Apply $L^\dagger$ to register $\textsf{A}$. Rosenthal shows that if we then measure register $\textsf{A}$ and the outcome is all zeros, then the post-measurement state on $\textsf{B}$ is exponentially-close to $\ket{\psi}$. The problem now is that it's only with probability $O(1/n)$ that we get $0$ as our measurement outcome. 
    \end{enumerate}
    To boost the probablity of success, Rosenthal does the above many times in parallel, on registers $\textsf{A}_1, \textsf{B}_1, \dots, \textsf{A}_s, \textsf{B}_s$ for some $s = \poly(n)$. We introduce another $n$-qubit register $\textsf{D}$ initialized to the zero state. Then, we evaluate the unitary that does the following: if any of the $\textsf{A}_i$ registers contain the all zeros state, for the first such $i$, swap the state $\textsf{B}_i$ with the state in the $\textsf{D}$ register. If none of the $\textsf{A}_i$ states are all zeros then we do nothing. We show below in \Cref{lem:multi-swap-complexity} that this unitary can be implemented in $\altCQ[18]$. Overall this algorithm uses single-qubit gates followed by a query and then $\altCQ[11]$-$\QNCZ$-$\altCQ[18]$ which is $\altCQ[29]$.
    \end{proof}

    \begin{lemma}\label{lem:cA}
        For any $t = O(\log n)$, and linear maps $\cbra{A_j : \FF_2^n \to \FF_2^n :  \  j \in [2^t]}$, the following unitary can be implemented by a $\altCQ[11]$ circuit. The state $\ket{j}_\textsf{A}\ket{x}_\textsf{B}\ket{y}_\textsf{C}$ for $\textsf{A}$ a $t$-qubit register, and $\textsf{B}, \textsf{C}$ as $n$-qubit registers, is mapped to $\ket{j}_\textsf{A} \ket{x}_\textsf{A} \ket{A_jx \oplus y}_{\textsf{C}}$.
    \end{lemma}
    \begin{proof}
        The existence of this algorithm is implicit in \cite{rosenthal-thesis}, but details of implementation and its fine-grained complexity are missing. For completeness, we provide a detailed construction. 
        \begin{enumerate}
            \item \label{step:cA:unary} $\boldsymbol{[\altCQ[4]]}$ We first compute a unary encoding of $\ket{j}_{\textsf{A}}$, meaning preparing the state $\bigotimes_{i=1}^{2^t} \ket{\delta_{i,j}}$, where $\delta_{i,j} = \mathbf{1}\{i = j\}$. To do this we first fanout $j$ many times. In parallel, for each possible value of $k \in [2^t]$, we check if $k=j$. To do this, on the $k$th copy of $\ket{j}$, we flip each of the bits $i\in [t]$ of $j$ such tht $k_i=0$. So now $j = k$ iff the resulting state is $\ket{1^t}$. Thus we apply an AND gate to get a variable that indicates if $k=j$. This process used a layer of fanout, then a layer of Pauli-X gates, then an AND gate. Since computing AND noncleanly is in $\altCQ[4]$ \Cref{lem:exact-complexity} this process noncleanly computes the unary encoding of $\ket{j}$ in $\Cliff$-$\altCQ[4]$ which is $\altCQ[4]$.
        \[\ket{j}_\textsf{A} \ket{x}_\textsf{B} \ket{y}_\textsf{C} \bigotimes_{i\in [2^t]} \ket{\delta_{i,j}} \ket{\text{junk}}\] 
            \item \label{step:cA:fanout-x-ind} {$\boldsymbol{[\Cliff]}$} Fanout input $\ket{x}_\textsf{A}$ and $\ket{\delta_{i,j}}$ many times.
            \item \label{step:cA:Ai-parallel} {$\boldsymbol{[\Cliff]}$} For each $i$ in parallel, compute $\ket{A_i x}$. This can be done with a layer of parity gates since $(A_i x)_k$ is just the parity of a subset of bits of $x$ indicated by the $k$th row of $A_i$.
            \[
               \ket{j}_\textsf{A} \ket{x}_\textsf{B} \ket{y}_\textsf{C}  \ket{\text{junk}} \bigotimes_{i \in [2^t]} \ket{x} \ket{A_i x} \ket{\delta_{i,j}}^{\otimes n}
            \]
            \item \label{step:cA:ind-and-map} {$\boldsymbol{[\QNCZ]}$} For each $i$ in parallel, use $n$ parallel Toffoli gates to compute each $\ket{\delta_{i,j} \wedge A_i x}$ into a register $\textsf{A}'_i$.
            \[
\ket{j}_\textsf{A} \ket{x}_\textsf{B} \ket{y}_\textsf{C}  \ket{\text{junk}}  \ket{x}^{\otimes 2^t} \ot \ket{A_j x}_{\textsf{A}'_i} \bigotimes_{i \neq j} \ket{0}_{\textsf{A}'_i}\bigotimes_{i \in [2^t]} \ket{\delta_{i,j}}^{\otimes n}
            \]
            \item {$\boldsymbol{[\Cliff]}$} Compute $n$ parity gates in parallel between the qubits of the $\textsf{A}'_i$'s into the output register $\ket{y}_\textsf{C}$.
            \[
                \ket{j}_\textsf{A} \ket{x}_\textsf{B} \ket{A_j x \oplus y}_\textsf{C}  \ket{\text{junk}}  \ket{x}^{\otimes 2^t} \ot \ket{A_j x}_{\textsf{A}'_i} \bigotimes_{i \neq j} \ket{0}_{\textsf{A}'_i}\bigotimes_{i \in [2^t]} \ket{\delta_{i,j}}^{\otimes n}
            \]
            \item {$\boldsymbol{[\altCQ[5]]}$} Now uncompute \Cref{step:cA:unary,step:cA:fanout-x-ind,step:cA:Ai-parallel,step:cA:ind-and-map}. Together they are in $\altCQ[5]$, so their inverse is in $\altQC[5]$.
        \end{enumerate}
        All together this is $\altCQ[5]$-$\QNCZ$-$\altQC[5]$ which is $\altCQ[11]$.
        
    \end{proof}

    \begin{lemma}\label{lem:multi-swap-complexity}
        Let $U$ be a unitary on $n$-qubit input registers $\textsf{A}_1, \textsf{B}_1, \dots, \textsf{A}_s, \textsf{B}_s, \textsf{D}$ with the following  behavior in the computational basis: if any of the $\textsf{A}_i$ contain the all zeros state, then for the first such $i$, swap registers $\textsf{B}_i$ and $\textsf{D}$. Then $U$ can be implemented by a polynomial-sized $\altCQ[18]$ circuit. 
    \end{lemma}

    \begin{proof}
        This algorithm is adapted from the proof of Lemma A.2.1 in \cite{rosenthal-thesis}, though we provide a more detailed construction so we can give a more fine-grained analysis. 
    Consider the input state $\ket{a_1}_{\textsf{A}_1} \dots \ket{a_s}_{\textsf{A}_s} \ket{x_1}_{\textsf{B}_1} \dots \ket{x_s}_{\textsf{B}_s}$
    \begin{enumerate}
        \item \label{step:indicate-swap}{$\boldsymbol{[\altCQ[8]]}$} For each $j \in [s]$, we prepare an indicator variable $\delta_j$ that is equal to 1 if $\textsf{B}_j$ is the register we are supposed to swap  with $\textsf{D}$. The value of $\delta_j$ is $\EX{0}\pbra{\EX{0}(a_1), \EX{0}(a_2), \dots, \EX{0}(a_{j-1}), 1- \EX{0}(a_j)}$. Note that at most $1$ of the indicators will be set to 1. So we can (noncleanly) compute this with fanout gates and two layers of exact gates, which each are in $\altCQ[4]$, so in total this can be done in $\mathbf{\altCQ[8]}$.
        \item {$\boldsymbol{[\Cliff]}$} Fanout the indicator variables many times. For each $j \in [s]$ in parallel compute $\delta_j \wedge x_j$ into a new ancilla register we label $\textsf{C}_j$. This can be done using Toffoli gates in parallel, which is in $\mathbf{\QNCZ}$:
        \[
            \pbra{
                \bigotimes_{j=1}^s \ket{x_j}_{\textsf{A}_j}\ket{\delta_j}^{\otimes \poly(n)} \ket{\delta_j \wedge x_j}_{\textsf{C}_j}
            }
            \otimes \ket{0}_\textsf{D}
        \]
        \item {$\boldsymbol{[\Cliff]}$}  In parallel, compute the parity of the qubits between all the $\textsf{C}_j$ registers into register $\textsf{D}$. Since at most one of the $\textsf{C}_j$ will be nonzero, this parity implements an OR for us. Suppose $k\in [s]$ is the index that should be swapped: $\delta_k = 1$. This operation will copy $\ket{x_k}$ into $\textsf{B}$. We then fanout $\ket{x_k}_{\textsf{B}}$ many times, resulting in the state.
         \[
            \pbra{
                \bigotimes_{j=1}^s  \ket{x_j}_{\textsf{A}_j} \ket{\delta_j}^{\ot \poly(n)} \ket{\delta_j \wedge x_j}_{\textsf{C}_j}
            }
            \otimes \ket{x_k}^{\otimes \poly(n)}_\textsf{D}.
        \]
        \item {$\boldsymbol{[\QNCZ]}$} Uncompute the $\textsf{C}_j$ registers in parallel by computing $x_k \wedge \delta_j$ into $\textsf{C}_i$ by applying \textbf{Toffoli} gates in parallel controlled on a copy of $\ket{\delta_j}$ and $\ket{x_k}_\textsf{D}$, and targeted on $\textsf{C}_i$. 
        \[
            \pbra{
                \bigotimes_{j=1}^s  \ket{x_j}_{\textsf{A}_j} \ket{\delta_j}^{\ot \poly(n)} 
            }
            \otimes \ket{x_k}^{\otimes \poly(n)}_\textsf{D}.
        \]
        \item {$\boldsymbol{[\QNCZ]}$} We next need to replace $\ket{x_k}_{\textsf{A}_k}$ with $\ket{0}_{\textsf{A}_k}$. We still have many copies of $x_k$, so in parallel for each $j$: apply a Toffoli from a copy of $\ket{x_k}$ and a copy of $\ket{\delta_j}$ to compute $x_k \wedge \delta_j$ and XOR it into $\textsf{A}_j$. This will have the desired effect:
        \[
            \ket{0}_{\textsf{A}_{k}}
            \pbra{
                    \bigotimes_{j\neq k}  \ket{x_j}_{\textsf{A}_j} \ket{\delta_j}^{\ot \poly(n)} 
            }
            \otimes \ket{x_k}^{\otimes \poly(n)}_\textsf{D}.
        \]
        \item {$\boldsymbol{[\altCQ[8]]}$} Now we just need to uncompute the remaining intemediate registers. Uncompute the circuit preparing the indicator registers by uncomputing \Cref{step:indicate-swap} which is $\altCQ[8]$, and we fanout on $\textsf{D}$ to reduce back down to one copy of $\ket{x_k}$. We note that if there was no such $k$ with $\delta_k = 1$, then the above computation would have the state $\ket{0}_\textsf{D}$ in the $\textsf{D}$ register, the same as the initial state as desired.
    \end{enumerate}
    In total this is in $\altCQ[8]$-$\QNCZ$-$\altCQ[8]$ which is $\altCQ[18]$.
\end{proof}

\subsection{Future directions on connections to classical complexity}
It is plausible that the relationship between lower bounds on $\MH$ level and $\TCZ$ depth in \Cref{thm:boolfuncbarrier,thm:stateprepbarrier} are not tight. Tightening this bound to illuminate what level of the magic hierarchy we start proving, say depth-4 $\TCZ$ lower bounds is an interesting further direction. 

Unlike for state preparation, in the context of implementing a unitary, there are no known reductions to classical circuits, and thus proving lower bounds for implementing explicit unitaries do not immediately imply any classical circuit lower bounds.

Furthermore, we note that the reduction in \cite{rosenthal2024efficient} requires the use of ancillary qubits. Thus there are also no known barriers against proving lower bounds by circuits that are not allowed ancillary qubits.

\section{Cat state gluing lower bounds}
\begin{question}
    Given a CAT state on $2n$ qubits $\frac{1}{\sqrt{2}}(\ket{0^{2n}} + \ket{1^{2n}})$ what is the depth required to convert it into a tensor product of two $n$-qubit CAT states?
\end{question}
Below we show that this requires $\Omega(\log n)$ depth, which is the same as if one just ``un-did'' half of the $\CATn[2n]$ state with $n$ $\CNOT$ gates in parallel, sending $\CATn[2n] \to \CATn[n] \ot \proj{0^n}$, and then just preparing the $\CATn[n]$ state on the second register. We can consider this question with \emph{biased} CAT states.
\begin{definition}[Biased CAT state]
    $\ket{\bCATn[n]{\alpha}} := \sqrt{\alpha} \ket{0^n} + \sqrt{1 - \alpha} \ket{1^n}$.
\end{definition}
Below we show that mapping between $ \ket{\bCATn[2n]{\alpha}}$ and $\ket{\bCATn[n]{\beta}} \ot \ket{\bCATn[n]{\beta}}$ requires depth $\Omega\pbra{\log (n)}$ if $H(\alpha) \neq 2H(\beta)$.
\begin{theorem}\label{thm:cat-state-gluing-lb}
    Suppose $U\in \unitaries[2n]$ is a $2n$ qubit quantum circuit for $n\geq 2$ such that 
    \begin{align}
        U \proj{\bCATn[2n]{\alpha}} U^\dagger = \proj{\bCATn[n]{\beta}} \ot \proj{\bCATn[n]{\beta}}
    \end{align}
    For some $\alpha, \beta \in [0,1]$. If $H(\alpha) \neq 2 H(\beta)$, then $U$ requires depth $\Omega(\log n)$. Furthermore, this statement can be made robust as follows. Consider $V\in \unitaries[2n]$ such that
    \begin{align}
        V \proj{\bCATn[2n]{\alpha}} V^\dagger = \psi && \tnorm{\psi - \proj{\bCATn[n]{\beta}} \ot \proj{\bCATn[n]{\beta}}}\leq \eps.
    \end{align}
    Then $\depth(V) \geq \Omega\pbra{\log\min\cbra{n, \frac{1}{\eps}\cdot \abs{2 H(\beta) - H(\alpha)}^{\ln 4}}}$.
\end{theorem}
\begin{proof}
    Let $S = \{1, n+1\}$, a subset containing the labels of the first qubit of each of the two $\ket{\bCATn[n]{\beta}}$ states. Suppose for the sake of contradiction that $d < \frac{1}{2}\log(n/2)$. Since $|\dlc{S}| \leq 2^{2d} |S| = 2^{2d + 1}$ we have that $|\dlc{S}| < n$. Therefore, there exists a pair of qubits $i,j \in [2n]$, one from each half of the qubits $i\in \{1, \dots, n\}, j \in \{n+1, \dots, 2n\}$ such that  $i, j \notin \dlc{S}$. We let $T = \{i,j\}$, and have that $\blc(S) \cap \blc(T) = \emptyset$.
    \begin{align*}
        \MI[\bCAT{\beta}\ot \bCAT{\beta}]{S}{T} &= \MI[\bCAT{\beta}]{1}{i} + \MI[\bCAT{\beta}]{1}{j- n}\\
        &= 2H(\beta)
    \end{align*}
    On the other hand, we have that $|\blc(S) \cup \blc(T)| \leq 2^{2d} \cdot 4 < 2n$. Therefore, again we have that 
    \begin{align*}
        \MI[{\bCATn[2n]{\alpha}}]{\blc(S)}{\blc(T)}  = H(\alpha).
    \end{align*}
    By \Cref{lem:mi-lightcones} we have that $\MI[\bCAT{\beta}\ot \bCAT{\beta}]{S}{T} \leq \MI[{\bCATn[2n]{\alpha}}]{\blc(S)}{\blc(T)}$. So we have that $2 H(\beta) \leq H(\alpha)$. 

    We can make the same argument to show that $2H(\beta) \geq H(\alpha)$. As above, if $d < \frac{1}{2}\log(n/2)$ we can find a $a_1, b_1 \in [n]$ and $a_2, b_2 \in \{n+1, \dots, 2n\}$ with nonintersecting lightcones, $\flc(a_1, a_2) \cap \flc(b_1, b_2) = \emptyset$, so that $\MI[\bCAT{\alpha}]{a_1,a_2}{b_1,b_2}\leq \MI[\bCAT{\beta}\ot \bCAT{\beta}]{\flc(a_1a_2)}{\flc(b_1b_2)}$. Since $\{a_1, a_2\} \sse \flc(a_1a_2)$ we have that $\flc(a_1a_2)$ contains some qubits in both registers, likewise for $\flc(b_1b_2)$. So for the same reasoning as above, we have that $\MI[\bCAT{\beta}\ot \bCAT{\beta}]{\flc(a_1a_2)}{\flc(b_1b_2)} = 2H(\beta)$. But we also have that $\MI[\bCAT{\alpha}]{a_1a_2}{b_1b_2} = H(\alpha)$. Therefore $H(\alpha) \leq 2 H(\beta)$. Combined with our $H(\alpha) \geq 2 H(\beta)$ shown above, it follows that $H(\alpha) = H(\beta)$.

    Now suppose that instead of preparing $\bCATn[n]{\beta}\ot \bCATn[n]{\beta}$ we prepare a state $\psi$ that is $\eps$-close 
    \begin{align*}
        \psi = V \proj{\bCATn[2n]{\alpha}} V^\dagger && \tnorm{\psi - \proj{\bCATn[n]{\beta}}\ot \proj{\bCATn[n]{\beta}}} \leq \eps.
    \end{align*}
    Suppose that $\depth(V) < \frac{1}{2}\log(n/2)$. As shown above there exists $S, T\sse [2n]$ with $|S|, |T| = 2$ such that $\blc(S) \cap \blc(T) = \emptyset$, $\MI[\bCAT{\beta}\ot \bCAT{\beta}]{S}{T} = 2H(\beta)$ and $\MI[\bCAT{\alpha}]{\blc(S)}{\blc(T)} = H(\alpha)$. But now since $V$ maps $\bCAT{\alpha}$ to $\psi$ instead of $\bCAT{\beta}\ot \bCAT{\beta}$, \Cref{lem:mi-lightcones} now gives us that
    \begin{align}
        H(\alpha) = \MI[\bCAT{\alpha}]{\blc(S)}{\blc(T)} &\leq \MI[\psi]{S}{T}\\
        &\leq \MI[\bCAT{\beta}\ot \bCAT{\beta}]{S}{T} + \abs{\MI[\bCAT{\beta}\ot \bCAT{\beta}]{S}{T} - \MI[\psi]{S}{T}}\\
        &\leq 2H(\beta) + (2 \eps (|S| + |T|)  + H(\eps))\\
        &= 2H(\beta) + (8\eps + H(\eps)).
    \end{align}
    In the second inequality we used the triangle inequality, and in the third we used the Fannes-Audenaert inequality. 
    Since $H(\eps) \leq (4 \eps(1-\eps))^{1/\ln(4)}\leq e \cdot \eps^{1/\ln (4)}$, and $\eps \leq \eps^{1/\ln(4)}$ we have that $8 \eps + H(\eps) \leq (8+e) \eps^{1/\ln(4)}$. Therefore, if $H(\alpha) > 2H(\beta)$ then
    \begin{align}
         (8 + e) \eps^{1/\ln(4)} &\geq H(\alpha) - 2H(\beta)\\
        \eps &\geq \pbra{\frac{H(\alpha) - 2H(\beta)}{8+e}}^{\ln(4)} > 0.037 \cdot \pbra{H(\alpha) - 2 H(\beta)}^{\ln(4)}.
    \end{align}
    This bound on $\eps$ followed only from the assumption that $d < \frac{1}{2}\log(n/2)$ and $H(\alpha) > 2 H(\beta)$. Therefore, if $H(\alpha) > 2 H(\beta)$ and $\eps \leq 0.037 \cdot \pbra{H(\alpha) - 2 H(\beta)}^{\ln(4)}$ then $d\geq \Omega(\log n)$, as claimed.
\end{proof}
Furthermore, we show that if $H(\alpha) > 2 H(\beta)$ the lower bound can be further improved.
\begin{theorem}\label{thm:cat-state-gluing-eps-indep}
    Furthermore, if $H(\alpha) > 2 H(\beta)$ and our approximation is $\eps \leq 0.037 \cdot \pbra{H(\alpha) - 2H(\beta)}^{\ln 4}$ then any unitary $V\in \unitaries[2n]$ such that
    \begin{align}
        V \proj{\bCATn[2n]{\alpha}} V^\dagger = \psi && \tnorm{\psi - \proj{\bCATn[n]{\beta}} \ot \proj{\bCATn[n]{\beta}}}\leq \eps
    \end{align}
    has $\depth(V) \geq \Omega\pbra{\log n}$.
\end{theorem}
\begin{proof}
    As shown in the proof of \Cref{thm:cat-state-gluing-lb} above there are $S', T' \sse [2n]$ with $|S'|, |T'| = 2$  such that $\flc(S') \cap \flc(T') = \emptyset$,  $\MI[\bCAT{\alpha}]{S'}{T'} = H(\alpha)$, and $\MI[\bCAT{\beta}\ot \bCAT{\beta}]{\flc(S)}{\flc(T)} = 2H(\beta)$. Therefore, by \Cref{lem:mi-lightcones} we have
    \begin{align}
        H(\alpha) = \MI[\bCAT{\alpha}]{S'}{T'} &\geq \MI[\psi]{\flc(S')}{\flc(T')}\\
        &\geq \MI[\bCAT{\beta}^{\ot 2}]{\flc(S')}{\flc(T')} - \abs{\MI[\bCAT{\beta}^{\ot 2}]{\flc(S')}{\flc(T')} - \MI[\psi]{\flc(S')}{\flc(T')}}\\
        &\geq 2H(\beta) - (2 \eps |\flc('S)| \cdot |\flc(T')|  + H(\eps))\\
        &= 2H(\beta) - (8 \eps \cdot 2^d  + H(\eps)).
    \end{align}
    Again using $H(\eps) \leq e \cdot \eps^{1/\ln(4)}$ and $\eps \leq \eps^{1/\ln(4)}$, we have $\abs{2H(\beta) - H(\alpha)} \leq(8 \cdot 2^d + e) \eps^{1/\ln(4)}$. Therefore, 
    \begin{align}
        (8 \cdot 2^d + e) \eps^{1/\ln(4)}&\geq  \abs{2H(\beta) - H(\alpha)} \\
        2^d  &\geq  \frac{1}{8}\pbra{\abs{2H(\beta) - H(\alpha)}\eps^{-1/\ln(4)} - e}\\
        d &\geq \Omega\pbra{\log\pbra{\frac{1 }{ \eps} \cdot \abs{2H(\beta) - H(\alpha)}^{\ln 4}}}
    \end{align}
\end{proof}

This begs the question, what happens when the entropies of the biased CAT states are balanced, is it then possible to ``glue'' the biased CAT states?
\begin{openproblem}
    Given an $\alpha$-biased CAT state on $2n$ qubits $ \sqrt{\alpha}\ket{0^{2n}} + \sqrt{1-\alpha}\ket{1^{2n}}$ what is the depth required to convert it into a tensor product of two $n$-qubit $\beta$-biased CAT states? In particular when $H(\alpha) = 2 H(\beta)$.
\end{openproblem}
Furthermore, given the ability to glue biased CAT states, is this helpful for any interesting task?
\begin{openproblem}
    Given the black box ability to glue biased CAT states with $H(\alpha) = 2 H(\beta)$, would it help you do something else? Such as prepare a CAT state, a Nekomata state, or compute Parity? 
\end{openproblem}

\bibliography{refs.bib}

\newcommand{\etalchar}[1]{$^{#1}$}
\begin{thebibliography}{AVDK{\etalchar{+}}08}

\bibitem[Aar05]{aaronson2005quantum}
Scott Aaronson.
\newblock Quantum computing, postselection, and probabilistic polynomial-time.
\newblock {\em Proceedings of the Royal Society A: Mathematical, Physical and
  Engineering Sciences}, 461(2063):3473--3482, 2005.

\bibitem[AAV13]{aharonov2013guest}
Dorit Aharonov, Itai Arad, and Thomas Vidick.
\newblock Guest column: the quantum pcp conjecture.
\newblock {\em Acm sigact news}, 44(2):47--79, 2013.

\bibitem[ABN23]{anshu2023nlts}
Anurag Anshu, Nikolas~P Breuckmann, and Chinmay Nirkhe.
\newblock Nlts hamiltonians from good quantum codes.
\newblock In {\em Proceedings of the 55th Annual ACM Symposium on Theory of
  Computing}, pages 1090--1096, 2023.

\bibitem[ADOY24]{anshu2024computational}
Anurag Anshu, Yangjing Dong, Fengning Ou, and Penghui Yao.
\newblock On the computational power of qac0 with barely superlinear ancillae.
\newblock {\em arXiv preprint arXiv:2410.06499}, 2024.

\bibitem[AG04]{aaronson2004improved}
Scott Aaronson and Daniel Gottesman.
\newblock Improved simulation of stabilizer circuits.
\newblock {\em Physical Review A—Atomic, Molecular, and Optical Physics},
  70(5):052328, 2004.

\bibitem[AMMR13]{amy2013meet}
Matthew Amy, Dmitri Maslov, Michele Mosca, and Martin Roetteler.
\newblock A meet-in-the-middle algorithm for fast synthesis of depth-optimal
  quantum circuits.
\newblock {\em IEEE Transactions on Computer-Aided Design of Integrated
  Circuits and Systems}, 32(6):818--830, 2013.

\bibitem[AN20]{anshu2020circuit}
Anurag Anshu and Chinmay Nirkhe.
\newblock Circuit lower bounds for low-energy states of quantum code
  hamiltonians.
\newblock {\em arXiv preprint arXiv:2011.02044}, 2020.

\bibitem[AVDK{\etalchar{+}}08]{aharonov2008adiabatic}
Dorit Aharonov, Wim Van~Dam, Julia Kempe, Zeph Landau, Seth Lloyd, and Oded
  Regev.
\newblock Adiabatic quantum computation is equivalent to standard quantum
  computation.
\newblock {\em SIAM review}, 50(4):755--787, 2008.

\bibitem[AZ25]{andreadakis2025exact}
Faidon Andreadakis and Paolo Zanardi.
\newblock An exact link between nonlocal magic and operator entanglement.
\newblock {\em arXiv preprint arXiv:2504.09360}, 2025.

\bibitem[BFLN24]{buhrman2024state}
Harry Buhrman, Marten Folkertsma, Bruno Loff, and Niels~MP Neumann.
\newblock State preparation by shallow circuits using feed forward.
\newblock {\em Quantum}, 8:1552, 2024.

\bibitem[BG16]{bravyi2016improved}
Sergey Bravyi and David Gosset.
\newblock Improved classical simulation of quantum circuits dominated by
  clifford gates.
\newblock {\em Physical review letters}, 116(25):250501, 2016.

\bibitem[BHV06]{bravyi2006lieb}
Sergey Bravyi, Matthew~B Hastings, and Frank Verstraete.
\newblock Lieb-robinson bounds and the generation of correlations and
  topological quantum order.
\newblock {\em Physical review letters}, 97(5):050401, 2006.

\bibitem[BKKK22]{bravyi2022adaptive}
Sergey Bravyi, Isaac Kim, Alexander Kliesch, and Robert Koenig.
\newblock Adaptive constant-depth circuits for manipulating non-abelian anyons.
\newblock {\em arXiv preprint arXiv:2205.01933}, 2022.

\bibitem[BKP11]{browne2011computational}
Dan Browne, Elham Kashefi, and Simon Perdrix.
\newblock Computational depth complexity of measurement-based quantum
  computation.
\newblock In {\em Theory of Quantum Computation, Communication, and
  Cryptography: 5th Conference, TQC 2010, Leeds, UK, April 13-15, 2010, Revised
  Selected Papers 5}, pages 35--46. Springer, 2011.

\bibitem[BPT10]{BPT}
Sergey Bravyi, David Poulin, and Barbara Terhal.
\newblock Tradeoffs for reliable quantum information storage in 2d systems.
\newblock {\em Physical review letters}, 104(5):050503, 2010.

\bibitem[Bra08]{braverman2008polylogarithmic}
Mark Braverman.
\newblock Polylogarithmic independence fools ac 0 circuits.
\newblock {\em Journal of the ACM (JACM)}, 57(5):1--10, 2008.

\bibitem[CCNN23]{coble2023local}
Nolan~J Coble, Matthew Coudron, Jon Nelson, and Seyed~Sajjad Nezhadi.
\newblock Local hamiltonians with no low-energy stabilizer states.
\newblock {\em arXiv preprint arXiv:2302.14755}, 2023.

\bibitem[CDH{\etalchar{+}}20]{cui2020kitaev}
Shawn~X Cui, Dawei Ding, Xizhi Han, Geoffrey Penington, Daniel Ranard,
  Brandon~C Rayhaun, and Zhou Shangnan.
\newblock Kitaev's quantum double model as an error correcting code.
\newblock {\em Quantum}, 4:331, 2020.

\bibitem[CGW10]{chen2010local}
Xie Chen, Zheng-Cheng Gu, and Xiao-Gang Wen.
\newblock Local unitary transformation, long-range quantum entanglement, wave
  function renormalization, and topological order.
\newblock {\em Physical Review B—Condensed Matter and Materials Physics},
  82(15):155138, 2010.

\bibitem[CHHL19]{chattopadhyay2019pseudorandom}
Eshan Chattopadhyay, Pooya Hatami, Kaave Hosseini, and Shachar Lovett.
\newblock Pseudorandom generators from polarizing random walks.
\newblock {\em Theory of Computing}, 15(1):1--26, 2019.

\bibitem[CHLT18]{chattopadhyay2018pseudorandom}
Eshan Chattopadhyay, Pooya Hatami, Shachar Lovett, and Avishay Tal.
\newblock Pseudorandom generators from the second {Fourier} level and
  applications to {AC0} with parity gates.
\newblock In {\em 10th Innovations in Theoretical Computer Science Conference
  (ITCS 2019)}. Schloss Dagstuhl-Leibniz-Zentrum fuer Informatik, 2018.

\bibitem[CIKK16]{carmosino2016learning}
Marco~L Carmosino, Russell Impagliazzo, Valentine Kabanets, and Antonina
  Kolokolova.
\newblock Learning algorithms from natural proofs.
\newblock In {\em 31st Conference on Computational Complexity (CCC 2016)},
  pages 10--1. Schloss Dagstuhl--Leibniz-Zentrum f{\"u}r Informatik, 2016.

\bibitem[CPG19]{coser2019classification}
Andrea Coser and David P{\'e}rez-Garc{\'\i}a.
\newblock Classification of phases for mixed states via fast dissipative
  evolution.
\newblock {\em Quantum}, 3:174, 2019.

\bibitem[CW00]{cleve2000fast}
Richard Cleve and John Watrous.
\newblock Fast parallel circuits for the quantum fourier transform.
\newblock In {\em Proceedings 41st Annual Symposium on Foundations of Computer
  Science}, pages 526--536. IEEE, 2000.

\bibitem[EIS22]{eskenazis2022low}
Alexandros Eskenazis, Paata Ivanisvili, and Lauritz Streck.
\newblock Low-degree learning and the metric entropy of polynomials.
\newblock {\em arXiv preprint arXiv:2203.09659}, 2022.

\bibitem[FCY{\etalchar{+}}04]{stabs-entanglement}
David Fattal, Toby~S Cubitt, Yoshihisa Yamamoto, Sergey Bravyi, and Isaac~L
  Chuang.
\newblock Entanglement in the stabilizer formalism.
\newblock {\em arXiv preprint quant-ph/0406168}, 2004.

\bibitem[FFG{\etalchar{+}}03]{fang2003quantum}
Maosen Fang, Stephen Fenner, Frederic Green, Steven Homer, and Yong Zhang.
\newblock Quantum lower bounds for fanout.
\newblock {\em arXiv preprint quant-ph/0312208}, 2003.

\bibitem[FGPT25]{fenner2025tightboundsdepth2qaccircuits}
Stephen Fenner, Daniel Grier, Daniel Padé, and Thomas Thierauf.
\newblock Tight bounds on depth-2 qac-circuits computing parity, 2025.

\bibitem[FH13]{freedman2013quantum}
Michael~H Freedman and Matthew~B Hastings.
\newblock Quantum systems on non-$ k $-hyperfinite complexes: A generalization
  of classical statistical mechanics on expander graphs.
\newblock {\em arXiv preprint arXiv:1301.1363}, 2013.

\bibitem[FSS84]{furst1984parity}
Merrick Furst, James~B Saxe, and Michael Sipser.
\newblock Parity, circuits, and the polynomial-time hierarchy.
\newblock {\em Mathematical systems theory}, 17(1):13--27, 1984.

\bibitem[GB25]{gidney2025constant}
Craig Gidney and Thiago Bergamaschi.
\newblock A constant rate quantum computer on a line.
\newblock {\em arXiv preprint arXiv:2502.16132}, 2025.

\bibitem[H{\aa}s86]{haastad1986computational}
Johan H{\aa}stad.
\newblock {\em Computational limitations for small depth circuits}.
\newblock PhD thesis, Massachusetts Institute of Technology, 1986.

\bibitem[HLB{\etalchar{+}}24]{huang2024learning}
Hsin-Yuan Huang, Yunchao Liu, Michael Broughton, Isaac Kim, Anurag Anshu, Zeph
  Landau, and Jarrod~R McClean.
\newblock Learning shallow quantum circuits.
\newblock In {\em Proceedings of the 56th Annual ACM Symposium on Theory of
  Computing}, pages 1343--1351, 2024.

\bibitem[H{\v{S}}05]{hoyer2005quantum}
Peter H{\o}yer and Robert {\v{S}}palek.
\newblock Quantum fan-out is powerful.
\newblock {\em Theory of computing}, 1(1):81--103, 2005.

\bibitem[KGP25]{korbany2025long}
David~Aram Korbany, Michael~J Gullans, and Lorenzo Piroli.
\newblock Long-range nonstabilizerness and phases of matter.
\newblock {\em arXiv preprint arXiv:2502.19504}, 2025.

\bibitem[KKMS08]{kalai2008agnostically}
Adam~Tauman Kalai, Adam~R Klivans, Yishay Mansour, and Rocco~A Servedio.
\newblock Agnostically learning halfspaces.
\newblock {\em SIAM Journal on Computing}, 37(6):1777--1805, 2008.

\bibitem[KL01]{krause2001pseudorandom}
Matthias Krause and Stefan Lucks.
\newblock Pseudorandom functions in and cryptographic limitations to proving
  lower bounds.
\newblock {\em computational complexity}, 10(4):297--313, 2001.

\bibitem[KM93]{KushilevitzMansour:93}
E.~Kushilevitz and Y.~Mansour.
\newblock Learning decision trees using the {F}ourier spectrum.
\newblock {\em SIAM J. on Computing}, 22(6):1331--1348, 1993.

\bibitem[LLKH22]{lu2022measurement}
Tsung-Cheng Lu, Leonardo~A Lessa, Isaac~H Kim, and Timothy~H Hsieh.
\newblock Measurement as a shortcut to long-range entangled quantum matter.
\newblock {\em PRX Quantum}, 3(4):040337, 2022.

\bibitem[LMN93]{lmn}
Nathan Linial, Yishay Mansour, and Noam Nisan.
\newblock Constant depth circuits, fourier transform, and learnability.
\newblock {\em Journal of the ACM (JACM)}, 40(3):607--620, 1993.

\bibitem[LR72]{lieb1972finite}
Elliott~H Lieb and Derek~W Robinson.
\newblock The finite group velocity of quantum spin systems.
\newblock {\em Communications in mathematical physics}, 28(3):251--257, 1972.

\bibitem[LSM{\etalchar{+}}23]{li2023symmetry}
Yabo Li, Hiroki Sukeno, Aswin~Parayil Mana, Hendrik~Poulsen Nautrup, and
  Tzu-Chieh Wei.
\newblock Symmetry-enriched topological order from partially gauging
  symmetry-protected topologically ordered states assisted by measurements.
\newblock {\em Physical Review B}, 108(11):115144, 2023.

\bibitem[Moo99]{moore1999quantum}
Cristopher Moore.
\newblock Quantum circuits: Fanout, parity, and counting.
\newblock {\em arXiv preprint quant-ph/9903046}, 1999.

\bibitem[Nir22]{nirkhe-thesis}
Chinmay Nirkhe.
\newblock {\em Lower bounds on the complexity of quantum proofs}.
\newblock University of California, Berkeley, 2022.

\bibitem[Nir23]{nirkhe-talk-beyondnlts}
Chinmay Nirkhe.
\newblock Making the leap to quantum pcps.
\newblock
  \url{https://simons.berkeley.edu/talks/chinmay-nirkhe-ibm-2023-07-12}, 2023.

\bibitem[NPVY24]{nadimpalli2024pauli}
Shivam Nadimpalli, Natalie Parham, Francisca Vasconcelos, and Henry Yuen.
\newblock On the pauli spectrum of qac0.
\newblock In {\em Proceedings of the 56th Annual ACM Symposium on Theory of
  Computing}, pages 1498--1506, 2024.

\bibitem[NR04]{prf-tc0-naor}
Moni Naor and Omer Reingold.
\newblock Number-theoretic constructions of efficient pseudo-random functions.
\newblock {\em Journal of the ACM (JACM)}, 51(2):231--262, 2004.

\bibitem[NVY18]{nirkhe2018approximate}
Chinmay Nirkhe, Umesh Vazirani, and Henry Yuen.
\newblock Approximate low-weight check codes and circuit lower bounds for noisy
  ground states.
\newblock {\em arXiv preprint arXiv:1802.07419}, 2018.

\bibitem[OS16]{oliveira2016conspiracies}
Igor~C Oliveira and Rahul Santhanam.
\newblock Conspiracies between learning algorithms, circuit lower bounds and
  pseudorandomness.
\newblock {\em arXiv preprint arXiv:1611.01190}, 2016.

\bibitem[PFGT20]{pade2020depth}
Daniel Pad{\'e}, Stephen Fenner, Daniel Grier, and Thomas Thierauf.
\newblock Depth-2 {QAC} circuits cannot simulate quantum parity.
\newblock {\em arXiv preprint arXiv:2005.12169}, 2020.

\bibitem[PSC21]{piroli2021quantum}
Lorenzo Piroli, Georgios Styliaris, and J~Ignacio Cirac.
\newblock Quantum circuits assisted by local operations and classical
  communication: Transformations and phases of matter.
\newblock {\em Physical Review Letters}, 127(22):220503, 2021.

\bibitem[PSC24]{piroli2024approximating}
Lorenzo Piroli, Georgios Styliaris, and J~Ignacio Cirac.
\newblock Approximating many-body quantum states with quantum circuits and
  measurements.
\newblock {\em Physical Review Letters}, 133(23):230401, 2024.

\bibitem[Ros21]{rosenthal-qac0}
Gregory Rosenthal.
\newblock Bounds on the qac\^{} 0 complexity of approximating parity.
\newblock In {\em 12th Innovations in Theoretical Computer Science Conference
  (ITCS 2021)}. Schloss Dagstuhl-Leibniz-Zentrum f{\"u}r Informatik, 2021.

\bibitem[Ros23]{rosenthal-thesis}
Gregory Rosenthal.
\newblock {\em Quantum State and Unitary Complexity}.
\newblock PhD thesis, University of Toronto (Canada), 2023.

\bibitem[Ros24]{rosenthal2024efficient}
Gregory Rosenthal.
\newblock Efficient quantum state synthesis with one query.
\newblock In {\em Proceedings of the 2024 Annual ACM-SIAM Symposium on Discrete
  Algorithms (SODA)}, pages 2508--2534. SIAM, 2024.

\bibitem[RR94]{natural-proofs}
Alexander~A Razborov and Steven Rudich.
\newblock Natural proofs.
\newblock In {\em Proceedings of the twenty-sixth annual ACM symposium on
  Theory of computing}, pages 204--213, 1994.

\bibitem[RST15]{rossman2015average}
Benjamin Rossman, Rocco~A Servedio, and Li-Yang Tan.
\newblock An average-case depth hierarchy theorem for boolean circuits.
\newblock In {\em 2015 IEEE 56th Annual Symposium on Foundations of Computer
  Science}, pages 1030--1048. IEEE, 2015.

\bibitem[RT22]{raz2022oracle}
Ran Raz and Avishay Tal.
\newblock Oracle separation of bqp and ph.
\newblock {\em ACM Journal of the ACM (JACM)}, 69(4):1--21, 2022.

\bibitem[SBS25]{sheffer2025preparing}
Yarden Sheffer, Erez Berg, and Ady Stern.
\newblock Preparing topological states with finite depth simultaneous commuting
  gates.
\newblock {\em Physical Review B}, 111(11):115160, 2025.

\bibitem[SKC{\etalchar{+}}24]{smith2024constant}
Kevin~C Smith, Abid Khan, Bryan~K Clark, Steven~M Girvin, and Tzu-Chieh Wei.
\newblock Constant-depth preparation of matrix product states with adaptive
  quantum circuits.
\newblock {\em PRX Quantum}, 5(3):030344, 2024.

\bibitem[Smo87]{smolensky1987algebraic}
Roman Smolensky.
\newblock Algebraic methods in the theory of lower bounds for boolean circuit
  complexity.
\newblock In {\em Proceedings of the nineteenth annual ACM symposium on Theory
  of computing}, pages 77--82, 1987.

\bibitem[SZBV22]{schotte2022quantum}
Alexis Schotte, Guanyu Zhu, Lander Burgelman, and Frank Verstraete.
\newblock Quantum error correction thresholds for the universal fibonacci
  turaev-viro code.
\newblock {\em Physical Review X}, 12(2):021012, 2022.

\bibitem[TD02]{terhal2002adaptive}
Barbara~M Terhal and David~P DiVincenzo.
\newblock Adaptive quantum computation, constant depth quantum circuits and
  arthur-merlin games.
\newblock {\em arXiv preprint quant-ph/0205133}, 2002.

\bibitem[TT16]{takahashi2016collapse}
Yasuhiro Takahashi and Seiichiro Tani.
\newblock Collapse of the hierarchy of constant-depth exact quantum circuits.
\newblock {\em computational complexity}, 25:849--881, 2016.

\bibitem[TTVV24]{tantivasadakarn2024long}
Nathanan Tantivasadakarn, Ryan Thorngren, Ashvin Vishwanath, and Ruben
  Verresen.
\newblock Long-range entanglement from measuring symmetry-protected topological
  phases.
\newblock {\em Physical Review X}, 14(2):021040, 2024.

\bibitem[TVV23a]{tantivasadakarn2023shortest}
Nathanan Tantivasadakarn, Ruben Verresen, and Ashvin Vishwanath.
\newblock Shortest route to non-abelian topological order on a quantum
  processor.
\newblock {\em Physical Review Letters}, 131(6):060405, 2023.

\bibitem[TVV23b]{tantivasadakarn2023hierarchy}
Nathanan Tantivasadakarn, Ashvin Vishwanath, and Ruben Verresen.
\newblock Hierarchy of topological order from finite-depth unitaries,
  measurement, and feedforward.
\newblock {\em PRX Quantum}, 4(2):020339, 2023.

\bibitem[VTV21]{verresen2021efficiently}
Ruben Verresen, Nathanan Tantivasadakarn, and Ashvin Vishwanath.
\newblock Efficiently preparing schr$\backslash$" odinger's cat, fractons and
  non-abelian topological order in quantum devices.
\newblock {\em arXiv preprint arXiv:2112.03061}, 2021.

\bibitem[Wat18]{tqi-watrous}
John Watrous.
\newblock {\em The theory of quantum information}.
\newblock Cambridge university press, 2018.

\bibitem[WBB22]{webster2022xp}
Mark~A Webster, Benjamin~J Brown, and Stephen~D Bartlett.
\newblock The xp stabiliser formalism: a generalisation of the pauli stabiliser
  formalism with arbitrary phases.
\newblock {\em Quantum}, 6:815, 2022.

\bibitem[WL25]{wei2025long}
Fuchuan Wei and Zi-Wen Liu.
\newblock Long-range nonstabilizerness from topology and correlation.
\newblock {\em arXiv preprint arXiv:2503.04566}, 2025.

\bibitem[ZZ24]{Zhang2024ClassicalSO}
Yifan Zhang and Yuxuan Zhang.
\newblock Classical simulability of quantum circuits with shallow magic depth.
\newblock {\em ArXiv}, abs/2409.13809, 2024.

\end{thebibliography}
\bibliographystyle{alpha}

\end{document}